\DeclareMathAlphabet{\Ma}{U}{msa}{m}{n}
\DeclareMathAlphabet{\Mb}{U}{msb}{m}{n}
\DeclareMathAlphabet{\Meuf}{U}{euf}{m}{n}
\DeclareSymbolFont{ASMa}{U}{msa}{m}{n}
\DeclareSymbolFont{ASMb}{U}{msb}{m}{n}
\DeclareMathOperator{\ran}{ran}
\newcommand{\scalar}[2]{\langle#1\,,#2\rangle}
\newcommand{\scalarb}[2]{\langle#1\,,#2\rangle_{\partial \Omega}}
\newcommand{\pair}[2]{(#1\,,#2)}
\newcommand{\pairb}[2]{(#1\,,#2)_{\partial \Omega}}
\newcommand{\norm}[1]{\|#1\|}
\newcommand{\normm}[1]{|\negthinspace\|#1\|\negthinspace|}
\newcommand{\vol}{\mathrm{vol}}
\renewcommand{\H}{\mathcal{H}}
\newcommand{\D}{\mathcal{D}}
\renewcommand{\d}{\mathrm{d}}
\newcommand{\pO}{{\partial \Omega}}
\newcommand{\C}{\mathcal{C}}
\newcommand{\diff}{\mathrm{d}}
\newcommand{\1}{\mathbb{I}}
\newcommand{\bx}{\mathbf{x}}
\newcommand{\bt}{\boldsymbol\theta}
\newcommand{\comm}[2]{\bigl[#1,#2\bigr]}
\newtheorem{theorem}{Theorem}
\newtheorem{corollary}[theorem]{Corollary}
\newtheorem{proposition}[theorem]{Proposition}
\newtheorem{definition}[theorem]{Definition}
\newtheorem{lemma}[theorem]{Lemma}
\newtheorem{example}[theorem]{Example}
\newtheorem{remark}[theorem]{Remark}
\numberwithin{equation}{section}
\numberwithin{theorem}{section}
\newcommand{\clearemptydoublepage}{\newpage{\pagestyle{empty}\cleardoublepage}}
\newcommand{\numerogordopart}[1]{\definecolor{gris}{gray}{0.75}\fontfamily{phv}\selectfont{\fontsize{30mm}{30mm}\selectfont\color{gris}{#1}}}
\filleft \numerogordopart{\thechapter}}
\titlerule \vspace{2ex}\filright}
\titleformat{\section}{\vspace{.8ex} \normalfont\bfseries\upshape} {\bf\thesection.}{.5em}{}[\titlerule]
\titleformat{\subsection}{\vspace{.8ex} \normalfont\bfseries\upshape} {\bf\thesubsection.}{.5em}{}
\def
\addsymbol \mathcal{C}: {Cayley transformation}{symb:Cayley}
\addsymbol \norm{\cdot}: {Norm of the Hilbert space $\H$}{symb:norm}
\addsymbol \norm{\cdot}_k: {Sobolev norm of order $k$}{symb:normk}
\addsymbol \norm{\cdot}_{\H^k(\Omega)}: {Sobolev norm of order $k$}{symb:normkomega}
\addsymbol \normm{\cdot}_T: {Graph norm of the operator $T$}{symb:graphnormT}
\addsymbol \normm{\cdot}_Q: {Graph norm of the lower semi-bounded quadratic form $Q$}{symb:graphnormQ}
\addsymbol \mathcal{N}_{\pm}: {Deficiency spaces}{symb:deficiencyspace}
\addsymbol \tilde{\Omega}: {Smooth Riemannian manifold without boundary}{symb:omeganotboundary}
\addsymbol \smash{\overset{\scriptscriptstyle\circ}{\Omega}}: {Interior of the manifold $\Omega$}{symb:intOmega}
\addsymbol \dot{\varphi},\,\dot{\psi}: {Restriction to the boundary of the normal derivative of the functions denoted by the corresponding capital greek letters  $\Phi,\,\Psi$}{symb:dotvarphi}
\addsymbol \scalar{\cdot}{\cdot}: {Scalar product of the Hilbert space $\H$}{symb:scalar}
\addsymbol \pair{\cdot}{\cdot}: {Pairing associated to the scale of Hilbert spaces $\H_+\subset\H\subset\H_-$}{symb:pair}
\addsymbol \scalarb{\cdot}{\cdot}: {Induced scalar product at the boundary}{symb:scalarb}
\addsymbol \mathrm{v}(g): {Induced unitary representation at the boundary of the element $g\in G$ of the group G}{symb:v(g)}
\addsymbol \mathfrak{X}(\Omega): {Space of smooth vector fields of the manifold $\Omega$}{symb:vectorfield}
\addsymbol \mathcal{C}: {Cayley transformation}{symb:Cayley}
\addsymbol \norm{\cdot}: {Norm of the Hilbert space $\H$}{symb:norm}
\addsymbol \norm{\cdot}_k: {Sobolev norm of order $k$}{symb:normk}
\addsymbol \norm{\cdot}_{\H^k(\Omega)}: {Sobolev norm of order $k$}{symb:normkomega}
\addsymbol \normm{\cdot}_T: {Graph norm of the operator $T$}{symb:graphnormT}
\addsymbol \normm{\cdot}_Q: {Graph norm of the lower semi-bounded quadratic form $Q$}{symb:graphnormQ}
\addsymbol \mathcal{N}_{\pm}: {Deficiency spaces}{symb:deficiencyspace}
\addsymbol \tilde{\Omega}: {Smooth Riemannian manifold without boundary}{symb:omeganotboundary}
\addsymbol \smash{\overset{\scriptscriptstyle\circ}{\Omega}}: {Interior of the manifold $\Omega$}{symb:intOmega}
\addsymbol \dot{\varphi},\,\dot{\psi}: {Restriction to the boundary of the normal derivative of the functions denoted by the corresponding capital greek letters  $\Phi,\,\Psi$}{symb:dotvarphi}
\addsymbol \scalar{\cdot}{\cdot}: {Scalar product of the Hilbert space $\H$}{symb:scalar}
\addsymbol \pair{\cdot}{\cdot}: {Pairing associated to the scale of Hilbert spaces $\H_+\subset\H\subset\H_-$}{symb:pair}
\addsymbol \scalarb{\cdot}{\cdot}: {Induced scalar product at the boundary}{symb:scalarb}
\addsymbol \mathrm{v}(g): {Induced unitary representation at the boundary of the element $g\in G$ of the group G}{symb:v(g)}
\addsymbol \mathfrak{X}(\Omega): {Space of smooth vector fields of the manifold $\Omega$}{symb:vectorfield}
\def\addsymbol #1: #2#3{$#1$ \> \parbox{4.3in}{#2$\mathllap{\phantom{\Bigr]}}$ \dotfill \pageref{#3}}\\} 
\def\newnot#1{\label{#1}}
\begin{document}

%
%
%
%
%
%
%
%
%
%
%
%
%
%

\thispagestyle{empty}
\begin{center}

~
\mbox{\parbox{3cm} {\centering\scriptsize\mbox{\includegraphics[width=3.2cm] {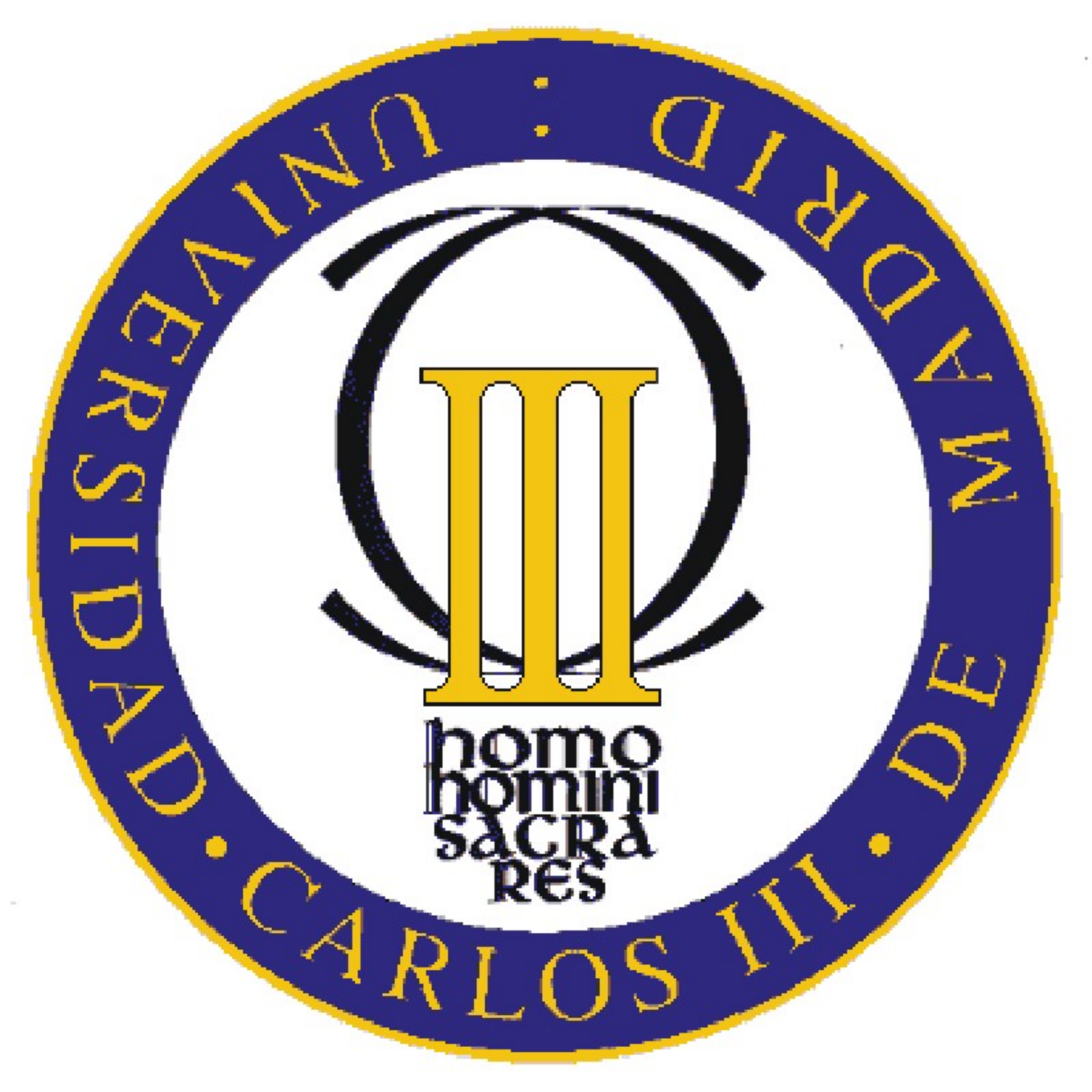}}}}

\vspace{0.5cm}

{\large UNIVERSIDAD CARLOS III DE MADRID}

\vspace{0.75cm}

{\Large TESIS DOCTORAL} \vspace{0.75cm}

\rule{\linewidth}{0.2mm}

\vspace{2mm}


{\Large\sc\textbf{ On the Theory of Self-Adjoint Extensions of }}
\vspace{1mm}

{\Large\sc\textbf{the Laplace-Beltrami Operator,}}
\vspace{1mm}

{\Large\sc\textbf{Quadratic Forms and Symmetry}}
\rule{\linewidth}{0.2mm}


\vspace{1cm}

{\textbf{Autor: }}

\vspace{1mm}

{\textbf{Juan Manuel P\'erez Pardo}}

\vspace{3mm}

{\textbf{Directores:}}

\vspace{1mm}

{\textbf{Prof.~Alberto Ibort Latre}}

\vspace{1mm}

{\textbf{Prof.~Fernando Lled\'o Macau}}

\vspace{1.5cm}

{\large  DEPARTAMENTO DE MATEM\'ATICAS} \vspace{5mm}

{\large \textrm{Legan\'es, 2013}}
\end{center}



\newpage\thispagestyle{empty}

{\center
{\large Firma del Tribunal Calificador} \vspace{1cm}

\begin{tabular}[h]{l | c | r}
	  &\phantom{aaaaaaaaaaaaaaaaaaaaaaaaaaaaaaaaaaaaaaaaaaa} &{\large Firma}\phantom{aaaaaaaa}\\ \hline
	& &\\
	{\large Presidente:}& & \\
	 & &\\ \hline
	  & &\\
	{\large Vocal:} & &\\
	 & & \\ \hline
	 & &\\
	{\large Secretario: } & &\\
	& &\\ \hline
\end{tabular}

\vspace{2cm}

{\large Calificaci\'{o}n\;:}
}
\vspace{2cm}

\begin{center}
{Legan\'{e}s, \phantom{aaaaaaa}de \phantom{aaaaaaaaaaaaaaaaaaaaaaaaaaaaaaaaa} de}
\end{center}

\newpage\thispagestyle{empty}

\begin{center}

\rule{\linewidth}{0.2mm}

\vspace{2mm}


{\Large\sc\textbf{ On the Theory of Self-Adjoint Extensions of }}
\vspace{1mm}

{\Large\sc\textbf{the Laplace-Beltrami Operator,}}
\vspace{1mm}

{\Large\sc\textbf{Quadratic Forms and Symmetry}}
\rule{\linewidth}{0.2mm}


\vspace{1.5cm}

{\Large PhD Thesis}

\vspace{3.5cm}

{\textbf{Author: }}

\vspace{1mm}

{\textbf{Juan Manuel P\'erez Pardo}}

\vspace{3mm}

{\textbf{Advisors:}}

\vspace{1mm}

{\textbf{Prof.~Alberto Ibort Latre}}

\vspace{1mm}

{\textbf{Prof.~Fernando Lled\'o Macau}}

\vspace{1.5cm}

\end{center}

\clearemptydoublepage

\pagenumbering{roman}

\thispagestyle{empty}
\vspace*{50mm}
\begin{flushright}
{\large \textit{Por un mundo feliz}}
\end{flushright}
\clearemptydoublepage

\addcontentsline{toc}{chapter}{Contents}
\tableofcontents
\markboth{\sc Contents}{\sc Contents}
\clearemptydoublepage



\thispagestyle{empty}
\chapter*{Agradecimientos}
\addcontentsline{toc}{chapter}{Agradecimientos}
\markboth{Agradecimientos}{}

\selectlanguage{spanish}

Despu\'{e}s de varios a\~{n}os embarcado en esta aventura ha llegado el momento de escribir los agradecimientos. Aunque para algunas personas escribir esta parte de la tesis  pueda ser poco m\'{a}s que un tr\'{a}mite o una obligaci\'{o}n, yo me veo en la necesidad de agradecer verdaderamente todo el apoyo que he recibido y el esfuerzo que mucha gente me ha dedicado durante todo este tiempo.\\

En primer lugar quiero citar a mis flamantes directores de tesis Alberto Ibort y Fernando Lled\'{o}. No tengo palabras para describir el gran respeto que me merecen como personas y como cient\'{i}ficos. La cualidad que m\'{a}s aprecio en Alberto (como cient\'{i}fico, como persona no sabr\'{i}a por d\'{o}nde empezar) es su capacidad para vislumbrar la soluci\'{o}n de un problema con la m\'{i}nima informaci\'{o}n, por muy complejo que \'{e}ste sea , ``\emph{!`Que s\'{i} que sale hombre, que s\'{i}!}'' Por poner una analog\'{i}a de c\'{o}mo es Alberto a nivel cient\'{i}fico aqu\'{i} ten\'{e}is el siguiente acertijo:

\emph{Se abre el tel\'{o}n y aparecen un tornillo, una goma el\'{a}stica y un l\'{a}piz. Se cierra el tel\'{o}n. Se abre el tel\'{o}n y aparece un f\'{o}rmula 1. ?`Qui\'{e}n ha pasado por all\'{i}?\footnote{\begin{turn}{180}MacGyver.\end{turn}}}

De Fernando me gustar\'{i}a destacar su atenci\'{o}n al detalle, ``\emph{Es cuesti\'{o}n de estilo}'', que m\'{a}s de una vez nos ha costado una discusi\'{o}n pero que sin lugar a dudas ha mejorado notablemente los contenidos que se encuentran en esta memoria. Much\'{i}simas gracias a los dos y a vuestras maravillosas familias.\\

Tambi\'{e}n quer\'{i}a agradecer a los profesores Beppe Marmo, \foreignlanguage{english}{``\emph{It is better not to increase Entropy}''}; A.P.~
Balachandran, \foreignlanguage{english}{``\emph{Why is that so?}''}; Manolo Asorey, ``\emph{?`Yo he dicho eso?}'' y Franco Ventriglia, \foreignlanguage{italian}{``\emph{Scorzette d'arancia ricoperte di cioccolato}''} todos sus comentarios, ense\~{n}anzas y hospitalidad. Las discusiones con ellos han dado lugar a muchos de los resultados que aparecen en esta memoria.\\

Otros profesores y miembros del departamento de matem\'{a}ticas de la UCIIIM que han contribuido en mayor o menor medida a mi formaci\'{o}n predoctoral y a la realizaci\'{o}n de esta tesis son Olaf Post, Froil\'{a}n Mart\'{i}nez, Daniel Peralta, Julio Moro, Bernardo de la Calle y Alberto Calvo.\\

A lo largo de estos a\~{n}os he conocido muchos compa\~{n}eros y amigos y, aunque va a ser dif\'{i}cil citarlos a todos, voy a hacer un intento. Todos ellos han contribuido a crear un ambiente agradable donde poder trabajar. Gio, por todos los ratos del caf\'{e}. Sveto,

\emph{- Hoy das t\'{u} el seminario intergrupos, ?`no?
\newline\indent
- ?`De verdad? Ay, Ay, Ay.}

\noindent Sergio, que se puede hablar con \'{e}l a trav\'{e}s de las paredes. Edo y Daniel, los compa\~{n}eros silenciosos. Espero que piensen lo mismo de mi. Jele, con su peculiar visi\'{o}n del mundo, ``\emph{!`En Londres se come bien! Hay ensaladas, comida china, tailandesa, filipina,\dots}'' Alberto L\'{o}pez, ``\emph{Esto es un calculillo}''. Leonardo, ``\emph{No puedo dormir en esa cama}''. Esta \'{u}ltima frase explica algunos acontecimientos recientes. Y, en definitiva, a todos con los que he tenido el placer de trabajar estos a\~{n}os. Esta lista es larga y si me olvido a alguien me lo va a recordar toda la vida. Os pido por favor a los que le\'{a}is esta lista que no le deis mucha publicidad, pod\'{e}is herir la sensibilidad de alguien: Javi, Elyzabeth, Pedro, Julio, Kenier, Alfredo, Mar\'{i}a, Ulises, Hector, Mari Francis, Natalia, Walter y Yadira.\\

No me olvido del grupo de los seminarios informales Alejandro, Borja, Carlos, Edu, Emilio y Marco. Por cierto, Borja, nos debes uno.\\

A los amigotes de Madrid y Valencia, esa fuerza poderosa que ha empujado en contra de la finalizaci\'{o}n de esta memoria. 

``\emph{S\'{o}lo una ca\~{n}a y a casa.}''

``\emph{?`Ya te vas? !`Si acabo de llegar!}''

``\emph{?`Bo que pacha?}''

``\emph{Ataco Ucrania desde Ural.}''

``\emph{?`Cu\'{a}ndo vienes?}''

``\emph{?`Ma\~{n}ana piscina?}''

``\emph{?`Una partidita?}''

``\emph{Tienes que comprarte la Play.}'' 

\noindent Ya sab\'{e}is quienes sois, no empiezo a citaros que corro el riesgo de acabar como Almodovar.\\

Tambi\'{e}n le quiero agradecer a toda mi familia, incluyendo la pol\'{i}tica, todo su apoyo. En especial a mi madre y a mi hermano.

``\emph{Hijo, ?`en qu\'{e} piensas?}''

``\emph{Hijo, ?`en qu\'{e} piensas?}''

``\emph{Hijo, ?`en qu\'{e} est\'{a}s pensando?}''

``\emph{Hijo, me preocupas.}''

\noindent Los que compartan profesi\'{o}n conmigo entender\'{a}n lo que quiero expresar. ?`Qu\'{e} se puede decir de una madre que te lo ha dado todo? Pues eso. A mi hermano, ``\emph{Los hipsters son como nuestros modernetes pero con un toque lumpen muy calculado}'', le deseo lo mejor en estos momentos de cambio y le tengo que agradecer haberme ense\~{n}ado que siempre hay otra manera de hacer las cosas.\\

Y por supuesto a Lara, ``\emph{Juanmi\dots}'', ``\emph{!`Ay Gal\'{a}n!}'', ``\emph{?`Y mi sorpresa?}'' Si hay una persona que sabe que este trabajo no acaba cuando sales del despacho, es ella. Por eso, a parte de corresponder su amor, cari\~{n}o y comprensi\'{o}n tambi\'{e}n quiero disculparme por todas las veces que no puedo dedicarle el tiempo que se merece.\\

En resumen, a todos, mil gracias.

\vspace*{5mm}
\begin{flushright}
Juan Manuel P\'{e}rez Pardo
\\
Legan\'{e}s, 2013
\end{flushright}


\selectlanguage{english}

\footnotesize{This work was partly supported by the project MTM2010-21186-C02-02 of the spanish {\em Ministerio de Ciencia e Innovaci\'on}, QUITEMAD programme P2009 ESP-1594 and the 2011 and 2012 mobility grants of ``Universidad Carlos III de Madrid''.} \normalsize

\clearemptydoublepage

\chapter*{Resumen}
\addcontentsline{toc}{chapter}{Resumen}
\markboth{Resumen}{}

\hyphenation{cons-truc-ci 
			o-pe-ra-dor 
			auto-ad-jun-tas 
			de-sa-rro-llan-do 
			o-pe-ra-do-res 
			ma-ne-ra 
			par-ti-cu-lar 
			na-tu-ral 
			co-rres-pon-dien-te 
			ca-rac-te-ri-za-ci
			de-di-ca
			ma-yo-res
			}

\selectlanguage{spanish}

El objetivo principal de esta memoria es analizar en detalle tanto la construcci\'{o}n de extensiones autoadjuntas del operador de Laplace-Beltrami definido sobre una variedad Riemanniana compacta con frontera, como el papel que juegan las formas cuadr\'{a}ticas a la hora de describirlas. M\'{a}s a\'{u}n, queremos enfatizar el papel que juegan las formas cuadr\'{a}ticas a la hora de describir sistemas cu\'{a}nticos.

Es bien conocido que $-\Delta_{\mathrm{min}}$, la extensi\'{o}n minimal del operador de Laplace-Beltrami, es autoadjunta cuando la variedad Riemanniana no tiene frontera. Sin embargo, cuando la variedad Riemanniana tiene frontera este operador es sim\'{e}trico pero no autoadjunto. Esta situaci\'{o}n es com\'{u}n en el el an\'{a}lisis de sistemas cu\'{a}nticos con frontera. Por ejemplo, el operador anterior describe la din\'{a}mica de una part\'{i}cula libre confinada a desplazarse por la variedad. El an\'{a}lisis de sistemas cu\'{a}nticos con frontera est\'{a} recibiendo atenci\'{o}n creciente por parte de la comunidad ya que hay un gran n\'{u}mero de situaciones f\'{i}sicas en las que la frontera juega un papel prominente. Por citar algunos ejemplos: el efecto Hall cu\'{a}ntico, los aislantes topol\'{o}gicos, el efecto Casimir, los grafos cu\'{a}nticos, etc. Otras situaciones f\'{i}sicas en las que la frontera juega un papel importante es en el dise\~{n}o de modelos efectivos que describen las interacciones con impurezas o con interfases entre materiales, v\'{e}ase \cite{albeverio05} y las referencias all\'{i} citadas. Definir un operador autoadjunto en un sistema con frontera requiere que ciertas condiciones de contorno sean especificadas. Resulta por lo tanto de gran importancia fijar las condiciones de contorno apropiadas, ya que la din\'{a}mica de un sistema cu\'{a}ntico no queda bien determinada hasta que se haya elegido un operador autoadjunto que la describa. 
La importancia de los operadores autoadjuntos en Mec\'{a}nica Cu\'{a}ntica radica en que, de acuerdo con los postulados de la Mec\'{a}nica Cu\'{a}ntica, son interpretados como las magnitudes observables de los sistemas que describen. Su espectro se interpreta como los posibles resultados de un proceso de medida. M\'{a}s a\'{u}n, el teorema de Stone \cite{stone32} establece una correspondencia un\'{i}voca entre grupos uniparam\'{e}tricos fuertemente continuos y operadores autoadjuntos. Por ello, los operadores autoadjuntos son los principales objetos que caracterizan la evoluci\'{o}n unitaria de un sistema cu\'{a}ntico. Vale la pena mencionar en este punto que aunque el espectro de los operadores autoadjuntos constituya el conjunto de posibles resultados para un proceso de medida, es la forma cuadr\'{a}tica asociada a ellos, $\scalar{\Phi}{T\Phi}$, siendo $T$ el operador autoadjunto, la que describe el valor esperado resultado de una medida cuando el estado del sistema queda descrito por el vector de estado $\Phi$\,. Las formas cuadr\'{a}ticas pueden, por lo tanto, desempe\~{n}ar el mismo papel que los operadores autoadjuntos en la descripci\'{o}n de la Mec\'{a}nica Cu\'{a}ntica.

La primera persona en darse cuenta de la diferencia entre operadores sim\'{e}tricos y autoadjuntos fue J.~von~Neumann en la d\'{e}cada de 1920, quien resolvi\'{o} completamente la cuesti\'{o}n en el marco m\'{a}s abstracto posible. Desde entonces ha habido varias revisiones, en distintos contextos, a la teor\'{i}a de extensiones autoadjuntas. Por ejemplo, G.~Grubb \cite{grubb68} en las d\'{e}cadas de 1960 y 1970 obtuvo caracterizaciones en el contexto de las EDP's. Por otro lado M.G.~Krein \cite{krein47a,krein47b} y J.M.~Berezanskii \cite{berezanskii68}, entre otros, obtuvieron caracterizaciones utilizando la teor\'{i}a de escalas de espacios de Hilbert. M\'{a}s recientemente se ha venido desarrollando la teor\'{i}a de ``boundary triples''. Una revisi\'{o}n detallada de esta caracterizaci\'{o}n es debida a J.~Br\"{u}ning, V.~Geyler y K.~Pankrashkin \cite{bruning08}, aunque varios de los resultados son conocidos desde tiempo atr\'{a}s \cite{visik52,kochubei75,derkach91,gorbachuk91}.

La estrategia que vamos a seguir en esta memoria ocupa, en cierto sentido, el espacio existente entre la caracterizaci\'{o}n propuesta por G.~Grubb y la caracterizaci\'{o}n en t\'{e}rminos de ``boundary triples''. En la primera se utiliza la estructura en la frontera para caracterizar las extensiones autoadjuntas en t\'{e}rminos de operadores pseudo-diferenciales que act\'{u}an sobre los espacios de Sobolev de la frontera. Por el contrario, la teor\'{i}a de ``boundary triples'' substituye la estructura de la frontera por otra en t\'{e}rminos de un espacio abstracto. Se demuestra entonces que las extensiones autoadjuntas est\'{a}n en correspondencia un\'{i}voca con el conjunto de operadores unitarios que act\'{u}an sobre este espacio abstracto. 

La estrategia presentada en esta memoria preserva la estructura de la frontera al mismo tiempo que parametriza el conjunto de extensiones autoadjuntas en t\'{e}rminos de operadores unitarios. Es por lo tanto m\'{a}s cercana a la caracterizaci\'{o}n conocida como ``quasi-boundary triples'' \cite{behrndt07,behrndt12}. Sin embargo, la estrategia aqu\'{i} presentada, sigue un camino totalmente distinto a las caracterizaciones citadas anteriormente. En contraste con ellas, incluida la de von Neumann, donde el operador adjunto se restringe a un conjunto en el que se demuestra con posterioridad que es autoadjunto, nosotros vamos a utilizar las ideas y resultados introducidos por K.~Friedrichs \cite{friedrichs34} y T.~Kato \cite{kato95}. De acuerdo con ellas la extensi\'{o}n autoadjunta es obtenida gracias a la caracterizaci\'{o}n de una forma cuadr\'{a}tica semiacotada que est\'{a} asociada al operador sim\'{e}trico. M\'{a}s concretamente, las formas cuadr\'{a}ticas semiacotadas est\'{a}n asociadas de manera un\'{i}voca a un operador autoadjunto siempre que sean cerrables. La mayor dificultad radica pues en demostrar que son cerrables. Al contrario de lo que ocurre con los operadores sim\'{e}tricos, que siempre son cerrables. Sin embargo, los operadores sim\'{e}tricos tienen en general infinitas extensiones autoadjuntas posibles. Desafortunadamente no hay ninguna generalizaci\'{o}n conocida al caso de formas cuadr\'{a}ticas que no sean semiacotadas. El ejemplo can\'{o}nico de esta estrategia es la forma cuadr\'{a}tica $Q(\Phi)=\norm{\d\Phi}^2$\,. Cuando \'{e}sta se define sobre $\H^1_0(\Omega)$\,, el operador autoadjunto asociado resulta ser la extensi\'{o}n de Dirichlet del operador $-\Delta_{\mathrm{min}}$. Si se define sobre $\H^1(\Omega)$\,, el operador autoadjunto asociado es la extensi\'{o}n de Neumann. Para obtener caracterizaciones de las extensiones autoadjuntas del operador de Laplace-Beltrami vamos a analizar la forma cuadr\'{a}tica 
$$\norm{\d\Phi}^2-\scalarb{\varphi}{\dot{\varphi}}\;.$$
Esta forma cuadr\'{a}tica puede ser entendida como una perturbaci\'{o}n singular de la forma cuadr\'{a}tica de Neumann citada anteriormente. Sin embargo, los resultados que aparecen en \cite{koshmanenko99} sobre perturbaciones singulares cerrables de formas cuadr\'{a}ticas no se pueden aplicar directamente a nuestro caso y otro enfoque es necesario. Desgraciadamente, las extensiones que vamos a poder caracterizar no incluyen todas las posibles, pero s\'{i} una vasta familia.

La caracterizaci\'{o}n que aqu\'{i} se ha propuesto utiliza como caracterizaci\'{o}n principal la siguiente ecuaci\'{o}n de condiciones de contorno en la frontera
\begin{equation*}\label{Resasoreycha6}
	\varphi-\mathrm{i}\dot{\varphi}=U(\varphi+\mathrm{i}\dot{\varphi})\;,
\end{equation*}
donde $U\in\mathcal{U}(\mathcal{L}^2(\Omega))$\,, y siendo $\Omega$\, la variedad Riemanniana. V\'{e}ase la Proposici\'{o}n \ref{prop: asorey}. Las condiciones de contorno descritas por esta ecuaci\'{o}n est\'{a}n en correspondencia un\'{i}voca con las extensiones autoadjuntas del operador de Laplace-Beltrami s\'{o}lo si la variedad $\Omega$ es de dimensi\'{o}n uno. Sin embargo, el hecho de que est\'{e} \'{i}ntimamente relacionada con la estructura de la frontera la convierte en un objeto muy adecuado para estudiar las extensiones autoadjuntas del operador $-\Delta_{\mathrm{min}}$\,. Los resultados expuestos en el Cap\'{i}tulo \ref{cha:QF}, el Cap\'{i}tulo \ref{cha:FEM} y el Cap\'{i}tulo \ref{cha:Symmetry} dan buena cuenta de ello. 

En el Cap\'{i}tulo \ref{cha:QF} hemos demostrado cu\'{a}les son las suposiciones necesarias para que la condici\'{o}n de contorno anterior determine una extensi\'{o}n autoadjunta y semiacotada del operador de Laplace-Beltrami. M\'{a}s a\'{u}n, hemos a\~{n}adido una serie de ejemplos relevantes que verifican las anteriores condiciones. En particular, el conjunto de condiciones de contorno de tipo peri\'{o}dico descrito en el Ejemplo \ref{periodic} es apropiado para el an\'{a}lisis de cambios de topolog\'{i}a en la teor\'{i}a cu\'{a}ntica, v\'{e}anse \cite{asorey05,balachandran95,shapere12}. La caracterizaci\'{o}n de operadores unitarios con ``gap en -1'', v\'{e}ase la Definici\'{o}n \ref{DefGap}, es equivalente a que el operador autoadjunto correspondiente sea semiacotado inferiormente. Como se demuestra en el Ejemplo \ref{generalized Robin} y en el Ejemplo \ref{generalized Robin2} estas extensiones autoadjuntas incluyen las condiciones de contorno de tipo Robin. As\'{i} que, como caso particular, hemos demostrado que las condiciones de contorno de la forma 
$$\dot{\varphi}=g\cdot\varphi\quad g\in\C^0(\pO)\;,$$
dan lugar a extensiones semiacotadas del operador de Laplace-Beltrami.

Este tipo de condiciones de contorno aparece en el estudio de sistemas cu\'{a}n\-ticos con frontera como los relacionados con los aislantes topol\'{o}gicos y con el efecto Hall cu\'{a}ntico. De hecho, estas condiciones de contorno aparecen de manera natural en la interfase entre dos materiales  cuando uno de ellos est\'{a} en el r\'{e}gimen superconductor. El hecho de que los operadores que describen la din\'{a}mica correspondiente sean semiacotados inferiormente es de gran importancia para la consistencia de la teor\'{i}a f\'{i}sica que los describe. M\'{a}s a\'{u}n, es conocido que los estados con las energ\'{i}as m\'{a}s bajas se hayan fuertemente localizados en la frontera. El estado fundamental que se muestra en la parte derecha de la Figura \ref{laplaciano4} es un ejemplo de esta localizaci\'{o}n. En investigaciones recientes llevadas a cabo junto con M.~Asorey y A.P.~Balachandran, \cite{asorey13}, se demuestra que la existencia de estos estados de frontera es un fen\'{o}meno com\'{u}n que ocurre no s\'{o}lo para el Laplaciano sino tambi\'{e}n para el operador de Dirac o para la generalizaci\'{o}n del operador de Laplace a tensores de mayor orden, el operador de Laplace-de Rham, lo que incluye el campo electromagn\'{e}tico. De hecho, los resultados del Cap\'{i}tulo \ref{cha:QF} se pueden extender a la situaci\'{o}n general de fibrados Herm\'{i}ticos, donde el operador de Laplace-Beltrami es generalizado al operador conocido como Laplaciano de Bochner. La obstrucci\'{o}n m\'{a}s grande para obtener esta generalizaci\'{o}n es encontrar un substituto adecuado para el operador radial $R$, v\'{e}ase la Definici\'{o}n \ref{Defunidimensional} y la Proposici\'{o}n \ref{prop: intervalrobin}. Afortunadamente esta obstrucci\'{o}n puede ser solventada, pero dejamos esta generalizaci\'{o}n para futuros trabajos. La estructura particular de las condiciones de contorno antes mencionadas se usa en los siguientes cap\'{i}tulos para demostrar una serie de resultados adicionales.

En el Cap\'{i}tulo \ref{cha:FEM} se propone un esquema num\'{e}rico para aproximar el problema espectral. En particular, se demuestra que la convergencia del esquema propuesto est\'{a} asegurada siempre y cuando se d\'{e} la condici\'{o}n de gap. Como muestra de la factibilidad de dicho esquema num\'{e}rico se propone una realizaci\'{o}n concreta del mismo para el caso de dimensi\'{o}n uno. Los experimentos num\'{e}ricos realizados con ella muestran, entre otras cosas, que las tasas de convergencia se satisfacen. Vale la pena mencionar que la convergencia del esquema propuesto se ha demostrado con independencia de la dimensi\'{o}n. Por lo tanto, esta familia de algoritmos num\'{e}ricos se puede usar para aproximar las soluciones de los problemas en dimensi\'{o}n dos y superiores. En particular, se puede aplicar al estudio de los cambios de topolog\'{i}a y de los estados de frontera antes mencionados. La tarea de programar una versi\'{o}n en dimensi\'{o}n dos del algoritmo propuesto en esta memoria ya est\'{a} siendo llevada a cabo junto a A.~L\'{o}pez Yela.

El Cap\'{i}tulo \ref{cha:Symmetry} est\'{a} dedicado al an\'{a}lisis que juegan las invarianzas por simetr\'{i}a en la construcci\'{o}n de las diferentes extensiones autoadjuntas. Usando el marco m\'{a}s abstracto posible, es decir, utilizando la caracterizaci\'{o}n de von Neumann de extensiones autoadjuntas, hemos demostrado que el conjunto de extensiones autoadjuntas invariantes de un operador sim\'{e}trico se haya en correspondencia un\'{i}voca con aquellas isometr\'{i}as entre los espacios de deficiencia que se encuentran en el conmutante de la representaci\'{o}n unitaria del grupo de simetr\'{i}a, v\'{e}anse el Teorema \ref{thmvonNeumann} y el Teorema \ref{Ginvariantoperator}. Esto muestra que, aunque un operador sim\'{e}trico sea invariante bajo un grupo de simetr\'{i}a, puede ocurrir que sus extensiones autoadjuntas no lo sean. Consideremos un operador que se construye como el producto tensorial de dos operadores sim\'{e}tricos. Es un error com\'{u}n asumir que todas las extensiones autoadjuntas del operador producto se pueden obtener a trav\'{e}s del producto tensorial de las isometr\'{i}as que definen las extensiones autoadjuntas de los factores. El motivo de esto se entiende m\'{a}s f\'{a}cilmente utilizando la caracterizaci\'{o}n en t\'{e}rminos de condiciones de contorno presentada en esta memoria. Consideremos un operador sim\'{e}trico definido en una variedad 
$$\Omega=\Omega_1\times\Omega_2\;.$$ 
Entonces, las extensiones autoadjuntas vendr\'{a}n parametrizadas utilizando el espacio de Hilbert de funciones de cuadrado integrable definido sobre la frontera de la variedad
$$\pO=\pO_1\times\Omega_2 \sqcup \Omega_1\times\pO_2\;.$$
El error com\'{u}n consiste en considerar \'{u}nicamente aquellas condiciones de contorno que se pueden definir sobre $$\pO_1\times\pO_2\;.$$
Los resultados obtenidos en el Cap\'{i}tulo \ref{cha:Symmetry}, principalmente
el Teorema \ref{Ginvariantoperator}, el Teorema \ref{QAinvariant} y el Teorema \ref{InvariantRepresentation}, proporcionan herramientas adecuadas para lidiar con estas consideraciones. En particular, se trata con detalle la familia de formas cuadr\'{a}ticas asociadas al operador de Laplace-Beltrami introducida en el Cap\'{i}tulo \ref{cha:QF} y demostramos que el conjunto de extensiones autoadjuntas compatibles con las transformaciones por simetr\'{i}a de la variedad Riemanniana tambi\'{e}n est\'{a} relacionado con el conmutante de la representaci\'{o}n unitaria del grupo.
La \'{u}ltima secci\'{o}n del Cap\'{i}tulo \ref{cha:Symmetry} se dedica a la generalizaci\'{o}n del teorema de representaci\'{o}n de Kato, Teorema \ref{fundteo}, al caso en el que las formas cuadr\'{a}ticas no son semiacotadas. Para ello introducimos la noci\'{o}n de sector de la forma cuadr\'{a}tica, v\'{e}ase la Definici\'{o}n \ref{def: poa}. Esta noci\'{o}n juega un papel an\'{a}logo al concepto de subespacios invariantes que aparece en el an\'{a}lisis de operadores autoadjuntos. Hemos demostrado en el Teorema \ref{representationpoa} que las formas cuadr\'{a}ticas cuyos sectores sean simult\'{a}neamente cerrables y semiacotados son representables en t\'{e}rminos de operadores autoadjuntos. Por lo tanto, hemos cimentado el camino hacia una generalizaci\'{o}n del Teorema de representaci\'{o}n de Kato para formas cuadr\'{a}ticas genuinamente no acotadas, es decir, no semiacotadas. Esta generalizaci\'{o}n contin\'{u}a siendo uno de los mayores problemas abiertos en el campo de formas cuadr\'{a}ticas. El \'{u}ltimo paso para obtener dicha generalizaci\'{o}n ser\'{i}a identificar bajo qu\'{e} circunstancias, las formas cuadr\'{a}ticas Herm\'{i}ticas, poseen sectores semiacotados y cerrables. En particular hemos demostrado a trav\'{e}s del Ejemplo \ref{ex:nogap} que la condici\'{o}n de gap, v\'{e}ase la Definici\'{o}n \ref{DefGap}, no es condici\'{o}n necesaria para que la clase de formas cuadr\'{a}ticas del Cap\'{i}tulo \ref{cha:QF} conduzca a extensiones autoadjuntas del operador de Laplace-Beltrami.

\selectlanguage{english}

\clearemptydoublepage
\pagenumbering{arabic}
\chapter{Introduction}
\label{cha:introduction}

The main objective of this dissertation is to analyse thoroughly the construction of self-adjoint extensions of the Laplace-Beltrami operator defined on a compact Riemannian manifold with boundary and the role that quadratic forms play to describe them. Moreover, we want to emphasise the role that quadratic forms can play in the description of quantum systems. 

It is well known that $-\Delta_{\eta}$, the Laplace-Beltrami operator associated to the Riemannian metric $\eta$, is essentially self-adjoint, i.e., it possesses a unique self-adjoint extension, when the Riemannian manifold has void boundary \cite{lawson89}. However, when the Riemannian manifold has boundary this operator is symmetric but not self-adjoint. Such situation is common in the analysis of quantum systems with boundary. For instance, the aforementioned operator describes the dynamics of a free particle that is constrained to the given manifold. The analysis of quantum systems with boundary is receiving increasing attention from the community since there is a number of physical situations where the boundary plays a prominent role, i.e., quantum Hall effect \cite{morandi88}, topological insulators \cite{hasan10}, the Casimir effect \cite{plunien86}, quantum graphs \cite{Post12a}, \dots \\ Other situations in physics where the determination of self-adjoint extensions of operators is of great importance is in the definition of effective models that describe interactions with impurities or with interphases of materials, see \cite{albeverio05} and references therein. Defining a self-adjoint operator on a system with boundary requests certain boundary conditions to be specified in order to extend the corresponding symmetric operator to a self-adjoint one. Selecting the appropriated boundary conditions is thus of great importance since the dynamics of a quantum system are not well determined until a self-adjoint operator is selected. 

The importance of self-adjoint operators in Quantum Mechanics relies in that they are interpreted, according to the postulates of Quantum Mechanics, as the observables of the corresponding quantum systems. Their spectrum is interpreted as the possible outcomes of a given measurement. Moreover, Stone's Theorem \cite{stone32} establishes a one-to-one correspondence between strongly continuous one-parameter unitary groups and self-adjoint operators. Hence, self-adjoint operators are the main objects characterising the unitary evolution of a quantum system. It is worth to mention at this point that, even though the spectrum of self-adjoint operators is considered to be the outcome of possible measurements, the quadratic forms associated to them, i.e., $\scalar{\Phi}{T\Phi}$, describe the mean values expected to be measured when the state of the system is described by the state vector $\Phi$. Thus, quadratic forms may very well play the same role that self-adjoint operators do in the description of Quantum Mechanics. 

It was J. von Neumann who realised the fundamental difference between symmetric and self-adjoint operators and then solved the problem of determining all self-adjoint (and symmetric) extensions of a densely defined symmetric operator (if any) in the late 1920's \cite{neumann30}. Since then there have been several approaches in different contexts to treat the theory of self-adjoint extensions. For instance, G.~Grubb \cite{grubb68} in the 1960's and 1970's provided characterisations in the context of PDE's; M.G.~Krein \cite{krein47a,krein47b} and J.M.~Berezanskii \cite{berezanskii68}, among others, provided characterisations using the theory of scales of Hilbert spaces; and more recently the theory of boundary triples has been developed. J.~Br\"{u}ning, V.~Geyler and K.~Pankrashkin do a systematic review on this approach in \cite{bruning08}, although some of the results were known earlier, for instance \cite{derkach91,gorbachuk91,kochubei75,visik52}.

The approach that we will follow in this dissertation fills, in certain sense, the gap between the characterisation provided by G.~Grubb and the characterisation in terms of boundary triples. In the former the structure of the boundary is used to characterise the self-adjoint extensions in terms of pseudo-differential operators acting on the Sobolev spaces of the boundary. On the other hand the theory of boundary triples substitutes the structure of the boundary by other in an abstract space. The self-adjoint extensions are then proved to be in one-to-one correspondence with the set of unitary operators acting on this abstract space. The approach presented in this dissertation preserves the structure of the boundary but at the same time parameterises the set of self-adjoint extensions in terms of unitary operators. It is therefore closer to the approach called quasi boundary triples \cite{behrndt07,behrndt12}.

However we follow a slightly different path than in the previous approaches. In contrast to the them, including von Neumann's, where the adjoint operator is restricted to a set where it is utterly proved to be self-adjoint, we are going to follow the ideas and results introduced by K.~Friedrichs \cite{friedrichs34} and T.~Kato \cite{kato95}. According to them, the self-adjoint operator is obtained by the characterisation of an associated semi-bounded quadratic form. More concretely, semi-bounded quadratic forms are associated to a unique self-adjoint operator if they are closable. The main difficulty in this approach is to prove that the corresponding quadratic forms are closable. In contrast, symmetric operators are always closable but the closure need not be a self-adjoint operator and in general there are infinite possible self-adjoint extensions. The following table summarises the relationship between closable forms and operators.\\

Symmetric Operators
\begin{itemize}
\item Always closable.
\item The minimal extension does not need to be a self-adjojnt operator.
\item It is possible that none of the extensions is self-adjoint\\
\end{itemize}

Hermitean Quadratic Forms
\begin{itemize}
\item Not always closable.
\item The minimal extensions is always associated to a s.a. operator.
\end{itemize}

This approach has been applied successfully in \cite{albeverio79,albeverio97,albeverio91} to address the problem of perturbations of self-adjoint operators. The intimate relation between elliptic differential operators and quadratic forms has long been known, see, for instance, \cite{grubb73}. The  canonical example of this situation is the quadratic form $Q(\Phi)=\norm{\d\Phi}^2$. If this quadratic form is defined over $\H^1_0(\Omega)$\,, the associated operator is the Dirichlet extension of $-\Delta_{\mathrm{min}}$; and when defined on $\H^1(\Omega)$\,, the associated operator is the Neumann extension. Also equivariant and Robin-type Laplacians can be naturally described in terms of closed and semi-bounded quadratic forms (see, e.g., \cite{grubb11,kovarik12,lledo07,post12b}). In this context the subtle relation between quadratic forms and representing operators manifests through the fact that the form domain $\mathcal{D}(Q)$ always contains the operator domain $\mathcal{D}(T)$ of the representing operator. Therefore it is often possible to compare different form domains while the domains of the representing operators remain unrelated. This fact allows, e.g., to develop spectral bracketing techniques in very different mathematical and physical situations using the language of quadratic forms \cite{lledo08b,lledo08a}.

To obtain the characterisation of the self-adjoint extensions of the Laplace-Beltrami operator we will analyse the quadratic form 
$$\norm{\d\Phi}^2-\scalarb{\varphi}{\dot{\varphi}}\;.$$
This quadratic form can be understood as a singular perturbation of the Neumann quadratic form above. In this context, the theory of scales of Hilbert spaces \cite[Chapter I]{berezanskii68} provides a natural framework to characterise the corresponding extensions. However, the results appearing in \cite{koshmanenko99} on closable singular perturbations of quadratic forms can not be applied directly in our case and a different approach is needed. 

It was suggested by M.~Asorey, A.~Ibort and G.~Marmo \cite{asorey05} that the particular structure of the boundary conditions
\begin{equation}\label{intro: asorey}
	\varphi-\mathrm{i}\dot{\varphi}=U(\varphi+\mathrm{i}\dot{\varphi})\;,
\end{equation}
can be used to analyse the self-adjoint extensions of the Laplace-Beltrami operator. There it was pointed out that the identification of the set of possible self-adjoint extensions with the corresponding unitary group might be used to analyse the non-trivial structure of the space of self-adjoint extensions. For instance, this identification provides the set of self-adjoint extensions with the structure of a group. They identify that the particular subset of boundary conditions that plays the relevant role in topology change phenomena, and that they call the Cayley submanifold, can be characterised in terms of the spectrum of the aforementioned operator $U$. This approach serves as inspiration to construct quadratic forms on domains similar to those provided by \eqref{intro: asorey}. Now the problem is to proof their closability and, consequently, their semi-boundedness. For this matter, some restrictions have to be introduced in the unitary operators. After a careful analysis of the domains of the self-adjoint operators using the Lions trace theorem \cite[Theorem 8.3]{lions72} we arrive to the spectral gap condition. This is a condition on the spectrum of the unitary operator $U$ that describes the boundary condition. We prove that this condition is sufficient to ensure the semi-boundedness of the quadratic form and is one of the main contributions of this dissertation.

To prove this result a careful analysis of the structure of the underlying Riemannian manifold is needed. The manifold is split into a collar neighbourhood of the boundary and its complement, that shall be called the bulk. The contribution in the bulk is always positive whereas the contribution of the collar neighbourhood is always semi-bounded below provided that the gap condition holds. In order to ensure that the splitting procedure does not interfere with the domains of the different quadratic forms that are involved, we use the remarkable fact that, at the level of quadratic forms, imposing Neumann boundary conditions amounts to consider no conditions at all, cf., \cite[Theorem 7.2.1]{davies95}.

The previous analysis is useful not just for theoretical purposes but also in applications. The evaluation of the spectrum of self-adjoint operators is of main importance since, as stated previously, it consists of the quantities that will be measured in the laboratories. Thus, to possess tools, either analytical or numerical, to solve the corresponding eigenvalue problems is of main importance. Unfortunately, even in dimension one, to implement boundary conditions that are not standard can be very hard. This difficulty relies in part in the difficulties associated to the description of the corresponding domains. The particularly simple form of the boundary condition \eqref{intro: asorey} will help us with these purposes. Another important contribution of this dissertation is to provide a numerical scheme, based on the finite element method, that allows for the numerical computation of the eigenvalue problems of the quadratic forms introduced so far. Again, the gap condition remains to be crucial. In order for such algorithms to work one needs to introduce a special family of finite elements, that are delocalised along the boundary, and that allow for the implementation of the general boundary conditions of the form \eqref{intro: asorey}. We provide here a particular instance for the case of dimension one that is capable of computing the spectrum of any self-adjoint extension of the Laplace-Beltrami operator. This is so because in dimension one all the unitary operators describing boundary conditions verify the gap condition.

The convergence of this family of numerical schemes is proved independently of the dimension. Hence, they can be used to compute the spectrum of any self-adjoint extension of the Laplace-Beltrami operator on any Riemannian manifold provided that the gap condition holds.

Symmetries of quantum and classical systems play an important role in studying their properties. Symmetries are usually associated with ``symmetry groups'', in the case of continuous variables with the Lie groups, and their study has been instrumental in the development of Quantum Mechanics, c.f. \cite{cohen77,weyl50} or the celebrated paper by E.~Wigner on the representations of the Poincar\`{e} group \cite{wigner37}.
It was already noticed by Wigner that transformations of quantum systems preserving the transition probabilities must be implemented by (anti-)unitary operators \cite[Chapter 7]{galindo90}. 
Thus, groups of symmetries of quantum systems must be realised as (anti-)unitary representations.   Finally if $H$ is a self-adjoint operator describing either an observable of the quantum system or a one-parameter group
of unitary operators, it will be said to be $G$-invariant, or that the group $G$ is a symmetry group for $H$, if there is a unitary representation $V\colon G \to \mathcal{U}(\mathcal{H})$\,, $g \mapsto V(g)$\,, of the group $G$, cf., \cite[Chapter 2, Chapter 3]{thaller92}, such that: 
$$V(g)  e^{itH} V(g)^\dagger =  e^{itH} \;.$$

Consider a symmetric operator that is $G$-invariant. It is a common error to assume that its self-adjoint extensions will be $G$-invariant too. In fact, there are plenty of them that will not be invariant. The preceding considerations makes very relevant the characterisation of those extensions that will be $G$-invariant. In the present dissertation we provide an easy characterisation of those self-adjoint extensions that will be compatible with the given symmetries, namely those whose unitary operator $U$ commutes with the unitary representation of the group $G$. We will obtain such characterisation in several ways. First, in the most general context provided by von Neumann's theorem. Then, using Kato's representation theorem. And finally, using the class of examples of the family of quadratic forms associated to the Laplace-Beltrami operator. More concretely, we will study the case when the manifold itself is invariant under the action of the group. Two meaningful examples of this latter case will be discussed.

To conclude the dissertation we will try to step forward into one of the most important open problems in the field of quadratic forms. As we have been discussing so far, Kato's representation theorem is a powerful tool that helps in the characterisation of self-adjoint extensions of symmetric operators. It is also one of the main tools, if not the main, in the analysis of perturbations of self-adjoint operators, cf., \cite{albeverio05,koshmanenko99,reed78}. However, it does not hold if the corresponding quadratic forms are not semi-bounded. Of course, we restrict ourselves to the case of Hermitean quadratic forms. It exists a generalisation of the aforementioned representation theorem for quadratic forms that are not Hermitean, also proved by T.~Kato \cite[Theorem 2.1]{kato95}. In this latter case the assumption of semi-boundedness is substituted by the assumption of sectoriality. This latter condition requires the range of the quadratic form to be contained in a subset of the complex plain with the shape of a wedge of angle smaller than $\pi$. The spectrum of the corresponding operator is then proved to be contained in the same wedge.

The assumption of semi-boundedness is used in the proof of Kato's representation theorem as follows. One uses it to define a proper scalar product on the domain of the quadratic form that promotes it to a Hilbert space (pre-Hilbert space in the case that the quadratic form is closable but not closed). Subsequent use of Riesz's representation theorem provides the desired operator. The latter is ulteriorly proved to be self-adjoint.

Inspired by the theory of self-adjoint extensions of operators with symmetry and their corresponding invariant quadratic forms, we will introduce a family of not necessarily semi-bounded quadratic forms possessing a representation in terms of a self-adjoint operator. They will be called partially orthogonally additive quadratic forms and they constitute a generalisation of the so called orthogonally additive polynomials in Banach lattices.   It is well-known that (continuous) orthogonally additive polynomials on Banach lattices can be represented in a precise form by means of multiplication by a measurable function \cite{benyamini06}. In P.~Linares' thesis there were explored some extensions of these ideas to the particular instance of quadratic polynomials in Hilbert spaces \cite[Section 2.4, Section 2.5]{linares09}. We will extend here such approach dramatically by showing that an appropriate generalisation of the notion of orthogonal additivity leads in a natural way to a representation theorem of  (not necessarily semi-bounded) quadratic forms in terms of self-adjoint operators.   Particular instances of such theorem are provided by quadratic forms invariant under the action of compact Lie groups.\\

The dissertation is organised as follows. Chapter \ref{cha:Preliminaries} is devoted to establish the basic definitions and results on the theory of extensions of symmetric operators and closable quadratic forms. The basic definitions of the theory of Sobolev spaces, the Laplace-Beltrami operator and the theory of scales of Hilbert spaces can be found in this chapter too. In Chapter \ref{cha:QF} we introduce the class of quadratic forms associated to the Laplace-Beltrami operator that will be the main object of our study. We proof there the main results concerning closability and semi-boundedness, as well as we introduce the notion of unitary operators with gap and admissible unitary operators. In Chapter \ref{cha:FEM} we introduce a numerical scheme that can be used in any dimension to approximate the spectral problem of the quadratic forms introduced in Chapter \ref{cha:QF}. We proof the convergence of this scheme as well as we introduce an explicit one-dimensional version of it. Chapter \ref{cha:Symmetry} is devoted to the analysis of self-adjoint extensions when there is a group of symmetries present. The main results of Chapter \ref{cha:Preliminaries} and Chapter \ref{cha:QF} are reviewed in this context and the corresponding characterisation of the self-adjoint extensions compatible with the symmetry group, which will be called $G$-invariant, is introduced. In this chapter we propose a way to generalise Kato's representation theorem to Hermitean quadratic forms that are not semi-bounded. This remains to be one of the main open problems in the field.

\clearemptydoublepage
\chapter{Preliminaries}
\label{cha:Preliminaries}

This chapter is devoted to the introduction of well established results that we shall need during the rest of this dissertation as well as to fix our notation. Appropriated references are cited at the beginning of the corresponding sections.

\section{General theory of symmetric operators in Hilbert space}
\label{sec:vonNeumann}

In this section we review standard material concerning the theory of linear unbounded operators in Hilbert space. The proofs of the main results introduced in this section can be found at \cite[Chapter~VII, Chapter~VIII]{reed72} and \cite[Chapter~X]{reed75}.

In what follows we consider that $\H$ is a complex Hilbert space with inner product and norm denoted respectively by $\scalar{\cdot}{\cdot}$\newnot{symb:scalar}\,, $\norm{\cdot}$\newnot{symb:norm}\,. Our main object of interest is the study of self-adjoint extensions of the Laplace-Beltrami operator. It is worth to point out that the concept of self-adjoint extension of a symmetric operator arises when considering unbounded operators acting on the Hilbert space. Let us introduce briefly the notion of bounded and unbounded operators. As long as it is not stated otherwise we shall consider that we have a linear operator $T$ with dense domain $\D(T)$\newnot{symb:domainT} acting on $\H$.

\begin{definition}
	An operator $T$ is said to be \textbf{bounded} if it exists a positive constant $M$ such that $$\norm{T\Phi}\leq M\norm{\Phi}\quad\Phi\in\D(T)\;.$$
\end{definition}

Bounded operators are continuous on $\H$ and since $\D(T)$ is assume to be dense, they can be extended continuously to the whole Hilbert space. Hence, for bounded operators it is not necessary to select a domain for the operator. On the contrary, unbounded operators are not defined on the whole Hilbert space but only in a dense domain. In fact, since they are not bounded, there exist sequences $\{\Phi_n\}_n\in\D(T)$ that are convergent in $\H$ and such that $$\lim_{n\to\infty}\norm{T\Phi_n}\to\infty\;.$$ Hence, the operator $T$ is not continuous and can not be defined for the limit of such a sequence. It is possible to define a special class of unbounded operators that at least are continuous on their domain. These are the closed operators.

\begin{definition}
	Let $T$ be an operator with dense domain $\D(T)$ and let $\{\Phi_n\}_n$ be a Cauchy sequence in $\H$ such that $\{T\Phi_n\}_n$ is also a Cauchy sequence in $\H$. The operator $T$ is said to be \textbf{closed} if $$\lim_{n\to\infty}\Phi_n\in\D(T)\;.$$ An operator is said to be \textbf{closable} if it has closed extensions, i.e., if it exists an operator $\overline{T}$ such that $\D(\overline{T})\supset\D(T)$ and $$\overline{T}\Bigr|_{\D(T)}=T\;.$$ 
\end{definition}

This definition is equivalent to claim that $\D(T)$ is a complete Hilbert space with respect to the norm $$\normm{\cdot}_T=\sqrt{\norm{\cdot}^2+\norm{T\cdot}^2}\;.\newnot{symb:graphnormT}$$ In general it is always possible to complete $\D(T)$
with respect to the norm $\normm{\cdot}_T$. Unfortunately the extended object need not be a well defined operator and therefore not all the unbounded operators are closable. However there is a special class of operators, the symmetric operators, for which closed extensions always exist. In fact, given a symmetric operator $T$  one can always consider its extension to $\overline{\D(T)}^{\normm{\cdot}_T}$. This extension is closed and symmetric.  Before we define symmetric operators let us recall first the definition of the adjoint operator.

\begin{definition}\label{adjointoperator}
	Let $T$ be a densely defined operator. Let $\Psi\in\H$. Then $\Psi$ is in the domain of the \textbf{adjoint operator}, $\D(T^\dagger)$\,, if it exists $\chi\in\H$ such that
	$$\scalar{\Psi}{T\Phi}=\scalar{\chi}{\Phi}\quad\forall\Phi\in\D(T)\;.$$ In such a case $$T^\dagger\Psi:=\chi\;.$$
\end{definition}
Notice that $\chi$ is uniquely determined because $\D(T)$ is a dense subspace of $\H$\,. It is easy to show that the adjoint operator is always a closed operator.

\begin{definition}\label{def:symmetric}
	An operator $T$ is said to be \textbf{symmetric} if $$\scalar{\Psi}{T\Phi}=\scalar{T\Psi}{\Phi}\quad\forall\Psi,\Phi\in\D(T)\;.$$
\end{definition}

From the last two definitions it is easy to conclude that if $T$ is a symmetric operator then $T^\dagger$ is a closed extension of it, $$T^\dagger\Bigr|_{\D(T)}=T\;.$$ It is a general fact that the domain of the adjoint operator $\D(T^\dagger)$ contains the the domain $\D(T)$.

\begin{definition}
	An operator $T$ is said to be \textbf{self-adjoint} if it is symmetric and $\D(T)=\D(T^\dagger)$.
\end{definition}

The difference between symmetric operators and self-adjoint operators is subtle. For instance, if $T$ is a closed, bounded, symmetric operator then it is automatically self-adjoint. In general it is much easier to define symmetric operators than self-adjoint operators. This is due to the fact that the adjoint operator is a derivative object that depends on the definition of the symmetric operator. Unfortunately, the good properties that Hermitean operators do have in finite dimension, for instance, real spectrum or the spectral resolution of the identity, can be generalised only for self-adjoint operators. Since we will make use of it later let us formulate here, without proof, the spectral theorem, cf., \cite[Section VIII.3]{reed72}, \cite[Section VI.66]{akhiezer61b}. We need first the following definition.

\begin{definition}
	A spectral resolution of the identity is a one-parameter family of orthogonal projection operators $E_t$\newnot{symb:spectralres}, where $t$ runs through a finite or infinite interval $[\alpha, \beta]\subset\mathbb{R}$\,, which satisfy the following conditions:
	\begin{enumerate}
		\item $E_\alpha=0$\,, $E_\beta=\mathbb{I}_\H$\,.
		\item $\lim_{t\to t_0^{-}}E_t=E_{t_0}$\,,\quad $(\alpha<t_0<\beta)$\,.
		\item $E_uE_v=E_s$\,,\quad $(s=\min\{u,v\})$\,.
	\end{enumerate}
	In the case that $[\alpha, \beta]$ is infinite we define $E_{-\infty}:=\mathrm{s}-\lim_{t\to-\infty}E_t$ and $E_{\infty}:=\mathrm{s}-\lim_{t\to\infty}E_t$\,.
\end{definition}

\begin{theorem}[Spectral theorem]\label{spectraltheorem} Let $T$ be a self-adjoint operator. Then it exists a spectral resolution of the identity $E_\lambda$ such that $$\D(T)=\{\Phi\in\H\bigr|\int_{\mathbb{R}}\lambda^2\d(\scalar{\Phi}{E_\lambda\Phi})<\infty\}$$
and
$$T=\int_{\mathbb{R}}\lambda \d E_\lambda\;.$$
\end{theorem}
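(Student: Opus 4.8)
The plan is to reduce the unbounded self-adjoint case to the spectral theorem for \emph{unitary} operators by means of the Cayley transform, following von~Neumann's original strategy. This route is particularly natural in the present setting, since the Cayley transform $\C$ is already one of the central tools of the dissertation; an alternative would be to exhibit $T$ as unitarily equivalent to a multiplication operator, but the Cayley reduction keeps the bookkeeping closest to the objects at hand.

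First I would establish the theorem for a unitary operator $U$ on $\H$. As $U$ is bounded and normal with $\sigma(U)$ contained in the unit circle, one builds a continuous functional calculus: the assignment $p\mapsto p(U)$ for trigonometric polynomials $p$ in $U$ and $U^\dagger=U^{-1}$ extends, via the $C^\ast$-identity and the Weierstrass theorem, to all continuous functions on the circle. For each $\Phi\in\H$ the positive functional $f\mapsto\scalar{\Phi}{f(U)\Phi}$ is represented by a finite positive Borel measure $\mu_\Phi$ on $[0,2\pi)$ through the Riesz--Markov theorem; polarisation then produces a projection-valued measure and a resolution of the identity $F_\theta$ with $U=\int_0^{2\pi}e^{\mathrm{i}\theta}\,\d F_\theta$.

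Next I would set $U:=(T-\mathrm{i})(T+\mathrm{i})^{-1}$. Because $T$ is self-adjoint, $T\pm\mathrm{i}$ map $\D(T)$ bijectively onto $\H$ with bounded inverse, and $\norm{(T-\mathrm{i})\Phi}=\norm{(T+\mathrm{i})\Phi}$ for every $\Phi\in\D(T)$, so $U$ is unitary; moreover $1\notin\sigma_p(U)$ and $\ran(\1-U)=\D(T)$, which is exactly the reflection of $T$ being densely defined and self-adjoint. Applying the previous step to this $U$ yields a resolution $F_\theta$ carried by the circle with no atom at $\theta=0$. The inverse Cayley map $e^{\mathrm{i}\theta}\mapsto\lambda=-\cot(\theta/2)$ is an increasing homeomorphism of the punctured circle onto $\mathbb{R}$; pushing $F$ forward along it defines the desired resolution $E_\lambda$ on $\mathbb{R}$, the absence of an atom at $\theta=0$ guaranteeing $E_{-\infty}=0$ and $E_{\infty}=\1$. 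Using the recovery identity $T=\mathrm{i}(\1+U)(\1-U)^{-1}$ together with the functional calculus, one then checks that $\int_{\mathbb{R}}\lambda\,\d E_\lambda$ agrees with $T$.

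I expect the principal obstacle to be the domain bookkeeping for the unbounded operator: one must verify not merely that $\int_{\mathbb{R}}\lambda\,\d E_\lambda$ and $T$ coincide on a common core, but that their domains are literally equal, namely $\{\Phi\in\H : \int_{\mathbb{R}}\lambda^2\,\d\scalar{\Phi}{E_\lambda\Phi}<\infty\}$. This rests on the identity $\ran(\1-U)=\D(T)$ and on showing that substituting the \emph{unbounded} function $\lambda(\theta)=-\cot(\theta/2)$ into the functional calculus is legitimate, which is precisely where the non-atomicity of $F$ at $\theta=0$ and the control of $\mu_\Phi$ near that point become essential; integrability of $\lambda^2$ against $\d\scalar{\Phi}{E_\lambda\Phi}$ is what singles out $\D(T)$ among all of $\H$.
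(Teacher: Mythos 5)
The paper does not prove this theorem: it is stated explicitly ``without proof'' with references to \cite{reed72} and \cite{akhiezer61b}, so there is no in-paper argument to compare yours against. Your sketch is the classical von Neumann route and its outline is sound: the continuous functional calculus plus Riesz--Markov and polarisation does yield the projection-valued measure for a unitary; the Cayley transform $U=(T-\mathrm{i})(T+\mathrm{i})^{-1}$ of a self-adjoint $T$ is unitary with $\ker(\1-U)=\{0\}$ and $\ran(\1-U)=\D(T)$; and the pushforward of $F_\theta$ along the increasing homeomorphism $\theta\mapsto-\cot(\theta/2)$ of $(0,2\pi)$ onto $\mathbb{R}$ produces $E_\lambda$ with $E_{\pm\infty}$ correct precisely because $F$ has no atom at $\theta=0$. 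You rightly flag the domain bookkeeping as the real content; to close it one uses that for $\Phi=(\1-U)\Psi$ one has $\d\scalar{\Phi}{F_\theta\Phi}=4\sin^2(\theta/2)\,\d\scalar{\Psi}{F_\theta\Psi}$, so $\int\cot^2(\theta/2)\,\d\scalar{\Phi}{F_\theta\Phi}=4\int\cos^2(\theta/2)\,\d\scalar{\Psi}{F_\theta\Psi}<\infty$, and conversely finiteness of $\int\lambda^2\,\d\scalar{\Phi}{E_\lambda\Phi}$ makes $\int(1-e^{\mathrm{i}\theta})^{-1}\,\d F_\theta\Phi$ converge to a preimage $\Psi$, giving both inclusions between $\D(T)$ and the stated integrability set. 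This is essentially the argument of Akhiezer--Glazman (one of the two sources the paper cites); Reed--Simon instead passes through the multiplication-operator form, which you correctly note as the alternative. With that one lemma written out, your proof is complete.
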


The last integral has to be understood in the sense of Bochner, cf., \cite{bochner33}. A similar theorem can be formulated for normal operators and in particular for unitary operators. The main difference between this cases is the support of the spectral resolution of the identity. A characterisation of the differences between the spectra of closed symmetric operators and self-adjoint operators is the following \cite[Theorem X.1]{reed75}.

\begin{theorem}\label{thesymmcases}
	Let $T$ be a closed symmetric operator. Then the spectrum of $T$ is either:
	\begin{enumerate}
		\item The full complex plain.
		\item The upper complex half-plain.
		\item The lower complex half-plain.
		\item A subset of the real axis.
	\end{enumerate}
	Case iv) happens iff $T$ is a self-adjoint operator.
\end{theorem}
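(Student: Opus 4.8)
The plan is to characterise the spectrum of a closed symmetric operator $T$ by studying, for each $z\in\mathbb{C}\setminus\mathbb{R}$, the invertibility of $T-z$, and to show that the answer depends only on the sign of $\mathrm{Im}\, z$. The basic input is the following elementary estimate: for a symmetric operator $T$ and $z=a+\mathrm{i}b$ with $b\neq0$, one has
$$\norm{(T-z)\Phi}^2=\norm{(T-a)\Phi}^2+b^2\norm{\Phi}^2\geq b^2\norm{\Phi}^2\;,\quad\Phi\in\D(T)\;.$$
This is obtained by expanding the inner product $\scalar{(T-a)\Phi-\mathrm{i}b\Phi}{(T-a)\Phi-\mathrm{i}b\Phi}$ and using that $\scalar{(T-a)\Phi}{\Phi}$ is real because $T$, hence $T-a$, is symmetric, so the two cross terms cancel. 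Consequently $T-z$ is injective with bounded inverse on its range, and since $T$ is closed the range $\ran(T-z)$ is a closed subspace of $\H$. Thus for $z\notin\mathbb{R}$ the point $z$ lies in the resolvent set precisely when $\ran(T-z)=\H$, and otherwise $z$ is in the spectrum.

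The next step is to establish that whether $\ran(T-z)=\H$ is constant on each open half-plane. First I would note $\ran(T-z)^\perp=\ker(T^\dagger-\bar z)$, so $\ran(T-z)=\H$ fails exactly when the deficiency space $\mathcal{N}_{\bar z}=\ker(T^\dagger-\bar z)$ is nontrivial; here I use that $\ran(T-z)$ is already closed. The classical fact — provable via a continuity/connectedness argument on the analytic family $z\mapsto T-z$, using the estimate above to control $\dim\ker(T^\dagger-\bar z)$ locally — is that $\dim\mathcal{N}_z$ is constant on the open upper half-plane and (separately) constant on the open lower half-plane. Hence each open half-plane is either entirely contained in $\sigma(T)$ or entirely contained in the resolvent set. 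Combining this with the four logically possible sign patterns (both half-planes in $\sigma(T)$; only the upper; only the lower; neither) yields exactly the four listed alternatives, where in cases ii), iii), iv) the spectrum intersected with $\mathbb{C}\setminus\mathbb{R}$ is the corresponding open half-plane, and one checks that the spectrum is closed so the boundary real axis behaves consistently.

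For the final equivalence, case iv) — spectrum contained in $\mathbb{R}$ — I would argue as follows. If $T$ is self-adjoint, the Spectral Theorem (Theorem \ref{spectraltheorem}) gives $T=\int_{\mathbb{R}}\lambda\,\d E_\lambda$ with real spectral support, and standard functional calculus shows $(T-z)^{-1}$ is bounded for every non-real $z$, so $\sigma(T)\subset\mathbb{R}$. Conversely, if $\sigma(T)\subset\mathbb{R}$, then in particular $z=\mathrm{i}$ and $z=-\mathrm{i}$ lie in the resolvent set, so $\ran(T-\mathrm{i})=\ran(T+\mathrm{i})=\H$; by the deficiency-space identification both deficiency spaces vanish, and by von Neumann's criterion this forces $\D(T)=\D(T^\dagger)$, i.e.\ $T$ is self-adjoint.

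I expect the main obstacle to be the constancy of $\dim\mathcal{N}_z$ on each half-plane: the injectivity estimate and closedness of the range are straightforward, but proving that the deficiency dimension cannot jump requires a genuine local argument, typically showing that for $z,z'$ close enough the map $\Phi\mapsto P_{\mathcal{N}_{z'}}\Phi$ restricted to $\mathcal{N}_z$ is injective (so the dimension is lower semicontinuous) and symmetrically for the reverse inclusion. Everything else reduces to the half-plane estimate, the identity $\ran(T-z)^\perp=\ker(T^\dagger-\bar z)$, and invocation of the Spectral Theorem already available in the text.
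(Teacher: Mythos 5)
Your proposal is correct, and it is essentially the standard argument: the paper itself gives no proof of this theorem but cites it as Theorem~X.1 of Reed--Simon, and your route (the identity $\norm{(T-z)\Phi}^2=\norm{(T-a)\Phi}^2+b^2\norm{\Phi}^2$, closedness of $\ran(T-z)$, the identification $\ran(T-z)^\perp=\ker(T^\dagger-\bar z)$, constancy of the deficiency dimension on each open half-plane, and the basic self-adjointness criterion $n_\pm=0$) is precisely the proof given there. The one step you defer --- that $\dim\mathcal{N}_z$ cannot jump within a half-plane --- is indeed the crux, and the lower-semicontinuity argument you sketch for it is the standard and correct way to close that gap.
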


A natural question that arises is if given a symmetric operator one is able to find a closed extension of it that is self-adjoint and wether or not it is unique. John von Neumann addressed this issue in the late 1920's and answered the question in the most abstract setting, cf., \cite{neumann30}. Let us state the result that will be of most interest for us \cite[Theorem X.2]{reed75}.

\begin{definition}\label{def:deficiencyspaces}
	Let $T$ be a closed, symmetric operator. The \textbf{deficiency spaces} $\mathcal{N}_{\pm}$\newnot{symb:deficiencyspace} are defined to be $$\mathcal{N}_{\pm}=\{\Phi\in\H\bigr|(T^\dagger\mp\mathbf{i})\Phi=0\}\;.$$ The \textbf{deficiency indices} are $$n_{\pm}=\operatorname{dim}\mathcal{N}_{\pm}\;.\newnot{symb:deficiencyindex}$$
\end{definition}

\begin{theorem}[von Neumann]\label{thmvonNeumann}
	Let $T$ be a closed symmetric operator. The self-adjoint extensions of $T$ are in one-to-one correspondence with the set of unitaries (in the usual inner product) of $\mathcal{N}_+$ onto $\mathcal{N}_-$. If $K$ is such a unitary then the corresponding self-adjoint operator $T_K$ has domain 
	$$\D(T_K)=\{\Phi+(\mathbb{I}+K)\xi\bigr|\Phi\in\D(T),\,\xi\in\mathcal{N}_+\}\;,$$ and
	$$T_K\bigl(\Phi+(\mathbb{I}+K)\xi\bigr)=T^\dagger\bigl(\Phi+(\mathbb{I}+K)\xi\bigr)= T\Phi+\mathbf{i}(\mathbb{I}+K)\xi\;.$$
\end{theorem}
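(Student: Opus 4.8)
The plan is to realise every self-adjoint extension of $T$ as a restriction of the adjoint $T^\dagger$ and to classify such restrictions by their behaviour on the deficiency spaces. The starting observation is that any symmetric extension $S$ of $T$ satisfies $T\subseteq S\subseteq S^\dagger\subseteq T^\dagger$, because $T\subseteq S$ implies $S^\dagger\subseteq T^\dagger$ and symmetry of $S$ gives $S\subseteq S^\dagger$. Hence $S$ is self-adjoint exactly when $S=S^\dagger$ is a restriction of $T^\dagger$ to some domain $\D$ with $\D(T)\subseteq\D\subseteq\D(T^\dagger)$ on which $T^\dagger$ stays symmetric and is maximal with this property. The whole problem therefore reduces to describing these admissible intermediate domains. (An equivalent route, thematically closer to the Cayley transforms used later in the text, is to extend the partial isometry $(T-\mathbf{i})(T+\mathbf{i})^{-1}$ to a unitary on $\H$; I outline the decomposition approach instead since it reproduces the stated domain directly.)

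First I would establish von Neumann's first formula, the $\normm{\cdot}_{T^\dagger}$-orthogonal decomposition
\[
\D(T^\dagger)=\D(T)\oplus\mathcal{N}_+\oplus\mathcal{N}_- .
\]
The key input is that, since $T$ is closed, the ranges $\ran(T\pm\mathbf{i})$ are closed subspaces and one has the identities $\mathcal{N}_\pm=\ran(T\pm\mathbf{i})^\perp$. Feeding these into the graph inner product $\scalar{\Phi}{\Psi}+\scalar{T^\dagger\Phi}{T^\dagger\Psi}$ and using $T^\dagger\xi_\pm=\pm\mathbf{i}\,\xi_\pm$ on $\mathcal{N}_\pm$ shows that the three summands are mutually orthogonal and together span $\D(T^\dagger)$. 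Consequently every $\Phi\in\D(T^\dagger)$ has a unique expansion $\Phi=\Phi_0+\xi_++\xi_-$ with $\Phi_0\in\D(T)$ and $\xi_\pm\in\mathcal{N}_\pm$.

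Next I would introduce the boundary sesquilinear form $\omega(\Phi,\Psi)=\scalar{T^\dagger\Phi}{\Psi}-\scalar{\Phi}{T^\dagger\Psi}$, which vanishes as soon as either argument lies in $\D(T)$ by symmetry of $T$. Substituting the decomposition and using $T^\dagger\xi_\pm=\pm\mathbf{i}\,\xi_\pm$, the form collapses to a nonzero multiple of the indefinite pairing $\scalar{\xi_+}{\eta_+}-\scalar{\xi_-}{\eta_-}$ on the $\mathcal{N}$-components. Thus a restriction of $T^\dagger$ to $\D$ is symmetric iff $\omega$ vanishes on $\D\times\D$, i.e.\ iff the projection of $\D$ onto $\mathcal{N}_+\oplus\mathcal{N}_-$ is isotropic for this pairing, and it is self-adjoint iff that isotropic subspace is maximal. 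A maximal isotropic subspace is exactly the graph $\{\xi+K\xi\mid\xi\in\mathcal{N}_+\}$ of an isometry $K$ of $\mathcal{N}_+$ \emph{onto all of} $\mathcal{N}_-$, that is, a unitary (so in particular self-adjoint extensions exist iff $n_+=n_-$). This produces the domain $\D(T_K)=\{\Phi+(\mathbb{I}+K)\xi\mid\Phi\in\D(T),\,\xi\in\mathcal{N}_+\}$ on which $T_K$ acts as $T^\dagger$; since $T^\dagger\xi=\mathbf{i}\xi$ for $\xi\in\mathcal{N}_+$ and $T^\dagger\eta=-\mathbf{i}\eta$ for $\eta\in\mathcal{N}_-$, this yields the displayed expression for $T_K$. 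The map $K\mapsto T_K$ is injective because $K$ is recovered from $\D(T_K)$ through the decomposition, and surjective by the maximality argument, giving the claimed bijection.

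The step I expect to be the main obstacle is the equivalence between self-adjointness and \emph{unitarity} (rather than mere isometry onto a proper subspace) of $K$. Symmetry alone only forces $K$ to be a well-defined isometry on its domain; it is the sharp requirement $\D(T_K^\dagger)=\D(T_K)$ that forces both $\dom K=\mathcal{N}_+$ and $\ran K=\mathcal{N}_-$. Making this precise requires computing $\D(T_K^\dagger)$ explicitly, again via von Neumann's formula, and verifying that any deficiency vector lying in $\D(T_K^\dagger)$ but outside $\D(T_K)$ would violate the boundary-form condition unless $K$ is onto. The closedness of $T$, which guarantees that $\ran(T\pm\mathbf{i})$ and hence $\mathcal{N}_\pm$ are closed with well-behaved orthogonal complements, is precisely what makes these complement computations valid; establishing von Neumann's formula cleanly is the other delicate point, since every subsequent computation rests on it.
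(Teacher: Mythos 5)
Your strategy is sound and essentially complete as an outline. Note first that the paper gives no proof of this theorem: it is quoted from Reed--Simon, where the argument runs through the Cayley transform $(T-\mathbf{i})(T+\mathbf{i})^{-1}$ and its unitary extensions. You explicitly set that route aside in favour of the graph-orthogonal decomposition $\D(T^\dagger)=\D(T)\oplus\mathcal{N}_+\oplus\mathcal{N}_-$ together with the boundary form $\omega$. That choice is defensible and in fact mirrors what the paper itself does later for the Laplace--Beltrami operator: your pairing $\scalar{\xi_+}{\eta_+}-\scalar{\xi_-}{\eta_-}$ on $\mathcal{N}_+\oplus\mathcal{N}_-$ is the abstract analogue of the form $\Sigma_c$ in Proposition \ref{pro:W}, and your classification of its isotropic subspaces is the same argument. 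The decomposition route has the advantage of producing the stated domain formula directly, at the price of having to prove the first von Neumann formula; the Cayley-transform route trades that for the bookkeeping needed to translate unitary extensions back into operator domains.

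Two corrections. First, your sentence ``it is self-adjoint iff that isotropic subspace is maximal'' is false as written: maximal isotropic subspaces correspond to \emph{maximal symmetric} extensions, and these are graphs of isometries whose initial or final space may be a proper closed subspace of $\mathcal{N}_\pm$ (this is exactly what happens when $n_+\neq n_-$, where maximal symmetric extensions exist but no self-adjoint ones do). You evidently know this, since your final paragraph singles out precisely this point and sketches the correct repair --- apply the decomposition formula to the closed symmetric operator $T_K$ itself, identify its deficiency spaces as $\mathcal{N}_+\ominus\dom K$ and $\mathcal{N}_-\ominus\ran K$, and conclude that $T_K$ is self-adjoint iff both vanish, i.e.\ iff $K$ is unitary --- but the body of the proof should not assert the false equivalence on the way there. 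Second, your computation does \emph{not} ``yield the displayed expression'': since $T^\dagger\xi=\mathbf{i}\xi$ on $\mathcal{N}_+$ and $T^\dagger K\xi=-\mathbf{i}K\xi$ on $\mathcal{N}_-$, one gets $T_K\bigl(\Phi+(\mathbb{I}+K)\xi\bigr)=T\Phi+\mathbf{i}(\mathbb{I}-K)\xi$, with a minus sign. The formula printed in the statement carries a sign typo relative to the source; your argument, carried out honestly, detects and corrects it rather than reproducing it.
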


\begin{remark}\label{rem: sa cases}
From all the possibilities we must point out the following ones
	\begin{itemize}
		\item If $n_+=n_-=0$ then the operator is self-adjoint.
		\item If $n_+=n_-\neq 0$ then the operator has infinite self-adjoint extensions.
		\item If $n_+\neq n_-$ then the operator has no self-adjoint extensions.
	\end{itemize}
\end{remark}
We end this section with the following definition.

\begin{definition}
	Let $T$ be a densely defined operator. We say that $T$ is \textbf{semi-bounded from below}, or equivalently lower semi-bounded, if it exists a constant $a\geq0$ such that 
	$$\scalar{\Phi}{T\Phi}\geq -a\norm{\Phi}^2\quad\forall\Phi\in\D(T)\;.$$
	We say that $T$ is \textbf{semi-bounded from above}, or equivalently upper semi-bounded, if $-T$ is semi-bounded from below. The operator $T$ is \textbf{positive} if the lower bound satisfies $a=0$\,.
\end{definition}

Notice that, because of Theorem \ref{thesymmcases}, closed, symmetric, semi-bounded operators are automatically self-adjoint.


\section{Closable quadratic forms}

In this section we introduce the notion of closed and closable quadratic forms. Standard references are,
e.g.,~\cite[Chapter~VI]{kato95}, \cite[Section~VIII.6]{reed72} or \cite[Section~4.4]{davies95}.

\begin{definition}
Let $\D$ be a dense subspace of the Hilbert space $\H$ and denote by $Q\colon\mathcal{D}\times\mathcal{D}\to \mathbb{C}$
a sesquilinear form (anti-linear in the first entry and linear in the second entry). The quadratic form associated to $Q$
with domain $\mathcal D$ is its evaluation on the diagonal, i.e., $Q(\Phi):=Q(\Phi,\Phi)$\,,
$\Phi\in\mathcal{D}$\,. We say that the sesquilinear form is \textbf{Hermitean} if
\[
  Q(\Phi,\Psi)=\overline{Q(\Psi,\Phi)}\;,\quad \Phi,\Psi\in\mathcal{D}\;.
\]
The quadratic form is  \textbf{semi-bounded from below} if there is an $a\geq 0$ such that
\[
  Q(\Phi)\geq -a \norm{\Phi}^2\;,\; \Phi \in \mathcal{D}\;.
\]
The smallest possible value $a$ satisfying the preceding inequality is called the \textbf{lower bound} for the quadratic form $Q$. ç
In particular, if  $Q(\Phi)\geq 0$ for all $\Phi\in\mathcal{D}$ we say that $Q$ is \textbf{positive}.

\end{definition}

Note that if $Q$ is semi-bounded with lower bound $a$\,, then $Q_a(\Phi):=Q(\Phi)+a\norm{\Phi}^2$\,, $\Phi\in\D$\,,
is positive on the same domain. Moreover, the polarisation identity holds. 
\begin{equation}\label{polarization id}
	Q(\Phi,\Psi)=\frac{1}{4}\left[Q(\Phi+\Psi)-Q(\Phi-\Psi)+\mathbf{i}Q(\mathbf{i}\Phi+\Psi)-\mathbf{i}Q(\mathbf{i}\Phi-\Psi)\right]\;.
\end{equation}
We need to recall also the notions of closable and closed quadratic forms as
well as the fundamental representation theorems that relate closed, semi-bounded quadratic forms with self-adjoint, semi-bounded operators.

\begin{definition}
Let $Q$ be a semi-bounded quadratic form with lower bound $a\geq 0$ and dense domain $\D\subset\H$.
The quadratic form $Q$ is \textbf{closed} if $\D$ is closed with respect to the norm
\[
 \normm{\Phi}_Q:=\sqrt{Q(\Phi)+(1+a)\|\Phi\|^2}\;,\quad\Phi\in\D\;.\newnot{symb:graphnormQ}
\]
If Q is closed and $\D_0\subset\D$ is dense with respect to the norm $\normm{\cdot}_Q$\,, then $\D_0$ is called a
\textbf{form core} for $Q$.

Conversely, the closed quadratic form $Q$ with domain $\D$ is called an
\textbf{extension} of the quadratic form $Q$ with domain $\D_0$. A quadratic form is said to be
\textbf{closable} if it has a closed extension.
\end{definition}

\begin{remark}$\phantom{=}$\label{Remclosable}
\begin{enumerate}
\item The norm $\normm{\cdot}_Q$ is induced by the following inner product on
the domain:
\[
 \langle\Phi,\Psi\rangle_Q:= Q(\Phi,\Psi)+(1+a)\langle\Phi,\Psi\rangle\;,\quad \Phi,\Psi\in\D\;.
\]
 \item The quadratic form $Q$ is closable iff whenever a sequence $\{\Phi_n\}_n\subset\D$ satisfies
$\norm{\Phi_n}\to 0$ and $Q(\Phi_n-\Phi_m)\to 0$\,, as $n,m\to\infty$\,, then $Q(\Phi_n)\to 0$.
 \item In general it is always possible to close $\D \subset\mathcal{H}$ with respect to the norm $\normm{\cdot}_Q$.
The quadratic form is closable iff this closure is a subspace of $\H$.
 \end{enumerate}

\end{remark}

\begin{theorem}[Kato's representation theorem]\label{fundteo}
Let $Q$ be an Hermitean, closed, semi-bounded quadratic form defined on the dense domain
$\D\subset\H$. Then it exists a unique self-adjoint, semi-bounded operator $T$
with domain $\D(T)$ and the same lower bound such that
\begin{enumerate}
\item $\Psi\in\mathcal{D}(T)$ iff $\Psi\in \D$ and it exists $\chi \in \H$ such that
$$Q(\Phi,\Psi)=\langle\Phi,\chi\rangle\,,\quad\forall \Phi\in\D\;.$$
In this case we write $T\Psi=\chi$.
\item $Q(\Phi,\Psi)=\langle\Phi,T\Psi\rangle$ for any $\Phi\in\D$\,,\;$\Psi\in\D(T)$.
\item $\D(T)$ is a core for $Q$.
\end{enumerate}
\end{theorem}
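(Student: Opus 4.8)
The plan is to equip the form domain with the graph inner product $\scalar{\cdot}{\cdot}_Q$ of Remark~\ref{Remclosable}, recognise it as a Hilbert space thanks to closedness, and then extract $T$ by inverting a bounded positive operator supplied by Riesz's theorem. First I would note that, since $Q(\Phi)+a\norm{\Phi}^2\geq0$, one has $\norm{\Phi}\leq\normm{\Phi}_Q$, so the identity map embeds $(\D,\normm{\cdot}_Q)$ contractively into $\H$; and because $Q$ is \emph{closed}, $(\D,\scalar{\cdot}{\cdot}_Q)$ is complete, i.e.\ a genuine Hilbert space, which is exactly what makes the Riesz argument available.

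Next, for each $\chi\in\H$ the functional $\Phi\mapsto\scalar{\chi}{\Phi}$ is bounded on this Hilbert space, because $|\scalar{\chi}{\Phi}|\leq\norm{\chi}\,\norm{\Phi}\leq\norm{\chi}\,\normm{\Phi}_Q$. Riesz's representation theorem then yields a unique $G\chi\in\D$ with $\scalar{G\chi}{\Phi}_Q=\scalar{\chi}{\Phi}$ for all $\Phi\in\D$. I would check that $G\colon\H\to\H$ is bounded (with $\norm{G}\leq1$), Hermitean and positive (taking $\Phi=G\chi$ gives $\normm{G\chi}_Q^2=\scalar{\chi}{G\chi}\geq0$, while the Hermitean symmetry of $\scalar{\cdot}{\cdot}_Q$ gives $\scalar{\xi}{G\chi}=\scalar{G\xi}{\chi}$), and injective, since $G\chi=0$ forces $\scalar{\chi}{\Phi}=0$ for all $\Phi$ in the dense set $\D$, hence $\chi=0$. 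A bounded, positive, injective self-adjoint operator has dense range and a self-adjoint (in general unbounded) inverse on that range, so I would define $\D(T):=\ran G$ and $T:=G^{-1}-(1+a)\1$, which is self-adjoint.

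It then remains to read off the three properties. Unwinding $\scalar{\Phi}{G\chi}_Q=\scalar{\Phi}{\chi}$ into $Q(\Phi,G\chi)=\scalar{\Phi}{\chi-(1+a)G\chi}$ and writing $\Psi=G\chi$, $\chi=G^{-1}\Psi$ gives $Q(\Phi,\Psi)=\scalar{\Phi}{T\Psi}$ for all $\Phi\in\D$, $\Psi\in\D(T)$, which is (ii); (i) follows by showing that $\ran G$ is precisely the set of $\Psi\in\D$ for which $Q(\cdot,\Psi)$ extends to an $\H$-bounded functional, the converse inclusion again using density of $\D$ in $\H$; (iii) follows because any $\Phi\in\D$ that is $\scalar{\cdot}{\cdot}_Q$-orthogonal to $\ran G$ satisfies $\scalar{\Phi}{\chi}=\scalar{\Phi}{G\chi}_Q=0$ for every $\chi\in\H$, whence $\Phi=0$, so $\ran G=\D(T)$ is $\normm{\cdot}_Q$-dense. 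The coincidence of lower bounds comes from $\scalar{\Psi}{T\Psi}=Q(\Psi)\geq-a\norm{\Psi}^2$ together with the fact that $\D(T)$ is a core, and uniqueness is immediate since (i) determines both $\D(T)$ and the action of $T$ from $Q$ alone.

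The main obstacle I anticipate is not any single computation but the structural core of the argument: exploiting closedness to guarantee that $(\D,\scalar{\cdot}{\cdot}_Q)$ is complete, so that Riesz applies, and then verifying that the bounded positive injective operator $G$ really produces a \emph{self-adjoint} $T$ through inversion rather than a merely symmetric one. Careful tracking of the sesquilinear conventions (anti-linear in the first slot) throughout the identities for $G$ is the routine-but-error-prone part.
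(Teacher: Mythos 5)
Your argument is correct. Note that the paper itself states this theorem without proof, referring to the standard references (Kato, Ch.~VI; Reed--Simon, Sec.~VIII.6; Davies, Sec.~4.4); the proof you give — completeness of $(\D,\scalar{\cdot}{\cdot}_Q)$ from closedness, Riesz representation to produce the bounded, positive, injective, self-adjoint operator $G$ with $\scalar{G\chi}{\Phi}_Q=\scalar{\chi}{\Phi}$, and then $T=G^{-1}-(1+a)\1$ on $\ran G$ — is exactly the classical argument found there, and all the verifications you sketch (symmetry and injectivity of $G$, the identification of $\ran G$ with the set in (i) via uniqueness in Riesz, density of $\ran G$ in the $\normm{\cdot}_Q$-topology for (iii), and the transfer of the lower bound through the form core) go through as stated.
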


Notice that in the same way as in Definition \ref{adjointoperator}, the element $\chi$ of i) is uniquely determined because $\D$ is dense in $\H$\,.

\begin{remark}\label{rem:represntability}
	A closed, semi-bounded, quadratic form can be represented using the spectral resolution of the identity $E_\lambda$ of the associated self-adjoint operator, cf., Theorem \ref{spectraltheorem},
		$$Q(\Phi,\Psi)=\int_{\mathbb{R}}\lambda\d\scalar{\Phi}{E_\lambda\Psi}\quad\forall\Phi,\Psi\in\D(T)\;.$$
\end{remark}

One of the most common uses of the representation theorem is to obtain self-adjoint extensions of symmetric, semi-bounded operators. Given a semi-bounded, symmetric operator $T$ one can consider the associated quadratic form $$Q_T(\Phi,\Psi)=\scalar{\Phi}{T\Psi}\quad \Phi,\Psi\in\D(T)\;.$$ These quadratic forms are always closable, cf., \cite[Theorem X.23]{reed75}, and therefore their closure is associated to a unique self-adjoint operator. Even if the symmetric operator has infinite possible self-adjoint extensions, the representation theorem allows to select a particular one. This extension is called the Friedrichs extension. The approach that we shall take in Chapter \ref{cha:QF} is close to this method.\\

Before we close this section we introduce one important result that is encompassed as one of the tools of what is commonly know as the variational methods, cf., \cite[Theorem~XIII.2]{reed78}. 

\begin{theorem}[min-max Principle]\label{minmaxprinciple}
	Let $Q$ be a closed, semi-bounded quadratic form with domain $\D$, let $T$ be the associated self-adjoint operator, and let $V_n$ be the subspace $$V_n=\{\Phi\in\D\bigr|\scalar{\Phi}{\xi_i}=0\,,\xi_i\in\H^0(\Omega)\,,i=1,\dots,n\}\;.$$ Define $$\lambda^{(n)}=\sup_{\xi_1,...,\xi_{n-1}}\biggl[\inf_{\Phi\in V_{n-1}} \frac{Q(\Phi,\Phi)}{\norm{\Phi}^2}\biggr]\;.$$ Then, for each fixed $n$, either:
	\begin{itemize}
		\item[a)] there are $n$ eigenvalues (counting degenerate eigenvalues a number of times equal to their multiplicity) below the bottom of the essential spectrum and $\lambda^{(n)}$ is the $n$th eigenvalue counting multiplicity;
	\end{itemize}		
		or
	\begin{itemize}	
		\item[b)] $\lambda^{(n)}$ is the bottom of the essential spectrum, i.e., $$\lambda^{(n)}=\inf\{\lambda\mid\lambda\in\sigma_{\mathrm{ess}}(T)\}$$ and $\lambda^{(n)}=\lambda^{(n+1)}=\lambda^{(n+2)}=\cdots$ and there are at most $n-1$ eigenvalues (counting multiplicity) below $\lambda^{(n)}$.
	\end{itemize}
\end{theorem}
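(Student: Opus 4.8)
The plan is to translate the whole statement into assertions about the spectral measure of $T$ and then settle them with a finite-dimensional counting argument. By Theorem~\ref{spectraltheorem} and Remark~\ref{rem:represntability} the self-adjoint operator $T$ associated with $Q$ carries a spectral resolution $E_\lambda$ with
$$Q(\Phi)=\int_{\mathbb{R}}\lambda\,\d\scalar{\Phi}{E_\lambda\Phi}\;,\qquad \Phi\in\D\;,$$
and one first records that, because of semi-boundedness, the form domain $\D$ is exactly the set of $\Phi$ for which this integral converges (only the behaviour near $+\infty$ can fail). Write $\Sigma=\inf\{\lambda\mid\lambda\in\sigma_{\mathrm{ess}}(T)\}$ and let $e_1\leq e_2\leq\cdots$ be the eigenvalues of $T$ strictly below $\Sigma$, repeated according to multiplicity; by the very definition of the essential spectrum these are isolated and of finite multiplicity, so the phrase ``the first $k$ eigenvectors'' is meaningful and the corresponding spectral subspaces lie in $\D(T)\subset\D$. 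The theorem then amounts to proving $\lambda^{(n)}=e_n$ when at least $n$ such eigenvalues exist, and $\lambda^{(n)}=\Sigma$ otherwise.

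For the upper estimate I would argue that for \emph{every} admissible family $\xi_1,\dots,\xi_{n-1}$ the inner infimum is at most $e_n$ (respectively at most $\Sigma+\varepsilon$ in the second case). Indeed, let $W$ be the $n$-dimensional subspace spanned by the first $n$ eigenvectors (respectively a subspace of $\ran E_{(-\infty,\Sigma+\varepsilon)}$, which is infinite-dimensional since $\Sigma\in\sigma_{\mathrm{ess}}(T)$). A dimension count produces a nonzero $\Phi\in W$ orthogonal to $\xi_1,\dots,\xi_{n-1}$, and the spectral representation gives $Q(\Phi)\leq e_n\norm{\Phi}^2$ (respectively $\leq(\Sigma+\varepsilon)\norm{\Phi}^2$). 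Taking the supremum over the $\xi_i$ yields $\lambda^{(n)}\leq e_n$ (respectively $\lambda^{(n)}\leq\Sigma$ after $\varepsilon\to0$).

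For the matching lower estimate I would instead make a clever choice of the constraints. Taking $\xi_1,\dots,\xi_{n-1}$ to be an orthonormal system of eigenvectors realising $e_1,\dots,e_{n-1}$ (respectively an orthonormal basis of all eigenspaces below $\Sigma$, which uses at most $n-1$ vectors in the second case), every $\Phi\in V_{n-1}$ has spectral measure supported in $[e_n,\infty)$ (respectively in $[\Sigma,\infty)$). Hence $Q(\Phi)\geq e_n\norm{\Phi}^2$ (respectively $\geq\Sigma\norm{\Phi}^2$), so the inner infimum, and therefore $\lambda^{(n)}$, is at least $e_n$ (respectively at least $\Sigma$). Combining the two estimates identifies $\lambda^{(n)}$ exactly and, in the second case, the same computation run for all larger indices gives the stabilisation $\lambda^{(n)}=\lambda^{(n+1)}=\cdots=\Sigma$ with at most $n-1$ eigenvalues below it.

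I expect the main obstacle to be the structural input rather than the counting: one must justify that below $\Sigma$ the spectrum is genuinely discrete with finite multiplicities, so that the finite-dimensional subspaces $W$ and the eigenvector families used above exist, and one must verify the identification of $\D$ with the natural form-domain of the spectral integral so that all test vectors are legitimately in the domain of $Q$. Once these two points are secured, the dichotomy between cases a) and b) is exactly the dichotomy between ``$\ran E_{(-\infty,\Sigma)}$ has dimension $\geq n$'' and ``it has dimension $<n$'', and the argument closes.
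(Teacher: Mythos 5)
The paper itself gives no proof of this statement: it is imported verbatim from Reed and Simon (Theorem~XIII.2 of \cite{reed78}) and used as a black box, so there is no in-paper argument to compare yours against. On its own merits your proposal is the standard min-max argument and it is correct. The upper bound works because for any choice of $\xi_1,\dots,\xi_{n-1}$ the $n-1$ orthogonality conditions cut out a nonzero vector from any $n$-dimensional subspace of $\ran E_{(-\infty,e_n]}$ (resp.\ of $\ran E_{(-\infty,\Sigma+\varepsilon)}$, which is infinite-dimensional precisely because $\Sigma\in\sigma_{\mathrm{ess}}(T)$); such vectors have spectral measure supported on a bounded set, hence lie in $\D(T)\subset\D$, which conveniently sidesteps the form-domain identification for that half of the argument. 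The lower bound works because choosing the constraints to be eigenvectors forces every competitor's spectral measure into $[e_n,\infty)$ (resp.\ $[\Sigma,\infty)$); here you do need the extension of the spectral representation of $Q$ from $\D(T)$ (which is all that Remark~\ref{rem:represntability} asserts) to the full form domain $\D=\D(|T|^{1/2})$, and you correctly flag this as the structural input to be secured — it is part of the standard Kato theory for closed semi-bounded forms. The other input you flag, discreteness and finite multiplicity of the spectrum below $\Sigma$, is immediate from the definition of the essential spectrum. One small bookkeeping point: in case b), when fewer than $n-1$ eigenvalues exist below $\Sigma$, you must still supply $n-1$ constraint vectors; padding with arbitrary vectors only shrinks $V_{n-1}$ and hence only raises the infimum, so the lower bound survives, and running the same two estimates for every index $m\geq n$ gives the stabilisation $\lambda^{(n)}=\lambda^{(n+1)}=\cdots=\Sigma$. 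With those points made explicit the argument closes.
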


We shall use this result to proof the convergence of the numerical scheme introduced in Chapter~\ref{cha:FEM}.


\section{Laplace-Beltrami operator on Riemannian manifolds and Sobolev spaces}

Our aim is to describe a class of closable quadratic forms related to the self-adjoint extensions of the Laplace-Beltrami operator defined
on a compact Riemannian manifold. This section is devoted to the definition of such manifold and of the different spaces of functions that will appear throughout the rest of this dissertation. Any further details can be found at \cite[Chapter 3 et seq.]{marsden01}, \cite[Chapter 3 et seq.]{adams03}, \cite[Section 3.7]{davies95}, \cite[Chapter 1]{lions72}, \cite[Chapter 3]{poor81}.

Let $(\Omega,\pO,\eta)$\newnot{symb:omega} be a smooth, orientable, compact Riemannian manifold  with metric
$\eta$\newnot{symb:metric} and smooth boundary $\partial \Omega$\,.\newnot{symb:po}
We will denote as  $\C^\infty (\Omega)$\newnot{symb:smooth} the space of smooth functions of the Riemannian manifold $\Omega$ and
by $\C_c^\infty (\Omega)$\newnot{symb:smooth compact} the space of smooth functions with compact support in the interior of $\Omega$.
The Riemannian volume form is written as $\d\mu_\eta$\,\newnot{symb:volume}.

\begin{definition}
The \textbf{Laplace-Beltrami Operator} associated to the Riemannian manifold $(\Omega,\pO,\eta)$ is the
second order differential operator $\Delta_\eta:\C^\infty(\Omega)\to\C^\infty(\Omega)$ given by
$$\Delta_\eta\Phi=\frac{1}{\sqrt{|\eta|}}\frac{\partial}{\partial x�}\sqrt{|\eta|}\eta^{ij}\frac{\partial\Phi}{\partial x^j}\;.\newnot{symb:laplacian}$$
\end{definition}

Let $(\tilde{\Omega},\tilde{\eta})$\newnot{symb:omeganotboundary} be a smooth, orientable, boundary-less, compact Riemannian manifold with metric $\widetilde{\eta}$.
The Laplace-Beltrami operator $-\Delta_{\tilde{\eta}}$ associated to the Riemannian manifold $(\tilde{\Omega},\tilde{\eta})$ defines a positive,
essentially self-adjoint, second order differential operator, cf., \cite{marsden01}. One can use it to define the following norms.

\begin{definition} \label{DefSobolev}
Let $k\in\mathbb{R}$\,. The \textbf{Sobolev norm of order} $k$ in the boundary-less Riemannian manifold $(\tilde{\Omega},\tilde{\eta})$ is defined by
$$|| \Phi ||_k^2 := \int_{\tilde{\Omega}} \overline{\Phi} (I - \Delta_{\tilde{\eta}} )^{k}\Phi \d\mu_{\tilde{\eta}}\;. $$
The closure of the smooth functions with respect to this norm $\H^k(\tilde{\Omega}) := \overline{\C^{\infty}(\tilde{\Omega})}^{\norm{\cdot}_k}$ is the \textbf{Sobolev space of class $k$} of the Riemannian manifold $(\tilde{\Omega},\tilde{\eta})$\,.
The scalar products associated to these norms are written as $\scalar{\cdot}{\cdot}_k$.
In the case $k=0$ we will denote the $\H^0(\tilde{\Omega})$ scalar product simply by
$\scalar{\Phi}{\Psi}=\int_{\tilde{\Omega}} \overline{\Phi}\Psi \d\mu_{\tilde{\eta}}$.
\end{definition}

Note that Definition~\ref{DefSobolev} holds only for Riemannian manifolds without boundary. In particular, it holds for the Sobolev spaces defined over the boundary $\pO$\,, $\H^k(\pO)$\newnot{symb:sobolevpo}\,, of the Riemannian manifold $\Omega$\,.
The construction of the Sobolev spaces of functions over a manifold $(\Omega,\pO,\eta)$
cannot be done directly like in the definition above because the Laplace-Beltrami operator does not define in general
positive differential operators. However, it is possible to construct it as a quotient of the Sobolev space of functions
over a Riemannian manifold $(\tilde{\Omega},\tilde{\eta})$ without boundary, cf., \cite[Section 4.4]{taylor96}.

\begin{definition}\label{DefSobolev2}
Let $(\Omega,\pO,\eta)$ be a Riemannian manifold and let $(\tilde{\Omega},\tilde{\eta})$ be any smooth Riemannian manifold without boundary such that
$\smash{\overset{\scriptscriptstyle\circ}{\Omega}}$\newnot{symb:intOmega}\,, i.e., the interior of $\Omega$\,, is an open submanifold of
$\tilde{\Omega}$. Let $k\in\mathbb{R}$. The \textbf{Sobolev space of class} $k$ of the Riemannian manifold $(\Omega,\pO,\eta)$ is the quotient
$$\H^k(\Omega):=\H^k(\tilde{\Omega})/\{\Phi\in\tilde{\Omega}\mid \Phi|_\Omega=0\}\;. \newnot{symb:sobolevomega}$$
The norm is denoted again as $\norm{\cdot}_k$. When there is ambiguity about the manifold, the subindex shall denote the full space, i.e.,
$$\norm{\cdot}_k=\norm{\cdot}_{\H^k(\Omega)}\;.\newnot{symb:normk}\newnot{symb:normkomega}$$
\end{definition}

\begin{remark}\quad
The Sobolev spaces $\H^k(\Omega)$ do not depend on the choice of $\;\tilde{\Omega}$\,.
\end{remark}

We shall need the following subspaces of the the Sobolev spaces.

\begin{definition}
	Let $(\Omega,\eta)$ be any smooth Riemannian manifold with or without boundary. The closure of the set of smooth functions with compact support in the interior of $\Omega$ with respect to the Sobolev norm of order $k\in\mathbb{R}$ is denoted as $\H^k_0(\Omega)$\,, i.e., $$\H^k_0(\Omega):=\overline{\C_c^{\infty}(\Omega)}^{\norm{\cdot}_k}\;.\newnot{symb:sobolevomega0}$$
\end{definition}

There are many equivalent ways to define the Sobolev norms. In particular, we shall need the following characterisation.

\begin{proposition}\label{equivalentsobolev}
	The Sobolev norm of order 1, $\norm{\cdot}_1$\,, is equivalent to the norm $$\sqrt{\norm{\d\cdot}^2_{\Lambda^1}+\norm{\cdot}^2}\;,$$
	where $\d$\newnot{symb:d} stands for the exterior differential acting on functions and $\norm{\d\cdot}_{\Lambda^1}$ is the induced norm from the natural scalar product among 1-forms $\alpha\in\Lambda^1(\Omega)$\newnot{symb:oneforms}, cf., \cite[Chapter 6]{marsden01}.
\end{proposition}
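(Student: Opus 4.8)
The plan is to reduce everything to an exact identity that holds on a closed manifold and then transport it to $\Omega$ through the quotient construction of Definition~\ref{DefSobolev2}. Since $\C^\infty(\Omega)$ is dense in $\H^1(\Omega)$, it suffices to prove the two-sided bound for $\Phi\in\C^\infty(\Omega)$ and then extend the equivalence by continuity to the whole space.

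First I would record the exact identity on the boundary-less manifold $(\tilde{\Omega},\tilde{\eta})$ containing $\overset{\circ}{\Omega}$. For $\Phi\in\C^\infty(\tilde{\Omega})$, expanding $\norm{\Phi}_1^2=\int_{\tilde{\Omega}}\overline{\Phi}(I-\Delta_{\tilde{\eta}})\Phi\,\d\mu_{\tilde{\eta}}$ and integrating by parts via Green's formula — which carries no boundary contribution because $\partial\tilde{\Omega}=\varnothing$ — gives $-\int_{\tilde{\Omega}}\overline{\Phi}\,\Delta_{\tilde{\eta}}\Phi\,\d\mu_{\tilde{\eta}}=\int_{\tilde{\Omega}}\langle\d\Phi,\d\Phi\rangle\,\d\mu_{\tilde{\eta}}=\norm{\d\Phi}^2_{\Lambda^1}$, i.e. $-\Delta_{\tilde{\eta}}=\d^\ast\d$ as the quadratic form of the codifferential. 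Hence on $\tilde{\Omega}$ the two norms are not merely equivalent but equal, $\norm{\Phi}^2_1=\norm{\Phi}^2+\norm{\d\Phi}^2_{\Lambda^1}$.

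Next I would pass to $\Omega$. Here $\norm{\cdot}_1$ is the quotient norm $\norm{[\Phi]}_1=\inf\{\norm{\tilde{\Phi}}_{\H^1(\tilde{\Omega})}:\tilde{\Phi}|_\Omega=\Phi\}$. One inequality is immediate: for any extension $\tilde{\Phi}$ of $\Phi$, restricting the integrals in the closed-manifold identity from $\tilde{\Omega}$ to $\Omega$ only decreases them, so $\norm{\Phi}^2+\norm{\d\Phi}^2_{\Lambda^1}\le\norm{\tilde{\Phi}}^2_{\H^1(\tilde{\Omega})}$; taking the infimum over extensions yields $\sqrt{\norm{\d\Phi}^2_{\Lambda^1}+\norm{\Phi}^2}\le\norm{\Phi}_1$ with constant $1$. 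For the reverse inequality one must produce a controlled extension, namely a bounded linear operator $E\colon\H^1(\Omega)\to\H^1(\tilde{\Omega})$ with $\norm{E\Phi}_{\H^1(\tilde{\Omega})}\le C\,(\norm{\Phi}^2+\norm{\d\Phi}^2_{\Lambda^1})^{1/2}$, whence $\norm{\Phi}_1\le\norm{E\Phi}_{\H^1(\tilde{\Omega})}\le C\sqrt{\norm{\d\Phi}^2_{\Lambda^1}+\norm{\Phi}^2}$, completing the equivalence.

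The main obstacle is exactly this extension estimate; it is the Sobolev extension theorem, and it is where the smoothness and compactness of $\pO$ enter. Using a finite atlas of collar charts flattening the boundary together with a subordinate partition of unity, one reflects across the boundary — a higher-order reflection ensuring $\H^1$-regularity — to build $E$ locally and patches the pieces, the compactness of $\pO$ furnishing a uniform constant $C$. This is the sole genuinely analytic ingredient, and I would cite it from the standard references, e.g. \cite[Section 4.4]{taylor96} or \cite{adams03}. An alternative route bypassing the explicit construction is functional-analytic: the intrinsic norm is complete on $\H^1(\Omega)$ because the graph of $\d$ is closed in $\H^0(\Omega)\oplus\Lambda^1 L^2(\Omega)$, so the identity from the quotient-normed space to the intrinsically-normed space is a continuous bijection between Banach spaces and the bounded inverse theorem upgrades the one-sided bound to an equivalence; this variant still relies, however, on knowing that the two completions coincide as subspaces of $\H^0(\Omega)$, which is again secured by the approximation properties of the smooth boundary.
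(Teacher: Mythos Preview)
Your argument is correct and follows the same core idea as the paper: establish the exact identity $\norm{\Phi}_1^2=\norm{\Phi}^2+\norm{\d\Phi}_{\Lambda^1}^2$ on the closed ambient manifold $\tilde{\Omega}$ via $-\Delta_{\tilde{\eta}}=\d^\dagger\d$. The paper's proof simply asserts that ``it is enough to show it for a boundary-less Riemannian manifold'' and then verifies the identity there, without spelling out the passage to $\Omega$. You, by contrast, make that passage explicit through the quotient norm: restriction of integrals gives one inequality for free, and a bounded Sobolev extension operator $E\colon\H^1(\Omega)\to\H^1(\tilde{\Omega})$ (or the bounded-inverse variant you sketch) supplies the other. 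This is not a different route so much as a more honest one---the extension theorem is precisely what is hidden behind the paper's ``it is enough,'' and your account makes clear where the smoothness and compactness of $\pO$ are actually used.
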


\begin{proof}
	It is enough to show it for a boundary-less Riemannian manifold $(\tilde{\Omega},\tilde{\eta})\,.$ The Laplace-Beltrami operator can be expressed in terms of the exterior differential and its formal adjoint,
	$$-\Delta_{\tilde{\eta}}=\d^\dagger\d\;,$$
	where the formal adjoint is defined to be the unique differential operator $\d^\dagger:\Lambda^1(\tilde{\Omega})\to\C^\infty(\tilde{\Omega})$  that verifies $$\scalar{\alpha}{\d\Phi}_{\Lambda^1}=\scalar{\d^\dagger\alpha}{\Phi}\quad \alpha\in\Lambda^1(\tilde{\Omega}),\Phi\in\C^\infty(\tilde{\Omega})\;.$$ Let $\Phi\in\C^{\infty}(\tilde{\Omega})$. Then we have that
	\begin{align*}
		\norm{\Phi}^2_1&=\int_{\tilde{\Omega}}\bar{\Phi}(I-\Delta_{\tilde{\eta}})\Phi\d\mu_{\tilde{\eta}}\\
			&=\int_{\tilde{\Omega}}\bar{\Phi}\Phi\d\mu_{\tilde{\eta}}+ \int_{\tilde{\Omega}}\bar{\Phi}\d^\dagger\d\Phi\d\mu_{\tilde{\eta}}\\
			&=\norm{\Phi}^2+\scalar{\d\Phi}{\d\Phi}_{\Lambda^1}=\norm{\Phi}^2+\norm{\d\Phi}^2_{\Lambda^1}\;.
	\end{align*}
\end{proof}

The subindex $\Lambda^1$ will be omitted when it is clear from the context which scalar products are considered.\\

The boundary $\pO$ of the Riemannian manifold $(\Omega,\pO,\eta)$ has itself the structure of a Riemannian manifold
without boundary $(\pO,\partial\eta)$. The Riemannian metric induced at the boundary is just the pull-back of the
Riemannian metric $\partial\eta=i^\star\eta$\newnot{symb:partialeta}\,, where $i:\pO\to\Omega$ is the inclusion map. The spaces of smooth functions
over the two manifolds verify that $\C^\infty(\Omega)\bigr|_{\pO}\simeq\C^{\infty}(\pO)$.

There is an important relation between the Sobolev spaces defined over the manifolds $\Omega$ and $\pO$ (see Definition \ref{DefSobolev} and Definition \ref{DefSobolev2}).
This is the well known Lions trace theorem (cf., \cite[Theorem 7.39]{adams03}, \cite[Theorem 9.4 of Chapter 1]{lions72}):

\begin{theorem}[Lions trace theorem]\label{LMtracetheorem}
Let $\Phi\in\C^{\infty}(\Omega)$ and let $\gamma:\C^\infty(\Omega)\to\C^{\infty}(\pO)$ be the trace map $\gamma(\Phi)=\Phi\bigr|_{\pO}$\newnot{symb:trace}. There is a unique continuous extension of the trace map such that
\begin{enumerate}
\item $\gamma:\H^{k}(\Omega)\to\H^{k-1/2}(\pO)$\,, $k > 1/2$\;.\\
\item The map is surjective\;.\\
\end{enumerate}
\end{theorem}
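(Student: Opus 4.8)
The plan is to reduce the statement to a model computation on the half-space and then to globalise. First I would choose a finite atlas of boundary charts that straighten $\pO$, together with a subordinate smooth partition of unity. Since $\H^k(\Omega)$ is defined as a quotient of $\H^k(\tilde\Omega)$ and $\H^k(\pO)$ is defined intrinsically via Definition \ref{DefSobolev}, and since Sobolev norms are invariant up to equivalence under diffeomorphisms and multiplication by compactly supported smooth functions, it suffices to prove both the continuity estimate and the surjectivity for functions supported in a single chart, where $\Omega$ is modelled on the half-space $\mathbb{R}^n_+=\{(x',x_n):x_n\geq 0\}$ and $\pO$ on $\mathbb{R}^{n-1}=\{x_n=0\}$.

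In this model I would take the partial Fourier transform $\hat\Phi(\xi',x_n)$ in the tangential variables $x'$, so that $\widehat{\gamma\Phi}(\xi')=\hat\Phi(\xi',0)$. The core estimate comes from the one-dimensional identity
$$|\hat\Phi(\xi',0)|^2=-\int_0^\infty \partial_{x_n}|\hat\Phi(\xi',x_n)|^2\,\d x_n\leq \int_0^\infty\Bigl(\langle\xi'\rangle\,|\hat\Phi|^2+\langle\xi'\rangle^{-1}|\partial_{x_n}\hat\Phi|^2\Bigr)\d x_n\;,$$
where $\langle\xi'\rangle=(1+|\xi'|^2)^{1/2}$ and I have used $2ab\leq\langle\xi'\rangle a^2+\langle\xi'\rangle^{-1}b^2$. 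Multiplying by $\langle\xi'\rangle^{2k-1}$ and integrating in $\xi'$ bounds $\norm{\gamma\Phi}_{k-1/2}^2$ by $\norm{\Phi}_{k}^2$; the condition $k>1/2$ is what makes the weight integrable, and the half-integer loss is exactly the gain produced by the factor $\langle\xi'\rangle^{-1}$. Since $\C^\infty$ functions are dense in $\H^k(\Omega)$ by the quotient construction, this estimate yields at once the existence and the uniqueness of the continuous extension of $\gamma$, proving part i).

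For surjectivity (part ii)) I would exhibit an explicit bounded right inverse. Given $u\in\H^{k-1/2}(\pO)$, in the model I set $\widehat{Eu}(\xi',x_n)=\hat u(\xi')\,\chi(\langle\xi'\rangle x_n)$ for a fixed $\chi\in\C_c^\infty([0,\infty))$ with $\chi(0)=1$. A direct Fourier computation then gives $\gamma(Eu)=u$ and $\norm{Eu}_{k}\leq C\,\norm{u}_{k-1/2}$, the rescaling $x_n\mapsto\langle\xi'\rangle x_n$ once more accounting for the half-derivative. Patching these local extensions with the partition of unity produces a global bounded right inverse of $\gamma$, so $\gamma$ is onto.

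The main obstacle I expect is the fractional bookkeeping: obtaining the precise index shift $k\mapsto k-1/2$ for non-integer $k$ forces one to work with the weighted rescaling $x_n\mapsto\langle\xi'\rangle x_n$ rather than with integer-order derivative counting, and to verify that the quotient/double definition of $\H^k(\Omega)$ is genuinely compatible with restriction to boundary charts and with the intrinsic definition of $\H^{k-1/2}(\pO)$. Once the model estimates are in place, the globalisation is routine but must be checked carefully against these definitional choices.
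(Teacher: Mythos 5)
The paper does not prove this theorem: it is stated as a quoted result with references to Adams and Lions--Magenes, so there is no internal proof to compare against. Your sketch is precisely the standard argument found in those references --- localisation to the half-space by boundary charts and a partition of unity, the tangential Fourier transform with the $\langle\xi'\rangle$-weighted Cauchy--Schwarz step for the continuity bound, and the rescaled cutoff $\chi(\langle\xi'\rangle x_n)$ as an explicit bounded right inverse for surjectivity --- and it is correct; the one point to tighten is that your one-dimensional integration-by-parts identity as written yields the case $k=1$ directly, whereas for general fractional $k>1/2$ one should instead pass to the full Fourier transform and use $\int_{\mathbb{R}}\langle\xi\rangle^{-2k}\,\d\xi_n=C_k\langle\xi'\rangle^{1-2k}$, which is where the hypothesis $k>1/2$ genuinely enters.
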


Finally we introduce for later use some particular operators associated to the Laplacian. Consider the symmetric operator $\Delta_0:=\Delta_\eta\bigr|_{\C^\infty_c(\Omega)}$\newnot{symb:Delta0}. Then we have the following extensions of it.

\begin{definition}\quad\newline
\begin{enumerate}
\item The \textbf{minimal closed extension} $\Delta_{\mathrm{min}}$\newnot{symb:Deltamin} is defined to be the closure of $\Delta_0$.
Its domain is $\D(\Delta_{\mathrm{min}})=\H^2_0:=\overline{\C^\infty_c(\Omega)}^{\norm{\cdot}_2}$\,.
\item The \textbf{maximal closed extension} $\Delta_{\mathrm{max}}$\newnot{symb:Deltamax} is the closed operator defined in the domain
$\D(\Delta_{\mathrm{max}})=\bigl\{ \Phi\in\H^0(\Omega)\bigr| \Delta_\eta\Phi\in\H^0(\Omega) \bigr\}$\,.\\
\end{enumerate}
\end{definition}

The trace map defined in Theorem \ref{LMtracetheorem} can be extended continuously to $\D(\Delta_{\mathrm{max}})$\,,
see for instance \cite{frey05,grubb68,lions72}:


\begin{theorem}[Weak trace theorem for the Laplacian]\label{weaktracetheorem}
The Sobolev space $\H^k(\Omega)$\,, with $k\geq2$\,, is dense in $\mathcal{D}(\Delta_{\mathrm{max}})$ and
there is a unique continuous extension of the trace map $\gamma$ such that
$$\gamma \colon \mathcal{D}(\Delta_{\mathrm{max}}) \to H^{-1/2}(\pO)\;.$$
Moreover, $\ker \gamma = H_0^2(\Omega)$\,.
\end{theorem}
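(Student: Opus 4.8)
The plan is to build the extended trace by duality through Green's second identity, to derive the density statement from elliptic regularity combined with a regularisation near the boundary, and finally to identify the kernel. Throughout I write the graph norm as $\normm{\Phi}_{\Delta_{\mathrm{max}}}^2=\norm{\Phi}^2+\norm{\Delta_\eta\Phi}^2$ and I use that $\Delta_{\mathrm{max}}=\Delta_{\mathrm{min}}^\dagger$, so that for $\Psi\in\D(\Delta_{\mathrm{min}})$ and $\Phi\in\D(\Delta_{\mathrm{max}})$ the boundary term in Green's formula vanishes.

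First, the \emph{a priori} estimate. For $\Phi,\Psi\in\C^\infty(\Omega)$ Green's second identity reads
$$\scalar{\Psi}{\Delta_\eta\Phi}-\scalar{\Delta_\eta\Psi}{\Phi}=\scalarb{\psi}{\dot\varphi}-\scalarb{\dot\psi}{\varphi}\;,$$
where $\varphi=\gamma(\Phi)$, $\dot\varphi$ is the Neumann trace, and likewise for $\Psi$. Given $g\in\H^{1/2}(\pO)$ I would pick a lifting $\Psi_g\in\H^2(\Omega)$ with $\gamma(\Psi_g)=0$ and Neumann trace $\dot\psi_g=-g$, which exists with $\norm{\Psi_g}_{\H^2(\Omega)}\le C\norm{g}_{\H^{1/2}(\pO)}$ by the right inverse of the Cauchy trace on $\H^2(\Omega)$, a standard companion of Theorem~\ref{LMtracetheorem}. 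The identity then collapses to $\scalarb{g}{\gamma(\Phi)}=\scalar{\Psi_g}{\Delta_\eta\Phi}-\scalar{\Delta_\eta\Psi_g}{\Phi}$, so by Cauchy--Schwarz
$$|\scalarb{g}{\gamma(\Phi)}|\le\norm{\Psi_g}\,\norm{\Delta_\eta\Phi}+\norm{\Delta_\eta\Psi_g}\,\norm{\Phi}\le C'\norm{g}_{\H^{1/2}(\pO)}\,\normm{\Phi}_{\Delta_{\mathrm{max}}}\;.$$
Taking the supremum over $\norm{g}_{\H^{1/2}(\pO)}\le1$ and using $\H^{-1/2}(\pO)=\H^{1/2}(\pO)^\ast$ yields $\norm{\gamma(\Phi)}_{\H^{-1/2}(\pO)}\le C'\normm{\Phi}_{\Delta_{\mathrm{max}}}$, the desired control of the Dirichlet trace by the graph norm.

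Second, density. I would split $\D(\Delta_{\mathrm{max}})$ using the Dirichlet realisation $\Delta_D$, which is self-adjoint with $\D(\Delta_D)=\H^2(\Omega)\cap\H^1_0(\Omega)$ and for which $(I-\Delta_D)^{-1}\colon\H^0(\Omega)\to\H^2(\Omega)$ is bounded by elliptic regularity. Setting $w:=(I-\Delta_D)^{-1}(I-\Delta_\eta)\Phi\in\H^2(\Omega)$ and $u:=\Phi-w$ gives $\D(\Delta_{\mathrm{max}})=\bigl(\H^2(\Omega)\cap\H^1_0(\Omega)\bigr)\oplus\ker(I-\Delta_{\mathrm{max}})$, a direct and topological sum in the graph norm. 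The first summand already lies in $\H^2(\Omega)$, so only $u\in\ker(I-\Delta_{\mathrm{max}})$ must be approximated; there $\Delta_\eta u=u$, the graph norm reduces to $\sqrt{2}\,\norm{u}$, and it suffices to approximate $u$ in $\H^0(\Omega)$ by functions of $\C^\infty(\Omega)$ whose Laplacians also converge. By interior elliptic regularity $u\in\H^2_{\mathrm{loc}}$ of the interior, and I would regularise it by translating it inwards along the collar neighbourhood $\Xi$ and then mollifying, producing $u_t\in\C^\infty(\Omega)$ with $u_t\to u$ and $\Delta_\eta u_t\to\Delta_\eta u$ in $\H^0(\Omega)$. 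The real work, and the main obstacle, lies exactly here: one must control the graph-norm convergence of the inward translation against the variation of the metric across the shift and against the derivatives produced by the cut-off gluing the translated piece to the bulk. This is the classical but delicate step taken from \cite{frey05,grubb68,lions72}; it shows that $\C^\infty(\Omega)$, hence every $\H^k(\Omega)$ with $k\ge2$, is dense.

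Finally, the extension and the kernel. The estimate together with density produce a unique continuous map $\gamma\colon\D(\Delta_{\mathrm{max}})\to\H^{-1/2}(\pO)$ agreeing with the classical trace on $\C^\infty(\Omega)$. For the kernel, the inclusion $\H^2_0(\Omega)\subseteq\ker\gamma$ is immediate: $\H^2_0(\Omega)=\D(\Delta_{\mathrm{min}})$ is the $\norm{\cdot}_2$-closure of $\C^\infty_c(\Omega)$, the graph norm is dominated by $\norm{\cdot}_2$, and $\gamma$ vanishes on $\C^\infty_c(\Omega)$. For the reverse inclusion I would feed $\gamma(\Phi)=0$ back into Green's identity so that the Dirichlet part of the boundary term drops out, and then use the surjectivity of the trace and the decomposition above to place $\Phi$ in the minimal domain $\H^2_0(\Omega)$, which completes the characterisation.
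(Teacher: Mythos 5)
First, a point of reference: the paper does not prove Theorem~\ref{weaktracetheorem} at all — it quotes it from \cite{frey05,grubb68,lions72} — so your proposal can only be measured against the standard arguments in those sources. Your architecture (an a priori bound on the trace by duality through Green's second identity with an $\H^2$-lifting of the Neumann datum, then density, then the kernel) is the right one, and the first step is correct as written. The genuine gap is in the density step, exactly where you flag it. The decomposition $\D(\Delta_{\mathrm{max}})=\bigl(\H^2(\Omega)\cap\H^1_0(\Omega)\bigr)\oplus\ker(I-\Delta_{\mathrm{max}})$ is fine, but inward translation plus mollification does not visibly converge in the graph norm: an element $u$ of $\ker(I-\Delta_{\mathrm{max}})$ is only in $\H^0(\Omega)$ up to the boundary, interior elliptic estimates bound its second derivatives on the set at distance $>t$ from $\pO$ only by $Ct^{-2}\norm{u}$, while the deviation between $\eta$ and the pulled-back metric under a shift of size $t$ is only $O(t)$; the resulting error $\Delta_\eta u_t-u_t$ is therefore a priori of size $O(t^{-1})$ in $\H^0(\Omega)$ and the argument does not close. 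The references prove density differently, e.g.\ by duality: a graph-continuous functional $\Phi\mapsto\scalar{f}{\Phi}+\scalar{g}{\Delta_\eta\Phi}$ that vanishes on $\H^2(\Omega)$ forces $f=-\Delta_\eta g$ distributionally and then, via Green's identity, $g\in\H^2_0(\Omega)=\D(\Delta_{\mathrm{min}})$, so the functional vanishes on all of $\D(\Delta_{\mathrm{max}})=\D(\Delta_{\mathrm{min}}^\dagger)$. Some such input is needed; it cannot be waved through with a collar translation.

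The kernel step also cannot be completed as sketched, and here the obstruction is in the statement itself: with $\gamma$ the Dirichlet trace alone, $\ker\gamma=\H^2_0(\Omega)$ is false — every $\Phi\in\H^2(\Omega)\cap\H^1_0(\Omega)$ with non-vanishing normal derivative lies in $\ker\gamma\setminus\H^2_0(\Omega)$, and indeed your own decomposition exhibits such elements. Your sketch reflects the problem: after $\gamma(\Phi)=0$ kills one boundary term, Green's identity still leaves the pairing $\scalarb{\psi}{\dot\varphi}$ of the Neumann trace of $\Phi$ against arbitrary Dirichlet data of the test function, and no appeal to surjectivity of the trace removes it. The statement that is actually true, and that the cited references prove, concerns the full Cauchy data map $\Phi\mapsto(\varphi,\dot\varphi)$ with values in $\H^{-1/2}(\pO)\times\H^{-3/2}(\pO)$: when both traces vanish, Green's identity gives $\scalar{\Phi}{\Delta_\eta\Psi}=\scalar{\Delta_\eta\Phi}{\Psi}$ for all $\Psi\in\H^2(\Omega)$, the density just established upgrades this to all $\Psi\in\D(\Delta_{\mathrm{max}})$, and hence $\Phi\in\D(\Delta_{\mathrm{max}}^\dagger)=\D(\Delta_{\mathrm{min}})=\H^2_0(\Omega)$. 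You should either prove that corrected statement or note that only the continuity of the extended Dirichlet trace is used later in the paper (in Theorem~\ref{maintheorem2}), so the kernel claim can be restated without harm.
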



\section{Scales of Hilbert spaces}

Later on we will need the theory of scales of Hilbert spaces, also known as theory of rigged Hilbert spaces.
In the following paragraph we state the main results, (see \cite[Chapter I]{berezanskii68}, \cite[Chapter 2]{koshmanenko99} for proofs and more results).

Let $\H$ be a Hilbert space with scalar product $\scalar{\cdot}{\cdot}$ and induced norm $\norm{\cdot}$. Let $\H_+$ be a dense linear
subspace of $\H$ which is a complete Hilbert space with respect to another scalar product that will be denoted by $\scalar{\cdot}{\cdot}_+$.
The corresponding norm is $\norm{\cdot}_+$ and we assume that
\begin{equation}\label{inclusion inequality}
	\norm{\Phi}\leq\norm{\Phi}_+\;,\quad \Phi\in\H_+\;.
\end{equation}

Any vector $\Phi\in\H$ generates a continuous linear functional $L_\Phi\colon\H_+\to \mathbb{C}$  as follows. For $\Psi\in\H_+$ define
\begin{equation}
	L_{\Phi}(\Psi)=\scalar{\Phi}{\Psi}\;.
\end{equation}
Continuity follows by the Cauchy-Schwartz inequality and Eq.\ \eqref{inclusion inequality}.
\begin{equation}
L_{\Phi}(\Psi) \leq \norm{\Phi}\cdot\norm{\Psi}\leq\norm{\Phi}\cdot\norm{\Psi}_+\;,
                    \quad\forall \Phi\in\H\;,\forall\Psi\in\H_+\;.	
\end{equation}
Since $L_\Phi$ represents a continuous linear functional on $\H_+$ it can be represented, according to Riesz theorem,
using the scalar product in $\H_+$. Namely, it exists  a vector $\xi\in\H_+$ such that
\begin{equation}
	\forall\Psi\in\H_+\,,\quad L_{\Phi}(\Psi)=\scalar{\Phi}{\Psi}=\scalar{\xi}{\Psi}_+\;,
\end{equation}
and the norm of the functional coincides with the norm in $\H_+$ of the element $\xi$\,, i.e.,
 $$\norm{L_\Phi}=\sup_{\Psi\in\H_+}\frac{|L_\Phi(\Psi)|}{\norm{\Psi}_+}=\norm{\xi}_+\;.$$
One can use the above equalities to define an operator
\begin{equation}
\begin{array}{c}
\hat{I}\colon \H\to\H_+\\
\hat{I}\Phi=\xi\;.
\end{array}
\end{equation}
This operator is clearly injective since $\H_+$ is a dense subset of $\H$ and therefore it can be used to define a new scalar product on $\H$
\begin{equation}
	\scalar{\cdot}{\cdot}_-:=\scalar{\hat{I}\cdot}{\hat{I}\cdot}_+\;.
\end{equation}
The completion of $\H$ with respect to this scalar product defines a new Hilbert space, $\H_-$\,, and the corresponding norm will be
denoted accordingly by $\norm{\cdot}_-$\,. It is clear that $\H_+\subset\H\subset \H_-$\,, with dense inclusions.
Since $\norm{\xi}_+=\norm{\hat{I}\Phi}_+=\norm{\Phi}_-$\,, the operator $\hat{I}$ can be extended by continuity to an isometric bijection.

\begin{definition}\label{def:scales}
The Hilbert spaces $\H_+$\,, $\H$ and $\H_-$ introduced above define a \textbf{scale of Hilbert spaces}. The extension by continuity of the
operator $\hat{I}$ is called the \textbf{canonical isometric bijection}. It is denoted by:
\begin{equation}
I\colon \H_-\to\H_+\;.\newnot{symb:I}
\end{equation}
\end{definition}

\begin{proposition}\label{proppairing}
The scalar product in $\H$ can be extended continuously to a pairing
\begin{equation}
\pair{\cdot}{\cdot}\colon\H_-\times\H_+\to\mathbb{C}\;.\newnot{symb:pair}
\end{equation}
\end{proposition}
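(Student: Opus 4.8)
The plan is to extend the scalar product by continuity from $\H\times\H_+$ to $\H_-\times\H_+$, exploiting the fact that $\H$ is dense in $\H_-$ while $\H_+$ is already complete, so that no extension is needed in the second argument.

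First I would establish the crucial continuity estimate. For $\Phi\in\H$ and $\Psi\in\H_+$ recall that, by the very construction of $\hat{I}$, one has $\scalar{\Phi}{\Psi}=\scalar{\hat{I}\Phi}{\Psi}_+$. Applying the Cauchy--Schwarz inequality in $\H_+$ together with the isometry relation $\norm{\hat{I}\Phi}_+=\norm{\Phi}_-$ yields
\begin{equation*}
  |\scalar{\Phi}{\Psi}|=|\scalar{\hat{I}\Phi}{\Psi}_+|\leq\norm{\hat{I}\Phi}_+\,\norm{\Psi}_+=\norm{\Phi}_-\,\norm{\Psi}_+\;.
\end{equation*}
Thus the sesquilinear map $\scalar{\cdot}{\cdot}\colon\H\times\H_+\to\mathbb{C}$ is continuous with respect to $\norm{\cdot}_-$ in the first entry and $\norm{\cdot}_+$ in the second.

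Next, for each fixed $\Psi\in\H_+$ the map $\Phi\mapsto\scalar{\Phi}{\Psi}$ is a bounded functional on the dense subspace $\H\subset\H_-$, so it admits a unique continuous extension to all of $\H_-$. Denoting this extension by $\pair{\cdot}{\Psi}$ and letting $\Psi$ vary produces a map $\pair{\cdot}{\cdot}\colon\H_-\times\H_+\to\mathbb{C}$. Sesquilinearity transfers to the extension by density, since fixing one entry gives two continuous maps agreeing on a dense set; and passing to the limit in the displayed inequality shows that $|\pair{\phi}{\Psi}|\leq\norm{\phi}_-\,\norm{\Psi}_+$ for all $\phi\in\H_-$, $\Psi\in\H_+$, which is exactly joint continuity. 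Because $\hat{I}$ extends to the isometric bijection $I\colon\H_-\to\H_+$, approximating $\phi$ by a sequence in $\H$ furnishes moreover the explicit representation $\pair{\phi}{\Psi}=\scalar{I\phi}{\Psi}_+$. Finally, for $\Phi\in\H$ one has $I\Phi=\hat{I}\Phi$, whence $\pair{\Phi}{\Psi}=\scalar{\hat{I}\Phi}{\Psi}_+=\scalar{\Phi}{\Psi}$, confirming that the pairing genuinely extends the scalar product of $\H$.

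The argument is essentially routine, and there is no serious obstacle; the only point demanding care is isolating the correct continuity estimate. The bound $|\scalar{\Phi}{\Psi}|\leq\norm{\Phi}_-\,\norm{\Psi}_+$ is what licenses the extension, and it rests on rewriting the $\H$-scalar product through $\hat{I}$ so that Cauchy--Schwarz may be applied inside $\H_+$, where the defining isometry converts the $\H_+$-norm of $\hat{I}\Phi$ into the $\H_-$-norm of $\Phi$. Once this estimate is secured, density of $\H$ in $\H_-$ and the bounded-linear-transformation principle complete the proof.
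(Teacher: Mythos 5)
Your proposal is correct and follows the same route as the paper: the paper's proof consists precisely of the estimate $|\scalar{\Phi}{\Psi}|=|\scalar{I\Phi}{\Psi}_+|\leq\norm{I\Phi}_+\norm{\Psi}_+=\norm{\Phi}_-\norm{\Psi}_+$ obtained via Cauchy--Schwarz in $\H_+$, leaving the density/extension argument implicit. You supply the same key bound and simply spell out the routine completion.
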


\begin{proof}
Let $\Phi\in\H$ and $\Psi\in\H_+$. Using the Cauchy-Schwartz inequality we have the following
\begin{equation}\label{CSpairing}
	|\scalar{\Phi}{\Psi}|=|\scalar{I\Phi}{\Psi}_+|\leq \norm{I\Phi}_+\norm{\Psi}_+=\norm{\Phi}_-\norm{\Psi}_+\;.
\end{equation}
\end{proof}

\clearemptydoublepage
\chapter{Closable and Semi-bounded Quadratic Forms Associated to the Laplace-Beltrami Operator}
\markboth{Closable and Semi-bounded Quadratic Forms}{}
\label{cha:QF}

In this chapter we construct a wide class of closed, semi-bounded quadratic forms on the space of square integrable functions over a smooth Riemannian manifold with smooth boundary. Each of these quadratic forms specifies a semi-bounded self-adjoint extension of the Laplace-Beltrami operator. These quadratic forms are based on the Lagrange boundary form on the manifold and a family of domains parameterised by a suitable class of unitary operators on the Hilbert space of the boundary that will be called admissible.

There are previous results characterising the lower semi-boundedness of sesqui\-linear forms, see for instance \cite{grubb70}. However, we exploit the particularly simple form of the boundary condition \eqref{intro: asorey} in order to provide a direct characterisation of a class of self-adjoint and semi-bounded extensions of the Laplace-Beltrami operator. This class of semi-bounded extensions, as will be shown in Section \ref{QF: Examples}, covers a number of relevant known examples that can be handled easily in this manner.

In Section \ref{sec:class} we introduce the aforementioned class of quadratic forms associated to the Laplace-Beltrami operator. We give the main definitions and analyse thoroughly the structure of the boundary equation \eqref{intro: asorey}. In Section \ref{sec:closable and semibounded qf} we prove the main results of this chapter. Namely, those concerning the semi-boundedness and closability of the class of quadratic forms. The last section, Section \ref{QF: Examples}, is devoted to introduce a number of meaningful examples.


\section{A class of closable quadratic forms on a Riemannian manifold}\label{sec:class}

We begin presenting a canonical sesquilinear form that, on smooth functions over the Riemannian manifold $\Omega$\,, is associated to
the Laplace-Beltrami operator. Motivated by this quadratic form we will address questions like hermiticity, closability
and semi-boundedness on suitable domains.

Integrating once by parts the expression
$\scalar{\Phi}{-\Delta_\eta\Psi}$ we obtain, on smooth functions, the following sesquilinear form
$Q \colon \C^{\infty}(\Omega)\times\C^{\infty}(\Omega)\to\mathbb{C}$\,,
\begin{equation} \label{Q-def}
Q(\Phi,\Psi)=\scalar{\d\Phi}{\d\Psi}_{\Lambda^1}-\scalarb{\varphi}{\dot{\psi}}\;.
\end{equation}

From now on the restrictions to the boundary are going to be denoted with the corresponding small size greek letters, $\varphi:=\gamma(\Phi)$\,.\newnot{symb:varphi}
The doted small size greek letters denote the restriction to the boundary of the normal derivatives, $\dot{\varphi}:=\gamma(\d\Phi(\nu))$\,,\newnot{symb:dotvarphi}
where $\nu\in\mathfrak{X}(\Omega)$\newnot{symb:vectorfield} is any vector field such that $\mathrm{i}_\nu\d\mu_\eta=\d\mu_{\partial\eta}$\,.
Notice that in the expression above $\d \Phi\in \Lambda^1(\Omega)$ is a 1-form on $\Omega$\,, thus the inner product $\langle \cdot, \cdot \rangle_{\Lambda^1}$
is defined accordingly by using the induced Hermitean structure on the cotangent bundle (see, e.g., \cite{poor81}).
We have therefore that $$\scalar{\d\Phi}{\d\Psi}_{\Lambda^1}= \int_\Omega \eta^{-1}(\d \bar{\Phi}, \d\Psi) \diff \mu_\eta\;.$$
In the second term at the right hand side of \eqref{Q-def} $\scalarb{\cdot}{\cdot}$ stands for the induced scalar product at the boundary
given explicitly by
\begin{equation}\label{scalarboundary}
\scalarb{\varphi}{\psi}=\int_{\pO }\bar{\varphi}\,  \psi \,\diff \mu_{\partial \eta}  ,
\end{equation}\newnot{symb:scalarb}
where $\diff\mu_{\partial \eta}$ is the Riemannian volume defined by the restricted Riemannian metric $\partial\eta$\,. The subscript $\Lambda^1$ will be dropped from now on as along as there is no risk of confusion.\\\

In general, the sesquilinear form $Q$ defined above is not Hermitean. To study subspaces where $Q$ is Hermitean it is convenient to isolate the part of $Q$ related to the boundary data $(\varphi,\dot{\varphi})$\,.

\begin{definition}\label{def: lagrangebf}
Let $\Phi,\Psi\in \C^{\infty}(\Omega)$ and denote by $(\varphi,\dot{\varphi})$\,, $(\psi,\dot{\psi})$ the corresponding boundary data. The \textbf{Lagrange boundary form} is defined as:
\begin{equation}\label{lagrange}
	\Sigma\bigl(\Phi,\Psi\bigr)=\Sigma\bigl((\varphi,\dot{\varphi}),(\psi,\dot{\psi})\bigr):=\scalarb{\varphi}{\dot{\psi}}-\scalarb{\dot{\varphi}}{\psi}.
\end{equation}
Any dense subspace $\mathcal{D}\subset\H^0(\Omega)$ is said to be \textbf{isotropic with respect to $\Sigma$} if $\Sigma\bigl(\Phi,\Psi\bigr)=0\quad\forall\Phi,\Psi\in \D$\,.
\end{definition}

\begin{proposition}
The sesquilinear form $Q$ defined in Eq.~(\ref{Q-def}) on a dense subspace $\mathcal{D}\subset \H^{0}$ is Hermitian iff $\mathcal{D}$ is isotropic with respect to $\Sigma$\,.
\end{proposition}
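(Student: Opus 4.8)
The plan is to work directly from the definition of Hermiticity. A sesquilinear form $Q$ on $\D$ is Hermitean precisely when $Q(\Phi,\Psi)=\overline{Q(\Psi,\Phi)}$ for all $\Phi,\Psi\in\D$\,, so I would compute the difference $Q(\Phi,\Psi)-\overline{Q(\Psi,\Phi)}$ and identify it with (minus) the Lagrange boundary form $\Sigma$ of Definition \ref{def: lagrangebf}. Once that single identity is in hand, both implications of the ``iff'' fall out immediately.

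First I would expand $\overline{Q(\Psi,\Phi)}$ using Eq.~\eqref{Q-def}. The key point is that the bulk term is governed by a genuine Hermitean scalar product on 1-forms: since $\scalar{\cdot}{\cdot}_{\Lambda^1}$ is conjugate-symmetric (anti-linear in the first entry, linear in the second), one has $\overline{\scalar{\d\Psi}{\d\Phi}}=\scalar{\d\Phi}{\d\Psi}$\,, and likewise the induced boundary pairing \eqref{scalarboundary} satisfies $\overline{\scalarb{\psi}{\dot{\varphi}}}=\scalarb{\dot{\varphi}}{\psi}$\,. Hence $\overline{Q(\Psi,\Phi)}=\scalar{\d\Phi}{\d\Psi}-\scalarb{\dot{\varphi}}{\psi}$\,. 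Subtracting, the Dirichlet (bulk) terms cancel identically and I am left with
$$Q(\Phi,\Psi)-\overline{Q(\Psi,\Phi)}=-\scalarb{\varphi}{\dot{\psi}}+\scalarb{\dot{\varphi}}{\psi}=-\Sigma(\Phi,\Psi)\;,$$
which is exactly $-\Sigma\bigl(\Phi,\Psi\bigr)$ by Eq.~\eqref{lagrange}. From this identity the equivalence is transparent: if $\D$ is isotropic with respect to $\Sigma$\,, then $\Sigma\equiv 0$ on $\D$ and the difference vanishes, so $Q$ is Hermitean; conversely, if $Q$ is Hermitean the difference vanishes for every pair $\Phi,\Psi\in\D$\,, forcing $\Sigma(\Phi,\Psi)=0$\,, i.e.\ $\D$ is isotropic.

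There is essentially no obstacle here beyond careful bookkeeping; the entire content is the cancellation of the symmetric bulk term together with the correct tracking of the ``anti-linear in the first entry'' convention, so that the two surviving boundary terms assemble precisely into $\Sigma$\,. The one fact worth emphasising explicitly is the conjugate-symmetry of $\scalar{\d\Phi}{\d\Psi}_{\Lambda^1}$\,: this is what guarantees that the bulk contribution can never obstruct hermiticity, so that the Hermitean character of $Q$ is controlled entirely by the boundary data $(\varphi,\dot{\varphi})$ through the Lagrange boundary form.
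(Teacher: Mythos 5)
Your proof is correct and takes essentially the same route as the paper's: the paper's argument also reduces hermiticity of $Q$ to the vanishing of $\Sigma$ on $\mathcal{D}$ directly from the definition of $Q$, merely leaving implicit the cancellation of the bulk term that you write out. Your identity $Q(\Phi,\Psi)-\overline{Q(\Psi,\Phi)}=-\Sigma(\Phi,\Psi)$ is exactly the content of the paper's phrase ``by definition of $Q$ this is equivalent to $\Sigma(\Phi,\Psi)=0$''.
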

\begin{proof}
The sesquilinear form $Q\colon\mathcal{D}\times\mathcal{D}\to\mathbb{C}$ is Hermitean if $Q(\Phi,\Psi)=\overline{Q(\Psi,\Phi)}$ for all $\Phi,\Psi\in\mathcal{D}$\,. By definition of $Q$ this is equivalent to $\Sigma\bigl(\Phi,\Psi\bigr)=0$\,, for all $\Phi,\Psi\in\mathcal{D}$\,, hence $\mathcal{D}$ is isotropic with respect to $\Sigma$\,. The reverse implication is obvious.
\end{proof}


\subsection{Isotropic subspaces}
The analysis of maximally isotropic subspaces can be handled more easily using the underlying Hilbert space structure of the Lagrange boundary form
and not considering for the moment any regularity question.
The expression \eqref{lagrange} can be understood as a sesquilinear form on the boundary Hilbert space $\H_b:=\H^0(\pO)\times \H^0(\pO)$\,,
\begin{equation*}
\Sigma\bigl(\Psi,\Phi\bigr)
	=\scalarb{\varphi}{\dot{\psi}}-\scalarb{\dot{\varphi}}{\psi}\;.\\
\end{equation*}
We will therefore focus now on the study of the sesquilinear form on the Hilbert space $\H_b$ and, while there is no risk of confusion, we will denote the scalar product in $\H^0(\pO)$ simply as $\scalar{\cdot}{\cdot}$\,,
\[
	\Sigma\left((\varphi_1,\varphi_2),(\psi_1,\psi_2)\right)
                :=\langle\varphi_1,\psi_2\rangle - \langle\varphi_2,\psi_1\rangle\;,\quad
	(\varphi_1,\varphi_2),(\psi_1,\psi_2)\in \H_b\,.
\]
Formally, $\Sigma$ is a sesquilinear, symplectic form by which we mean that it satisfies the following conditions:
\begin{enumerate}
	\item $\Sigma$ is conjugate linear in the first argument and linear in the second.
	\item $\Sigma\Big((\varphi_1,\varphi_2),(\psi_1,\psi_2)\Big)
               =-\overline{\Sigma\Big((\psi_1,\psi_2),(\varphi_1,\varphi_2)\Big)}$\,,
                 $(\varphi_1,\varphi_2),(\psi_1,\psi_2)\in\H_b$\,.
	\item $\Sigma$ is non-degenerate, i.e., $\Sigma\Big((\varphi_1,\varphi_2),(\psi_1,\psi_2)\Big)=0$
               for all $(\psi_1,\psi_2)\in\H_b$ implies $(\varphi_1,\varphi_2)=(0,0)$\,.
\end{enumerate}
The analysis of the isotropic subspaces of such sesquilinear forms is by no means new and their characterisation is well known (\cite{bruning08}, \cite{kochubei75}). However, in order to keep this dissertation self-contained, we provide in the following paragraphs independent proofs of the main results that we will need.

First we write the sesquilinear symplectic form $\Sigma$ in diagonal form. This is done introducing the unitary Cayley transformation $\C\colon\H_b\to\H_b$\,,\newnot{symb:Cayley}
\[
 \C(\varphi_1,\varphi_2):=\frac{1}{\sqrt{2}}\left( \varphi_1+\mathbf{i}\varphi_2, \varphi_1-\mathbf{i}\varphi_2 \right)\;,\;
  (\varphi_1,\varphi_2)\in\H_b\;.
\]
Putting
\[ \Sigma_c\Big((\varphi_+,\varphi_-),(\psi_+,\psi_-)\Big)
        :=-\mathbf{i}\Big( \langle\varphi_+,\psi_+\rangle - \langle\varphi_-,\psi_-\rangle\Big)
\]
with $(\varphi_+,\varphi_-),(\psi_+,\psi_-)\in\H_b$\,, the relation between $\Sigma$ and $\Sigma_c$ is given by
\begin{equation}\label{relation}
	\Sigma\Big((\varphi_1,\varphi_2),(\psi_1,\psi_2)\Big)
  	=\Sigma_c\Big(\C(\varphi_1,\varphi_2),\C(\psi_1,\psi_2)\Big)
\end{equation}
with $(\varphi_1,\varphi_2),(\psi_1,\psi_2)\in\H_b$\,.

\begin{definition}
Consider a subspace $\mathcal{W}\subset\H_b$ and define the \textbf{$\Sigma$-orthogonal subspace} by
\[
\mathcal{W}^{\perp_\Sigma}:=\left\{(\varphi_1,\varphi_2)\in\H_b\mid\Sigma\Big((\varphi_1,\varphi_2),(\psi_1,\psi_2)\Big)=0
                  \;,\;\forall(\psi_1,\psi_2)\in \mathcal{W}\right\}\;.
\]
A subspace $\mathcal{W}\subset\H_b$ is \textbf{$\Sigma$-isotropic} [resp.~\textbf{maximally $\Sigma$-isotropic}] if $\mathcal{W}\subset \mathcal{W}^{\perp_\Sigma}$ [resp.~$\mathcal{W}= \mathcal{W}^{\perp_\Sigma}$].
\end{definition}

We begin enumerating some direct consequences of the preceding definitions:
\begin{lemma}
Let $\mathcal{W}\subset\H_b$ and put $\mathcal{W}_c:=\C(\mathcal{W})$\,.
	\begin{enumerate}\label{lemma-iso}
 		\item $\mathcal{W}$ is $\Sigma$-isotropic [resp.~maximally $\Sigma$-isotropic] iff $\mathcal{W}_c$ is $\Sigma_c$-isotropic [resp. maximally $\Sigma_c$-isotropic].
		 \item If $(\varphi_1,\varphi_2)\in \mathcal{W}\subset \mathcal{W}^{\perp_\Sigma}$\,, then $\langle \varphi_1,\varphi_2\rangle=\overline{\langle \varphi_1,\varphi_2\rangle}$\,. If $(\varphi_+,\varphi_-)\in \mathcal{W}_c\subset \mathcal{W}_c^{\perp_{\Sigma_c}}$\,, then $\|\varphi_+\|=\|\varphi_-\|$\,.
	\end{enumerate}
\end{lemma}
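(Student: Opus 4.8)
The plan is to derive everything from the intertwining relation \eqref{relation} together with the fact that the Cayley transformation $\C$ is a unitary, hence bijective, map on $\H_b$.

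For part (i), I would first observe that $\Sigma$-isotropy of $\mathcal{W}$ means $\Sigma(w,w')=0$ for all $w,w'\in\mathcal{W}$; rewriting each value via \eqref{relation} as $\Sigma_c(\C w,\C w')$, and using that $\mathcal{W}_c=\C(\mathcal{W})$ is exactly the set of images, this is precisely $\Sigma_c$-isotropy of $\mathcal{W}_c$, with the equivalence running in both directions. For the maximal case the key step is to show that $\C$ carries $\Sigma$-orthogonal complements to $\Sigma_c$-orthogonal complements, i.e.\ $\C(\mathcal{W}^{\perp_\Sigma})=\mathcal{W}_c^{\perp_{\Sigma_c}}$. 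I would prove this by writing an arbitrary element of $\H_b$ as $\C w$ (legitimate because $\C$ is onto) and noting that $\C w\in\mathcal{W}_c^{\perp_{\Sigma_c}}$ iff $\Sigma_c(\C w,\C w')=0$ for all $w'\in\mathcal{W}$, which by \eqref{relation} is $\Sigma(w,w')=0$ for all $w'\in\mathcal{W}$, i.e.\ $w\in\mathcal{W}^{\perp_\Sigma}$. Applying the bijection $\C$ to the identity $\mathcal{W}=\mathcal{W}^{\perp_\Sigma}$ then yields $\mathcal{W}_c=\mathcal{W}_c^{\perp_{\Sigma_c}}$, and conversely, which gives the equivalence of maximality.

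For part (ii), the idea is simply to evaluate the forms on the diagonal. If $(\varphi_1,\varphi_2)\in\mathcal{W}\subset\mathcal{W}^{\perp_\Sigma}$, then the vector is $\Sigma$-orthogonal to itself, so $\Sigma\bigl((\varphi_1,\varphi_2),(\varphi_1,\varphi_2)\bigr)=\scalar{\varphi_1}{\varphi_2}-\scalar{\varphi_2}{\varphi_1}=0$; since $\scalar{\varphi_2}{\varphi_1}=\overline{\scalar{\varphi_1}{\varphi_2}}$, this says precisely that $\scalar{\varphi_1}{\varphi_2}=\overline{\scalar{\varphi_1}{\varphi_2}}$. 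Likewise, if $(\varphi_+,\varphi_-)\in\mathcal{W}_c\subset\mathcal{W}_c^{\perp_{\Sigma_c}}$, evaluating $\Sigma_c$ on the diagonal gives $-\mathbf{i}\bigl(\norm{\varphi_+}^2-\norm{\varphi_-}^2\bigr)=0$, and hence $\norm{\varphi_+}=\norm{\varphi_-}$.

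I do not anticipate a genuine obstacle here: the whole content is encoded in the intertwining identity \eqref{relation}, and the only point requiring a little care is the surjectivity argument used to identify $\C(\mathcal{W}^{\perp_\Sigma})$ with $\mathcal{W}_c^{\perp_{\Sigma_c}}$ in the maximal case, which relies on $\C$ being a bijection rather than merely an isometry.
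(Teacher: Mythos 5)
Your proof is correct and follows essentially the same route as the paper's: part (i) via the intertwining relation \eqref{relation} and the unitarity (hence bijectivity) of $\C$, and part (ii) by evaluating $\Sigma$ and $\Sigma_c$ on the diagonal. The paper merely states that (i) ``follows directly'' from these facts, whereas you spell out the useful intermediate identity $\C(\mathcal{W}^{\perp_\Sigma})=\mathcal{W}_c^{\perp_{\Sigma_c}}$, which is exactly the step the paper leaves implicit.
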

\begin{proof}
Part (i) follows directly from Eq.(\ref{relation}) and the fact that $\C$ is a unitary transformation. To prove (ii) note that if $(\varphi_1,\varphi_2)$ is in an isotropic subspace $\mathcal{W}$\,, then
\[
 \Sigma\Big((\varphi_1,\varphi_2),(\varphi_1,\varphi_2)\Big)
         =\langle\varphi_1,\varphi_2\rangle - \langle\varphi_2,\varphi_1\rangle=0\;.
\]
One argues similarly in the other case.
\end{proof}

\begin{proposition}\label{pro:W}
Let $\mathcal{W}_\pm\subset\H^0(\pO)$ be closed subspaces and put $\mathcal{W}_c:=\mathcal{W}_+\times \mathcal{W}_-\subset\H_b$\,.
\begin{enumerate}
	\item The subspace $\mathcal{W}_c$ is $\Sigma_c$-isotropic iff it exists a partial isometry
            $$V\colon \H^0(\pO)\to \H^0(\pO)$$ with initial space $\mathcal{W}_+$
            and final space $\mathcal{W}_-$\,, i.e., $V^*V(\H^0(\pO))=\mathcal{W}_+$ and $VV^*(\H^0(\pO))=\mathcal{W}_-$
            and
\[
 \mathcal{W}_c=\{(\varphi_+,V\varphi_+)\mid \varphi_+\in \mathcal{W}_+\}=\mathrm{gra}\,V\;.
\]
	\item The subspace $\mathcal{W}_c$ is maximally $\Sigma_c$-isotropic iff it exists a unitary $$U\colon \H^0(\pO)\to \H^0(\pO)$$ such that
		\begin{equation}\label{eq:U}
			\mathcal{W}_c=\{(\varphi_+,U\varphi_+)\mid \varphi_+\in \H^0(\pO)\}=\mathrm{gra}\,U\;.
		\end{equation}
\end{enumerate}

\end{proposition}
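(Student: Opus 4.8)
The plan is to characterize maximally $\Sigma_c$-isotropic subspaces of the product form $\mathcal{W}_c=\mathcal{W}_+\times\mathcal{W}_-$ by translating the condition $\mathcal{W}_c=\mathcal{W}_c^{\perp_{\Sigma_c}}$ into a statement about an operator relating $\mathcal{W}_+$ and $\mathcal{W}_-$. Recall that $\Sigma_c\big((\varphi_+,\varphi_-),(\psi_+,\psi_-)\big)=-\mathbf{i}\big(\langle\varphi_+,\psi_+\rangle-\langle\varphi_-,\psi_-\rangle\big)$, so isotropy of $\mathcal{W}_c$ is exactly the condition $\langle\varphi_+,\psi_+\rangle=\langle\varphi_-,\psi_-\rangle$ for all pairs in $\mathcal{W}_c$. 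This is a norm-preserving constraint, which is why partial isometries (part i) and unitaries (part ii) will appear naturally.

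**For part (i),** I would first prove the forward direction. Assuming $\mathcal{W}_c$ is $\Sigma_c$-isotropic, I want to build a map $V$ sending $\mathcal{W}_+$ to $\mathcal{W}_-$. The key observation is that isotropy forces $\mathcal{W}_c$ to be a graph: if $(\varphi_+,\varphi_-)$ and $(\varphi_+,\varphi_-')$ both lie in $\mathcal{W}_c$, then their difference $(0,\varphi_--\varphi_-')$ is in $\mathcal{W}_c$, and the isotropy condition $\langle 0,0\rangle=\langle\varphi_--\varphi_-',\varphi_--\varphi_-'\rangle$ gives $\varphi_-=\varphi_-'$. Hence $V\varphi_+:=\varphi_-$ is well defined on $\mathcal{W}_+$, and the isotropy identity $\|\varphi_+\|=\|\varphi_-\|=\|V\varphi_+\|$ (cf. Lemma~\ref{lemma-iso}(ii)) shows $V$ is isometric on $\mathcal{W}_+$. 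Extending $V$ by zero on $\mathcal{W}_+^\perp$ produces a partial isometry with initial space $\mathcal{W}_+$ and final space $\mathcal{W}_-$, and $\mathcal{W}_c=\mathrm{gra}\,V$ by construction. For the converse, given such a $V$ one simply verifies the isotropy condition $\langle\varphi_+,\psi_+\rangle=\langle V\varphi_+,V\psi_+\rangle$ directly from $V^*V$ acting as the identity on $\mathcal{W}_+$.

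**For part (ii),** I would reduce to part (i) and then analyze the extra constraint imposed by \emph{maximality}. A maximally isotropic $\mathcal{W}_c$ is in particular isotropic, so by part (i) it is the graph of a partial isometry $V$ with initial space $\mathcal{W}_+$ and final space $\mathcal{W}_-$. The task is to show maximality forces $V$ to be a genuine unitary, i.e. $\mathcal{W}_+=\mathcal{W}_-=\H^0(\pO)$. The strategy is to show that if $\mathcal{W}_+\neq\H^0(\pO)$, then $\mathcal{W}_c$ can be strictly enlarged while remaining isotropic, contradicting maximality: pick $0\neq\xi\in\mathcal{W}_+^\perp$ and $0\neq\zeta\in\mathcal{W}_-^\perp$ with $\|\xi\|=\|\zeta\|$, and check that adjoining $(\xi,\zeta)$ to $\mathcal{W}_c$ keeps it isotropic, since $\langle\xi,\varphi_+\rangle=0$ and $\langle\zeta,\varphi_-\rangle=0$ for all existing elements. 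This is where I expect the \textbf{main obstacle}: one must verify not only that $\mathcal{W}_c^{\perp_{\Sigma_c}}$ strictly contains $\mathcal{W}_c$ when $V$ is not unitary, but also argue dimensionally/orthogonally that $\mathcal{W}_+^\perp$ and $\mathcal{W}_-^\perp$ are nontrivial \emph{simultaneously}—and the isometry $\|\xi\|=\|\zeta\|$ requirement must be met, which relies on the defect spaces having compatible size. Conversely, when $V=U$ is unitary one checks $\mathcal{W}_c^{\perp_{\Sigma_c}}\subset\mathcal{W}_c$: any $(\psi_+,\psi_-)$ in the $\Sigma_c$-orthogonal complement satisfies $\langle\psi_+,\varphi_+\rangle=\langle\psi_-,U\varphi_+\rangle=\langle U^*\psi_-,\varphi_+\rangle$ for all $\varphi_+$, forcing $\psi_+=U^*\psi_-$, i.e. $\psi_-=U\psi_+$, so the pair lies in $\mathrm{gra}\,U$.
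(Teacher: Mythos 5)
Your part (i) and the converse direction of part (ii) follow the paper's own proof essentially verbatim: well-definedness and the partial-isometry property of $V$ come from the self-isotropy identity $\|\varphi_+\|=\|\varphi_-\|$ of Lemma~\ref{lemma-iso}, and in the converse of (ii) the computation $\langle\varphi_+,\psi_+-U^*\psi_-\rangle=0$ forces $\psi_-=U\psi_+$. No issues there.

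The forward direction of part (ii) is where you diverge, and your strategy has a genuine gap --- one you partly flag yourself but do not close. You argue by enlarging $\mathcal{W}_c$ to a strictly larger \emph{isotropic} subspace by adjoining a pair $(\xi,\zeta)$ with $0\neq\xi\in\mathcal{W}_+^\perp$, $0\neq\zeta\in\mathcal{W}_-^\perp$ and $\|\xi\|=\|\zeta\|$. This requires both defect spaces to be nontrivial simultaneously, which fails in general: if $V$ is a non-surjective isometry (a unilateral shift, say), then $\mathcal{W}_+^\perp=\{0\}$ while $\mathcal{W}_-^\perp\neq\{0\}$, and no such pair exists. Worse, in that case $\mathrm{gra}\,V$ genuinely cannot be enlarged to a larger isotropic subspace (any admissible $(\xi,\zeta)$ must satisfy $\xi=V^*\zeta$ and $\|\xi\|=\|\zeta\|$, which forces $\zeta\in\mathrm{ran}\,V$ and puts $(\xi,\zeta)$ back in the graph), and yet $\mathrm{gra}\,V$ is \emph{not} maximally isotropic in the sense the proposition uses, namely $\mathcal{W}_c=\mathcal{W}_c^{\perp_{\Sigma_c}}$. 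The point being missed is that for the sesquilinear form $\Sigma_c$ one has $\Sigma_c\bigl((\xi,0),(\xi,0)\bigr)=-\mathbf{i}\|\xi\|^2\neq0$, so ``cannot be isotropically enlarged'' and ``equals its own $\Sigma_c$-orthogonal'' are inequivalent notions of maximality, and your enlargement argument addresses the wrong one. The paper's argument avoids the difficulty entirely: whenever either defect space is nontrivial, the pair $(\xi,0)$ (or $(0,\zeta)$) lies in $\mathcal{W}_c^{\perp_{\Sigma_c}}$ but not in $\mathcal{W}_c$ --- no self-isotropy of the new vector is required --- so $\mathcal{W}_c\neq\mathcal{W}_c^{\perp_{\Sigma_c}}$ unless $\mathcal{W}_+=\H^0(\pO)=\mathcal{W}_-$, whence $V$ is unitary. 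Rewriting your forward direction along these lines repairs the proof.
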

\begin{proof}
(i) For any $(\varphi_+,\varphi_-)\in \mathcal{W}_c$ we define the mapping
$V\colon \H^0(\pO)\to \H^0(\pO)$ by $V(\varphi_+):=\varphi_-$\,, $\varphi_+\in \mathcal{W}_+$ and $V(\varphi)=0$\,, $\varphi\in \mathcal{W}_+^\perp$\,. Since $\mathcal{W}_c\subset \mathcal{W}_c^{\perp_{\Sigma_c}}$ we have from part (ii) of Lemma~\ref{lemma-iso} that $V$ is a well-defined linear map and a partial isometry. 

The reverse implication is immediate: for any $(\varphi_+,V\varphi_+)\in \mathcal{W}_c$ and $\psi_+\in \H^0(\pO)$ we have
\[
 \Sigma_c\Big((\varphi_+,V\varphi_+),(\psi_+,V\psi_+)\Big)
     =-\mathbf{i}\Big(\langle\varphi_+,\psi_+\rangle- \langle V\varphi_+,V\psi_+\rangle\Big)
     =0\;,
\]
hence $\mathcal{W}_c=\{(\varphi_+,V\varphi_+)\mid \varphi_+\in \mathcal{W}_+\}=\mathrm{gra}\,V$ is $\Sigma_c$-isotropic.\\

(ii) Suppose that $\mathcal{W}_c= \mathcal{W}_c^{\perp_{\Sigma_c}}$\,. By the previous item we have 
$$\mathcal{W}_c=\{(\varphi_+,U\varphi_+)\mid \varphi_+\in \mathcal{W}_+\}$$
 for some partial isometry $U\colon \H^0(\pO)\to \H^0(\pO)$\,. Consider the following decompositions $\H^0(\pO)=\mathcal{W}_+\oplus \mathcal{W}_+^\perp= (U\mathcal{W}_+)\oplus (U\mathcal{W}_+)^\perp$ and note that any $(\varphi_+^\perp,\varphi_-^\perp)\in \mathcal{W}_+^\perp\times (U\mathcal{W}_+)^\perp$ satisfies $(\varphi_+^\perp,\varphi_-^\perp)\in \mathcal{W}_c^{\perp_{\Sigma_c}}$\,. Since $\mathcal{W}_c= \mathcal{W}_c^{\perp_{\Sigma_c}}$ we must have $\varphi_+^\perp=\varphi_-^\perp=0$\,, or, equivalently, $\mathcal{W}_+=\H^0(\pO)=U\mathcal{W}_+$\,, hence $\ker U =\ker U^*=\{0\}$ and $U$ is a unitary map.

To prove the reverse implication consider $\mathcal{W}_c=\{(\varphi_+,U\varphi_+)\mid \varphi_+\in \H^0(\pO)\}$ with $U$ unitary and choose $(\psi_+,\psi_-)\in \mathcal{W}_c^{\perp_{\Sigma_c}}$\,. Then for any $\varphi_+\in \H^0(\pO)$ we have
\begin{align*}
 0&=\Sigma_c\Big((\varphi_+,U\varphi_+),(\psi_+,\psi_-)\Big)\\
  &=-\mathbf{i}\Big(\langle \varphi_+,\psi_+\rangle- \langle U\varphi_+,\psi_-\rangle\Big)\\
  &=-\mathbf{i}\Big(\langle \varphi_+,(\psi_+-U^*\psi_-)\rangle\Big)\;.
\end{align*}
This shows that $\psi_-=U\psi_+$ and hence $(\psi_+,\psi_-)\in \mathcal{W}_c$\,, therefore $\mathcal{W}_c$ is maximally $\Sigma_c$-isotropic.
\end{proof}

The previous analysis allows to characterise finally the $\Sigma$-isotropic subspaces of the boundary Hilbert space $\H_b$\,.
\begin{theorem}\label{teo:parametriceW}
 A closed subspace $\mathcal{W}\subset\H_b$ is maximally $\Sigma$-isotropic iff there exists a unitary $U\colon  \H^0(\pO)\to \H^0(\pO)$ such that
\[
 \mathcal{W}=\left\{\Big((\1+U)\varphi\,,\,-\mathbf{i}(\1- U)\varphi\Big)\mid \varphi\in \H^0(\pO))\right\}\;.
\]
\end{theorem}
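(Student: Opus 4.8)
The plan is to reduce the statement to the already-diagonalised setting and then undo the Cayley transform. First I would invoke part (i) of Lemma \ref{lemma-iso}: a closed subspace $\mathcal{W}\subset\H_b$ is maximally $\Sigma$-isotropic if and only if $\mathcal{W}_c:=\C(\mathcal{W})$ is maximally $\Sigma_c$-isotropic. Since $\C$ is unitary (so it preserves closedness), it therefore suffices to characterise the maximally $\Sigma_c$-isotropic subspaces and transport the answer back through $\C^{-1}$.

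Next I would bring Proposition \ref{pro:W}(ii) to bear, which asserts that a maximally $\Sigma_c$-isotropic subspace is exactly the graph $\mathrm{gra}\,U=\{(\varphi_+,U\varphi_+)\mid\varphi_+\in\H^0(\pO)\}$ of a unitary $U$ on $\H^0(\pO)$. The one point needing a line of justification is that Proposition \ref{pro:W} is phrased for subspaces presented through their first- and second-component data $\mathcal{W}_\pm$; for an arbitrary closed $\Sigma_c$-isotropic $\mathcal{W}_c$ one first notes, using part (ii) of Lemma \ref{lemma-iso}, that the assignment $\varphi_+\mapsto\varphi_-$ is single-valued (if $(0,\varphi_-)\in\mathcal{W}_c$ then $\|\varphi_-\|=0$), so $\mathcal{W}_c$ is automatically a graph and the proposition applies verbatim. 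Maximality then upgrades the resulting partial isometry to a genuine unitary.

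Then comes the only real computation: inverting the Cayley transform. Solving $\varphi_\pm=\tfrac{1}{\sqrt{2}}(\varphi_1\pm\mathbf{i}\varphi_2)$ for $(\varphi_1,\varphi_2)$ gives
\[
 \C^{-1}(\varphi_+,\varphi_-)=\tfrac{1}{\sqrt{2}}\bigl(\varphi_+ +\varphi_-,\,-\mathbf{i}(\varphi_+ -\varphi_-)\bigr)\;.
\]
Applying this to the graph $\mathcal{W}_c=\{(\varphi_+,U\varphi_+)\}$ yields
\[
 \mathcal{W}=\C^{-1}(\mathcal{W}_c)=\left\{\tfrac{1}{\sqrt{2}}\bigl((\1+U)\varphi_+,\,-\mathbf{i}(\1-U)\varphi_+\bigr)\mid\varphi_+\in\H^0(\pO)\right\}\;.
\]
Since $\varphi_+$ ranges over all of $\H^0(\pO)$, the scalar factor $\tfrac{1}{\sqrt{2}}$ can be absorbed by the reparametrisation $\varphi:=\tfrac{1}{\sqrt{2}}\varphi_+$, producing precisely the claimed description of $\mathcal{W}$.

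The converse direction is immediate: any subspace of the stated form is the $\C^{-1}$-image of $\mathrm{gra}\,U$ with $U$ unitary, hence its image under $\C$ is maximally $\Sigma_c$-isotropic by Proposition \ref{pro:W}(ii), hence the subspace itself is maximally $\Sigma$-isotropic by Lemma \ref{lemma-iso}. I do not expect any serious obstacle here; the only place demanding care is the well-definedness/graph observation that allows Proposition \ref{pro:W} to be invoked for a general closed subspace rather than one already presented in the product form $\mathcal{W}_+\times\mathcal{W}_-$, together with keeping track of the normalisation constant introduced by $\C$.
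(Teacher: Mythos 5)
Your proof is correct and follows essentially the same route as the paper's: reduce via Lemma \ref{lemma-iso}~(i) to the diagonalised form $\Sigma_c$, invoke Proposition \ref{pro:W}~(ii) to identify maximally $\Sigma_c$-isotropic subspaces with graphs of unitaries, and pull back through $\C^{-1}$. Your extra remark that an arbitrary closed $\Sigma_c$-isotropic subspace is automatically a graph (via Lemma \ref{lemma-iso}~(ii)) is a welcome point of care that the paper glosses over, and your bookkeeping of the $1/\sqrt{2}$ normalisation is accurate.
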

\begin{proof}
By Lemma~\ref{lemma-iso}~(i) and Proposition~\ref{pro:W}~(ii) we have that $\mathcal{W}$ is maximally $\Sigma$-isotropic iff $\mathcal{W}=\C^{-1}\mathcal{W}_c$\,, where $\mathcal{W}_c$ is given by Eq.~(\ref{eq:U}).
\end{proof}

\begin{proposition}\label{prop: asorey}
Let $U\colon  \H^0(\pO)\to \H^0(\pO)$ be a unitary operator and consider the maximally isotropic subspace $\mathcal{W}$ given in Theorem~\ref{teo:parametriceW}. Then $\mathcal{W}$ can be rewritten as
		\begin{equation}\label{eq:asorey}
			\mathcal{W}=\Big\{(\varphi_1\,,\,\varphi_2)\in\H_b\mid \varphi_1-\mathbf{i}\varphi_2= U(\varphi_1+\mathbf{i}\varphi_2)\Big\}\;.
		\end{equation}
\end{proposition}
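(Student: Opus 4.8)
The plan is to prove the two set-theoretic inclusions directly, using the change of variables that links the parameter $\varphi\in\H^0(\pO)$ appearing in Theorem~\ref{teo:parametriceW} with the boundary data $(\varphi_1,\varphi_2)\in\H_b$. The crucial observation is that the combinations $\varphi_1+\mathbf{i}\varphi_2$ and $\varphi_1-\mathbf{i}\varphi_2$ are exactly the objects that diagonalise the description, mirroring the Cayley transform $\C$ used in the preceding paragraphs.

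First I would establish the inclusion of the parametrised subspace into the set defined by the boundary equation \eqref{eq:asorey}. Taking $\varphi_1=(\1+U)\varphi$ and $\varphi_2=-\mathbf{i}(\1-U)\varphi$, so that $\mathbf{i}\varphi_2=(\1-U)\varphi$, a one-line computation gives
\begin{align*}
\varphi_1+\mathbf{i}\varphi_2 &= (\1+U)\varphi+(\1-U)\varphi = 2\varphi, \\
\varphi_1-\mathbf{i}\varphi_2 &= (\1+U)\varphi-(\1-U)\varphi = 2U\varphi,
\end{align*}
whence $\varphi_1-\mathbf{i}\varphi_2=U(\varphi_1+\mathbf{i}\varphi_2)$, which is precisely the condition in \eqref{eq:asorey}.

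For the converse inclusion, given $(\varphi_1,\varphi_2)\in\H_b$ satisfying $\varphi_1-\mathbf{i}\varphi_2=U(\varphi_1+\mathbf{i}\varphi_2)$, I would propose the candidate parameter $\varphi:=\tfrac{1}{2}(\varphi_1+\mathbf{i}\varphi_2)\in\H^0(\pO)$ and verify that it reproduces the data. Substituting the defining relation $U(\varphi_1+\mathbf{i}\varphi_2)=\varphi_1-\mathbf{i}\varphi_2$ one checks
$$(\1+U)\varphi=\tfrac{1}{2}\bigl[(\varphi_1+\mathbf{i}\varphi_2)+(\varphi_1-\mathbf{i}\varphi_2)\bigr]=\varphi_1$$
and
$$-\mathbf{i}(\1-U)\varphi=-\tfrac{\mathbf{i}}{2}\bigl[(\varphi_1+\mathbf{i}\varphi_2)-(\varphi_1-\mathbf{i}\varphi_2)\bigr]=-\tfrac{\mathbf{i}}{2}\bigl(2\mathbf{i}\varphi_2\bigr)=\varphi_2,$$
so $(\varphi_1,\varphi_2)$ indeed lies in the subspace of Theorem~\ref{teo:parametriceW}.

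There is no genuine obstacle here: the statement is a purely algebraic reformulation and the only thing to keep track of is the bookkeeping of signs and factors of $\mathbf{i}$. The one conceptual point worth stressing is that although $\1+U$ and $\1-U$ need not be individually invertible, the assignment $(\varphi_1,\varphi_2)\mapsto\tfrac{1}{2}(\varphi_1+\mathbf{i}\varphi_2)$ nevertheless recovers the parameter $\varphi$ and ranges over all of $\H^0(\pO)$ precisely because $U$ is unitary, which is exactly the content guaranteed by Theorem~\ref{teo:parametriceW}.
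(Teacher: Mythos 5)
Your proof is correct, and the forward inclusion coincides with the paper's. For the converse inclusion, however, you take a genuinely different and more direct route: you explicitly invert the parametrisation by exhibiting the candidate $\varphi=\tfrac{1}{2}(\varphi_1+\mathbf{i}\varphi_2)$ and checking that it reproduces the pair $(\varphi_1,\varphi_2)$, which immediately gives $\mathcal{W}'\subset\mathcal{W}$. The paper instead argues indirectly: it shows that any $(\varphi_1,\varphi_2)\in\mathcal{W}^{\perp}\cap\mathcal{W}'$ satisfies a $2\times 2$ operator-matrix equation $M(\varphi_1,\varphi_2)^{T}=0$ with $M$ (a multiple of) a unitary, concludes $\mathcal{W}^{\perp}\cap\mathcal{W}'=\{0\}$, and then combines this with $\mathcal{W}\subset\mathcal{W}'$ and the closedness of $\mathcal{W}$ to force equality. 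Your argument buys a real simplification: it needs neither the invertibility of the auxiliary matrix $M$ nor the closedness of $\mathcal{W}$, and it makes transparent \emph{why} the parameter $\varphi$ is recoverable even though $\1+U$ and $\1-U$ may individually fail to be invertible --- the point you correctly flag at the end. The paper's orthogonality argument, by contrast, is the kind of reasoning that generalises to situations where no explicit inverse of the parametrisation is available, but here it is strictly more machinery than the statement requires.
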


\begin{proof}
Let $\mathcal{W}$ be given as in Theorem~\ref{teo:parametriceW} and let $\mathcal{W}'$ be a subspace defined as in Eq.~\eqref{eq:asorey}. Put $\varphi_1:=(\1+U)\varphi$ and $\varphi_2:=-\mathbf{i}(\1-U)\varphi$\,. Then it is straightforward to verify that $(\varphi_1\,,\,\varphi_2)$ satisfy the relation defining Eq.~ \eqref{eq:asorey} and therefore $\mathcal{W}\subset \mathcal{W}'$\,.

Consider a subspace $\mathcal{W}'$ defined as in Eq.~\eqref{eq:asorey} and let $(\varphi_1\,,\,\varphi_2)\in \mathcal{W}'$\,. Then the following relation holds
\begin{equation}\label{WW'1}
	(1-U)\varphi_1-\mathbf{i}(1+U)\varphi_2=0\;.
\end{equation}
Now consider that $(\varphi_1,\varphi_2)\in \mathcal{W}^\perp$\,. Then for all $\varphi\in\H^0(\pO)$
\begin{align*}
	0	&=\scalar{\varphi_1}{(1+U)\varphi}+\scalar{\varphi_2}{-\mathbf{i}(1-U)\varphi}\\
		&=\scalar{(1+U^*)\varphi_1+\mathbf{i}(1-U^*)\varphi_2}{\varphi}
\end{align*}
and therefore
\begin{equation}\label{WW'2}
	(1+U^*)\varphi_1+\mathbf{i}(1-U^*)\varphi_2=0\;.
\end{equation}
Now we can arrange Eqs.~\eqref{WW'1} and \eqref{WW'2}
\begin{equation}\label{M:equation}
M\begin{pmatrix}
\varphi_1\\\varphi_2
\end{pmatrix}
:=
\begin{pmatrix}
1-U & -\mathbf{i}(1+U)\\1+U^* & \mathbf{i}(1-U^*)
\end{pmatrix}
\begin{pmatrix}
\varphi_1\\\varphi_2
\end{pmatrix}
=0\;,
\end{equation}
where now $M\colon \H_b\to\H_b$\,. But clearly $M$ is a unitary operator so that Eq.~\eqref{M:equation} implies that $(\varphi_1,\varphi_2)=0$ and therefore $\mathcal{W}\oplus \mathcal{W}'^\perp=(\mathcal{W}^\perp\bigcap \mathcal{W}')^\perp=\H_b$\,. This condition together with $\mathcal{W}\subset \mathcal{W}'$ implies $\mathcal{W}=\mathcal{W}'$ because $\mathcal{W}$ is a closed subspace, as it is easy to verify.
\end{proof}


\subsection{Admissible unitaries and closable quadratic forms}

In this subsection we will restrict to a family of unitaries $U\colon  \H^0(\pO)\to \H^0(\pO)$ that will allow us to describe a wide class
of quadratic forms whose Friedrichs' extensions are associated to self-adjoint extensions of the Laplace-Beltrami operator.
\begin{definition}\label{DefGap}
Let $U\colon  \H^0(\pO)\to\H^0(\pO)$ be unitary and denote by $\sigma(U)$\newnot{symb:spectrum} its spectrum. We say that the unitary $U$ on the boundary
\textbf{has gap at $-1$} if one of the following conditions hold:
\begin{enumerate}
\item $\1+U$ is invertible.
\item $-1\in\sigma(U)$ and $-1$ is not an accumulation point of $\sigma(U)$\,.
\end{enumerate}
\end{definition}

\begin{definition}\label{P,boundary}
Let $U$ be a unitary operator acting on $\H^0(\pO)$ with gap at $-1$\,.
Let $E_\lambda$ be the spectral resolution of the identity associated to the unitary $U$\,,
i.e., $$U=\int_{[0,2\pi]}e^{\mathbf{i}\lambda}\d E_\lambda\;.$$
The \textbf{invertibility boundary space} $W$\newnot{symb:W} is defined by
$W=\operatorname{Ran}E^{\bot}_{\pi}\,.$ The orthogonal projection onto $W$ is denoted by $P$\,.\newnot{symb:P}
\end{definition}

\begin{definition}\label{partialCayley}
Let $U$ be a unitary operator acting on $\H^0(\pO)$ with gap at $-1$\,. The \textbf{partial Cayley transform} $A_U:\H^0(\pO)\to W$ is
the operator
\[
A_U:=\mathbf{i}\,P (U-\mathbb{I}) (U+\mathbb{I})^{-1}\;.\newnot{symb:AU}
\]

\end{definition}

\begin{proposition}\label{prop: AU bounded}
The partial Cayley transform is a bounded, self-adjoint operator on $\H^0(\pO)$\,.
\end{proposition}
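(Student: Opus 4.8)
The plan is to exhibit $A_U$ as a bounded Borel function of the unitary $U$ and to read off self-adjointness either from a direct algebraic identity or from the fact that this function is real-valued on the unit circle. First I would unravel the meaning of $W$ and $P$. By the gap condition (Definition \ref{DefGap}) the point $-1$ is either absent from $\sigma(U)$ or an isolated eigenvalue, so the spectral subspace $\ker(U+\mathbb{I})$ is split off and, from the definition of the invertibility boundary space, $W^\perp=\operatorname{Ran}(\mathbb{I}-P)=\ker(U+\mathbb{I})$. Since $P$ is a spectral projection of $U$, the operator $U$ commutes with $P$ and $W$ is $U$-invariant. The decisive consequence of the gap is that $\sigma(U|_W)$ is bounded away from $-1$, so that $(U+\mathbb{I})|_W$ is bounded below and therefore boundedly invertible on $W$; it is precisely this bounded inverse that is meant by $(U+\mathbb{I})^{-1}$ in the definition of $A_U$, and one notes $A_U=A_UP=PA_U$.

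Boundedness then follows at once: $A_U=\mathbf{i}P(U-\mathbb{I})(U+\mathbb{I})^{-1}$ is a composition of the bounded operators $P$, $U-\mathbb{I}$ and the bounded inverse of $(U+\mathbb{I})|_W$, and it annihilates $W^\perp$. Equivalently, I would put $f(z):=\mathbf{i}(z-1)(z+1)^{-1}$ for $z\in\sigma(U)\setminus\{-1\}$ and $f(-1):=0$; the gap condition guarantees that $z+1$ stays bounded below on $\sigma(U)\setminus\{-1\}$, so $f$ is bounded and Borel on $\sigma(U)$ and the spectral theorem (Theorem \ref{spectraltheorem}) gives $A_U=f(U)$ as a bounded operator. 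I regard establishing this uniform boundedness as the main point of the proposition, since it is exactly here, and nowhere else, that the gap hypothesis is indispensable: without it $f(z)$ would blow up as $z\to-1$ and $A_U$ would fail to be bounded.

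For self-adjointness I would argue directly, as symmetry together with boundedness on the whole space yields self-adjointness. Because $A_U=PA_U=A_UP$ and $W$ is $U$-invariant, it suffices to verify symmetry on $W$ and then transport it to all of $\H^0(\pO)$ via $A_U=A_UP$. Writing $\varphi=(U+\mathbb{I})\alpha$ and $\psi=(U+\mathbb{I})\beta$ with $\alpha,\beta\in W$, so that $A_U\varphi=\mathbf{i}(U-\mathbb{I})\alpha$ and $A_U\psi=\mathbf{i}(U-\mathbb{I})\beta$, I would expand the two inner products and cancel the terms $\scalar{U\alpha}{U\beta}=\scalar{\alpha}{\beta}$ afforded by unitarity to obtain
\[
 \scalar{A_U\varphi}{\psi}=-\mathbf{i}\bigl(\scalar{U\alpha}{\beta}-\scalar{\alpha}{U\beta}\bigr)=\scalar{\varphi}{A_U\psi}\;.
\]
Alternatively, writing $z=e^{\mathbf{i}\lambda}$ on the unit circle one finds $f(z)=-\tan(\lambda/2)\in\mathbb{R}$, so $f$ is real-valued on $\sigma(U)$ and $A_U^*=\overline{f}(U)=f(U)=A_U$. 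Finally I would record the elementary reconciliation of the two descriptions: $f(U)$ acts as $\mathbf{i}(U-\mathbb{I})(U+\mathbb{I})^{-1}$ on $W$ and vanishes on $\ker(U+\mathbb{I})$ because $f(-1)=0$, which is exactly $\mathbf{i}P(U-\mathbb{I})(U+\mathbb{I})^{-1}=A_U$.
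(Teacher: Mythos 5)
Your proof is correct and follows essentially the same route as the paper: boundedness from the invertibility of $(U+\mathbb{I})$ on $W$ guaranteed by the gap condition, and self-adjointness from the functional calculus with the real-valued function $f(e^{\mathbf{i}\lambda})=-\tan(\lambda/2)$ on the spectrum of $U$. The direct algebraic verification of symmetry via $\varphi=(\mathbb{I}+U)\alpha$ is a harmless addition not present in the paper, but the substance of the argument is the same.
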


\begin{proof}
First notice that the operators $P$\,, $U$ and $A_U$ commute. That $A_U$ is bounded is a direct consequence of the Definition~\ref{DefGap},
because the operator $P(\mathbb{I}+U)$ is, under the assumption of gap at -1, an invertible bounded operator on the boundary space $W$\,.
To show that $A_U$ is self-adjoint
consider the spectral resolution of the identity of the operator $U$\,. Since $U$ has gap at $-1$\,,
either $\{e^{\mathbf{i\pi}}\}\not\in\sigma(U)$ or there
exists a neighbourhood $V$ of $\{e^{\mathbf{i}\pi}\}$ such that it does not contain any element of the spectrum $\sigma(U)$ besides
$\{e^{\mathbf{i}\pi}\}$\,. Pick $\delta\in V\cap S^1$\,. Then one can express the operator $A_U$ using the spectral resolution of the identity of the
operator $U$ as
$$A_U=\int_{-\pi+\delta}^{\pi-\delta}\mathbf{i}\frac{e^{\mathbf{i}\lambda}-1}{e^{\mathbf{i}\lambda}+1}\d E_\lambda
=\int_{-\pi+\delta}^{\pi-\delta}-\tan\frac{\lambda}{2}\d E_\lambda\;.$$
Since $\lambda\in[-\pi+\delta,\pi-\delta]$\,, then $\tan\frac{\lambda}{2}\in\mathbb{R}$\,.
Therefore the spectrum of $A_U$ is a subset of the real line, necessary and sufficient condition for a closed, symmetric operator to be self-adjoint.
\end{proof}

We can now introduce the class of closable quadratic forms that was announced at the beginning of this section.

\begin{definition}\label{DefQU}
Let $U$ be a unitary with gap at $-1$\,, $A_U$ the corresponding partial Cayley transform and $\gamma$
the trace map considered in Theorem~\ref{LMtracetheorem}.
The Hermitean quadratic form associated to the unitary $U$ is defined by
$$Q_U(\Phi,\Psi)=\scalar{\d\Phi}{\d\Psi}-\scalarb{\gamma(\Phi)}{A_U\gamma(\Phi)}\;.\newnot{symb:QU}$$
on the domain
$$\D_U=\bigl\{ \Phi\in\H^1(\Omega)\bigr|P^{\bot}\gamma(\Phi)=0 \bigr\}\;.\newnot{symb:DU}$$
\end{definition}

\begin{proposition}\label{H1bound}
The quadratic form $Q_U$ is bounded by the Sobolev norm of order 1, $$Q_U(\Phi,\Psi)\leq K\norm{\Phi}_1\norm{\Psi}_1\;.$$
\end{proposition}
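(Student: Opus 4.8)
The plan is to estimate the two summands of $Q_U$ separately and then combine them, using only the boundedness of the partial Cayley transform together with the trace theorem. Writing
\[
Q_U(\Phi,\Psi)=\scalar{\d\Phi}{\d\Psi}-\scalarb{\gamma(\Phi)}{A_U\gamma(\Psi)}\;,
\]
the bulk term is controlled directly by the Cauchy-Schwarz inequality, which gives $|\scalar{\d\Phi}{\d\Psi}|\leq\norm{\d\Phi}\,\norm{\d\Psi}$. By Proposition~\ref{equivalentsobolev} the norm $\sqrt{\norm{\d\cdot}^2+\norm{\cdot}^2}$ is equivalent to $\norm{\cdot}_1$, so there is a constant $C_1>0$ with $\norm{\d\Phi}\leq C_1\norm{\Phi}_1$; hence the bulk term is bounded by $C_1^2\norm{\Phi}_1\norm{\Psi}_1$.

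For the boundary term I would first apply the Cauchy-Schwarz inequality in $\H^0(\pO)$ together with Proposition~\ref{prop: AU bounded}, which guarantees that $A_U$ is a bounded operator on $\H^0(\pO)$. Denoting its operator norm by $\norm{A_U}$, this yields
\[
|\scalarb{\gamma(\Phi)}{A_U\gamma(\Psi)}|\leq\norm{A_U}\,\norm{\gamma(\Phi)}_{\H^0(\pO)}\,\norm{\gamma(\Psi)}_{\H^0(\pO)}\;.
\]
It then remains to bound the boundary norm $\norm{\gamma(\cdot)}_{\H^0(\pO)}$ by the Sobolev norm of order one on $\Omega$. For this the key ingredient is the Lions trace theorem, Theorem~\ref{LMtracetheorem}, applied with $k=1$: it furnishes a continuous trace map $\gamma\colon\H^1(\Omega)\to\H^{1/2}(\pO)$, so there is a constant $C_2>0$ with $\norm{\gamma(\Phi)}_{\H^{1/2}(\pO)}\leq C_2\norm{\Phi}_1$. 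Composing with the continuous inclusion $\H^{1/2}(\pO)\hookrightarrow\H^0(\pO)$ (which has norm at most one, since by Definition~\ref{DefSobolev} the order-$\tfrac12$ norm dominates the $\H^0$ norm on the boundaryless manifold $\pO$) gives $\norm{\gamma(\Phi)}_{\H^0(\pO)}\leq C_2\norm{\Phi}_1$. Substituting this into the previous display and setting $K:=C_1^2+\norm{A_U}\,C_2^2$ produces the claimed estimate $|Q_U(\Phi,\Psi)|\leq K\norm{\Phi}_1\norm{\Psi}_1$.

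The main obstacle is the boundary term: the bulk estimate is a routine combination of Cauchy-Schwarz with the equivalence of Sobolev norms, whereas controlling $\norm{\gamma(\Phi)}_{\H^0(\pO)}$ by $\norm{\Phi}_1$ genuinely relies on the trace theorem. It is worth emphasising that this step uses only the boundedness of $A_U$ and not the gap condition directly; the gap condition enters beforehand, precisely in order to ensure through Proposition~\ref{prop: AU bounded} that $A_U$ is bounded, which is what makes the boundary contribution controllable by $\norm{\cdot}_1$ rather than by a stronger norm.
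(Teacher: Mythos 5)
Your proof is correct and follows essentially the same route as the paper: Cauchy--Schwarz plus Proposition~\ref{equivalentsobolev} for the bulk term, and Cauchy--Schwarz in $\H^0(\pO)$ combined with Proposition~\ref{prop: AU bounded}, the embedding $\H^{1/2}(\pO)\hookrightarrow\H^0(\pO)$ and the Lions trace theorem for the boundary term. Your closing remark that only the boundedness of $A_U$ (and not the admissibility or gap condition per se) enters here is accurate and consistent with how the paper uses this proposition.
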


\begin{proof}
That the first summand of $Q_U$
is bounded by the $\H^1(\Omega)$ norm is a direct consequence of the Cauchy-Schwartz inequality and
Proposition~\ref{equivalentsobolev}.

For the second term we have that
\begin{align*}
|\scalarb{\gamma(\Phi)}{A_U\gamma(\Psi)}|	&\leq \norm{A_U}\cdot \norm{\gamma(\Phi)}_0\,\norm{\gamma(\Psi)}_0\\
&\leq C \norm{A_U}\cdot\norm{\gamma(\Phi)}_{\frac{1}{2}}\,\norm{\gamma(\Psi)}_{\frac{1}{2}}\\
&\leq C' \norm{A_U}\cdot\norm{\Phi}_{1}\norm{\Psi}_{1}\;,
\end{align*}
where we have used Theorem~\ref{LMtracetheorem} and Proposition~\ref{prop: AU bounded} in the last inequality.
\end{proof}

Finally, we need an additional condition of admissibility on the unitaries on the boundary that will be needed
to prove the closability of $Q_U$\,.

\begin{definition}\label{def:admissible}
Let $U$ be a unitary with gap at $-1$\,. The unitary is said to be \textbf{admissible} if the partial Cayley transform
$A_U\colon\H^0(\pO)\to \H^0(\pO)$ is continuous with respect to the Sobolev norm of order $1/2$\,, i.e.,
$$\norm{A_U\varphi}_{\H^{1/2}(\pO)}\leq K \norm{\varphi}_{\H^{1/2}(\pO)}\;.$$
\end{definition}

\begin{example}
Consider a manifold with boundary given by the unit circle, i.e., $\partial\Omega=S^1$\,,
and define the unitary $(U_\beta\varphi)(z):=e^{i\beta(z)}\,\varphi(z)$\,, $\varphi\in L^2(S^1)$\,.
If $\beta\in L^2(S^1)$ and $\ran\beta\subset\{\pi\}\cup [0,\pi-\delta]\cup [\pi+\delta,2\pi)$\,, for some $\delta >0$\,, then $U_\beta$ has gap
at $-1$\,. If, in addition, $\beta\in C^\infty(S^1)$\,, then $U_\beta$ is admissible.
\end{example}


\section{Closable and semi-bounded quadratic forms}\label{sec:closable and semibounded qf}

This section addresses the questions of semi-boundedness and closability of
the quadratic form $Q_U$ defined on its domain $\D_U$
(cf., Definition~\ref{DefQU}).

\subsection{Functions and operators on collar neighbourhoods}

We will need first some technical results that refer to the functions and operators in a collar neighbourhood
of the boundary $\pO$ and that we will denote by $\Xi$\,.\newnot{symb:collar}
Recall the conventions at the beginning of Section~\ref{sec:class}: if $\Phi\in\H^1(\Omega)$\,,
then $\varphi=\gamma(\Phi)$ denotes its restriction to $\pO$ and for $\Phi$ smooth, $\dot{\varphi}=\gamma(\d\Phi(\nu))$\, is the restriction to
the boundary of the normal derivative.

\begin{lemma}\label{Lemma approxdotphi}
Let $\Phi\in\H^1(\Omega)$\,, $f\in\H^{1/2}(\pO)$\,. Then, for every $\epsilon>0$ it exists $\tilde{\Phi}\in\C^\infty(\Omega)$ such that
\begin{itemize}
\item $\norm{\Phi-\tilde{\Phi}}_1<\epsilon$\,,
\item $\norm{\varphi-\tilde{\varphi}}_{\H^{1/2}(\pO)}<\epsilon$\,,
\item $\norm{f-\dot{\tilde{\varphi}}}_{\H^{1/2}(\pO)}<\epsilon$\,.
\end{itemize}
\end{lemma}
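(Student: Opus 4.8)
The plan is to correct $\Phi$ in two stages: first replace it by a genuinely smooth function that is $\H^1$-close, and then add a correction supported in a thin collar of $\pO$ that is negligible in $\H^1(\Omega)$ yet installs the prescribed normal derivative $f$ at the boundary. Since $\H^1(\Omega)$ is by construction the $\norm{\cdot}_1$-closure of $\C^\infty(\Omega)$, I would first pick $\Phi_1\in\C^\infty(\Omega)$ with $\norm{\Phi-\Phi_1}_1$ as small as desired; the continuity of the trace map (Theorem~\ref{LMtracetheorem}) then makes $\norm{\varphi-\varphi_1}_{\H^{1/2}(\pO)}$ correspondingly small, which already takes care of the first two bullet points. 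The only quantity left uncontrolled is the normal derivative $\dot{\varphi}_1=\gamma(\d\Phi_1(\nu))$, which is smooth but in general nowhere near $f$. The entire content of the lemma is that this derivative can be reset at will, precisely because the normal-derivative trace is \emph{not} continuous on $\H^1(\Omega)$.

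To implement the reset I would work in a collar $\Xi\cong\pO\times[0,\delta)$ with normal coordinate $x$, so that $\d\mu_\eta=J(y,x)\,\d x\,\d\mu_{\partial\eta}$ with $J$ smooth and $J(\cdot,0)=1$, and $\nu=\partial_x$ along $\{x=0\}$ after the normalisation fixing $\d x(\nu)=1$ there. Using that $\H^{1/2}(\pO)$ is the $\norm{\cdot}_{1/2}$-closure of $\C^\infty(\pO)$, I would choose $g\in\C^\infty(\pO)$ with $\norm{g-(f-\dot{\varphi}_1)}_{\H^{1/2}(\pO)}$ small, and set $\chi(y,x):=g(y)\,\rho_a(x)$, extended by zero, where $\rho_a(x)=x\,\psi(x/a)$ for a fixed cutoff $\psi\in\C_c^\infty([0,1))$ with $\psi(0)=1$. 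By construction $\rho_a(0)=0$ and $\rho_a'(0)=1$, so $\chi$ has vanishing trace (leaving the second bullet intact) and normal derivative $\gamma(\d\chi(\nu))=g$; hence $\tilde{\Phi}:=\Phi_1+\chi\in\C^\infty(\Omega)$ has $\dot{\tilde{\varphi}}=\dot{\varphi}_1+g$, which differs from $f$ by less than the chosen tolerance in $\H^{1/2}(\pO)$.

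The crux, and the step I expect to be the main obstacle, is the scaling estimate showing $\norm{\chi}_1\to 0$ as $a\to 0$ with $g$ held fixed. Substituting $x=au$ gives $\int_0^a\rho_a^2\,\d x=O(a^3)$ and $\int_0^a(\rho_a')^2\,\d x=O(a)$, whence the $L^2$ norm and the tangential gradient of $\chi$ are $O(a^{3/2})$ while its normal gradient is only $O(a^{1/2})$; the boundedness above and below of $J$ and of the induced tangential metric on the compact collar affects these constants but not the powers of $a$. Thus $\norm{\chi}_1=O(a^{1/2})$, and choosing $a$ small enough yields $\norm{\Phi-\tilde{\Phi}}_1\le\norm{\Phi-\Phi_1}_1+\norm{\chi}_1<\epsilon$.

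Finally I would collect the three estimates — the $\H^1$ bound just obtained, the trace bound from the first stage, and the normal-derivative bound from the second — and redistribute the $\epsilon$'s among the independent choices of $\Phi_1$, $g$ and $a$, which closes the argument. The conceptual point worth emphasising is that the $a^{1/2}$ gain on the normal gradient is exactly the failure of $\H^1$-continuity of the map $\Phi\mapsto\dot{\varphi}$, and it is this failure that permits prescribing an arbitrary $f\in\H^{1/2}(\pO)$ at no asymptotic $\H^1$ cost.
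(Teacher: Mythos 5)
Your proposal is correct and follows essentially the same strategy as the paper: a corrector supported in a shrinking collar, of the form $g(\bt)\cdot\rho_a(r)$ with $\rho_a$ having value (essentially) zero and normal derivative one at the boundary, whose $\H^1(\Omega)$ norm is $O(a^{1/2})$ and hence negligible. The only cosmetic difference is that the paper first reduces to smooth $\Phi$ with vanishing normal derivative in a collar (a dense subset of $\H^1(\Omega)$) and then installs $f$ directly, whereas you keep the uncontrolled $\dot{\varphi}_1$ and install the difference $f-\dot{\varphi}_1$; both are valid.
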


\begin{proof}
The first two inequalities are standard (cf., Theorem~\ref{LMtracetheorem}). Moreover, it
is enough to consider $\Phi\in\C^{\infty}(\Omega)$ with
$\d\Phi(\nu)\equiv0$\,, where $\nu\in\mathfrak{X}(\Omega)$ is the normal vector field on a collar neighbourhood $\Xi$ of $\pO$\,,
(see \cite[Chapter 4]{hirsch76} for details on such neighbourhoods).
According to the proof of \cite[Theorem 7.2.1]{davies95} this is a dense subset of $\H^1(\Omega)$\,.
The compactness assumption of $\Omega$ assures that the collar neighbourhood has a minimal width $\delta$\,. Without loss of
generality we can consider that the collar neighbourhood $\Xi$ has gaussian coordinates $\mathbf{x}=(r,\bt)$\,, being $\frac{\partial}{\partial r}$
the normal vector field pointing outwards. In particular,
we have that  $\Xi\simeq[-\delta,0]\times\pO$ and $\pO\simeq\{0\}\times\pO$\,. It is enough to consider $f\in\H^{1}(\pO)$\,, because $\H^1(\pO)$ is dense in $\H^{1/2}(\pO)$\,.

Consider a smooth function $g\in\C^\infty(\mathbb{R})$ with the following properties:
\begin{itemize}
\item $g(0)=1$ and $g'(0)=-1$\,.
\item $g(s)\equiv0$\,, $s\in[2,\infty)$\,.
\item $|g(s)|<1$ and $|g'(s)|<1$\,,\quad$s\in\mathbb{R}^+_0$\,.
\end{itemize}

Define now the rescaled functions
$g_n(r):=\frac{1}{n}g(-nr)$\,. Let $\{f_n(\bt)\}_n\subset\C^\infty(\pO)$ be any sequence such that $\norm{f_n-f}_{\H^{1}(\pO)}\to 0$\,.
Now consider the smooth functions
\begin{equation}\label{smoothphi}
\tilde{\Phi}_n(\bx):=\Phi(\mathbf{x})+g_n(r)f_n(\bt)\;.
\end{equation}
Clearly we have that $\dot{\tilde{\varphi}}_n(\bt)\equiv f_n(\bt)$ and therefore $\norm{\dot{\tilde{\varphi}}_n-f}_{\H^1(\pO)}\to 0$ as needed.
Now we are going to show that $\tilde{\Phi}_n\stackrel{\H^1}{\to}\Phi$\,. According to Proposition \ref{equivalentsobolev} it is enough to show that
the functions and all their first derivatives converge in the $\H^0(\Omega)$ norm.
\begin{subequations}\label{H1convergence}
\begin{align}
	\norm{\tilde{\Phi}_n(\mathbf{x})-\Phi(\bx)}_{\H^0(\Omega)}&=\norm{g_n(r)f_n(\bt)}_{\H^0([-\frac{2}{n},0]\times\pO)}\\
	&\leq\frac{2}{n^2}\norm{f_n}_{\H^0(\pO)}\;.\notag
\end{align}
\begin{align}
	\norm{\frac{\partial}{\partial r}\tilde{\Phi}_n(\mathbf{x})-\frac{\partial}{\partial r}\Phi(\bx)}_{\H^0(\Omega)}&\leq\norm{f_n(\bt)}_{\H^0([-\frac{2}{n},0]\times\pO)}\\ 
	&\leq\frac{2}{n}\norm{f_n}_{\H^0(\pO)}\;.\notag
\end{align}
\begin{align}
	\norm{\frac{\partial}{\partial \theta}\tilde{\Phi}_n(\mathbf{x})-\frac{\partial}{\partial \theta}\Phi(\bx)}_{\H^0(\Omega)}&=\norm{g_n(r)\frac{\partial}{\partial \theta}f_n(\bt)}_{\H^0([-\frac{2}{n},0]\times\pO)}\\
	&\leq\frac{2C}{n^2}\norm{f_n}_{\H^1(\pO)}\;.\notag
\end{align}
\end{subequations}

The constant $C$ in the last inequality comes from the inequality $$\norm{\frac{\partial f_n}{\partial\theta}}_{\H^0(\pO)}\leq C \norm{f_n}_{\H^1(\pO)}\;.$$ Since $\{f_n(\bt)\}$ is a convergent sequence in $\H^1(\pO)$
the norms appearing at the right hand sides are bounded.
\end{proof}

\begin{corollary}\label{Corsemibounded}
Let $\Phi\in\H^1(\Omega)$ and $c\in\mathbb{R}$\,. Then for every $\epsilon>0$
there exists a
$\tilde{\Phi}\in\C^{\infty}(\Omega)$ with $\dot{\tilde{\varphi}}=c\,\tilde{\varphi}$ such that
$\norm{\Phi-\tilde{\Phi}}_1<\epsilon$\,.
\end{corollary}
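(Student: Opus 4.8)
The plan is to adapt the explicit construction used in the proof of Lemma~\ref{Lemma approxdotphi}, but to couple the boundary trace and the normal derivative so that the Robin-type condition $\dot{\tilde{\varphi}}=c\,\tilde{\varphi}$ holds \emph{exactly} rather than merely approximately. First I would invoke the density (cited in that proof, after \cite[Theorem 7.2.1]{davies95}) of smooth functions with vanishing normal derivative on a collar, in order to replace $\Phi$ by some $\Phi_0\in\C^\infty(\Omega)$ with $\d\Phi_0(\nu)\equiv0$ on a collar $\Xi\simeq[-\delta,0]\times\pO$ and $\norm{\Phi-\Phi_0}_1<\epsilon/2$. Then $\dot{\varphi}_0\equiv0$, while the trace $\varphi_0=\gamma(\Phi_0)\in\C^\infty(\pO)$ is smooth, which is what makes the coupling below solvable.

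Next, working in the gaussian collar coordinates $(r,\bt)$ and reusing the bump function $g$ and its rescalings $g_n(r):=\frac1n g(-nr)$ from the proof of Lemma~\ref{Lemma approxdotphi} (so that $g_n(0)=1/n$, $g_n'(0)=1$, and $g_n$ is supported in $[-2/n,0]$), I would set $\tilde{\Phi}_n:=\Phi_0+g_n(r)f_n(\bt)$ and \emph{solve} for the boundary datum $f_n$ that forces the condition. Since then $\tilde{\varphi}_n=\varphi_0+\frac1n f_n$ and, using $\partial_r\Phi_0|_{r=0}=0$, $\dot{\tilde{\varphi}}_n=g_n'(0)f_n=f_n$, imposing $\dot{\tilde{\varphi}}_n=c\,\tilde{\varphi}_n$ forces $f_n=\dfrac{c\,\varphi_0}{1-c/n}$, which is a well-defined smooth function on $\pO$ for every $n>|c|$. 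By construction $\tilde{\Phi}_n\in\C^\infty(\Omega)$ and it satisfies $\dot{\tilde{\varphi}}_n=c\,\tilde{\varphi}_n$ identically.

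Finally I would control the error. Because $f_n\to c\varphi_0$ in $\H^1(\pO)$, the families $\norm{f_n}_{\H^0(\pO)}$ and $\norm{f_n}_{\H^1(\pO)}$ are bounded, so the estimates \eqref{H1convergence} from the lemma's proof—which bound the $\H^0(\Omega)$-norms of $g_nf_n$ and of each of its first partial derivatives by quantities tending to $0$—apply verbatim. Proposition~\ref{equivalentsobolev} then yields $\norm{\tilde{\Phi}_n-\Phi_0}_1=\norm{g_nf_n}_1\to0$. Taking $n$ large enough that $n>|c|$ and $\norm{g_nf_n}_1<\epsilon/2$, and combining with $\norm{\Phi-\Phi_0}_1<\epsilon/2$ via the triangle inequality, produces $\tilde{\Phi}:=\tilde{\Phi}_n$ with $\dot{\tilde{\varphi}}=c\,\tilde{\varphi}$ and $\norm{\Phi-\tilde{\Phi}}_1<\epsilon$.

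The main obstacle is precisely the passage from the approximate boundary matching furnished by Lemma~\ref{Lemma approxdotphi} to the exact identity $\dot{\tilde{\varphi}}=c\,\tilde{\varphi}$ demanded here; this is resolved by the explicit algebraic choice of $f_n$. The only point requiring care is that this choice stays a bounded, convergent family in $\H^1(\pO)$ (guaranteed by $n>|c|$ and the smoothness of $\varphi_0$), so that the collar correction $g_nf_n$ still vanishes in the $\H^1(\Omega)$-norm and the approximation is not destroyed.
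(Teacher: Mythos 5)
Your proof is correct and follows essentially the same route as the paper's: reduce to a smooth $\Phi_0$ with vanishing normal derivative on a collar, add a collar-supported correction built from the rescaled bump functions $g_n$ so that the Robin condition holds exactly, and reuse the estimates \eqref{H1convergence} to get $\H^1$ convergence. The only difference is cosmetic: the paper takes the correction $c\Phi(0,\bt)\bigl(g_n(r)-\tfrac{1}{n}\bigr)$, which keeps the trace fixed while adjusting the normal derivative, whereas you solve for $f_n=c\varphi_0/(1-c/n)$ in the ansatz $g_n f_n$; both enforce $\dot{\tilde{\varphi}}=c\,\tilde{\varphi}$ identically.
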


\begin{proof}
As in the proof of the preceding lemma it is enough to approximate any smooth function
$\Phi$ with vanishing normal derivative in a collar neighbourhood.
Pick now a sequence of smooth functions $$\tilde{\Phi}_n(\bx):=\Phi(\bx)+c\Phi(0,\bt)\bigl(g_n(r)-\frac{1}{n}\bigr)\;,$$
where $g_n$ is the sequence of scaled functions defined in the proof of the preceding lemma.
This family of functions clearly verifies the boundary condition
$\dot{\tilde{\varphi}}=c\,\tilde{\varphi}$\,.
The inequalities \eqref{H1convergence} now read
\begin{align*}
\norm{\tilde{\Phi}_n(\mathbf{x})-\Phi(\bx)}_{\H^0(\Omega)}
&\leq\norm{c\Phi(0,\bt)\bigl(g_n(r)-\frac{1}{n}\bigr)}_{\H^0([-\frac{2}{n},0]\times\pO)}+\\
  &\qquad\qquad\qquad+\frac{c}{n}\vol(\Omega)\cdot\sup_{\Omega}|\Phi(0,\bt)|\\
&\leq\frac{2}{n^2}\norm{c\Phi(0,\bt)}_{\H^0(\pO)}+\frac{c}{n}\vol(\Omega)\cdot\sup_{\Omega}|\Phi(0,\bt)|\;.
\end{align*}
\begin{align*}
\norm{\frac{\partial}{\partial r}\tilde{\Phi}_n(\mathbf{x})-\frac{\partial}{\partial r}\Phi(\bx)}_{\H^0(\Omega)}
&\leq\norm{c\Phi(0,\bt)}_{\H^0([-\frac{2}{n},0]\times\pO)}\\
&\leq\frac{2c}{n}\norm{\Phi(0,\bt)}_{\H^0(\pO)}\;.
\end{align*}
\begin{align*}
\norm{\frac{\partial}{\partial \theta}\tilde{\Phi}_n(\mathbf{x})-\frac{\partial}{\partial \theta}\Phi(\bx)}_{\H^0(\Omega)}
&\leq\norm{c\bigl(g_n(r)-\frac{1}{n}\bigr)\frac{\partial}{\partial \theta}\Phi(0,\bt)}_{\H^0([-\frac{2}{n},0]\times\pO)}+\\
  &\qquad\qquad\qquad+\frac{c}{n}\vol(\Omega)\cdot\sup_{\Omega}| \frac{\partial \Phi(0,\bt)}{\partial \theta}|\\
&\leq\frac{2c}{n^2}\norm{\frac{\partial \Phi(0,\bt)}{\partial \theta}}_{\H^0(\pO)}+\\
  &\qquad\qquad+\frac{c}{n}\vol(\Omega)\cdot\sup_{\Omega}|\frac{\partial \Phi(0,\bt)}{\partial \theta}|\;.
\end{align*}
\end{proof}

\begin{corollary}\label{Corclosable}
Let $\{\Phi_n\}_n\subset\H^1(\Omega)$ and let $A_U$ be the partial Cayley transform of an admissible unitary $U$\,.
Then it exists a sequence of smooth functions $\{\tilde{\Phi}_n\}\subset\C^{\infty}(\Omega)$ such that
\begin{itemize}
\item $\norm{\Phi_n-\tilde{\Phi}_n}_{\H^1(\Omega)}<\frac{1}{n}$\,,
\item $\norm{\varphi_n-\tilde{\varphi}_n}_{\H^{1/2}(\pO)}<\frac{1}{n}$\,,
\item $\norm{\dot{\tilde{\varphi}}_n-A_U\tilde{\varphi}_n}_{\H^{1/2}(\pO)}<\frac{1}{n}$\,.
\end{itemize}
\end{corollary}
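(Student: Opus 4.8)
The plan is to produce the functions $\tilde{\Phi}_n$ by invoking Lemma~\ref{Lemma approxdotphi} separately for each index $n$, with the prescribed normal boundary datum chosen to be $f_n:=A_U\varphi_n$. The point is that the admissibility hypothesis on $U$ is precisely what guarantees that this choice of $f_n$ is a legitimate element of $\H^{1/2}(\pO)$, so that the lemma applies, and moreover what lets us convert an approximation of $A_U\varphi_n$ into an approximation of $A_U\tilde{\varphi}_n$.

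First I would check that $f_n$ is well defined in $\H^{1/2}(\pO)$. Since $\Phi_n\in\H^1(\Omega)$, the Lions trace theorem (Theorem~\ref{LMtracetheorem}) ensures $\varphi_n=\gamma(\Phi_n)\in\H^{1/2}(\pO)$, and the admissibility of $U$ (Definition~\ref{def:admissible}) then gives $A_U\varphi_n\in\H^{1/2}(\pO)$ together with the bound $\norm{A_U\chi}_{\H^{1/2}(\pO)}\leq K\norm{\chi}_{\H^{1/2}(\pO)}$. Fix $n$ and apply Lemma~\ref{Lemma approxdotphi} to $\Phi_n$ with $f=f_n$ and with a tolerance $\epsilon_n$ to be specified, obtaining $\tilde{\Phi}_n\in\C^{\infty}(\Omega)$ satisfying $\norm{\Phi_n-\tilde{\Phi}_n}_1<\epsilon_n$, $\norm{\varphi_n-\tilde{\varphi}_n}_{\H^{1/2}(\pO)}<\epsilon_n$ and $\norm{A_U\varphi_n-\dot{\tilde{\varphi}}_n}_{\H^{1/2}(\pO)}<\epsilon_n$.

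The only term that does not come verbatim from the lemma is the last one in the statement, which I would handle by inserting $A_U\varphi_n$ and using the triangle inequality: $\norm{\dot{\tilde{\varphi}}_n-A_U\tilde{\varphi}_n}_{\H^{1/2}(\pO)}\leq\norm{\dot{\tilde{\varphi}}_n-A_U\varphi_n}_{\H^{1/2}(\pO)}+\norm{A_U(\varphi_n-\tilde{\varphi}_n)}_{\H^{1/2}(\pO)}$. The first summand is $<\epsilon_n$ by the lemma, while the admissibility bound controls the second by $K\norm{\varphi_n-\tilde{\varphi}_n}_{\H^{1/2}(\pO)}<K\epsilon_n$, so the whole quantity is $<(1+K)\epsilon_n$. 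Choosing $\epsilon_n:=\frac{1}{(1+K)n}$ makes all three displayed estimates strictly smaller than $1/n$ simultaneously (note $\epsilon_n\leq 1/n$ already handles the first two), which is exactly the assertion.

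There is essentially no hard step here: the corollary is a repackaging of Lemma~\ref{Lemma approxdotphi}, and the single genuine ingredient is the admissibility of $U$, which enters twice --- once to certify $f_n\in\H^{1/2}(\pO)$ so the lemma is applicable, and once to estimate the cross term $A_U(\varphi_n-\tilde{\varphi}_n)$ in the $\H^{1/2}(\pO)$ norm. Without admissibility the operator $A_U$ need only be bounded on $\H^0(\pO)$ (Proposition~\ref{prop: AU bounded}), which would not suffice to keep the error small in the stronger $\H^{1/2}(\pO)$ norm; this is the one place where the distinction between having gap at $-1$ and being admissible is actually exploited.
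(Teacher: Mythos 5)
Your proof is correct and is essentially identical to the paper's: both apply Lemma~\ref{Lemma approxdotphi} with $f=A_U\varphi_n$ (using admissibility to certify $f\in\H^{1/2}(\pO)$), then handle the third estimate via the triangle inequality through $A_U\varphi_n$ and the admissibility bound, choosing $\epsilon_n=\frac{1}{(1+K)n}$. Your closing remark on why mere $\H^0$-boundedness of $A_U$ would not suffice is a nice touch, but the argument itself matches the paper's step for step.
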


\begin{proof}
For $\Phi_{n_0}$\,, $n_0\in\mathbb{N}$\,, take the approximating smooth function
$\tilde{\Phi}_{n_0}$ as in Lemma~\ref{Lemma approxdotphi} with
$$f:=A_U\varphi_{n_0}\in\H^{1/2}(\pO)$$
(note that since $U$ is admissible we have indeed that $f\in\H^{1/2}(\pO)$\,, cf., Definition~\ref{def:admissible}).
Choose also $\epsilon>0$ such that
$$\epsilon\leq \frac{1}{(1+\norm{A_U}_{\H^{1/2}(\pO)})n_0}$$
and note that this implies $\epsilon\leq\frac{1}{n_0}$\,.
Then the first two inequalities follow directly from Lemma~\ref{Lemma approxdotphi}.
Moreover, we also have
\begin{align*}
	\norm{\dot{\tilde{\varphi}}_{n_0}-A_U\tilde{\varphi}_{n_0}}_{\H^{1/2}(\pO)}&\leq \norm{\dot{\tilde{\varphi}}_{n_0}-A_U\varphi_{n_0}}_{\H^{1/2}(\pO)}+\norm{A_U\varphi_{n_0}-A_U\tilde{\varphi}_{n_0}}_{\H^{1/2}(\pO)}\\
	&\leq \epsilon+ \norm{A_U}_{\H^{1/2}(\pO)}\norm{\varphi_{n_0}-\tilde{\varphi}_{n_0}}_{\H^{1/2}(\pO)}\\
	&\leq (1+\norm{A_U}_{\H^{1/2}(\pO)})\epsilon \leq \frac{1}{n_0}\;.
\end{align*}
\end{proof}

For the analysis of the semi-boundedness and closability of the quadratic form $(Q_U,\D_U)$ of Definition \ref{DefQU} we need to analyse first the following one-dimensional problem in an interval. The operator is defined with Neumann conditions at one end of the interval and Robin-type conditions at the other end.

\begin{definition}\label{Defunidimensional}
Consider the interval $I=[0,2\pi]$ and a real constant $c\in\mathbb{R}$\,.
Define the second order differential operator
$$  R\colon\D(R)\to\H^0([0,2\pi]) \quad\text{by}\quad
    R=-\frac{\d^2}{\d r^2}
$$
on the domain
$$\D(R):=\left\{\Phi\in\C^{\infty}(I)\;\Bigr|\;\; \frac{\partial \Phi}{\partial r}\bigr|_{r=0}=0
         \quad\text{and}\quad \frac{\partial \Phi}{\partial r}\bigr|_{r=2\pi}
=c\Phi|_{r=2\pi} \right\}\;.$$
\end{definition}

\begin{proposition}\label{prop: intervalrobin}
The symmetric operator $R$ of Definition~\ref{Defunidimensional} is essentially self-adjoint with discrete
spectrum and semi-bounded from below with lower bound $\Lambda_0$\,.
\end{proposition}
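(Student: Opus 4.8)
The plan is to treat $R$ as a regular one-dimensional Sturm--Liouville operator and to obtain all three assertions through the associated quadratic form. I would first record that $R$ is symmetric on $\D(R)$. Integrating by parts twice, for $\Phi,\Psi\in\D(R)$ the difference $\scalar{R\Phi}{\Psi}-\scalar{\Phi}{R\Psi}$ equals the boundary term $[\bar\Phi\Psi'-\bar\Phi'\Psi]_0^{2\pi}$. The Neumann condition at $r=0$ annihilates the contribution at that endpoint, while the Robin condition $\Phi'(2\pi)=c\,\Phi(2\pi)$ with real $c$ makes the contribution at $r=2\pi$ vanish, since $\bar\Phi(2\pi)\,c\,\Psi(2\pi)-c\,\bar\Phi(2\pi)\,\Psi(2\pi)=0$. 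Hence $R$ is symmetric.

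Next I would pass to the quadratic form. A single integration by parts yields
$$\scalar{\Phi}{R\Phi}=\int_0^{2\pi}|\Phi'|^2\,\d r-c\,|\Phi(2\pi)|^2=:Q_R(\Phi)\,,$$
which extends to the dense domain $\H^1([0,2\pi])$. The key estimate is the one-dimensional trace-interpolation inequality: writing $|\Phi(2\pi)|^2-|\Phi(r)|^2=\int_r^{2\pi}\tfrac{\d}{\d s}|\Phi(s)|^2\,\d s$ and integrating in $r$, Cauchy--Schwartz and Young's inequality provide, for every $\epsilon>0$, a constant $C_\epsilon$ with $|\Phi(2\pi)|^2\leq\epsilon\,\norm{\Phi'}^2+C_\epsilon\,\norm{\Phi}^2$. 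When $c\leq 0$ the form is manifestly positive; when $c>0$ one chooses $\epsilon<1/c$ to absorb the boundary term, obtaining $Q_R(\Phi)\geq(1-c\epsilon)\norm{\Phi'}^2-cC_\epsilon\norm{\Phi}^2\geq\Lambda_0\norm{\Phi}^2$ with $\Lambda_0:=-cC_\epsilon$. This establishes semi-boundedness below with lower bound $\Lambda_0$. The same estimate together with Proposition~\ref{equivalentsobolev} shows that $\normm{\cdot}_{Q_R}$ is equivalent to the $\H^1([0,2\pi])$ norm, so $(Q_R,\H^1([0,2\pi]))$ is closed; by Kato's representation theorem (Theorem~\ref{fundteo}) it defines a unique semi-bounded self-adjoint operator $T$ with lower bound $\Lambda_0$.

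For essential self-adjointness I would identify $\overline{R}$ with $T$. One-dimensional elliptic regularity gives $\D(T)\subset\H^2([0,2\pi])$, and evaluating the representation condition $Q_R(\Phi,\Psi)=\scalar{\Phi}{T\Psi}$ against all $\Phi\in\H^1([0,2\pi])$ forces $\Psi$ to satisfy precisely the Neumann condition at $0$ and the Robin condition at $2\pi$; since smooth functions with these boundary conditions form a core for $T$, one concludes $\overline{R}=T$, i.e.\ $R$ is essentially self-adjoint. (Alternatively, $R$ being symmetric and semi-bounded, its closure is self-adjoint once the separated self-adjoint boundary conditions are seen to exhaust the deficiency subspaces, cf.\ Theorem~\ref{thesymmcases}.) Discreteness then follows because the form domain $\H^1([0,2\pi])$ embeds compactly into $\H^0([0,2\pi])$, so $T$ has compact resolvent; its spectrum is purely discrete, bounded below by $\Lambda_0$ and accumulating only at $+\infty$, with the eigenvalues given by the successive minima of $Q_R$ via the min-max principle (Theorem~\ref{minmaxprinciple}).

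The main obstacle is the matching step of the third paragraph: checking that the abstract self-adjoint operator furnished by the form genuinely has its domain characterised by the two separated boundary conditions, so that it coincides with $\overline{R}$ rather than being merely some self-adjoint extension. This is where the one-dimensional elliptic regularity and the precise boundary analysis do the real work; the trace inequality underlying the lower bound $\Lambda_0$, though the heart of semi-boundedness, is otherwise routine.
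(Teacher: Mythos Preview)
Your argument is correct, but it follows a genuinely different route from the paper's. The paper dispatches essential self-adjointness by citation, obtains discreteness via the stability of the essential spectrum under finite-rank perturbations of the resolvent (all self-adjoint extensions of an operator with finite deficiency indices share the same essential spectrum, and the Dirichlet extension has none), and then establishes semi-boundedness by \emph{explicitly solving the eigenvalue problem}: writing the general solution $Ae^{i\lambda r}+Be^{-i\lambda r}$, imposing the two boundary conditions, and analysing the resulting transcendental equation $\tan 2\pi\lambda=-c/\lambda$ (resp.\ $e^{-4\pi\mu}=(\mu-c)/(\mu+c)$ for $\lambda=i\mu$). This yields sharper information than you obtain: the operator is positive for $c\leq 0$ and has \emph{exactly one} negative eigenvalue for $c>0$.

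Your approach, by contrast, is variational throughout: the trace-interpolation inequality $|\Phi(2\pi)|^2\leq\epsilon\norm{\Phi'}^2+C_\epsilon\norm{\Phi}^2$ gives semi-boundedness and closedness of $Q_R$ on $\H^1$ in one stroke, compactness of $\H^1\hookrightarrow\H^0$ gives discreteness, and Kato's theorem produces the self-adjoint operator. This is arguably more in keeping with the quadratic-form philosophy of the surrounding chapter and generalises painlessly to variable coefficients, whereas the paper's explicit computation is tied to the constant-coefficient ODE. The price you pay is a non-explicit $\Lambda_0$ and the need for the regularity-and-boundary-matching step you correctly flag as the main obstacle; the paper sidesteps that step entirely by treating essential self-adjointness as known. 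One small caveat: your parenthetical alternative invoking Theorem~\ref{thesymmcases} is not quite right as stated---semi-boundedness of a symmetric operator does not by itself force the closure to be self-adjoint (it only guarantees the Friedrichs extension exists), so the identification $\overline{R}=T$ really does rest on the domain analysis you outline.
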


\begin{proof}
It is well known that this operator together with this boundary conditions defines an essentially self-adjoint operator
(see, e.g., \cite{asorey05, bruning08, grubb68}). We show next that its spectrum is semi-bounded from below.
Its closure is a self-adjoint extension of the Laplace operator defined on $\H_0^2[0,2\pi]$\,.
The latter operator has finite dimensional deficiency indices and its Dirichlet extension is known to have
empty essential spectrum. According to \cite[Theorem 8.18]{weidman80} all the self-adjoint extensions of a
closed, symmetric operator with finite deficiency indices have the same essential spectrum and therefore the spectrum of $R$ is discrete.

Consider now the following spectral problem:
\begin{equation}
R\Phi=\Lambda\Phi,\quad \frac{\partial \Phi}{\partial r}\Bigr|_{r=0}=0,\quad \frac{\partial \Phi}{\partial r}\Bigr|_{r=2\pi}=c\Phi|_{r=2\pi}\;,
\end{equation}
with $c$ a real constant.
On general solutions $\Phi(r)=Ae^{\mathbf{i}\lambda r}+Be^{-\mathbf{i}\lambda r}$ we impose the boundary conditions.
For nonzero solutions we obtain the following relation
\begin{equation}\label{spectral function}
	-(\mathbf{i}\lambda+c)e^{-\mathbf{i}2\pi\lambda}+(\mathbf{i}\lambda-c)e^{\mathbf{i}2\pi\lambda}=0\;,
\end{equation}
where $\Lambda=\lambda^2\in\mathbb{R}$\,. The equation is symmetric under the interchange $\lambda\to-\lambda$\,.
It is therefore enough to consider either $\lambda\geq 0$ or $\lambda=\mathbf{i}\mu$ with $\mu>0$\,.
These two choices correspond to the positive and negative eigenvalues, respectively.
The imaginary part of Eq.~\eqref{spectral function} vanishes identically.
If $\lambda\geq 0$ its real
part takes the form $$\tan 2\pi\lambda= -\frac{c}{\lambda}\;,$$ which leads to infinite solutions for each
$c\in\mathbb{R}$ and therefore there are infinitely many positive eigenvalues.
If $\lambda=\mathbf{i}\mu$ we obtain from Eq.~\eqref{spectral function}
$$e^{-4\pi\mu}=\frac{\mu-c}{\mu+c}\;,$$
which has either no solution for $c<0$\,, the trivial solution $\mu=0$ for $c=0$
and exactly one negative solution for $c>0$\,. So the operator $R$ is positive for $c\leq0$ and semi-bounded from below for $c>0$\,.
We denote the lowest possible eigenvalue by $\Lambda_0$\,.
\end{proof}

\begin{definition}\label{Deftensorproduct}
Consider the interval $I=[0,2\pi]$ and let $\{\Gamma_i(\bt)\}\subset\H^0(\pO)$ be an orthonormal basis.
Consider the following operator $A$ on the tensor product
$\H^0(I)\otimes\H^0(\pO)\simeq \H^0(I\times\pO)$
given by
$$A\colon\D(A)\to \H^0(I)\otimes\H^0(\pO) \quad\text{where}\quad A:=R\otimes\mathbb{I}\;,$$
on its natural domain
$$\D(A)=\Bigl\{\Phi\in\H^0(I)\otimes\H^0(\pO)\;\bigr|\;\; \Phi=\sum_{i=1}^n\Phi_i(r)\Gamma_i(\bt)\,,\;n\in\mathbb{N}\,,\;
               \Phi_i\in\D(R) \Bigr\}\;.$$
\end{definition}

\begin{proposition}\label{Asemibounded}
The operator $A$ is essentially self-adjoint, semi-bounded from below and has the same lower bound $\Lambda_0$
as the operator $R$ of Proposition~\ref{prop: intervalrobin}.
\end{proposition}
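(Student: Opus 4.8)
The plan is to reduce every assertion about $A$ to the corresponding property of $R$ established in Proposition~\ref{prop: intervalrobin}, exploiting the fact that $A=R\otimes\mathbb{I}$ acts trivially on the second tensor factor. First I would use the orthonormal basis $\{\Gamma_i\}$ to set up the unitary identification
$$
U_\Gamma\colon\H^0(I)\otimes\H^0(\pO)\to\bigoplus_{i}\H^0(I)\,,\qquad \Phi\mapsto(\Phi_i)_i\,,\quad \Phi_i(r):=\scalarb{\Gamma_i}{\Phi(r,\cdot)}\,,
$$
so that $\Phi(r,\bt)=\sum_i\Phi_i(r)\Gamma_i(\bt)$ and Parseval gives $\norm{\Phi}^2=\sum_i\norm{\Phi_i}^2$. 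Under $U_\Gamma$ the operator $A$ becomes the countable direct sum $\bigoplus_i R$ acting componentwise, $(\Phi_i)_i\mapsto(R\Phi_i)_i$, and the domain $\D(A)$ of Definition~\ref{Deftensorproduct} corresponds exactly to the algebraic direct sum of copies of $\D(R)$, i.e.\ to finite sequences with each entry in $\D(R)$. A one-line computation using the orthonormality of $\{\Gamma_i\}$ shows $A$ is symmetric, since $\scalar{\Phi}{A\Psi}=\sum_i\scalar{\Phi_i}{R\Psi_i}=\sum_i\scalar{R\Phi_i}{\Psi_i}=\scalar{A\Phi}{\Psi}$.

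For essential self-adjointness I would invoke the basic deficiency criterion. Because each copy of $R$ is essentially self-adjoint by Proposition~\ref{prop: intervalrobin}, the ranges $\operatorname{Ran}(R\pm\mathbf{i})$ are dense in $\H^0(I)$. Hence the algebraic direct sum $\bigoplus_i^{\mathrm{alg}}\operatorname{Ran}(R\pm\mathbf{i})$ is dense in $\bigoplus_i\H^0(I)$, and since it is contained in $\operatorname{Ran}(A\pm\mathbf{i})$, the latter is dense as well; for the symmetric operator $A$ this is equivalent to essential self-adjointness, with closure $\overline{A}=\overline{R}\otimes\mathbb{I}$. Equivalently, one may cite the tensor-product theorem for essentially self-adjoint operators, cf.\ \cite{reed72}, applied to $A=R\otimes\mathbb{I}+\mathbb{I}\otimes 0$ with the zero operator essentially self-adjoint on $\Span\{\Gamma_i\}$.

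Semi-boundedness and the identification of the lower bound then follow directly. For $\Phi=\sum_{i=1}^n\Phi_i\otimes\Gamma_i\in\D(A)$, orthonormality and the lower bound for $R$ yield
$$
\scalar{\Phi}{A\Phi}=\sum_{i=1}^n\scalar{\Phi_i}{R\Phi_i}\geq\Lambda_0\sum_{i=1}^n\norm{\Phi_i}^2=\Lambda_0\norm{\Phi}^2\,,
$$
so $A\geq\Lambda_0$. That $\Lambda_0$ is exactly the lower bound, and not merely a lower bound, I would obtain from the discreteness of the spectrum of $R$ (again Proposition~\ref{prop: intervalrobin}): there is an eigenfunction $\Phi_0\in\D(R)$ with $R\Phi_0=\Lambda_0\Phi_0$, whence $\Phi_0\otimes\Gamma_1\in\D(A)$ satisfies $A(\Phi_0\otimes\Gamma_1)=\Lambda_0\,(\Phi_0\otimes\Gamma_1)$, exhibiting $\Lambda_0$ as an eigenvalue of $A$.

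I do not expect a genuine analytic obstacle here, since no estimate beyond those already proved for $R$ is required. The only point deserving care is the claim that the algebraic direct sum (finite sequences) is a core for $\overline{A}$, equivalently that the domain $\D(A)$ as defined is large enough; this is precisely what the completeness of $\{\Gamma_i\}$, i.e.\ the density of $\Span\{\Gamma_i\}$ in $\H^0(\pO)$, guarantees, and it is exactly the ingredient that makes the density-of-range argument above go through. The remaining effort is bookkeeping of the tensor-product/direct-sum structure rather than any new analysis.
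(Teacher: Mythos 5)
Your proof is correct and follows essentially the same route as the paper: the density of $\operatorname{Ran}(R\pm\mathbf{i})$ componentwise is just the dual formulation of the paper's argument that any $\Psi\in\ker(A^\dagger\mp\mathbf{i})$ has all components $\Psi_{i_0}$ orthogonal to $\operatorname{Ran}(R\pm\mathbf{i})$ and hence vanishes, and the semi-boundedness computation is identical. Your extra step exhibiting $\Phi_0\otimes\Gamma_1$ as an eigenvector with eigenvalue $\Lambda_0$ is a small but welcome addition, since it justifies that $\Lambda_0$ is the \emph{same} lower bound and not merely a lower bound, a point the paper's proof leaves implicit.
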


\begin{proof}
Let $\Psi\in\ker(A^\dagger\mp\mathbf{i})$ and consider its decomposition in terms of the orthonormal basis
$\{\Gamma_i(\theta)\}\subset\H^0(\pO)$ such that
$\Psi=\sum_{i=0}^\infty\Psi(r)_i\Gamma_i(\bt)$\,. We have that $\scalar{\Psi}{(A\pm\mathbf{i})\Phi}=0$\, for all $\Phi\in\D(A)\;.$
In particular his is true for any $\Phi=\Phi_{i_0}\Gamma_{i_0}\in\D(A)$\,. Then
\begin{align*}
0=\scalar{\Psi}{(A\pm\mathbf{i})\Phi_{i_0}\Gamma_{i_0}}&=\sum_i^{\infty}\scalar{\Psi_i}{(R\pm\mathbf{i})
  \Phi_{i_0}}_{\H^0(I)}\scalar{\Gamma_i}{\Gamma_{i_0}}_{\H^{0}(\pO)}\\
&=\scalar{\Psi_{i_0}}{(R\pm\mathbf{i})\Phi_{i_0}}_{\H^0(I)}\quad\forall \Phi_{i_0}\in\D(R)\;.
\end{align*}
This implies that $\Psi_{i_0}=0$ because, by Proposition~\ref{prop: intervalrobin},
$R$ is essentially self-adjoint. Recall Definition \ref{def:deficiencyspaces} and Remark \ref{rem: sa cases}. Since the choice of $\Phi_{i_0}$ was arbitrary this implies that $\Psi=0$ and therefore $A$ is essentially self-adjoint.

Finally we show the semi-boundedness condition.
Using the orthonormality of the basis $\{\Gamma_i(\bt)\}$
and for any $\Phi\in\D(A)$
we have that
\begin{align*}
	\scalar{\Phi}{A\Phi}_{\H^0(I\times\pO)}&=\sum_{i=1}^n\scalar{\Phi_i}{R\Phi_i}_{\H^0(I)}\\
	&\geq\Lambda_0\sum_{i=1}^n\scalar{\Phi_i}{\Phi_i}_{\H^0(I)}\\
	&=\Lambda_0\scalar{\Phi}{\Phi}_{\H^0(I\times\pO)}\;.
\end{align*}
\end{proof}

\subsection{Quadratic forms and extensions of the minimal Laplacian}

We begin associating quadratic forms to some of the operators on a collar neighbourhood of the precedent subsection.

\begin{lemma}\label{AdomainH1}
Denote by $Q_A$ the closed quadratic form represented by the closure of $A$\,. Then its domain $\D(Q_A)$
contains the Sobolev space of class 1. For any $\Phi\in\H^1(I\times\pO)\subset\D(Q_A)$ we have the expression
$$Q_A(\Phi)=
\int_{\pO}\Bigl[\int_I \frac{\partial \bar{\Phi}}{\partial r}\frac{\partial \Phi}{\partial r}\d r- c|\gamma(\Phi)|^2 \Bigr]\d\mu_{\partial \eta}
\;.$$
\end{lemma}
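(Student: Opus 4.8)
The plan is to exploit that, by Proposition~\ref{Asemibounded}, the operator $A$ is essentially self-adjoint and semi-bounded from below with lower bound $\Lambda_0$, so that its closure $\overline{A}$ is self-adjoint with the same lower bound. By Kato's representation theorem (Theorem~\ref{fundteo}) together with the fact that the form of a semi-bounded symmetric operator is always closable (cf.\ \cite[Theorem X.23]{reed75}), the closed quadratic form $Q_A$ representing $\overline{A}$ is precisely the closure of the form $\Phi\mapsto\scalar{\Phi}{A\Phi}$ defined on $\D(A)$. Hence $\D(Q_A)$ is the completion of $\D(A)$ in the form norm $\normm{\cdot}_{Q_A}$, and to prove the statement I would (i) evaluate $Q_A$ on $\D(A)$ by integration by parts, (ii) show the resulting expression is continuous in the $\H^1$ norm, and (iii) show that $\D(A)$ is dense in $\H^1(I\times\pO)$ with respect to $\norm{\cdot}_1$; the claim then follows by a closure argument.

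For step (i), take $\Phi=\sum_{i=1}^n\Phi_i(r)\Gamma_i(\bt)\in\D(A)$ with $\Phi_i\in\D(R)$. Using the orthonormality of $\{\Gamma_i\}$ exactly as in the proof of Proposition~\ref{Asemibounded}, one reduces $\scalar{\Phi}{A\Phi}$ to $\sum_i\scalar{\Phi_i}{R\Phi_i}_{\H^0(I)}$. Integrating each summand once by parts and inserting the boundary conditions defining $\D(R)$ in Definition~\ref{Defunidimensional}---the Neumann condition $\partial_r\Phi_i|_{r=0}=0$ kills the boundary term at $0$, while the Robin condition $\partial_r\Phi_i|_{r=2\pi}=c\,\Phi_i(2\pi)$ produces $-c|\Phi_i(2\pi)|^2$---gives $\scalar{\Phi_i}{R\Phi_i}_{\H^0(I)}=\int_I|\partial_r\Phi_i|^2\,\d r-c|\Phi_i(2\pi)|^2$. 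Reassembling the sum by Parseval on $\pO$ yields exactly the stated integral expression for elements of $\D(A)$.

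For step (ii), I would regard the right-hand side as a functional on $\H^1(I\times\pO)$; the Dirichlet part $\int_{\pO}\int_I|\partial_r\Phi|^2\,\d r\,\d\mu_{\partial\eta}$ is clearly bounded by $\norm{\Phi}_1^2$, and the boundary term is controlled via the Lions trace theorem (Theorem~\ref{LMtracetheorem}), since $|c|\,\norm{\gamma(\Phi)}_{\H^0(\pO)}^2\leq |c|\,\norm{\gamma(\Phi)}^2_{\H^{1/2}(\pO)}\leq C\norm{\Phi}_1^2$. Polarising, this shows that on $\D(A)$ the form norm $\normm{\cdot}_{Q_A}$ is dominated by $\norm{\cdot}_1$. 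The main obstacle is step (iii): the density of $\D(A)$ in $\H^1(I\times\pO)$. Here I would adapt the collar construction of Lemma~\ref{Lemma approxdotphi} and Corollary~\ref{Corsemibounded}. First, using the fact that at the level of the form imposing a Neumann condition amounts to imposing none (cf.\ \cite[Theorem 7.2.1]{davies95}), smooth finite sums $\sum_i f_i(r)\Gamma_i(\bt)$ with $\partial_r f_i|_{r=0}=0$ are $\norm{\cdot}_1$-dense; then a correction term supported near $r=2\pi$ and scaled to vanish in $\H^1$---precisely the modification used in Corollary~\ref{Corsemibounded}---enforces $\partial_r f_i|_{r=2\pi}=c\,f_i(2\pi)$ without spoiling the approximation, placing the approximants inside $\D(A)$.

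Finally, to conclude, given $\Phi\in\H^1(I\times\pO)$ I would choose $\Phi_n\in\D(A)$ with $\Phi_n\to\Phi$ in $\norm{\cdot}_1$ by step (iii). By step (ii) the sequence $\{\Phi_n\}$ is Cauchy in $\normm{\cdot}_{Q_A}$, while $\Phi_n\to\Phi$ in $\H^0$; since $Q_A$ is closed, this forces $\Phi\in\D(Q_A)$ and $Q_A(\Phi)=\lim_n Q_A(\Phi_n)$. The $\H^1$-continuity of the explicit expression established in step (ii) identifies this limit with the stated integral, proving simultaneously that $\H^1(I\times\pO)\subset\D(Q_A)$ and that the displayed formula holds there.
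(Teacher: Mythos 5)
Your proposal is correct and follows essentially the same route as the paper: evaluate the form on $\D(A)$ by integration by parts using the boundary conditions of $\D(R)$, dominate the form norm by the $\H^1$ norm via the trace theorem, and invoke the density of $\D(A)$ in $\H^1(I\times\pO)$ (the paper cites Corollary~\ref{Corsemibounded} for exactly the density step you sketch) to extend the expression by closure. The only difference is cosmetic: you correctly place the Robin boundary term at the Robin endpoint, where the paper's displayed computation writes it with a slightly inconsistent coordinate label.
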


\begin{proof}
Let $\Phi\in\D(A)$\,. Then, recalling the boundary conditions specified in the domain $\D(R)$, we have that
\begin{align}
Q_A(\Phi)&=\scalar{\Phi}{A\Phi}_{\H^0(I\times\pO)}=\sum_i\scalar{\Phi_i}{R\Phi_i}_{\H^0(I)}\notag\\
&=\sum_i\scalar{\frac{\partial \Phi_i}{\partial r}}{\frac{\partial \Phi_i}{\partial r}}_{\H^0(I)}-c\bar{\Phi}_i(0)\Phi_i(0)\notag\\
&=\int_{\pO}\Bigl[\int_I \frac{\partial \bar{\Phi}}{\partial r}\frac{\partial \Phi}{\partial r}\d r -c|\varphi|^2\Bigr]\d\mu_{\partial \eta}\;.
   \label{QFA}
\end{align}
Now it is easy to check that the graph norm of this quadratic form is dominated by the Sobolev norm of order 1, $\H^1(I\times\pO)$ \,.
\begin{align*}
\normm{\Phi}^2_{Q_A}&=(1+|\Lambda_0|)\norm{\Phi}^2_{\H^0(I\times\pO)}+Q_A(\Phi)\\
&\leq(1+|\Lambda_0|)\norm{\Phi}^2_{\H^0(I\times\pO)}+\int_{\pO}\int_I
     \frac{\partial \bar{\Phi}}{\partial r}\frac{\partial \Phi}{\partial r}\d r\d\mu_{\partial \eta} +c\norm{\varphi}^2_{\H^0(\pO)}\\
&\leq (1+|\Lambda_0|)\norm{\Phi}^2_{\H^0(I\times\pO)}+C\norm{\Phi}^2_{\H^1(I\times\pO)}\\
&\leq C'\norm{\Phi}^2_{\H^1(I\times\pO)}\;,
\end{align*}
where in the second step we have used again the equivalence appearing in Proposition \ref{equivalentsobolev} and Theorem \ref{LMtracetheorem}.
The above inequality shows that $\overline{\D(A)}^{\norm{\cdot}_1}\subset \D(Q_A)$\,. Moreover, Corollary~\ref{Corsemibounded}
states that $\D(A)$ is dense in $\H^1(I\times\pO)$\,. Hence the expression Eq.~\eqref{QFA} holds also on $\H^1(I\times\pO)$\,.
\end{proof}

\begin{theorem}\label{maintheorem1}
Let $U\colon\H^0(\pO)\to\H^0(\pO)$ be a unitary operator with gap at $-1$\,.
Then the quadratic form $Q_U$ of Definition \ref{DefQU} is semi-bounded from below.
\end{theorem}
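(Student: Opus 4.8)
The plan is to bound $Q_U$ from below by first replacing the boundary term with a multiple of $\norm{\gamma(\Phi)}^2_{\H^0(\pO)}$, and then recognising the resulting expression, once localised near the boundary, as the one-dimensional model form $Q_A$ whose semi-boundedness was already established in Proposition~\ref{Asemibounded}. The first step is to exploit the gap condition through the boundedness of $A_U$. Since $\Phi\in\D_U$ satisfies $P^\bot\gamma(\Phi)=0$, the trace $\gamma(\Phi)$ lies in the invertibility boundary space $W$, where, by Proposition~\ref{prop: AU bounded}, $A_U$ is a bounded self-adjoint operator. Setting $c:=\sup\sigma(A_U)<\infty$ (this finiteness is exactly where the gap at $-1$ enters, since otherwise $-\tan(\lambda/2)$ would be unbounded near $\lambda=\pi$), the spectral theorem gives $A_U\leq c\,\1$ on $W$, so that
\[
  \scalarb{\gamma(\Phi)}{A_U\gamma(\Phi)}\leq c\,\norm{\gamma(\Phi)}^2_{\H^0(\pO)}\;,
\]
and hence $Q_U(\Phi)\geq\norm{\d\Phi}^2-c\,\norm{\gamma(\Phi)}^2_{\H^0(\pO)}$. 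If $c\leq 0$ this already yields $Q_U(\Phi)\geq 0$, so henceforth I would assume $c>0$.

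Next I would localise to a collar $\Xi\simeq[-\delta,0]\times\pO$ of the boundary, using Gaussian normal coordinates $\bx=(r,\bt)$ as in the proof of Lemma~\ref{Lemma approxdotphi}. Splitting the Dirichlet integral into its contributions over $\Xi$ and over the bulk $\Omega\setminus\Xi$, the bulk contribution is non-negative and may be discarded; inside the collar the metric is block-diagonal, so that $|\d\Phi|^2\geq|\partial_r\Phi|^2$ pointwise, the tangential part being non-negative. Writing the volume element as $\d\mu_\eta=J\,\d r\,\d\mu_{\partial\eta}$ with $J|_{r=0}=1$ and controlling $J$ by its positive minimum $m$ over the compact collar (the boundary term is unaffected, since there $J=1$), one is left with
\[
  Q_U(\Phi)\geq \int_{\pO}\Bigl[m\int_{-\delta}^{0}\Bigl|\frac{\partial\Phi}{\partial r}\Bigr|^2\d r-c\,|\gamma(\Phi)|^2\Bigr]\d\mu_{\partial\eta}\;.
\]

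After an affine rescaling of the radial variable onto $I=[0,2\pi]$, the right-hand side is exactly the quadratic form $Q_A$ of Lemma~\ref{AdomainH1} associated to $A=R\otimes\1$, with the Robin constant adjusted to a multiple of $c$ and with the Neumann end of $R$ corresponding to the inner interface $r=-\delta$, where no condition is imposed (legitimate because, at the level of forms, Neumann conditions amount to no condition at all). Since the domain of $Q_A$ contains $\H^1(I\times\pO)$, and therefore the collar restriction $\Phi|_\Xi$, Proposition~\ref{Asemibounded} gives $Q_A(\Phi|_\Xi)\geq\Lambda_0\norm{\Phi|_\Xi}^2_{\H^0(I\times\pO)}$ with lower bound $\Lambda_0$. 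As $\Lambda_0\leq 0$ in the relevant case $c>0$, and the collar $\H^0$-norm is dominated by the full $\H^0$-norm on $\Omega$, this yields $Q_U(\Phi)\geq \Lambda_0'\,\norm{\Phi}^2$ for a constant $\Lambda_0'\leq 0$ depending only on $c$, $\delta$ and the metric, which is the asserted semi-boundedness.

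I expect the main obstacle to be the geometric bookkeeping in the collar: tracking the non-product Riemannian metric, and justifying that discarding the tangential and bulk contributions together with the radial rescaling does not spoil the comparison with the model form $Q_A$, nor the compatibility of the splitting with the constraint $P^\bot\gamma(\Phi)=0$ defining $\D_U$ and with the form domain of $Q_A$. Here the principle that Neumann conditions are invisible to the form, cf.~\cite[Theorem 7.2.1]{davies95}, is essential. By contrast, the spectral reduction $A_U\leq c\,\1$ and the positivity of the discarded terms are comparatively routine.
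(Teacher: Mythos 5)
Your proposal is correct and follows essentially the same route as the paper's proof: split $\Omega$ into a collar and a bulk, discard the positive bulk and tangential contributions, control the volume density on the compact collar, bound the boundary term via the boundedness of $A_U$ (which is where the gap at $-1$ enters), and reduce to the one-dimensional Robin model $Q_A$ of Lemma~\ref{AdomainH1} and Proposition~\ref{Asemibounded} after a radial rescaling. The only cosmetic differences are your use of $\sup\sigma(A_U)$ in place of $\norm{A_U}$ and a one-sided bound $m$ on the Jacobian where the paper uses the two-sided bound $(1\pm\delta)$.
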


\begin{proof}
Let $(\Omega,\pO,\eta)$ be a compact, Riemannian manifold with boundary. One can always select a collar neighbourhood $\Xi$ of the boundary with coordinates $(r,\bt)$ such that $\Xi\simeq[-L,0]\times\pO$ and where $$\eta(r,\bt)=\begin{bmatrix}1 & 0 \\ 0 & g(r,\bt) \end{bmatrix}\;.$$ The normal vector field to the boundary is going to be $\frac{\partial}{\partial r}$\,. With this choice the induced Riemannian metric at the boundary becomes $\partial\eta(\bt)\equiv g(0,\bt)$\,. The thickness $L$ of the collar neighbourhood $\Xi$ can be also selected such that it exists $\delta\ll1$ that verifies
\begin{equation}\label{compactcollar}
	(1-\delta)\sqrt{|g(0,\bt)|}\leq \sqrt{|g(r,\bt)|}\leq (1+\delta)\sqrt{|g(0,\bt)|}\;.
\end{equation}
The quadratic form $Q_U$ can be adapted to this splitting. Let $\Phi\in\D_U\subset\H^1(\Omega)$\,. Obviously $\Phi|_\Xi\in\H^1(\Xi)\simeq\H^1(I\times\pO)$\,. In what follows, to simplify the notation and since there is no risk of confusion, the symbol $\Phi$ will stand for both $\Phi\in\H^1(\Omega)$ and $\Phi|_{\Xi}\in\H^1(\Xi)$\,.
\begin{subequations}\label{equationssemibounded}
\begin{align}
Q_U(\Phi)&=\int_\Omega\eta^{-1}(\d\bar{\Phi},\d \Phi)\d\mu_\eta-\int_\pO\bar{\varphi}A\varphi\d\mu_{\partial \eta}\\
&=\int_\Xi\eta^{-1}(\d\bar{\Phi},\d \Phi)\d\mu_\eta+\int_{\Omega\backslash\Xi}\eta^{-1}(\d\bar{\Phi},\d \Phi)\d\mu_\eta
  -\int_\pO\bar{\varphi}A\varphi\d\mu_{\partial \eta}\\
&\geq \int_\Xi\eta^{-1}(\d\bar{\Phi},\d \Phi)\d\mu_\eta-\int_\pO\bar{\varphi}A\varphi\d\mu_{\partial \eta}\label{semibounded1}\\
&=\int_\pO\int_I\Bigl[\frac{\partial \bar{\Phi}}{\partial r}\frac{\partial \Phi}{\partial r}+ g^{-1}(\d_{\bt}\Phi,\d_{\bt}\Phi)\Bigr]
  \sqrt{|g(r,\bt)|}\d r\wedge\d\bt-\int_\pO\bar{\varphi}A\varphi\d\mu_{\partial \eta}\\
&\geq \int_\pO\int_I\frac{\partial \bar{\Phi}}{\partial r}\frac{\partial \Phi}{\partial r}\sqrt{|g(r,\bt)|}\d r\wedge\d\bt
  -\int_\pO\bar{\varphi}A\varphi\d\mu_{\partial \eta}\label{semibounded2}\\
&\geq (1-\delta)\int_\pO\int_I\frac{\partial \bar{\Phi}}{\partial r}\frac{\partial \Phi}{\partial r}
  \sqrt{|g(0,\bt)|}\d r\wedge\d\bt-\int_\pO\bar{\varphi}A\varphi\sqrt{|g(0,\bt)|}\d\bt\label{semibounded3}\\
&\geq(1-\delta)\int_\pO\Bigl[\int_I\frac{\partial \bar{\Phi}}{\partial r}\frac{\partial \Phi}{\partial r}\d r
  -\frac{\norm{A}}{(1-\delta)}|\varphi|^2\Bigr]\sqrt{|g(0,\bt)|}\d\bt\\
&\geq -|\Lambda_0|(1-\delta)\norm{\Phi}^2_{\H^0(I\times\pO)}\geq -|\Lambda_0|\frac{1-\delta}{1+\delta}\norm{\Phi}^2_{\H^0(\Xi)}\\ &\phantom{\geq -|\Lambda_0|(1-\delta)\norm{\Phi}^2_{\H^0(I\times\pO)}}\geq
  -|\Lambda_0|\frac{1-\delta}{1+\delta}\norm{\Phi}^2_{\H^0(\Omega)}\;.\label{semibounded4}
\end{align}\\
\end{subequations}	
In the step leading to \eqref{semibounded1} we have used the fact that the second term is positive. In the step leading to
\eqref{semibounded2} we have used that the second term in the first integrand is positive. Then \eqref{semibounded3} follows using
the bounds \eqref{compactcollar}. The last chain of inequalities follows by Proposition~\ref{Asemibounded} and Lemma~\ref{AdomainH1},
taking $c=\norm{A}/(1-\delta)$\,.
Notice that the semi-bound of Proposition \ref{Asemibounded} is always negative in this case because $c=\norm{A}/(1-\delta)>0$\,.
In Definition \ref{Defunidimensional} the interval $I$ was taken of length $2\pi$ whereas in this case it has length $L$\,.
This affects only in a constant factor that can be absorbed in the constant $c$ by means of a linear transformation of the
manifold $T\colon [0,2\pi]\to I$\,.
\end{proof}

\begin{theorem}\label{maintheorem2}
Let $U:\H^0(\pO)\to\H^0(\pO)$ be an admissible, unitary operator. Then the quadratic form $Q_U$ of Definition \ref{DefQU} is closable.
\end{theorem}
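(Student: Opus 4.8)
The plan is to verify the closability criterion of Remark~\ref{Remclosable}~(ii): given a sequence $\{\Phi_n\}_n\subset\D_U$ with $\norm{\Phi_n}\to 0$ and $Q_U(\Phi_n-\Phi_m)\to 0$ as $n,m\to\infty$, one must show that $Q_U(\Phi_n)\to 0$. The guiding observation is that on smooth functions satisfying the \emph{exact} boundary condition $\dot{\varphi}=A_U\varphi$ together with $P^\bot\varphi=0$, integration by parts turns $Q_U$ into the form of a symmetric operator. Indeed, for such $\Phi$ Green's identity gives $\norm{\d\Phi}^2=\scalar{\Phi}{-\Delta_\eta\Phi}+\scalarb{\varphi}{\dot{\varphi}}$, so that
\[
 Q_U(\Phi)=\scalar{\Phi}{-\Delta_\eta\Phi}+\scalarb{\varphi}{\dot{\varphi}-A_U\varphi}=\scalar{\Phi}{-\Delta_\eta\Phi}\;,
\]
and since $A_U$ is bounded and self-adjoint (Proposition~\ref{prop: AU bounded}) the Lagrange boundary form $\Sigma$ vanishes on this subspace, i.e.\ the restriction of $-\Delta_\eta$ to it is symmetric. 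Quadratic forms of symmetric operators are always closable, cf.~\cite[Theorem X.23]{reed75}, so on that subspace $Q_U$ is manifestly closable.

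To pass from this subspace to the whole domain $\D_U$ I would use Corollary~\ref{Corclosable}, which is tailored to exactly this purpose: for each $n$ it produces a smooth $\tilde{\Phi}_n$ with $\norm{\Phi_n-\tilde{\Phi}_n}_1<1/n$, $\norm{\varphi_n-\tilde{\varphi}_n}_{\H^{1/2}(\pO)}<1/n$ and, crucially because $U$ is \emph{admissible}, $\norm{\dot{\tilde{\varphi}}_n-A_U\tilde{\varphi}_n}_{\H^{1/2}(\pO)}<1/n$; this last bound needs $A_U\varphi_n\in\H^{1/2}(\pO)$, which is precisely admissibility (cf.\ Definition~\ref{def:admissible}). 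On these approximants the discrepancy between $Q_U$ and the symmetric form is controlled,
\[
 \bigl|Q_U(\tilde{\Phi}_n)-\scalar{\tilde{\Phi}_n}{-\Delta_\eta\tilde{\Phi}_n}\bigr|=\bigl|\scalarb{\tilde{\varphi}_n}{\dot{\tilde{\varphi}}_n-A_U\tilde{\varphi}_n}\bigr|\leq \norm{\tilde{\varphi}_n}_{\H^{1/2}(\pO)}\,\tfrac{1}{n}\;,
\]
which tends to $0$ provided the boundary traces $\tilde{\varphi}_n$ remain bounded in $\H^{1/2}(\pO)$; by the trace theorem (Theorem~\ref{LMtracetheorem}) this reduces to boundedness of $\{\Phi_n\}$ in $\H^1(\Omega)$. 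Thus the admissibility hypothesis together with Corollary~\ref{Corclosable} is what makes the reduction to the closable symmetric form legitimate.

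The heart of the argument is therefore to upgrade the hypotheses to convergence in $\H^1(\Omega)$, and this I expect to be the main obstacle, since the boundary term is a singular perturbation that is not a priori controlled by the Dirichlet integral. The key is that it is in fact form-bounded relative to $\norm{\d\cdot}^2$ with relative bound zero: combining $|\scalarb{\varphi}{A_U\varphi}|\leq\norm{A_U}\,\norm{\gamma(\Phi)}_0^2$ with the interpolation inequality $\norm{\gamma(\Phi)}_0^2\leq\epsilon\norm{\d\Phi}^2+C_\epsilon\norm{\Phi}^2$ (which follows from Theorem~\ref{LMtracetheorem} by a standard compactness/interpolation argument), one obtains, for $\epsilon$ small enough that $1-\epsilon\norm{A_U}>0$,
\[
 (1-\epsilon\norm{A_U})\,\norm{\d(\Phi_n-\Phi_m)}^2\leq Q_U(\Phi_n-\Phi_m)+C\,\norm{\Phi_n-\Phi_m}^2\;.
\]
Both terms on the right tend to $0$, so by Proposition~\ref{equivalentsobolev} the sequence $\{\Phi_n\}$ is Cauchy in $\H^1(\Omega)$, and its $\H^1$-limit must vanish because $\norm{\Phi_n}\to 0$. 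Finally the $\H^1$-continuity of $Q_U$ from Proposition~\ref{H1bound} gives $|Q_U(\Phi_n)|\leq K\norm{\Phi_n}_1^2\to 0$, which is the desired conclusion and establishes closability.
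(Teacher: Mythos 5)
Your proof is correct, but it takes a genuinely different route from the paper's. The paper approximates a graph-norm Cauchy sequence by smooth functions via Corollary~\ref{Corclosable} (this is where admissibility enters, to guarantee $A_U\varphi_n\in\H^{1/2}(\pO)$), observes that the $\H^0$-limit lies in $\D(\Delta_{\mathrm{max}})$ so that its trace exists in $\H^{-1/2}(\pO)$ by Theorem~\ref{weaktracetheorem}, and then kills the boundary contribution through the pairing $\pairb{\cdot}{\cdot}\colon\H^{-1/2}(\pO)\times\H^{1/2}(\pO)\to\mathbb{C}$ of Proposition~\ref{proppairing}. Your third paragraph replaces all of this machinery by a coercivity estimate of G{\aa}rding type derived from the trace interpolation inequality $\norm{\gamma(\Phi)}_0^2\leq\epsilon\norm{\d\Phi}^2+C_\epsilon\norm{\Phi}^2$ (Ehrling's lemma, using compactness of the embedding $\H^{1/2}(\pO)\hookrightarrow\H^0(\pO)$ for the compact boundary, or a direct computation in a collar neighbourhood); this shows that graph-norm Cauchy sequences are Cauchy in $\H^1(\Omega)$, and closability then follows from Proposition~\ref{H1bound}. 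What your approach buys is considerable: since $\D_U$ is a closed subspace of $\H^1(\Omega)$ and your estimate makes $\normm{\cdot}_{Q_U}$ equivalent to $\norm{\cdot}_1$ on $\D_U$, you actually prove that $Q_U$ is \emph{closed}, not merely closable, and you use only the boundedness of $A_U$ on $\H^0(\pO)$ (Proposition~\ref{prop: AU bounded}), i.e.\ the gap condition rather than full admissibility; the semi-boundedness of Theorem~\ref{maintheorem1} drops out of the same inequality. Two caveats: first, your opening two paragraphs (the reduction to a symmetric operator and the appeal to Corollary~\ref{Corclosable}) are not needed once the coercivity estimate is in place and do not constitute a complete argument on their own, so they should be cut or clearly subordinated; second, the interpolation inequality is the crux of the whole proof and is only asserted with a reference to a ``standard compactness/interpolation argument'' --- it is indeed standard, but since it carries the entire weight of the theorem it should be stated and proved explicitly.
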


\begin{proof}
According to Remark \ref{Remclosable} a quadratic form is closable iff for any $\Phi\in\overline{\D_U}^{\normm{\cdot}_{Q_U}}$
such that the corresponding Cauchy sequence $\{\Phi_n\}$ verifies $\norm{\Phi_n}\to0$ then
$Q(\Phi)=0$\,. Let $\Phi\in\overline{\D_U}^{\normm{\cdot}_{Q_U}}$\,. \\

(a) Lets show that it exist $\{\tilde{\Phi}_n\}\in\C^\infty(\Omega)$ such that $\normm{\Phi-\tilde{\Phi}_n}_{Q_U}\to 0$ and $\norm{\dot{\tilde{\varphi}}_n-A_U\tilde{\varphi}_n}_{\H^{1/2}(\pO)}\to0$\,. It exists $\{\Phi_n\}\in\D_U\subset\H^1(\Omega)$ such that $\normm{\Phi-\Phi_n}_{Q_U}\to 0$\,. For the sequence $\{\Phi_n\}$ take $\{\tilde{\Phi}_n\}\in\C^{\infty}(\Omega)$ as in Corollary \ref{Corclosable}. Then we have that
\begin{align*}
\normm{\Phi-\tilde{\Phi}_n}_{Q_U}&\leq \normm{\Phi-\Phi_n}_{Q_U}+\normm{\Phi_n-\tilde{\Phi}_n}_{Q_U}\\
&\leq \normm{\Phi-\Phi_n}_{Q_U}+K\norm{\Phi_n-\tilde{\Phi}_n}_1\;,
\end{align*}
where we have used Proposition \ref{H1bound}.\\

(b) Lets assume that $\norm{\Phi_n}\to0$\,. This implies that $\norm{\tilde{\Phi}_n}\to0$\,.
For every $\Psi\in\H^2_0=\D(\Delta_{\mathrm{min}})$ we have that
$$|\scalar{\Delta_{\mathrm{min}}\Psi}{\tilde{\Phi}_n}|\leq \norm{\Delta_{\mathrm{min}}\Psi}\norm{\tilde{\Phi}_n}\to 0\;.$$
Hence $\lim\tilde{\Phi}_n\in\D(\Delta_{\mathrm{min}}^\dagger)=\D(\Delta_{\mathrm{max}})$\,. According to
Theorem~\ref{weaktracetheorem} the traces of such functions exist and are elements of $\H^{-1/2}(\pO)$\,,
i.e., $\tilde{\varphi}_n\stackrel{\H^{-1/2}(\pO)}{\to}\tilde{\varphi}$\,.\\

(c) Finally we have that
\begin{align*}
Q_U(\Phi)&=\lim_{m\to\infty}\lim_{n\to\infty}\left[\scalar{\d\tilde{\Phi}_n}{\d\tilde{\Phi}_m}
           -\scalarb{\tilde{\varphi}_n}{A_U\tilde{\varphi}_m}\right]\\
&=\lim_{m\to\infty}\lim_{n\to\infty}\left[\scalar{\tilde{\Phi}_n}{-\Delta_\eta\tilde{\Phi}_m}+\scalarb{\tilde{\varphi}_n}{\dot{\tilde{\varphi}}_m}
   -\scalarb{\tilde{\varphi}_n}{A_U\tilde{\varphi}_m}\right]\\
&=\lim_{m\to\infty}\pairb{\tilde{\varphi}}{\dot{\tilde{\varphi}}_m-A_U\tilde{\varphi}_m}=0\;.
\end{align*}
Notice that in the last step we have used the continuous extension given in Proposition \ref{proppairing} of the scalar
product of the boundary $\scalarb{\cdot}{\cdot}$ to the pairing $$\pairb{\cdot}{\cdot}:\H^{-1/2}(\pO)\times\H^{1/2}(\pO)\to\mathbb{C}$$
associated to the scale of Hilbert spaces $\H^{1/2}(\pO)\subset\H^0(\pO)\subset\H^{-1/2}(\pO)$\,, and the fact that the unitary operator is admissible.
\end{proof}

Theorem \ref{maintheorem1} and Theorem \ref{maintheorem2} ensure that Theorem \ref{fundteo} applies and that the closure of the quadratic
form $Q_U$ for an admissible unitary $U$ is representable by means of a unique self-adjoint operator $T$\,,
with domain $\D(T)\subset\D(\overline{Q}_U):=\overline{\D_U}^{\normm{\cdot}_{Q_U}}$\,, i.e.,
$$\overline{Q}_U(\Psi,\Phi)=\scalar{\Psi}{T\Phi}\quad\Psi\in\D(\overline{Q}_U),\Phi\in\D(T)\;.$$
The following theorem establishes the relation between this operator $T$ and the Laplace-Beltrami operator.

\begin{theorem}\label{DeltaUextDeltamin}
Let $T$ be the self-adjoint operator with domain $\D(T)$ representing the closed quadratic form $\overline{Q}_U$ with domain $\D(\overline{Q}_U)$\,. The operator $T$ is a self-adjoint extension of the closed symmetric operator $-\Delta_{\mathrm{min}}$\,.
\end{theorem}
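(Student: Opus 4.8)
The plan is to verify the hypothesis of Kato's representation theorem (Theorem~\ref{fundteo}) directly on the smooth compactly supported functions and then to exploit that $T$, being self-adjoint, is closed, so that it automatically extends the closure $-\Delta_{\mathrm{min}}$ of $-\Delta_0=-\Delta_\eta\bigr|_{\C^\infty_c(\Omega)}$. First I would note that every $\Psi\in\C^\infty_c(\Omega)$ lies in $\D_U$: its trace $\gamma(\Psi)$ vanishes, so the constraint $P^\bot\gamma(\Psi)=0$ is trivially met and, a fortiori, $\Psi\in\D(\overline{Q}_U)$. Moreover both $\gamma(\Psi)=0$ and the normal trace $\dot{\psi}=\gamma(\d\Psi(\nu))=0$, since $\Psi$ is supported away from $\pO$.

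The key computation is the following. Fix $\Psi\in\C^\infty_c(\Omega)$ and let $\Phi\in\D_U\subset\H^1(\Omega)$ be arbitrary. Since $\gamma(\Psi)=0$, the boundary term in $Q_U$ drops out and
\begin{equation*}
	Q_U(\Phi,\Psi)=\scalar{\d\Phi}{\d\Psi}-\scalarb{\gamma(\Phi)}{A_U\gamma(\Psi)}=\scalar{\d\Phi}{\d\Psi}\;.
\end{equation*}
On smooth $\Phi$ the integration by parts leading to Eq.~\eqref{Q-def}, together with $\dot{\psi}=0$, gives $\scalar{\d\Phi}{\d\Psi}=\scalar{\Phi}{-\Delta_\eta\Psi}$; since both sides are continuous in $\Phi$ with respect to $\norm{\cdot}_1$ (the right-hand side because $-\Delta_\eta\Psi\in\H^0(\Omega)$, the left-hand side by Proposition~\ref{equivalentsobolev}) and $\C^\infty(\Omega)$ is dense in $\H^1(\Omega)$, the identity persists for all $\Phi\in\D_U$. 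Thus, with $\chi:=-\Delta_\eta\Psi\in\H^0(\Omega)$,
\begin{equation*}
	Q_U(\Phi,\Psi)=\scalar{\Phi}{\chi}\;,\quad\Phi\in\D_U\;.
\end{equation*}
To promote this to the whole form domain I would use that $\D_U$ is a core for $\overline{Q}_U$: given $\Phi\in\D(\overline{Q}_U)$ pick $\Phi_n\in\D_U$ with $\normm{\Phi_n-\Phi}_{Q_U}\to 0$; since $\overline{Q}_U(\cdot,\Psi)$ is $\normm{\cdot}_{Q_U}$-continuous for fixed $\Psi$ in the form domain and $\scalar{\cdot}{\chi}$ is dominated by $\normm{\cdot}_{Q_U}$, passing to the limit yields $\overline{Q}_U(\Phi,\Psi)=\scalar{\Phi}{\chi}$ for every $\Phi\in\D(\overline{Q}_U)$.

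By criterion (i) of Theorem~\ref{fundteo} this says precisely that $\Psi\in\D(T)$ with $T\Psi=\chi=-\Delta_\eta\Psi$. Hence $\C^\infty_c(\Omega)\subset\D(T)$ and $T$ coincides there with $-\Delta_0$. Finally, $T$ is self-adjoint and therefore closed, while $-\Delta_{\mathrm{min}}$ is by definition the closure of $-\Delta_0$; any closed operator extending $-\Delta_0$ also extends its closure, so $-\Delta_{\mathrm{min}}\subset T$ and $T$ is a self-adjoint extension of $-\Delta_{\mathrm{min}}$, as claimed. The only genuinely delicate point is the passage from the core $\D_U$ to all of $\D(\overline{Q}_U)$ in the representation identity; everything else reduces to density arguments and the elementary fact that closed operators swallow the closures of their restrictions.
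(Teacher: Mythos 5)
Your proposal is correct and follows essentially the same strategy as the paper's proof: verify Kato's criterion (i) by integrating by parts on functions with vanishing boundary data, where the boundary terms of $Q_U$ drop out, and then extend the identity from the core $\D_U$ to all of $\D(\overline{Q}_U)$. The only (cosmetic) difference is that you work on $\C^\infty_c(\Omega)$ and recover $-\Delta_{\mathrm{min}}=\overline{-\Delta_0}$ at the end from the closedness of $T$, whereas the paper takes $\Phi\in\H^2_0(\Omega)=\D(\Delta_{\mathrm{min}})$ directly and uses that such functions have vanishing trace and normal trace.
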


\begin{proof}
By Theorem \ref{fundteo} we have that $\Phi\in\D(T)$ iff $\Phi\in\D(\overline{Q}_U)$ and it exists $\chi\in\H^0(\Omega)$
such that $$Q_U(\Psi,\Phi)=\scalar{\Psi}{\chi}\quad\forall\Psi \in \D(\overline{Q}_U)\;.$$
Let $\Phi\in\H^2_0(\Omega)\subset\D_U$ and $\Psi\in\D_U$\,. Then
\begin{align*}
Q(\Psi,\Phi)&=\scalar{\d\Psi}{\d\Phi}-\scalarb{\psi}{A_U\varphi}\\
&=\scalar{\Psi}{-\Delta_{\mathrm{min}}\Phi}+\scalarb{\psi}{\dot{\varphi}}-\scalarb{\psi}{A_U\varphi}\\
&=\scalar{\Psi}{-\Delta_{\mathrm{min}}\Phi}\;.
\end{align*}
Since $\D_U$ is a core for $\overline{Q}_U$ and $\D(\overline{Q}_U)\subset\H^0(\Omega)$
the above equality holds also for every $\Psi\in\D(\overline{Q}_U)$\,.
Therefore $\D(\Delta_{\mathrm{min}})=\H^2_0(\Omega)\subset\D(T)$ and moreover $T|_{\D(\Delta_{\mathrm{min}})}=-\Delta_{\mathrm{min}}$\,.
\end{proof}


\section{Examples}\label{QF: Examples}

In this section we introduce some examples that show that the characterisation of the quadratic forms of Section~\ref{sec:class} and Section \ref{sec:closable and semibounded qf} include a large class of possible self-adjoint extensions of the Laplace-Beltrami operator. This section also illustrates the simplicity in the description of extensions using admissible unitaries at the boundary.

As the boundary manifold $\pO$ is an $(n-1)$-dimensional, smooth manifold, there always exist a $(n-1)$-simplicial complex  $\mathcal{K}$ and a smooth diffeomorphism $f:\mathcal{K}\to\pO$ such that $f(\mathcal{K})=\pO$\, cf., \cite{whitehead40,whitney57}. Any simplex in the complex is diffeomorphic to a reference polyhedron $\Gamma_0\subset\mathbb{R}^{n-1}$\,. The simplicial complex $\mathcal{K}$ defines therefore a triangulation of the boundary $\pO=\cup_{i=1}^N\Gamma_i$\,, where $\Gamma_i:=f(A_i)$\,, $A_i\in\mathcal{K}$\,. For each element of the triangulation $\Gamma_i$ it exists a diffeomorphism $g_i:\Gamma_0\to\Gamma_i$\,. Consider a reference Hilbert space $\H^0(\Gamma_0,\d\mu_0)$ where $d\mu_0$ is a fixed smooth volume element. Each diffeomorphism $g_i$ defines a unitary transformation as follows:
\begin{definition}\label{def unitary transformation}
	Let $|J_i|$ be the Jacobian determinant of the transformation of coordinates given by the diffeomorphism $g_i:\Gamma_0\to\Gamma_i$\,. Let $\mu_i\in\C^\infty(\pO)$ be the proportionality factor $g_i^\star\d\mu_{\partial\eta}=\mu_i\d\mu_0$\,, where $g_i^\star$ stands for the pull-back of the diffeomorphism. The unitary transformation $T_i:\H^0(\Gamma_i,\d\mu_{\partial\eta})\to\H^0(\Gamma_0,\d\mu_0)$ is defined by
	\begin{equation}\label{unitary transformation}
		T_i\Phi:=\sqrt{|J_i|\mu_i}(\Phi\circ g_i)\;.
	\end{equation}
\end{definition}
We show that the transformation above is unitary. First note that $T$ is invertible. It remains to show that $T$ is an isometry:
\begin{align*}
	\scalar{\Phi}{\Psi}_{\Gamma_i}&=\int_{\Gamma_i}\overline{\Phi}\Psi\d\mu_{\partial \eta}\\
		&=\int_{\Gamma_0}(\overline{\Phi\circ g_i})(\Psi\circ g_i)|J_i|g_i^\star\d\mu_{\partial \eta}\\
		&=\int_{\Gamma_0}(\overline{\Phi\circ g_i})(\Psi\circ g_i)|J_i|\mu_i \d\mu_0=\scalar{T_i\Phi}{T_i\Psi}_{\Gamma_0}\;.
\end{align*}

\begin{example}\label{periodic}
	Consider that the boundary of the Riemannian manifold $(\Omega,\pO,\eta)$ admits a triangulation of two elements, i.e., $\pO=\Gamma_1\cup\Gamma_2$\,. The Hilbert space of the boundary satisfies $\H^0(\pO)=\H(\Gamma_1\cup\Gamma_2)\simeq \H^0(\Gamma_1)\oplus\H^0(\Gamma_2)$\,. The isomorphism is given explicitly by the characteristic functions $\chi_i$ of the submanifolds $\Gamma_i$\,, $i=1,2$\,. Modulo a null measure set we have that $$\Phi=\chi_1\Phi+\chi_2\Phi\;.$$ We shall define unitary operators $U=\H^0(\pO)\to\H^0(\pO)$ that are adapted to the block structure induced by the latter direct sum:
	$$U=\begin{bmatrix} U_{11} & U_{12} \\ U_{21} & U_{22} \end{bmatrix}\;,$$
where $U_{ij}:\H^0(\Gamma_{j})\to\H^0(\Gamma_{i})$\,. Hence, consider the following unitary operator
	\begin{equation}\label{Eqperiodic}
		U=\begin{bmatrix} 0 & T_1^*T_2 \\ T_2^*T_1 & 0 \end{bmatrix}\;,
	\end{equation}
where the unitaries $T_i$ are defined as in Definition \ref{def unitary transformation}. Clearly, $U^2=\mathbb{I}$\,, and therefore the spectrum of $U$ is $\sigma(U)=\{-1,1\}$ with the corresponding orthogonal projectors given by $$P^{\bot}=\frac{1}{2}(\mathbb{I}-U)\;,$$ $$P=\frac{1}{2}(\mathbb{I}+U)\;.$$ The partial Cayley transform $A_U$ is in this case the null operator, since $P(\mathbb{I}-U)=0$\,. The unitary operator is therefore admissible and the corresponding quadratic form will be closable. The domain of the corresponding quadratic form $Q_U$ is given by all the functions $\Phi\in\H^1(\Omega)$ such that $P^\bot\gamma(\Phi)=0$\,, which in this case becomes
\begin{equation}
	P^\bot\gamma(\Phi)=\frac{1}{2}
	\begin{bmatrix}
		\mathbb{I}_1 & -T_1^*T_2 \\ -T_2^*T_1 & \mathbb{I}_2
	\end{bmatrix}\begin{bmatrix} \chi_1\gamma(\Phi) \\ \chi_2 \gamma(\Phi) \end{bmatrix}=
	\begin{bmatrix}
		\chi_1\gamma(\Phi)-T_1^*T_2\chi_2\gamma(\Phi)\\
		-T_2^*T_1\chi_1\gamma(\Phi)+\chi_2\gamma(\Phi)
	\end{bmatrix}=0\;.
\end{equation}
We can rewrite the last condition as
\begin{equation}
	T_1(\chi_1\gamma(\Phi))=T_2(\chi_2\gamma(\Phi))\;.
\end{equation}	
More concretely, this boundary conditions describe generalised periodic boundary conditions identifying the two triangulation elements of the boundary with each other. The unitary transformations $T_i$ are necessary to make the triangulation elements congruent. In particular, if $(\Gamma_1,\partial\eta_1)$ and $(\Gamma_2,\partial\eta_2)$ are isomorphic as Riemannian manifolds then one can recover the standard periodic boundary conditions.
\end{example}


\begin{example}\label{ex:quasiperiodic}
	Consider the same situation as in the previous example but with the unitary operator replaced by
	\begin{equation}\label{Eqquasiperiodic}
		U=\begin{bmatrix} 0 & T_1^*e^{\mathrm{i}\alpha}T_2 \\ T_2^*e^{-\mathrm{i}\alpha}T_1 & 0 \end{bmatrix}\,,\quad \alpha\in \C^\infty(\Gamma_0)\;.
	\end{equation}
In this case we have also that $U^2=\mathbb{I}$ and the calculations of the previous example can be applied step by step. More concretely $P^\bot=(\mathbb{I}-U)/2$ and the partial Cayley transform also vanishes. The boundary condition becomes in this case
\begin{equation}\label{quasiperiodic}
	T_1(\chi_1\gamma(\Phi))=e^{\mathrm{i}\alpha}T_2(\chi_2\gamma(\Phi))\;.
\end{equation}
This boundary conditions can be called generalised quasi-periodic boundary conditions. For simple geometries and constant function $\alpha$ these are the boundary conditions that define the periodic Bloch functions.
\end{example}

The condition $\alpha\in \C^\infty(\Gamma_0)$ in the example above can be relaxed. First we will show that the isometries $T_i$  do preserve the regularity of the function.

\begin{proposition}\label{prop regularity}
	Let $T_i$ be a unitary transformation as given by Definition \ref{def unitary transformation}. Let $\Phi\in\H^k(\Gamma_i)$\,, $k\geq0$\,. Then $T_i\Phi\in\H^k(\Gamma_0)$\,.
\end{proposition}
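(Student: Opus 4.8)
The plan is to decompose $T_i$ into two elementary operations, each of which is classically known to preserve Sobolev regularity, and then to verify their hypotheses in the present setting. Writing $a_i:=\sqrt{|J_i|\mu_i}$ and denoting by $g_i^\star\Phi:=\Phi\circ g_i$ the pull-back along the diffeomorphism $g_i\colon\Gamma_0\to\Gamma_i$\,, Definition~\ref{def unitary transformation} reads $T_i\Phi=a_i\cdot(g_i^\star\Phi)$\,. Thus it suffices to show that pulling back by $g_i$ and multiplying by $a_i$ each map $\H^k$ into $\H^k$\,.

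First I would check that $a_i\in\C^\infty(\Gamma_0)$ and that $a_i$ is bounded away from zero. Since $g_i$ is a diffeomorphism its Jacobian determinant never vanishes, so $|J_i|>0$\,; and $\mu_i$ is the proportionality factor between the two positive volume forms $g_i^\star\d\mu_{\partial\eta}$ and $\d\mu_0$\,, so $\mu_i>0$ as well. Hence $|J_i|\mu_i$ is a strictly positive smooth function on the compact set $\Gamma_0$\,, and its square root $a_i$ is smooth, strictly positive and, together with all its derivatives, uniformly bounded. With this in hand the two invariances apply: composition with a smooth diffeomorphism between compact (possibly cornered) pieces maps $\H^k(\Gamma_i)$ continuously into $\H^k(\Gamma_0)$, which is the coordinate invariance underlying the definition of the Sobolev spaces in Definition~\ref{DefSobolev} and Definition~\ref{DefSobolev2}, cf., \cite{adams03,taylor96}, so that $g_i^\star\Phi\in\H^k(\Gamma_0)$\,; and multiplication by the smooth function $a_i$ leaves $\H^k(\Gamma_0)$ invariant, since $\H^k$ is a module over $\C^\infty$ (for integer $k$ this is just the Leibniz rule together with the boundedness of the derivatives of $a_i$). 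Combining the two yields $T_i\Phi=a_i\,(g_i^\star\Phi)\in\H^k(\Gamma_0)$\,.

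Both mapping properties are most transparent for integer $k$; the real values $k\geq0$ allowed in the statement are then covered by complex interpolation between consecutive integer orders, the base case $k=0$ being exactly the isometry already verified after Definition~\ref{def unitary transformation}. The only point genuinely requiring care is the smoothness and uniform positivity of the weight $a_i$\,: it is what guarantees both that multiplication by $a_i$ is $\H^k$-bounded and, since $1/a_i$ is then equally smooth, that $T_i^{-1}$ shares the same regularity, so that $T_i$ in fact restricts to a bounded isomorphism of $\H^k(\Gamma_i)$ onto $\H^k(\Gamma_0)$\,. Everything else is a direct appeal to the standard invariance of Sobolev spaces under smooth changes of coordinates.
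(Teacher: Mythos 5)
Your proposal is correct and follows essentially the same route as the paper: both factor $T_i$ as multiplication by the smooth positive weight $\sqrt{|J_i|\mu_i}$ composed with the pull-back $g_i^\star$, and both dispose of the pull-back by citing the standard diffeomorphism-invariance of Sobolev spaces. The only divergence is in the multiplication step: the paper reduces to a boundaryless manifold via Definition~\ref{DefSobolev2}, takes $\Phi$ smooth by density, and bounds $\int\overline{f\Phi}\,(I-\Delta_{\tilde\eta})^k(f\Phi)\,\d\mu_{\tilde\eta}$ directly by $(\sup|f|)^2\,\norm{\Phi}_k^2$ using the self-adjointness of $(I-\Delta_{\tilde\eta})^k$ on smooth functions, whereas you invoke the Leibniz rule for integer $k$ and interpolate to real $k\geq 0$; both are adequate, and your explicit verification that the weight is bounded away from zero (so that $T_i^{-1}$ has the same mapping property) is a point the paper leaves implicit.
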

	
\begin{proof}
	It is well known, cf., \cite[Theorem 3.41]{adams03} or \cite[Lemma 7.1.4]{davies95}, that the pull-back of a function under a smooth diffeomorphism $g:\Omega_1\to\Omega_2$ preserves the regularity of the function, i.e., $g^\star\Phi\in\H^k(\Omega_1)$ if $\Phi\in\H^k(\Omega_2)$\,, $k\geq0$\,. It is therefore enough to prove that multiplication by a smooth positive function also preserves the regularity. According to Definition \ref{DefSobolev2} it is enough to prove it for a smooth, compact, boundary-less Riemannian manifold $(\tilde{\Omega},\tilde{\eta})$ and to consider that $\Phi\in\C^\infty(\tilde{\Omega})$\,, since this set is dense in $\H^k(\tilde{\Omega})$\,. Let $f\in\C^\infty(\tilde{\Omega})$\,.
	\begin{align*}
		\int_{\tilde{\Omega}} \overline{f\Phi}(I-\Delta_{\tilde{\eta}})^k(f\Phi)\d\mu_{\tilde{\eta}}&\leq \sup_{\tilde{\Omega}}|f|\int_{\tilde{\Omega}}\overline{\Phi}(I-\Delta_{\tilde{\eta}})^k(f\Phi)\d\mu_{\tilde{\eta}}\\
		&\leq \sup_{\tilde{\Omega}}|f|\int_{\tilde{\Omega}}\overline{(I-\Delta_{\tilde{\eta}})^k\Phi}f\Phi\d\mu_{\tilde{\eta}}\\
		&\leq (\sup_{\tilde{\Omega}}|f|)^2\int_{\tilde{\Omega}}\overline{(I-\Delta_{\tilde{\eta}})^k\Phi}\Phi\d\mu_{\tilde{\eta}}<\infty\;.
	\end{align*}
We have used Definition \ref{DefSobolev} directly and the fact that the operator $(I-\Delta_{\tilde{\eta}})^k$ is essentially self-adjoint over the smooth functions. 
\end{proof}

According to Proposition \ref{prop regularity} we have that $T_i(\chi_i\gamma(\Phi))\in\H^{1/2}(\Gamma_0)$\,, $i=1,2$\,. Therefore, to get nontrivial solutions for the expression \eqref{quasiperiodic}, the function $\alpha:\Gamma_0\to[0,2\pi]$ can be chosen such that $e^{i\alpha}T_2(\chi_2\gamma)\in\H^{1/2}(\Gamma_0)$\,. Since $\C^0(\Gamma_0)$ is a dense subset in $\H^{1/2}(\Gamma_0)$\,, and pointwise multiplication is a continuous operation for continuous functions it is enough to consider $\alpha\in\C^0(\Gamma_0)$\,.

\begin{example}\label{generalized Robin}
Consider that the boundary of the Riemannian manifold $(\Omega,\pO,\eta)$ admits a triangulation of two elements like in the Example \ref{periodic}.
So we have that $\pO=\Gamma_1\cup\Gamma_2$\,. Consider the following unitary operator $U:\H^0(\pO)\to\H^0(\pO)$ adapted to the block structure
defined by this triangulation
\begin{equation}\label{EqRobin}
U=\begin{bmatrix} e^{\mathrm{i}\beta_1}\mathbb{I}_1 & 0\\ 0 & e^{\mathrm{i}\beta_2}\mathbb{I}_2 \end{bmatrix}\;,
\end{equation}
where $\C^0(\Gamma_i)\ni\beta_i:\Gamma_i\to [-\pi+\delta,\pi-\delta]$ with $\delta>0$\,. The latter condition guaranties that the unitary matrix
has gap at $-1$\,. Since the unitary is diagonal in the block structure, it is clear that $P^\bot=0$\,.
The domain of the quadratic form $Q_U$ is given in this case by all the functions $\Phi\in\H^1(\Omega)$ \,. The partial Cayley transform is
in this case the operator $A_U=\H^0(\pO)\to\H^0(\pO)$ defined by
\begin{equation}\label{partial Cayley Robin}
A_U=\begin{bmatrix} -\tan\frac{\beta_1}{2} & 0 \\ 0 & -\tan{\frac{\beta_2}{2}} \end{bmatrix}\;.
\end{equation}
A matrix like the one above will lead to self-adjoint extensions of the Laplace-Beltrami operator that verify generalised Robin type boundary
conditions $\chi_i\dot{\varphi}=-\tan\frac{\beta_i}{2}\chi_i\varphi$\,. Unfortunately, the partial Cayley transform does not satisfy the
admissibility condition in this case. Nevertheless, we will show that the quadratic form above is indeed closable.
\end{example}

Given a triangulation of the boundary $\pO=\cup_{i=1}^N\Gamma_i$ we can consider the Hilbert space that results of the direct sum of the corresponding Sobolev spaces. We will denote it as $$\oplus\H^k:=\oplus_{i=1}^{N}\H^{k}(\Gamma_i)\;.$$ Assuming that the partial Cayley transform verifies the condition $$\norm{A_U\gamma(\Phi)}_{\oplus\H^{1/2}}\leq K \norm{\gamma(\Phi)}_{\oplus\H^{1/2}}\;,$$ we can generalise Lemma \ref{Lemma approxdotphi} and Corollary \ref{Corclosable} as follows.

\begin{lemma}[Lemma $\ref{Lemma approxdotphi}^*$]\label{Lemma bis}
	Let $\Phi\in\H^1(\Omega)$\,, $f\in\left(\oplus\H^{1/2}\right)$\,. Then, for every $\epsilon>0$ it exists $\tilde{\Phi}\in\C^\infty(\Omega)$ such that
	\begin{itemize}
	\item $\norm{\Phi-\tilde{\Phi}}_1<\epsilon$\,,
	\item $\norm{\varphi-\tilde{\varphi}}_{\H^{1/2}(\pO)}<\epsilon$
	\item $\norm{f-\dot{\tilde{\varphi}}}_{\oplus\H^{1/2}}<\epsilon$\,.
	\end{itemize}
\end{lemma}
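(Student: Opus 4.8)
The plan is to follow the proof of Lemma~\ref{Lemma approxdotphi} almost verbatim, isolating the single place where the weaker hypothesis $f\in\oplus\H^{1/2}$ (rather than $f\in\H^{1/2}(\pO)$) enters. The first two inequalities involve only $\Phi$ and its trace and are obtained exactly as before, from the density of $\C^\infty(\Omega)$ in $\H^1(\Omega)$ together with Theorem~\ref{LMtracetheorem}; moreover, by the argument based on \cite[Theorem 7.2.1]{davies95} it again suffices to treat a smooth $\Phi$ whose normal derivative vanishes on a collar $\Xi\simeq[-\delta,0]\times\pO$ with Gaussian coordinates $(r,\bt)$. Thus the entire difficulty is concentrated in producing, for the given $f\in\oplus\H^{1/2}$, a smooth correction whose normal derivative approximates $f$ in the $\oplus\H^{1/2}$ norm while perturbing $\Phi$ arbitrarily little in $\H^1(\Omega)$.

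The essential new ingredient is a density statement adapted to the direct-sum space. Since $\tfrac12$ is the critical Sobolev exponent, at which $\H^{1/2}_0(\Gamma_i)=\H^{1/2}(\Gamma_i)$ (cf.~\cite{adams03}; the trace on the codimension-one interfaces is not defined at this order, so there is no obstruction to closing up $\C^\infty_c(\Gamma_i)$), the smooth functions with compact support in the interior of each triangulation element $\Gamma_i$ are dense in $\H^{1/2}(\Gamma_i)$. Consequently, given $\epsilon>0$, I would choose on each $\Gamma_i$ a function $f'_i\in\C^\infty_c(\Gamma_i)$ with $\norm{f|_{\Gamma_i}-f'_i}_{\H^{1/2}(\Gamma_i)}<\epsilon/\sqrt{N}$. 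Because every $f'_i$ vanishes in a neighbourhood of the interfaces, the piecewise definition $f'|_{\Gamma_i}:=f'_i$ yields a \emph{globally} smooth function $f'\in\C^\infty(\pO)$ with $\norm{f-f'}_{\oplus\H^{1/2}}<\epsilon$. This is precisely the step where $\oplus\H^{1/2}$, which tolerates jumps across the interfaces that $\H^{1/2}(\pO)$ would not, is tamed: the jumps are smoothed away because at the borderline order $1/2$ interior-supported functions are already dense.

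It then remains to rerun the collar construction of Lemma~\ref{Lemma approxdotphi} with the fixed, globally smooth target $f'$ in place of the earlier approximants. With $g\in\C^\infty(\mathbb{R})$ and the rescaled $g_n(r)=\tfrac1n g(-nr)$ as there, I set $\tilde{\Phi}_n(\bx):=\Phi(\bx)+g_n(r)f'(\bt)$, which is smooth on $\Omega$ since $f'$ is smooth on $\pO$ and $g_n$ is supported in $[-2/n,0]$. By the computation in Lemma~\ref{Lemma approxdotphi} one has $\dot{\tilde{\varphi}}_n\equiv f'$ \emph{exactly}, so $\norm{f-\dot{\tilde{\varphi}}_n}_{\oplus\H^{1/2}}=\norm{f-f'}_{\oplus\H^{1/2}}<\epsilon$, while the estimates \eqref{H1convergence} give $\norm{\Phi-\tilde{\Phi}_n}_1\to0$ and $\norm{\varphi-\tilde{\varphi}_n}_{\H^{1/2}(\pO)}\to0$; here the norms $\norm{f'}_{\H^0(\pO)}$ and $\norm{f'}_{\H^1(\pO)}$ are finite because $f'$ is a \emph{fixed} smooth function, so no coupling between the two limiting procedures is required. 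Taking $n$ large enough produces a single $\tilde{\Phi}=\tilde{\Phi}_n\in\C^\infty(\Omega)$ satisfying all three inequalities.

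I expect the only genuinely delicate point to be the density assertion $\overline{\C^\infty_c(\Gamma_i)}^{\,\norm{\cdot}_{1/2}}=\H^{1/2}(\Gamma_i)$, i.e.\ the fact that $1/2$ is exactly the borderline exponent below which interior-supported test functions are dense; everything else is a rerun of Lemma~\ref{Lemma approxdotphi}. The conceptual reason this density is indispensable is that the requirement $\tilde{\Phi}\in\C^\infty(\Omega)$ forces its normal trace $\dot{\tilde{\varphi}}$ to be smooth, hence continuous across the interfaces, so a jump-type datum $f\in\oplus\H^{1/2}$ can be matched in the $\oplus\H^{1/2}$ norm only if such jumps are invisible to the $\H^{1/2}$ topology, which is exactly what the critical-exponent density guarantees.
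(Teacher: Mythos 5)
Your proof is correct and follows essentially the same route as the paper, whose entire argument is the remark that the proof of Lemma~\ref{Lemma approxdotphi} carries over once one notices that $\H^1(\pO)$ is dense in $\oplus\H^{1/2}$. The only difference is that you actually justify that density assertion --- via the critical-exponent identity $\H^{1/2}_0(\Gamma_i)=\H^{1/2}(\Gamma_i)$, approximating $f$ piecewise by interior-supported smooth functions and gluing them into a single $f'\in\C^\infty(\pO)$ before rerunning the collar construction --- whereas the paper leaves it unproved; this fills in the one nontrivial step rather than changing the method.
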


\begin{proof}
 The proof of this lemma follows exactly the one for the Lemma \ref{Lemma approxdotphi}. It is enough to notice that the space $\H^1(\pO)$ is dense in $\oplus\H^{1/2}$\,.
\end{proof}

\begin{corollary}[Corollary $\ref{Corclosable}^*$]\label{Corollary bis}
Let $\{\Phi_n\}\subset\H^1(\Omega)$ and let $A_U$ be the partial Cayley transform of a unitary operator with gap at $-1$
such that $$\norm{A\gamma(\Phi)}_{\oplus\H^{1/2}}\leq K \norm{\gamma(\Phi)}_{\oplus\H^{1/2}}\;.$$
Then it exists a sequence of smooth functions $\{\tilde{\Phi}_n\}\in\C^{\infty}(\Omega)$ such that 
\begin{itemize}
	\item $\norm{\Phi_n-\tilde{\Phi}_n}_{\H^1(\Omega)}<\frac{1}{n}$\,,
	\item $\norm{\varphi_n-\tilde{\varphi}_n}_{\H^{1/2}(\pO)}<\frac{1}{n}$\,,
	\item $\norm{\dot{\tilde{\varphi}}_n-A_U\tilde{\varphi}_n}_{\oplus\H^{1/2}}<\frac{1}{n}$\,.
\end{itemize}	
\end{corollary}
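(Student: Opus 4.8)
The plan is to run the argument of Corollary~\ref{Corclosable} essentially verbatim, replacing each appeal to Lemma~\ref{Lemma approxdotphi} by the corresponding appeal to Lemma~$\ref{Lemma approxdotphi}^*$ and each occurrence of the $\H^{1/2}(\pO)$-norm on the normal derivative by the norm $\norm{\cdot}_{\oplus\H^{1/2}}$. The only genuinely new ingredient is the weakened boundedness hypothesis $\norm{A_U\gamma(\Phi)}_{\oplus\H^{1/2}}\leq K\norm{\gamma(\Phi)}_{\oplus\H^{1/2}}$, which now has to play the role that the admissibility condition of Definition~\ref{def:admissible} played in the original proof.

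Fixing $n_0\in\mathbb{N}$, I would set $f:=A_U\varphi_{n_0}$. Since $\varphi_{n_0}=\gamma(\Phi_{n_0})\in\H^{1/2}(\pO)$ by the Lions trace theorem (Theorem~\ref{LMtracetheorem}) and the restriction maps $\H^{1/2}(\pO)\to\H^{1/2}(\Gamma_i)$ are continuous, so that there is a continuous embedding $\H^{1/2}(\pO)\hookrightarrow\oplus\H^{1/2}$, the hypothesis on $A_U$ guarantees $f\in\oplus\H^{1/2}$ and Lemma~$\ref{Lemma approxdotphi}^*$ is applicable. Applying it to $\Phi_{n_0}$ with this choice of $f$ and a tolerance $\epsilon>0$ to be fixed produces $\tilde{\Phi}_{n_0}\in\C^\infty(\Omega)$ with $\norm{\Phi_{n_0}-\tilde{\Phi}_{n_0}}_1<\epsilon$, $\norm{\varphi_{n_0}-\tilde{\varphi}_{n_0}}_{\H^{1/2}(\pO)}<\epsilon$, and $\norm{f-\dot{\tilde{\varphi}}_{n_0}}_{\oplus\H^{1/2}}<\epsilon$. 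The first two of these already yield the first two claimed inequalities as soon as $\epsilon\leq 1/n_0$.

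For the third inequality I would split
\[
\norm{\dot{\tilde{\varphi}}_{n_0}-A_U\tilde{\varphi}_{n_0}}_{\oplus\H^{1/2}}
\leq \norm{\dot{\tilde{\varphi}}_{n_0}-A_U\varphi_{n_0}}_{\oplus\H^{1/2}}
+\norm{A_U(\varphi_{n_0}-\tilde{\varphi}_{n_0})}_{\oplus\H^{1/2}}\;,
\]
the first summand being $<\epsilon$ by the choice $f=A_U\varphi_{n_0}$ together with the third bullet of Lemma~$\ref{Lemma approxdotphi}^*$. The second summand is where the boundedness hypothesis enters, bounding it by $K\norm{\varphi_{n_0}-\tilde{\varphi}_{n_0}}_{\oplus\H^{1/2}}$. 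Here lies the one point that demands care, and the main (though minor) obstacle: Lemma~$\ref{Lemma approxdotphi}^*$ controls $\varphi_{n_0}-\tilde{\varphi}_{n_0}$ only in the \emph{global} norm $\norm{\cdot}_{\H^{1/2}(\pO)}$, whereas the boundedness of $A_U$ is phrased in the decomposed norm $\norm{\cdot}_{\oplus\H^{1/2}}$. This is bridged by the continuous embedding noted above, say with constant $C$, giving $\norm{A_U(\varphi_{n_0}-\tilde{\varphi}_{n_0})}_{\oplus\H^{1/2}}\leq KC\,\norm{\varphi_{n_0}-\tilde{\varphi}_{n_0}}_{\H^{1/2}(\pO)}<KC\epsilon$. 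Combining the two summands gives $\norm{\dot{\tilde{\varphi}}_{n_0}-A_U\tilde{\varphi}_{n_0}}_{\oplus\H^{1/2}}<(1+KC)\epsilon$, so choosing $\epsilon\leq\frac{1}{(1+KC)n_0}$ (which in particular forces $\epsilon\leq 1/n_0$) secures all three inequalities with bound $1/n_0$. Letting $n_0$ run through $\mathbb{N}$ then produces the required sequence $\{\tilde{\Phi}_n\}$.
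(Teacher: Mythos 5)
Your argument is correct and follows the paper's own route exactly: the paper's proof of this corollary is literally the one-line instruction to repeat the proof of Corollary~\ref{Corclosable} with Lemma~\ref{Lemma bis} in place of Lemma~\ref{Lemma approxdotphi} and $f=A_U\varphi_{n_0}\in\oplus\H^{1/2}$, which is precisely what you do. The one point you treat explicitly that the paper leaves implicit --- passing from the $\H^{1/2}(\pO)$-control of $\varphi_{n_0}-\tilde{\varphi}_{n_0}$ supplied by the second bullet of Lemma~\ref{Lemma bis} to the $\oplus\H^{1/2}$-norm in which the boundedness of $A_U$ is stated, via the continuity of the restriction maps $\H^{1/2}(\pO)\to\H^{1/2}(\Gamma_i)$ --- is a genuine and correctly resolved detail, and including it makes the argument more complete than the paper's own.
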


\begin{proof}
The proof is the same as for Corollary \ref{Corclosable} but now we take $\tilde{\Phi}_{n_0}$ as in Lemma \ref{Lemma bis} with
$f=A_U\varphi_{n_0}\in \left(\oplus\H^{1/2}\right)$\,.
\end{proof}

Now we can show that the quadratic forms $Q_U$ defined for unitary operators of the form appearing in Example \ref{generalized Robin} are closable.
Let us show first that the partial Cayley transform of Equation \eqref{partial Cayley Robin} verifies the conditions of the
Corollary~\ref{Corollary bis} above. We have that
\begin{align*}
\norm{A_U\varphi}_{\oplus\H^{1/2}}^2&=\norm{A_U\chi_1\varphi}^2_{\H^{1/2}(\Gamma_1)}+\norm{A_U\chi_2\varphi}^2_{\H^{1/2}(\Gamma_2)}\\
&=\norm{\tan{\frac{\beta_1}{2}}\chi_1\varphi}^2_{\H^{1/2}(\Gamma_1)}+\norm{\tan{\frac{\beta_2}{2}}\chi_2\varphi}^2_{\H^{1/2}(\Gamma_2)}\\
&\leq K \left[\norm{\chi_1\varphi}^2_{\H^{1/2}(\Gamma_1)}+\norm{\chi_2\varphi}^2_{\H^{1/2}(\Gamma_2)}\right]=K\norm{\varphi}^2_{\oplus\H^{1/2}}\;.
\end{align*}
The last inequality follows from the discussion after Example \ref{ex:quasiperiodic} because the functions
$\beta_i\colon\Gamma_i\to[-\pi+\delta,\pi-\delta]$ are continuous. Take the sequence $\{\Phi_n\}\in\D_U$ as in the
proof of Theorem \ref{maintheorem2} and accordingly take $\{\tilde{\Phi}_n\}\in\C^{\infty}(\Omega)$ as in Corollary~\ref{Corollary bis}. Then we have that
\begin{align*}
|Q(\Phi)|&=\lim_{m\to\infty}\lim_{n\to\infty}\left|\scalar{\d\tilde{\Phi}_n}{\d\tilde{\Phi}_m}-\scalarb{\tilde{\varphi}_n}{A\tilde{\varphi}_m}\right|\\
&\leq\lim_{m\to\infty}\lim_{n\to\infty}\left[|\scalar{\tilde{\Phi}_n}{-\Delta_\eta\tilde{\Phi}_m}|+|\scalarb{\tilde{\varphi}_n}{\dot{\tilde{\varphi}}_m-A_U\tilde{\varphi}_m}|\right]\\
&=\lim_{m\to\infty}\lim_{n\to\infty}\left[|\scalar{\tilde{\Phi}_n}{-\Delta_\eta\tilde{\Phi}_m}|+|\smash{\sum_{i=1}^N}\scalar{\tilde{\varphi}_n}{\dot{\tilde{\varphi}}_m-A_U\tilde{\varphi}_m}_{\Gamma_i}|\right]
\end{align*}
\begin{align*}
&\leq\lim_{m\to\infty}\lim_{n\to\infty}\sum_{i=1}^N|\scalar{\tilde{\varphi}_n}{\dot{\tilde{\varphi}}_m-A_U\tilde{\varphi}_m}_{\Gamma_i}|\\
&\leq\lim_{m\to\infty}\lim_{n\to\infty}\sum_{i=1}^N\norm{\chi_i\tilde{\varphi}_n}_{\H^{-1/2}(\Gamma_i)}\norm{\chi_i\dot{\tilde{\varphi}}_m-\chi_iA_U\tilde{\varphi}_m}_{\H^{1/2}(\Gamma_i)}\\
&\leq \lim_{m\to\infty}\lim_{n\to\infty} \norm{\tilde{\varphi}_n}_{\H^{-1/2}(\pO)} \smash{\sum_{i=1}^N}\norm{\chi_i\dot{\tilde{\varphi}}_m-\chi_iA_U\tilde{\varphi}_m}_{\H^{1/2}(\Gamma_i)}=0\;.
\end{align*}
We have used Definition \ref{DefSobolev2} and the structure of the scales of Hilbert spaces
$\H^{1/2}(\Gamma_i)\subset\H^0(\Gamma_i)\subset\H^{-1/2}(\Gamma_i)$\,. Hence, the unitary operators of Example \ref{generalized Robin} are closable.
In particular, this class of closable quadratic forms defines generalised Robin type boundary conditions $\dot{\varphi}=-\tan{\frac{\beta}{2}}\varphi$
where $\beta$ is allowed to be a piecewise continuous function with discontinuities at the vertices of the triangulation.

\begin{example}\label{generalized Robin2}
Consider a unitary operator at the boundary of the form
\begin{equation}\label{Eq mixed}
	U=\begin{bmatrix} -\mathbb{I}_1 & 0\\ 0 & e^{\mathrm{i}\beta_2}\mathbb{I}_2 \end{bmatrix}\;,
\end{equation}
with $\beta_2:\Gamma_2 \to [-\pi+\delta,\pi-\delta]$ continuous. Again we need the condition $\delta>0$
in order to guaranty that the unitary matrix $U$ has gap at $-1$\,. In this case it is clear that
$$P^\bot=\begin{bmatrix} \mathbb{I}_1 & 0 \\ 0 & 0 \end{bmatrix}\;,$$ and that the partial Cayley transform becomes
$$A_U=\begin{bmatrix} 0 \\ -\tan{\frac{\beta_2}{2}} \end{bmatrix}\;.$$ This partial Cayley transform verifies the weaker
admissibility condition of the previous example and therefore defines a closable quadratic form too. This one defines a
boundary condition of the mixed type where
$$\chi_1\varphi=0\,,\quad\chi_2\dot{\varphi}=-\tan{\frac{\beta_2}{2}}\chi_2\varphi\;.$$
In particular when $\beta_2=0$ this mixed type boundary condition defines the boundary conditions of the so called
\emph{Zaremba problem} with $$\chi_1\varphi=0\,,\quad\chi_2\dot{\varphi}=0\;.$$
\end{example}

\begin{example}
Let $(\Omega,\pO,\eta)$ be a smooth, compact, Riemannian manifold.
Suppose that the boundary manifold admits a triangulation $\pO=\cup_{i=1}^N\Gamma_i$\,. Any unitary matrix that has
blockwise the structure of any of the above examples, i.e., Equations \eqref{Eqperiodic}, \eqref{Eqquasiperiodic},
\eqref{EqRobin} or \eqref{Eq mixed} leads to a closable, semi-bounded quadratic form $Q_U$\,.
\end{example}

\clearemptydoublepage
\chapter{Numerical Scheme to Solve the Spectral Problem of the Laplace-Beltrami Operator}\label{cha:FEM}
\markboth{Numerical Scheme to Solve the Spectral Problem of the L.-B. Operator}{}

In this chapter we develop a class of numerical algorithms that can be used to approximate the spectral problem for the self-adjoint extensions of the Laplace-Beltrami operator described in chapter \ref{cha:QF}. The numerical scheme that we use is based in the finite element method. A standard reference for this method is \cite{brenner08}.

In what follows let $-\Delta_U$ denote the self-adjoint operator associated to the closure of the quadratic form $Q_U$ of Definition \ref{DefQU}. Theorem \ref{maintheorem1} and Theorem \ref{maintheorem2} ensure that this closure exists and that the operator $-\Delta_U$\,, with domain $\D(\Delta_U)$\,, is semi-bounded below. Moreover $-\Delta_U$\newnot{symb:DeltaU} is a self-adjoint extension of $-\Delta_\mathrm{min}$\,, cf., Theorem \ref{DeltaUextDeltamin}. We are interested in obtaining numerical approximations of pairs $(\Phi,\lambda)\in\H^0(\Omega)\times\mathbb{R}$ that are solutions of the spectral problem
\begin{equation}\label{spectralproblem}
	-\Delta_U\Phi=\lambda\Phi\quad\Phi\in\D(\Delta_U)\;.
\end{equation}
\begin{proposition}
	A pair $(\Phi,\lambda)\in\H^0(\Omega)\times\mathbb{R}$ is a solution of the spectral problem \eqref{spectralproblem} iff it is a solution of the weak spectral problem 
	\begin{equation}\label{weakspectralproblem}
		Q_U(\Psi,\Phi)=\lambda\scalar{\Psi}{\Phi}\quad\forall\Psi\in\D_U\;.
	\end{equation}
\end{proposition}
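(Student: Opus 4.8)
The plan is to deduce both implications directly from Kato's representation theorem (Theorem~\ref{fundteo}), applied to the closed, semi-bounded quadratic form $\overline{Q}_U$, the closure of $Q_U$. By Theorem~\ref{maintheorem1} and Theorem~\ref{maintheorem2} this closure exists, and by Theorem~\ref{DeltaUextDeltamin} its representing self-adjoint operator is precisely $-\Delta_U$, with $\D_U$ serving as a form core for $\overline{Q}_U$. Since $\D_U$ is dense in $\D(\overline{Q}_U)$ with respect to $\normm{\cdot}_{Q_U}$ and $\overline{Q}_U$ restricts to $Q_U$ on $\D_U$, I read the weak problem \eqref{weakspectralproblem} as the requirement that $\Phi\in\D(\overline{Q}_U)$ and $\overline{Q}_U(\Psi,\Phi)=\lambda\scalar{\Psi}{\Phi}$ for all $\Psi\in\D_U$.

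For the forward implication I would assume $\Phi\in\D(\Delta_U)$ solves $-\Delta_U\Phi=\lambda\Phi$. Part~(ii) of Theorem~\ref{fundteo} gives $\overline{Q}_U(\Psi,\Phi)=\scalar{\Psi}{-\Delta_U\Phi}$ for every $\Psi\in\D(\overline{Q}_U)$; substituting $-\Delta_U\Phi=\lambda\Phi$ and restricting to $\Psi\in\D_U\subset\D(\overline{Q}_U)$ yields the weak identity immediately.

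For the converse I would first upgrade the weak identity from the core $\D_U$ to all of $\D(\overline{Q}_U)$. Given $\Psi\in\D(\overline{Q}_U)$, choose $\{\Psi_k\}\subset\D_U$ with $\normm{\Psi-\Psi_k}_{Q_U}\to0$; both sides of the identity are then $\normm{\cdot}_{Q_U}$-continuous in the test function. Indeed $\overline{Q}_U(\cdot,\Phi)$ is dominated by a constant multiple of $\normm{\Psi}_{Q_U}\normm{\Phi}_{Q_U}$, a consequence of the Cauchy--Schwartz estimate for the inner product $\langle\cdot,\cdot\rangle_{Q_U}$ of Remark~\ref{Remclosable}, while the pairing satisfies $|\scalar{\Psi_k-\Psi}{\lambda\Phi}|\le|\lambda|\,\norm{\Psi_k-\Psi}\,\norm{\Phi}\le|\lambda|\,\normm{\Psi_k-\Psi}_{Q_U}\norm{\Phi}$, using $\norm{\cdot}\le\normm{\cdot}_{Q_U}$. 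Passing to the limit gives $\overline{Q}_U(\Psi,\Phi)=\scalar{\Psi}{\lambda\Phi}$ for all $\Psi\in\D(\overline{Q}_U)$. Setting $\chi:=\lambda\Phi\in\H^0(\Omega)$, part~(i) of Theorem~\ref{fundteo} then forces $\Phi\in\D(\Delta_U)$ with $-\Delta_U\Phi=\chi=\lambda\Phi$, which is exactly \eqref{spectralproblem}.

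The only genuinely delicate point is this density/continuity step: one must verify that testing against the form core $\D_U$ is equivalent to testing against the full form domain $\D(\overline{Q}_U)$, as required to invoke part~(i) of Kato's theorem. The built-in bound $\norm{\cdot}\le\normm{\cdot}_{Q_U}$ makes the $\H^0$-pairing form-continuous, so this extension is automatic once the core property is in hand; everything else reduces to a mechanical application of Theorem~\ref{fundteo} in its two directions.
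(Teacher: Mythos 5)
Your argument is correct, but for the converse you take a different route from the paper. You upgrade the weak identity from the form core $\D_U$ to the full form domain $\D(\overline{Q}_U)$ via density and the boundedness of $\overline{Q}_U(\cdot,\Phi)$ in the graph norm, and then invoke part~(i) of Kato's representation theorem (Theorem~\ref{fundteo}) with $\chi=\lambda\Phi$ to place $\Phi$ in $\D(\Delta_U)$. The paper instead tests against $\Psi\in\D(\Delta_U)$, rewrites $Q_U(\Psi,\Phi_n)=\scalar{-\Delta_U\Psi}{\Phi_n}$ using Hermiticity together with part~(ii) of Theorem~\ref{fundteo}, passes to the limit along a sequence $\Phi_n\to\Phi$ in $\normm{\cdot}_{Q_U}$, and concludes from the identity $\scalar{-\Delta_U\Psi}{\Phi}=\scalar{\Psi}{\lambda\Phi}$ that $\Phi\in\D(\Delta_U^\dagger)=\D(\Delta_U)$ by Definition~\ref{adjointoperator} and self-adjointness. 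The two closing steps are logically equivalent ways of identifying the operator domain, but yours has a small advantage in hygiene: the paper's chain quietly applies the weak identity to test functions $\Psi\in\D(\Delta_U)$, which need not lie in $\D_U$ itself but only in its closure, so the same density-and-continuity extension you carry out explicitly is in fact also needed there and is left implicit. Your version makes that step, which you correctly flag as the only delicate point, fully visible; the cost is that you lean on part~(i) of Kato's theorem, which requires the identity against the \emph{entire} form domain, whereas the paper's adjoint argument only ever needs test functions ranging over $\D(\Delta_U)$.
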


\begin{proof}
	The only if part is trivial. Suppose that $(\Phi,\lambda)\in\overline{\D_U}^{\normm{\cdot}_{Q_U}}\times\mathbb{R}$ is a solution of the weak spectral problem \eqref{weakspectralproblem}. Let $\{\Phi_n\}\in\D(\Delta_U)$ be a sequence such that $\normm{\Phi_n-\Phi}_{Q_U}\to 0$ and let $\Psi\in\D(\Delta_U)$\,. Then we have that
	\begin{align*}
		\scalar{-\Delta_U\Psi}{\Phi}&=\scalar{-\Delta_U\Psi}{\lim_{n\to\infty}\Phi_n}\\
			&=\lim_{n\to\infty}\scalar{-\Delta_U\Psi}{\Phi_n}\\
			&=\lim_{n\to\infty} Q_U(\Psi,\Phi_n)\\
			&=Q_U(\Psi,\Phi)\\
			&=\scalar{\Psi}{\lambda\Phi}\;.
	\end{align*}
According to Definition \ref{adjointoperator} the equality above ensures that $\Phi\in\D(\Delta^\dagger_U)=\D(\Delta_U)$. Moreover it ensures that $$-\Delta_U\Phi=\lambda\Phi\;.$$
\end{proof}
Hence, in order to approximate the solutions of \eqref{spectralproblem} it is enough to approximate the solutions of the weak spectral problem \eqref{weakspectralproblem}. This is very convenient for two reasons. First, the domain of the quadratic form is bigger than the domain of the associated operator and we can use a wider space to look for the solutions. In our particular case this means that we can look for solutions in $\H^1(\Omega)$ instead of looking for them in the more regular space $\H^2(\Omega)$\,. Second, the finite-dimensional approximation of the weak spectral problem is explicitly Hermitean. Numerical tools for computing the eigenvalues of Hermitean finite-dimensional matrices are much faster than the corresponding ones for the non-Hermitean case.

The way that we shall use to approximate the solution of the weak spectral problem \eqref{weakspectralproblem} is to find a family of finite-dimensional subspaces $\{S^N\}_N\subset\D_U$ and to look for solutions $(\Phi_N,\lambda_N)\in S^N\times\mathbb{R}$ of the approximate spectral problem
	\begin{equation}\label{approximatespectralproblem}
		Q_U(\Psi_N,\Phi_N)=\lambda_N\scalar{\Psi_N}{\Phi_N}\quad\forall{\Psi_N\in S^N}\;.
	\end{equation}\\

Section \ref{convergenceofthenumericalscheme} is devoted to show that the solutions of the approximate spectral problem \eqref{approximatespectralproblem} converge to the solutions of the weak spectral problem \eqref{weakspectralproblem} provided that the family $\{S^N\}_N$ verifies the appropriate conditions. In Section \ref{FEM} we construct explicitly a family $\{S^N\}_N$ for the one-dimensional case. Even if this situation may appear simple, the fact that we want to compute any possible self-adjoint extension of $\Delta_{\mathrm{min}}$ introduces complications that are not dealt with in the common algorithms based on the finite element method. More concretely, one needs to introduce a nonlocal subspace of functions that is able to encode all the possible boundary conditions. In Section \ref{Numerical_experiments_and_conclusions} we study the stability of the method, treat some particular cases, and compare the results with other available algorithms. In particular we show that this procedure is more reliable in some cases and that it can implement any boundary condition leading to a self-adjoint operator over a one-dimensional manifold.


\section{Convergence of the numerical scheme}\label{convergenceofthenumericalscheme}

In what follows we assume that the unitary operator $U$ describing the quadratic form $Q_U$ is admissible. Hence we are under the conditions of Theorem \ref{maintheorem1} and Theorem \ref{maintheorem2} and therefore the quadratic form
\begin{equation}\label{QFcha4}
	Q_U(\Phi,\Psi)=\scalar{\d\Phi}{\d\Psi}-\scalarb{\gamma(\Phi)}{A_U\gamma(\Phi)}
\end{equation}
with domain
\begin{equation}\label{domaincha4}
	\D_U=\bigl\{ \Phi\in\H^1(\Omega)\bigr|P^{\bot}\gamma(\Phi)=0 \bigr\}\;,
\end{equation}
where $A_U$ is the partial Cayley transform of Definition \ref{partialCayley}, is closable and semi-bounded below.

\begin{definition}\label{def: approx family}
	Let $\{S^N\}_N$ be a family os subspaces of $\H^0(\Omega)$\,. We will say that $\{S^N\}_N$ is an \textbf{approximating family} of $Q_U$ if 
	\begin{equation}\label{SNdense}
		\overline{\bigcup_{N>0}S^N}^{\norm{\cdot}_1}=\D_U\;.
	\end{equation}
\end{definition}

The convergence of the eigenvalues and eigenvectors follows by standard arguments.\\

\begin{theorem}\label{ConvergenciaSoluciones}
Let $\{S^N\}_N$ be an approximating family of $Q_U$ and let $$\{(\Phi_N,\lambda_N)\}_N\in \{S^N\}_N \times \mathbb{R}$$ be the sequence of solutions corresponding to the the $n$th lowest eigenvalue of the approximate spectral problems \eqref{approximatespectralproblem}. Then a solution $(\Phi,\lambda)$ of the weak spectral problem exists such that $$\normm{\Phi-\Phi_N}_{Q_U}\to 0\quad \text{and}\quad \lim_{N\to\infty}\lambda_N=\lambda\;.$$
\end{theorem}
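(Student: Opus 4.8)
The plan is to cast the statement as an instance of the Galerkin (Rayleigh--Ritz) approximation of a lower semi-bounded self-adjoint operator and to control it through the min-max principle, Theorem~\ref{minmaxprinciple}. The first step I would carry out is to establish that on $\D_U$ the form norm $\normm{\cdot}_{Q_U}$ and the Sobolev norm $\norm{\cdot}_1$ are equivalent. The bound $\normm{\Phi}_{Q_U}\le C\norm{\Phi}_1$ follows at once from Proposition~\ref{H1bound}; for the converse one writes $\norm{\d\Phi}^2=Q_U(\Phi)+\scalarb{\gamma(\Phi)}{A_U\gamma(\Phi)}$ and absorbs the boundary term through the interpolation trace estimate $|\scalarb{\gamma(\Phi)}{A_U\gamma(\Phi)}|\le \norm{A_U}(\epsilon\norm{\Phi}_1^2+C_\epsilon\norm{\Phi}^2)$ (a consequence of Theorem~\ref{LMtracetheorem}, Proposition~\ref{prop: AU bounded} and a standard Ehrling-type argument), which together with Proposition~\ref{equivalentsobolev} gives $(1-\norm{A_U}\epsilon)\norm{\Phi}_1^2\le C\normm{\Phi}_{Q_U}^2$. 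This equivalence serves two purposes: the density condition \eqref{SNdense} of Definition~\ref{def: approx family} transfers verbatim to the form norm, and, since $\H^1(\Omega)\hookrightarrow\H^0(\Omega)$ is compact ($\Omega$ compact), the form domain embeds compactly in $\H^0(\Omega)$. Hence $-\Delta_U$ has compact resolvent and purely discrete spectrum, so alternative a) of Theorem~\ref{minmaxprinciple} applies and the $\lambda^{(n)}$ are genuine eigenvalues $\lambda^{(1)}\le\lambda^{(2)}\le\cdots$.

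Next I would prove convergence of the eigenvalues. Both the exact eigenvalue $\lambda^{(n)}$ and the discrete one $\lambda_N^{(n)}$ are characterised by the min-max of the Rayleigh quotient $Q_U(\Phi)/\norm{\Phi}^2$ over $n$-dimensional subspaces, of $\D_U$ and of $S^N\subset\D_U$ respectively. Restricting the admissible subspaces to $S^N$ can only raise the value, so $\lambda_N^{(n)}\ge\lambda^{(n)}$ for every $N$. For the reverse inequality let $E_n\subset\D_U$ be spanned by the first $n$ eigenvectors, on which $\max_{\Phi\in E_n}Q_U(\Phi)/\norm{\Phi}^2=\lambda^{(n)}$. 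Using the form-norm density I would choose $\Phi_i^N\in S^N$ with $\normm{\Phi_i-\Phi_i^N}_{Q_U}\to0$; for large $N$ these remain independent and span $E_n^N\subset S^N$ with $\dim E_n^N=n$. Continuity of $Q_U$ and of $\norm{\cdot}$ in the form norm yields $\max_{\Phi\in E_n^N}Q_U(\Phi)/\norm{\Phi}^2\to\lambda^{(n)}$, whence $\limsup_N\lambda_N^{(n)}\le\lambda^{(n)}$. Therefore $\lambda_N:=\lambda_N^{(n)}\to\lambda:=\lambda^{(n)}$.

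Finally I would treat the eigenvectors. Normalising $\norm{\Phi_N}=1$, the identity $\normm{\Phi_N}_{Q_U}^2=(\lambda_N+1+a)\norm{\Phi_N}^2$ shows that $\{\Phi_N\}$ is bounded in the form norm, hence in $\H^1(\Omega)$; by Rellich a subsequence converges strongly in $\H^0(\Omega)$ and weakly in the form Hilbert space to some $\Phi$ with $\norm{\Phi}=1$. Passing to the limit in the Galerkin identity $Q_U(\Psi_N,\Phi_N)=\lambda_N\scalar{\Psi_N}{\Phi_N}$, where $\Psi_N\in S^N$ approximates an arbitrary $\Psi\in\D_U$ and using that a strongly convergent factor paired with a weakly convergent one converges, I obtain $Q_U(\Psi,\Phi)=\lambda\scalar{\Psi}{\Phi}$ for all $\Psi\in\D_U$; thus $(\Phi,\lambda)$ solves the weak spectral problem \eqref{weakspectralproblem}. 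To upgrade weak convergence to strong convergence I would note $\normm{\Phi_N}_{Q_U}^2=\lambda_N+1+a\to\lambda+1+a=\normm{\Phi}_{Q_U}^2$, so the norms converge along with the weak convergence, which in a Hilbert space forces $\normm{\Phi_N-\Phi}_{Q_U}\to0$.

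I expect the main obstacle to be exactly this weak-to-strong passage together with the limit in the Galerkin equation: the boundary term of $Q_U$ is only weakly continuous through the compact trace, and the form is merely semi-bounded rather than coercive, so the missing coercivity must be restored via the norm equivalence of the first step. A standard argument (every subsequence has a further subsequence converging to an eigenvector for $\lambda$) then removes the passage to subsequences, unambiguously when $\lambda$ is simple, and otherwise yielding convergence to the corresponding eigenspace.
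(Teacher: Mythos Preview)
Your proposal is correct and in fact more complete than the paper's own proof. The paper proceeds more tersely: for the eigenvalues it applies the constraint form of the min-max principle (Theorem~\ref{minmaxprinciple}) simultaneously on $\D_U$ and on $S^N$, subtracts, and asserts that the difference of the two sup-inf expressions tends to zero by density \eqref{SNdense}; it does not isolate the two inequalities $\lambda_N^{(n)}\ge\lambda^{(n)}$ and $\limsup\lambda_N^{(n)}\le\lambda^{(n)}$ as you do via the subspace form and an explicit approximation of $E_n$. For the eigenvectors the paper normalises in the \emph{form} norm, $\normm{\Phi_N}_{Q_U}=1$, invokes Banach--Alaoglu to extract a weak-$\normm{\cdot}_{Q_U}$ limit, and then verifies the weak spectral equation by splitting $|Q_U(\Psi,\Phi)-\lambda\scalar{\Psi}{\Phi}|$ into five pieces; the upgrade from weak to strong form convergence is not spelled out. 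Your choice to normalise in $\H^0$, combine Rellich compactness with weak form convergence, and then force strong convergence through $\normm{\Phi_N}_{Q_U}^2=\lambda_N+1+a\to\lambda+1+a=\normm{\Phi}_{Q_U}^2$ is cleaner and actually delivers the strong convergence claimed in the statement. The preliminary norm equivalence $\normm{\cdot}_{Q_U}\sim\norm{\cdot}_1$ and the compact-resolvent observation are likewise absent from the paper but make the argument self-contained; they are exactly what one needs to justify that alternative a) of Theorem~\ref{minmaxprinciple} is in force.
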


\begin{proof}
The admissibility condition for the unitary operator $U$ ensures that the quadratic form $Q_U$ is semi-bounded and closable. The convergence of the eigenvalues is proved as follows.

Let $V_n$ and $V^N_n$ be the subspaces $$V_n=\{\Phi\in\D_U\bigr|\scalar{\Phi}{\xi_i}=0\,,\xi_i\in\H^0(\Omega)\,,i=1,\dots,n\}\;,$$ $$V^N_n=\{\Phi_N\in S^N \bigr|\scalar{\Phi_N}{\xi_i}=0\,,\xi_i\in\H^0(\Omega)\,,i=1,\dots,n\}\;.$$ Then, applying the \textrm{min-max} Principle, Theorem~\ref{minmaxprinciple}, to the quadratic forms on the domains of the weak and the approximate spectral problems and subtracting them we get
\begin{equation}
\sup_{\xi_1,...,\xi_{n-1}}\biggl[\inf_{\Phi_N\in V^N_{n-1}} \frac{Q(\Phi _N,\Phi _N)}{\norm{\Phi _N}^2}\biggr]-\sup_{\xi_1,...,\xi_{n-1}}\biggl[\inf_{\Phi\in V_{n-1}} \frac{Q(\Phi ,\Phi )}{\norm{\Phi}^2}\biggr]=\lambda_N-\lambda\;,
\end{equation}
where $\lambda_N$ is the $n$th eigenvalue of the approximate problem and $\lambda$ is either the $n$th lowest eigenvalue of the weak problem or the bottom of the essential spectrum of $-\Delta_U$\,. The left-hand side tends to zero in the limit $N\to\infty$ provided that $S^N$ is an approximating family, cf., Definition \ref{def: approx family}, and therefore
\begin{equation}\label{Convergenceeigenvalue}
	\lim_{N\to\infty}\lambda_N=\lambda\;.
\end{equation}
Now let $\{(\Phi_N,\lambda_N)\}_N$ be a sequence of solutions of \eqref{approximatespectralproblem} corresponding to the $n$th lowest eigenvalue $\lambda_N$\,. We can assume that $\normm{\Phi_N}_{Q_U}=1$ for every $N$ and therefore the Banach-Alaoglu theorem ensures that subsequences $\{\Phi_{N_j}\}$ and accumulation points $\Phi$ exist such that for every $\Psi\in\D_U$ 
\begin{equation}\label{BanachAlaoglu}
\lim_{N_j\to\infty}Q_U(\Psi,\Phi-\Phi_{N_j})=0\;.
\end{equation}
If we denote by $P_N\Psi$ the orthogonal projection of $\Psi$ onto $S^N$\,, we have that for every $\Psi\in\D_U$
\begin{align*}
	|Q(\Psi,\Phi)-\lambda\scalar{\Psi}{\Phi}|&\leq|Q(\Psi,\Phi-\Phi_{N_j})|+|\lambda||\scalar{\Psi}{\Phi-\Phi_{N_j}}|+\\
	&\phantom{\leq}+|\lambda_{N_j}-\lambda||\scalar{\Psi}{\Phi_{N_j}}|+|Q(\Psi-P_N\Psi,\Phi_{N_j})|+\\
	&\phantom{\leq}+|\lambda_{N_j}||\scalar{\Psi-P_N\Psi}{\Phi_{N_j}}|\;.
\end{align*}
The first two terms go to zero because of the weak convergence in $\normm{\cdot}_{Q_U}$\,. The third one because of \eqref{Convergenceeigenvalue} and the fact that $$\norm{\Phi_{N_j}}\leq\normm{\Phi_{N_j}}_{Q_U}=1\;.$$ The last two terms go to zero because the assumption that $\{S^N\}_N$ is an approximating family ensures that $\norm{\Psi-P_N\Psi}_1\to0$ and both terms are continuous with respect to the Sobolev norm of order 1, cf., Proposition \ref{H1bound}. Since the left-hand side does not depend on $N_j$ this shows that $(\Phi,\lambda)$ is a solution of the weak spectral problem.

\end{proof}


\section{Finite element method for the eigenvalue problem in dimension 1}\label{FEM}

As it was stated at the beginning of this chapter, we will restrict our attention to the case of one-dimensional compact manifolds. Let us discuss first the particularities of this situation. 

Notice first that a compact one-dimensional manifold $\Omega$ consists of a finite number of closed intervals $I_\alpha$\,, $\alpha = 1,\ldots,n$\,.  Each interval will have the form $I_\alpha = [a_\alpha, b_\alpha] \subset \mathbb{R}$ and the boundary of the manifold $\Omega = \coprod_{\alpha=1}^n [a_\alpha, b_\alpha]$ is given by the family of points $\pO = \{ a_1, b_1, \ldots, a_n,b_n\}$\,.     Functions $\Psi$ on $\Omega$ are determined by vectors $(\Psi_1, \ldots, \Psi_n)$ of complex valued functions $\Psi_\alpha \colon I_\alpha \to \mathbb{C}$\,.
A Riemannian metric $\eta$ on $\Omega$ is given by specifying a Riemannian metric $\eta_\alpha$ on each interval $I_\alpha$\,, this is, by a positive smooth function $\eta_\alpha(x) > 0$ on the interval $I_\alpha$\,, i.e., $\eta|_{I_\alpha} = \eta_\alpha (x)\d x\otimes\d x$\,.    Then the $\H^0(\Omega)$ inner product on $I_\alpha$ takes the form $\langle \Psi_\alpha , \Phi_\alpha \rangle = \int_{a_\alpha}^{b_\alpha} \overline{\Psi}_\alpha (x) \Phi_\alpha (x) \sqrt{\eta_\alpha (x)} dx$ and the Hilbert space of square integrable functions on $\Omega$ is given by $\H^0(\Omega) = \bigoplus_{\alpha= 1}^n \H^0(I_\alpha, \eta_\alpha)$\,.

On each subinterval $I_\alpha = [a_\alpha, b_\alpha]$ the differential operator $\Delta_{\eta_\alpha} = \Delta_\eta|_{I_\alpha}$ takes the form 
of a Sturm--Liouville operator 
\begin{equation}\label{sturm}
\Delta_\alpha = - \frac{1}{W_\alpha} \frac{\d}{\d x} p_\alpha(x) \frac{\d}{\d x}
\end{equation}
with smooth coefficients $W_\alpha = 1/(2\sqrt{\eta_\alpha}) > 0$ and $p_\alpha(x) = 1/\sqrt{\eta_\alpha}$\,.

The boundary $\pO= \{ a_1, b_1, \ldots, a_n,b_n\}$ is a discrete set of cardinality $\#(\pO)=2n$\,. Hence the Hilbert space at the boundary is finite-dimensional and therefore $$\H^0(\pO)\simeq\mathbb{C}^{2n}\;.$$ 
One particular property of the one-dimensional situation is that all the Sobolev spaces at the boundary are isomorphic to each other $$\H^k(\pO)\simeq \mathbb{C}^{2n}\quad \forall k\in \mathbb{R}\;.$$ These observations lead to the following proposition.

\begin{proposition}
	Let $\Omega$ be a one-dimensional, compact, Riemannian manifold. Any unitary operator acting on the Hilbert space of the boundary has gap at $-1$ and is admissible.
\end{proposition}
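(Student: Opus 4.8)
The plan is to exploit the finite-dimensionality of the boundary Hilbert space, which was just observed to satisfy $\H^0(\pO)\simeq\mathbb{C}^{2n}$, with all boundary Sobolev spaces $\H^k(\pO)\simeq\mathbb{C}^{2n}$ mutually isomorphic. Both assertions of the proposition should collapse to elementary facts about operators on a finite-dimensional space.

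First I would establish the gap condition. Since $U$ acts on the finite-dimensional space $\mathbb{C}^{2n}$, its spectrum $\sigma(U)$ consists of at most $2n$ eigenvalues, all lying on the unit circle. In particular $\sigma(U)$ is a finite set, so every one of its points is isolated. Two cases then arise. If $-1\notin\sigma(U)$, then $-1$ is not an eigenvalue of $U$, so $\mathbb{I}+U$ is invertible and condition (i) of Definition~\ref{DefGap} holds. If instead $-1\in\sigma(U)$, then because $\sigma(U)$ is finite the point $-1$ is isolated and hence cannot be an accumulation point of $\sigma(U)$, so condition (ii) of Definition~\ref{DefGap} holds. In either case $U$ has gap at $-1$.

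Next I would verify admissibility. Once the gap condition is secured, the partial Cayley transform $A_U$ of Definition~\ref{partialCayley} is a well-defined linear operator on $\H^0(\pO)\simeq\mathbb{C}^{2n}$. Any linear map on a finite-dimensional vector space is bounded with respect to any norm placed on that space; equivalently, all norms on $\mathbb{C}^{2n}$ are equivalent. Since $\H^{1/2}(\pO)\simeq\mathbb{C}^{2n}$ as well, the $\H^{1/2}(\pO)$-norm is merely one such norm, so there is a constant $K$ with $\norm{A_U\varphi}_{\H^{1/2}(\pO)}\leq K\norm{\varphi}_{\H^{1/2}(\pO)}$, which is precisely the admissibility condition of Definition~\ref{def:admissible}.

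There is no genuine obstacle here: the whole content of the proposition is that in dimension one the boundary is a finite set of points, so the spectral gap becomes automatic from the spectrum being finite, and the admissibility bound becomes automatic from the equivalence of all norms on a finite-dimensional space. The only point requiring mild care is the logical ordering, namely confirming the gap condition first so that $A_U$ is well defined before invoking finite-dimensional norm equivalence for the admissibility estimate.
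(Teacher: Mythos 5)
Your proof is correct and follows essentially the same route as the paper: both arguments reduce everything to the finite-dimensionality of $\H^0(\pO)\simeq\mathbb{C}^{2n}$, deducing the gap at $-1$ from the finiteness of the spectrum and the admissibility bound from the equivalence of all norms on a finite-dimensional space. Your version is in fact slightly more careful than the paper's, which states the admissibility estimate for $U$ rather than for the partial Cayley transform $A_U$ as Definition~\ref{def:admissible} actually requires.
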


\begin{proof}
	Any unitary operator $U\in\mathcal{U}(2n)$ has gap at -1 since it can be represented by a unitary matrix and therefore has pure discrete spectrum. Recalling Definition \ref{def:admissible} we need only to show that it exists a constant $K>0$ such that $$\norm{U\varphi}_{\H^{1/2}(\pO)}\leq K\norm{\varphi}_{\H^{1/2}(\pO)}\quad \forall \varphi\in\H^{1/2}(\pO)\;,$$ but this is clearly satisfied since $$\norm{\cdot}_{\H^{1/2}(\pO)}\sim \norm{\cdot}_{\H^0}(\pO)$$ and $U$ is a bounded operator in $\H^0(\pO)$\,.
\end{proof}

The Lagrange boundary form in dimension 1, cf., Definition \ref{def: lagrangebf}, becomes now a sesquilinear form $$\Sigma:\mathbb{C}^{2n}\times \mathbb{C}^{2n}\to\mathbb{C}\;.$$
According to Proposition \ref{prop: asorey} their maximally isotropic subspaces $\mathcal{W}$ can be described in terms of a unitary operator $U\in\mathcal{U}(2n)$ as $$\mathcal{W}=\{(\varphi_1,\varphi_2)\in\mathbb{C}^{2n}\times \mathbb{C}^{2n}\mid \varphi_1-\mathbf{i}\varphi_2=U(\varphi_1+\mathbf{i}\varphi_2)\}\;.$$
By the previous proposition the unitary operator $U$ has gap at -1 and the condition 
\begin{subequations}\label{eq: 2n linear equations}
\begin{equation}\label{eq: asorey1}
	\varphi_1-\mathbf{i}\varphi_2=U(\varphi_1+\mathbf{i}\varphi_2)
\end{equation}
 is equivalent to the conditions
\begin{equation}\label{eq: asorey2}
	P\varphi_2= A_U\varphi_1\quad\text{and}\quad  P^\bot \varphi_1=0\;,
\end{equation}
\end{subequations}
where, as in Definition \ref{P,boundary} and Definition \ref{DefQU}, $P$ is the projection onto $W=\operatorname{Ran}E^\bot_{\{\pi\}}$\,, the orthogonal complement of the eigenspace associated to the eigenvalue -1 of the operator $U$\,,  and $P^\bot$ is the projection onto $W^\bot$\,. The linear system \eqref{eq: asorey2} constitutes a system of $2n$ equations.

\subsection{Finite elements for general self-adjoint boundary conditions}\label{finiteelementsforgeneralsabc}

As stated at the beginning of this chapter, to approximate the solutions of \eqref{weakspectralproblem} we shall construct a family of finite-dimensional subspaces of functions $\{S^N\}_N$ of $\H^0(\Omega)\subset \D_U$\,. Such finite-dimensional subspaces are constructed using finite elements.  The way in which we are going to  force our family of functions $\{S^N\}_N$ to be in $D_U$ is imposing the boundary conditions given by equations \eqref{eq: 2n linear equations}. In this case we will take 
$$\varphi_1=\gamma(\Phi_N)=\Phi_N|_{\pO}\;,$$ 
$$\varphi_2=\gamma(\d\Phi_N(\nu))=\frac{\partial\Phi_N}{\partial\nu}\Bigr|_{\pO}\;,$$
with $\Phi_N\in S^N$\,. Notice that with this choice the boundary term of the quadratic form \eqref{QFcha4} can be written as 
\begin{equation}
	\scalarb{\gamma(\Phi_N)}{A_U\gamma(\Phi_N)}=\scalarb{\gamma(\Phi_N)}{\gamma(\d\Phi_N(\nu))}=\scalarb{\varphi}{\dot{\varphi}}\;.
\end{equation}	
Recall that small greek letters denote the restriction to the boundary of the corresponding capital greek letters and that doted, small greek letters denote restrictions to the boundary of the normal derivatives. The subindex $N$ labelling the subspace $\{S^N\}$ will be dropped in the small greek letter notation as long as there is no risk of confusion.

The finite element model $(K,\mathcal{P}, \mathcal{N})$ that we use is given by the unit interval $K = [0,1]$\,, $\mathcal{P}$ the space of linear polynomials on $K$\,, and $\mathcal{N}$ the vertex set $\{ 0, 1\}$\,.     

The domain of our problem is the manifold $\Omega$ which consists of the disjoint union of the intervals $I_\alpha = [a_\alpha, b_\alpha]$\,, $\alpha = 1, \ldots, n$\,.   For each $N$ we will construct a non-degenerate subdivision $M^N$ as follows. Let $r_\alpha$ be the integer defined as $r_\alpha = \left[ L_\alpha N /L \right] + 1$\,, where $[ x ]$ denotes the integer part of $x$\,, $L_\alpha = |b_\alpha - a_\alpha|$\,, and $L = L_1 + \ldots + L_n$\,.   We will assume that each $r_\alpha \geq 2$\,, and $N \geq 2n$\,.
  Let us denote by $r$ the multi index $(r_1, \ldots, r_n)$\,.  Then $|r| = r_1+ \cdots + r_n$ satisfies
\begin{equation}\label{eq: Nr}
	 N \leq	|r | \leq N + n \;.
 \end{equation}
Now we will subdivide each interval $I_\alpha$ into $r_\alpha + 1$ subintervals of length: 
$$h_\alpha = \frac{L_\alpha} {r_\alpha + 1}\; .$$
The non-degeneracy condition supposes that it exists $\rho>0$ such that for all $I_{\alpha,a}\in M^N$ and for all $N>1$ $$\operatorname{diam}B_{I_{\alpha,a}}\geq \rho\operatorname{diam} I_{\alpha,a}\;,$$ where $B_{I_{\alpha,a}}$ is the largest ball contained in $I_{\alpha,a}$ such that $I_{\alpha,a}$ is star shaped with respect to $B_{I_{\alpha,a}}$\,. In our particular case this is satisfied trivially since $$\operatorname{diam}B_{I_{\alpha,a}}=\operatorname{diam} I_{\alpha,a}\;.$$
(It would be possible to use a set of independent steps $h_\alpha$\,, one for each interval; however, this could create some
technical difficulties later on that we prevent in this way.)
Each subinterval $I_\alpha$ contains $r_\alpha + 2$ nodes that will be denoted as
$$x^{(\alpha)}_k = a_\alpha + h_\alpha k, \quad  k = 0, \ldots, r_\alpha +1 \;.$$ 

\subsubsection{Bulk functions}
Consider now the family $\mathcal{F}_r$ of  $|r|-2n$ piecewise linear functions $$\{f_k^{(\alpha)}(x)\}_{k=2}^{r_\alpha- 1}\,,\quad\alpha = 1, \ldots, n\,,\; 2 \leq k \leq r_\alpha - 1\;,$$ that are zero at all nodes except at the $k$th node of the interval $I_\alpha$\,, where it has value $1$\,, or more explicitly,
$$ f^{(\alpha)}_k (x) = \left\{   \begin{array}{ll}
s\,, & x = x^{(\alpha)}_{k-1} + s h_\alpha, \quad 0 \leq s \leq 1\,, \\
1-s\,, & x = x^{(\alpha)}_{k} + s h_\alpha, \quad 0 \leq s \leq 1\,, \\
0\,, & \mathrm{otherwise}\;.
\end{array}
\right. $$
Notice that these functions are differentiable on each subinterval.   All these functions satisfy trivially the boundary conditions \eqref{eq: 2n linear equations} because they and their normal derivatives vanish at the endpoints of each interval.  They are localised around the inner nodes of the intervals.  We will call these functions \emph{bulk functions}.

\subsubsection{Boundary functions}\label{sectionboundaryfunctions}
We will add to the set of bulk functions a family of functions that implement nontrivially the boundary conditions determining the self-adjoint extension. These functions will be called \emph{boundary functions} and the collection of all of them will be denoted by $\mathcal{B}_r$\,.   
Contrary to bulk functions, boundary functions need to be ``delocalised'' so that they can fulfil any possible self-adjoint extension's boundary condition. 


\begin{figure}[ht]
	\centering
	\includegraphics[height=3cm]{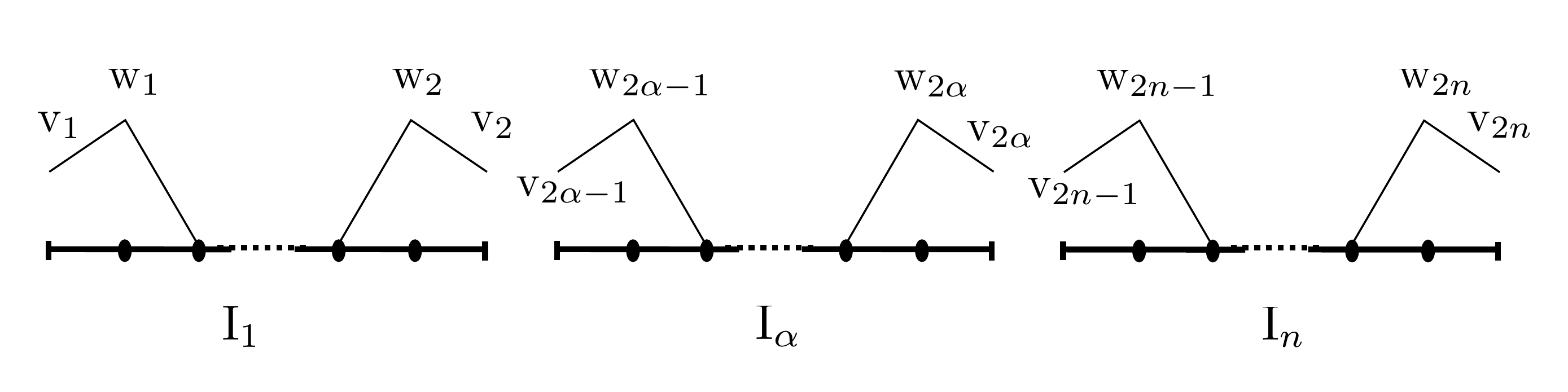}
	\caption{Boundary function $\beta^{(w)}$\,.}\label{function_beta}
\end{figure}


Because the endpoints $x_0^{(\alpha)} = a_\alpha$\,, $x_{r_\alpha + 1}^{(\alpha)} = b_\alpha$ of the intervals $I_\alpha$ and the adjacent nodes, $x_1^{(\alpha)}$ and $x_{r_\alpha}^{(\alpha)}$\,, are going to play a prominent role in what follows, we introduce some notation that takes care of them.   We will consider an index $l = 1, \ldots, 2n$ that labels the endpoints of the intervals. Now for each vector $w = (w_l) \in \mathbb{C}^{2n}$ consider the following functions (see Figure \ref{function_beta}):
$$ \beta^{(w)} (x) = \left\{   \begin{array}{ll}
v_{2\alpha -1} + s(w_{2\alpha -1} - v_{2\alpha -1})\,, & x = x^{(\alpha)}_{0} + s h_\alpha\,,  \\
w_{2\alpha -1}(1-s)\,, & x = x^{(\alpha)}_{1} + s h_\alpha\,, \\
s \,w_{2\alpha} \,, & x = x^{(\alpha)}_{r_\alpha -1} + s h_\alpha\,,  \\
w_{2\alpha} + s(v_{2\alpha} - w_{2\alpha}) \,, & x = x^{(\alpha)}_{r_\alpha} + s h_\alpha, \,,\\
0\,, & x_2^{(\alpha)} \leq x \leq x_{r_\alpha -1}^{(\alpha)} \,,
\end{array}\quad \begin{array}{c} 0 \leq s \leq 1\,,\\1 \leq \alpha \leq n\;.   \end{array}
\right. $$
Each function of the previous family is determined (apart from the vector $w$) by the vector $v = (v_l) \in \mathbb{C}^{2n}$ that collects the values of $\beta^{(w)}$ at the endpoints of the subintervals.    If we denote by $w^{(k)}$ the vectors such that $w^{(k)}_l = \delta_{lk}$\,, $k = 1, \ldots, 2n$\,, the $2n$ vectors $w^{(k)}$ are just the standard basis for $\mathbb{C}^{2n}$\,.   The corresponding functions $\beta^{(w)}$ will now be denoted simply by $\beta^{(k)}$\,. Notice that each boundary function $\beta^{(k)}$ is completely characterised by the unique non-extremal node where it does not vanish\footnote{If $k = 2\alpha-1$\,, it is the node $x^{(\alpha)}_1$\,, and if $k = 2\alpha$\,, it corresponds to the node $x^{(\alpha)}_{r_\alpha}$\,.}  and the values at the endpoints.  We denote by $v_l^{(k)}$\,, $l = 1, \ldots, 2n$\,, the boundary values of the functions $\beta^{(k)}$ above.

\subsubsection{The boundary matrix}
The $2n$ extremal values $v^{(k)}_{l}$ of the  boundary functions $\beta^{(k)}$ are undefined, but we are going to show that the $2n$ conditions \eqref{eq: 2n linear equations} imposed on the boundary functions constitute a determinate system of linear equations for them. 
Because the boundary functions are constructed to be piecewise linear, the normal derivatives of these functions at the boundary can be obtained easily.   For the left boundaries of the intervals, i.e., at the points $a_\alpha$\,, we have

\begin{equation} \label{normal_left}
\left. \frac{d\beta^{(k)}}{d\nu}\right|_{x = a_{\alpha}} = - \frac{1}{h_\alpha}(w^{(k)}_{2\alpha-1} - v^{(k)}_{2\alpha -1})\;,
\end{equation}
and respectively, for the right boundaries,

\begin{equation}\label{normal_right}
\left. \frac{d\beta^{(k)}}{d\nu}\right|_{x=b_{\alpha}} = \frac{1}{h_\alpha} (v^{(k)}_{2\alpha} - w^{(k)}_{2\alpha}) = -\frac{1}{h_\alpha} (w^{(k)}_{2\alpha} - v^{(k)}_{2\alpha}) \;.
\end{equation}
Thus the vector containing the normal derivatives of the function $\beta^{(k)}$\,, consistently denoted by $\dot{\beta}^{(k)}$\,, is given by
\begin{equation}\label{derivatives}
\dot{\beta}^{(k)}_l = - \frac{1}{h_l}(w^{(k)}_l - v^{(k)}_l) =  - \frac{1}{h_l}(\delta_{lk} - v^{(k)}_l) \;,
\end{equation}
where we use again the consistent notation $h_l = h_\alpha$\,, if $l = 2\alpha -1$\,, or $l = 2\alpha$\,.
For each boundary function $\beta^{(k)}$\,, the boundary conditions \eqref{eq: asorey1} read simply as the system of $2n$ equations on the components of the vector $v^{(k)}$\,,
$$ v^{(k)} - i \dot{\beta}^{(k)} = U(v^{(k)}  + i \dot{\beta}^{(k)} )  \;,$$
or, componentwise,
$$ v^{(k)}_l \left(1-\frac{i}{h_l} \right) + \frac{i}{h_l} w^{(k)}_l = \sum_{j = 1}^{2n}U_{l j} \left[v^{(k)}_j \left(1 + \frac{i} {h_j}\right) - \frac{i}{h_j}w^{(k)}_j \right]\;,$$ 
with $l = 1, \ldots, 2n$\,. Collecting coefficients and substituting the expressions $w^{(k)}_l=\delta_{lk}$ we get
\begin{equation}\label{boundaryvalues}
\sum_{j=1}^{2n}\left[\left(1-\frac{i}{h_j}\right)\delta_{lj}-U_{lj}\left(1+\frac{i}{h_j}\right) \right] v^{(k)}_{j} = \left[ -\frac{i}{h_k}\delta_{lk}-U_{lk}\frac{i}{h_k}\right]\;.
\end{equation}
This last equation can be written as the matrix linear system:
\begin{equation}\label{boundary_equation}
 F V = C
 \end{equation}
with $V$ a $2n \times 2n$ matrix whose entries are given by $V_{jk} = v^{(k)}_j$\,, $j,k = 1, \ldots, 2n$\,. The $k$th column of $V$ contains the boundary values of the boundary function $\beta^{(k)}$\,.  The $2n\times 2n$ matrix $F$ with entries 
$$F_{lj} = \left(1-\frac{i}{h_j}\right)\delta_{lj}-U_{lj}\left(1+\frac{i}{h_j}\right)\; ,$$
will be called the boundary matrix  of the subdivision of the domain $M$ determined by the integer $N$\,,
and 
$$ C_{lk} = -\frac{i}{h_k} (\delta_{lk}  +  U_{lk}) $$
defines the inhomogeneous term of the linear system \eqref{boundary_equation}.
Using a compact notation we get
$$ F = \mathrm{diag} (\mathbf{1} - i/\mathbf{h}) - U \mathrm{diag} (\mathbf{1} + i/\mathbf{h}) \,, \quad \quad C = -i\,  (I  + U) \mathrm{diag}(1/\mathbf{h})  \;,$$
where $1/\mathbf{h}$ denotes the vector whose components are $1/h_l$\,.
Notice that $F$ depends just on $U$ and the integer $|r|$ defining the discretisation of the manifold. 

\subsection{Conditioning of the boundary matrix}\label{conditioningoftheboundarymatrix}
Before addressing the construction of the approximate spectral problem we will study the behaviour of system \eqref{boundary_equation} under perturbations; in other words, we will compute the condition number of the boundary matrix $F$ and show that it is small enough to ensure the accuracy of the numerical determination of our family of boundary functions $\beta^{(i)}$\,.   The relative condition number we want to compute is
 $$ \kappa (F) = \norm{F}\norm{F^{-1}}\;.$$ 
In our case, the boundary matrix $F$ can be expressed as 
$$F = \bar{D} - U D = (I - U D \bar{D}^{-1}) \bar{D}$$ 
with $D_{jk} (\mathbf{h}) = D_{jk} = (1+\frac{i}{h_j})\delta_{jk}$\,.    Notice that the product $UD\bar{D}^{-1}$ is a unitary matrix which we will denote as $U_0(\mathbf{h})$ or simply $U_0$ if we do not want to emphasise the $\mathbf{h}$ dependence of $U_0$\,.  Thus, $F = (I - U_0)\bar{D}$ and
\begin{equation*}
\norm{F} = \norm{(I-U_0)\bar{D}} \leq \norm{I-U_0}  \norm{\bar{D}} \leq 2\norm{D}\;.
\end{equation*}
On the other hand, 
\begin{equation*}
\norm{F^{-1}} = \norm{\bar{D}^{-1}(I-U_0)^{-1}} \leq \norm{\bar{D}^{-1}}  \norm{(I-U_0)^{-1}} = \frac{\norm{D^{-1}}} {\min_{\lambda \in \operatorname{spec}(U_0)} \{ |1-\lambda| \} }
\end{equation*}
and thus we obtain
\begin{equation}\label{KF1}
 \kappa (F) \leq \kappa (D) \frac{2}{\min_{\lambda \in \operatorname{spec}(U_0)} \{ |1-\lambda| \} } \;.
 \end{equation}
As $D$ is a diagonal matrix its condition number is given by
$$\kappa (D) = \frac{\sqrt{\frac{1}{h^2_{\mathrm{min}}}+1}}{\sqrt{\frac{1}{h^2_{\mathrm{max}}}+1}} \leq \frac{h_{\max}}{h_{\min}}$$ 
with $h_{\max}$ ($h_{\min}$) the biggest (smallest) step of the discretisation determined by $N$\,.  We get finally,
\begin{equation}\label{KF}
\kappa (F) \leq \frac{h_{\max}}{h_{\min}}\frac{2}{|1-\lambda|}
\end{equation}
with $\lambda$ the closest element of the spectrum of $U_0$ to $1$\,.   Of course, because $U_0$ is unitary, it may happen that $1$ is in its spectrum, so that the condition number is not bounded.  
Because the matrix $U_0$ depends on $\mathbf{h}$\,, its eigenvalues will depend on $\mathbf{h}$ too.   
We want to study the dependence of the closest eigenvalue to 1, or 1 for that matter, with respect to perturbations of the vector $\mathbf{h}$\,. \\

\begin{lemma}\label{perturbation_1}
Suppose that $X_0$ is an eigenvector with eigenvalue $1$ of $U_0$ and that the perturbed matrix $\hat{U} = U_0+\delta U$\,, for $\norm{\delta U}$ small enough, is such that $1 \in \sigma (\hat{U})$\,. Then $\bar{X}_0^T\delta U X_0 = 0$ to first order in $\delta U$\,.
\end{lemma}
\begin{proof} Clearly, if $1 \in \sigma (\hat{U})$ and $\delta U$ is small enough, there exist a vector $\hat{X} = X_0 + \delta X$\,,
with $\norm{\delta X} \leq C \norm{\delta U}$\,, such that $\hat{U} \hat{X} = 1 \hat{X}$\,.  Then we have
\begin{equation}
U_0\delta X +\delta U X_0+\delta U \delta X = \delta X\notag .
\end{equation}
Because $U_0 X_0 = X_0$ and $U_0$ is unitary, $\bar{X}_0^T U_0 = \bar{X}_0^T$\,, and then
multiplying on the left by $\bar{X}_0^T$ and keeping only first order terms, we get the desired condition:
$\bar{X}_0^T\delta U X_0 = 0.$
\end{proof}

\medskip

Because of the previous lemma, if $\hat{U} = U_0 + \delta U$ is a unitary perturbation of $U_0$ such that $\bar{X}_0^T\delta U X_0 \neq 0$\,, 
for any eigenvector $X_0$ with eigenvalue 1, then $1 \notin \sigma(\hat{U})$\,.   Now if we consider a unitary perturbation $\hat{U}$ of $U_0$ such that $1 \notin \sigma (\hat{U})$ we want to estimate how far 1 is from the spectrum of $\hat{U}$\,.
Consider the eigenvalue equation for the perturbed matrix.  The perturbed eigenvalue $\hat{\lambda} = 1 + \delta \lambda$ will satisfy
\begin{equation}\label{eigenvalueunitary}
(U_0+\delta U)(X_0+\delta X)=(1+\delta \lambda)(X_0+\delta X)\;.
\end{equation}
Multiplying on the left by $\bar{X}_0^T$ and solving for $|\delta\lambda|$ it follows that 
\begin{equation}\label{bounddelta}
|\delta \lambda|\geq \frac{|X^H_0\delta U X_0 + X^H_0\delta U \delta X|}{|1+X^H_0\delta X|}\geq  \frac{|X^H_0\delta U X_0|-|X^H_0\delta U \delta X|}{1+\norm{\delta X}} 
\end{equation}
for $\norm{\delta U}$ small enough.
Taking into account the particular form of the matrix $U_0 = U D \bar{D}^{-1}$ we have that $\delta U = U\delta(D\bar{D}^{-1})$ and therefore $\norm{\delta U}=\norm{\delta (D\bar{D}^{-1})}$\,.   Moreover, because $\delta(D\bar{D}^{-1})$ is a diagonal matrix,
$\delta(D\bar{D}^{-1})_{kk} = \frac{-2i}{ (h_k - i)^2}\delta h_k$ and its singular values are the modulus of its diagonal entries.  Hence we have the following proposition.\\

\begin{proposition}\label{lower_bound} Given the matrix $U_0 = UD\bar{D}^{-1}$ with eigenvalue 1, then 1 is not an eigenvalue of any unitarily perturbed matrix $U + \delta U = U_0 (D \bar{D}^{-1} +  \delta(D \bar{D}^{-1}))$ with $\norm{\delta(D\bar{D}^{-1})}$ small enough and $\delta h = \min\{ |\delta h_k| \} > 0$\,.  Moreover there exists a constant $C>0$ such that the perturbation $\delta \lambda$ of such eigenvalue satisfies the lower bound,
$$|\delta \lambda|\geq \frac{\sigma_{\min}(\delta (D\bar{D}^{-1}))-C\sigma^2_{\max}(\delta (D\bar{D}^{-1}))}{1+C\sigma_{\max}(\delta (D\bar{D}^{-1}))} > 0.$$
\end{proposition}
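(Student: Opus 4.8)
The plan is to combine the first-order identity of Lemma~\ref{perturbation_1} with the inequality~\eqref{bounddelta} already derived, the only genuinely new ingredient being an explicit lower bound on the quadratic quantity $X_0^H\,\delta U\,X_0$. First I would dispose of the qualitative claim. Let $X_0$ be a normalised eigenvector of $U_0$ with eigenvalue $1$, so that $U_0X_0=X_0$ and, by unitarity, $\bar{X}_0^T U_0=\bar{X}_0^T$. Writing $U=U_0(D\bar{D}^{-1})^{-1}$ and using $\delta U=U\,\delta(D\bar{D}^{-1})$, one reduces
$$
X_0^H\,\delta U\,X_0=\bar{X}_0^T(D\bar{D}^{-1})^{-1}\delta(D\bar{D}^{-1})X_0
$$
to a diagonal expression, since $(D\bar{D}^{-1})^{-1}\delta(D\bar{D}^{-1})$ is a product of diagonal matrices whose $k$-th entry is $\frac{-2i\,\delta h_k}{h_k^2+1}$. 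Hence
$$
X_0^H\,\delta U\,X_0=-2i\sum_k\frac{|(X_0)_k|^2\,\delta h_k}{h_k^2+1}\;.
$$
With $\delta h=\min_k|\delta h_k|>0$ and the $\delta h_k$ sharing a common sign (the case relevant to a mesh refinement), this is a sum of strictly same-signed terms, hence nonzero; by the contrapositive of Lemma~\ref{perturbation_1} this forces $1\notin\sigma(\hat{U})$, which is the first assertion.

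For the quantitative estimate I would start from~\eqref{bounddelta},
$$
|\delta\lambda|\geq\frac{|X_0^H\,\delta U\,X_0|-|X_0^H\,\delta U\,\delta X|}{1+\norm{\delta X}}\;,
$$
and bound the three quantities separately. The perturbation theory used in the proof of Lemma~\ref{perturbation_1} gives $\norm{\delta X}\leq C\norm{\delta U}$, and since $\delta(D\bar{D}^{-1})$ is diagonal its spectral norm equals its largest singular value, so $\norm{\delta U}=\norm{\delta(D\bar{D}^{-1})}=\sigma_{\max}(\delta(D\bar{D}^{-1}))$; this controls the denominator by $1+C\sigma_{\max}$. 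The cross term is handled by Cauchy--Schwarz together with $\norm{X_0}=1$, namely $|X_0^H\,\delta U\,\delta X|\leq\norm{\delta U}\norm{\delta X}\leq C\sigma_{\max}^2$. Substituting these two estimates reproduces exactly the denominator $1+C\sigma_{\max}$ and the subtracted term $C\sigma_{\max}^2$ of the claimed right-hand side.

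The heart of the matter, and the step I expect to be the main obstacle, is the lower bound $|X_0^H\,\delta U\,X_0|\geq\sigma_{\min}(\delta(D\bar{D}^{-1}))$. From the diagonal formula above, the modulus of the $k$-th entry is precisely the $k$-th singular value $s_k=\frac{2|\delta h_k|}{h_k^2+1}$, so the expression equals $\sum_k|(X_0)_k|^2 s_k e^{i\theta_k}$ with $\theta_k=\mp\pi/2$ according to the sign of $\delta h_k$. When all $\delta h_k$ carry a common sign the phases coincide, so there is no cancellation and, using $\sum_k|(X_0)_k|^2=1$,
$$
|X_0^H\,\delta U\,X_0|=\sum_k|(X_0)_k|^2 s_k\geq\min_k s_k=\sigma_{\min}(\delta(D\bar{D}^{-1}))\;.
$$
The delicate point is exactly this absence of cancellation: for genuinely mixed signs the weighted sum could be much smaller than $\sigma_{\min}$, so one must invoke the common-sign structure of the refinement (or restrict to perturbations whose phases align). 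Finally, positivity of the bound is immediate: $\delta h>0$ makes $\sigma_{\min}>0$, and for $\norm{\delta(D\bar{D}^{-1})}$ small enough one has $\sigma_{\min}>C\sigma_{\max}^2$, because the singular values scale linearly in the perturbation while the subtracted term is quadratic; this yields $|\delta\lambda|>0$ and closes the argument.
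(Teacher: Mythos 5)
Your proposal follows essentially the same route as the paper's proof: the contrapositive of Lemma~\ref{perturbation_1} for the qualitative claim, inequality~\eqref{bounddelta} together with $\norm{\delta X}\leq C\norm{\delta U}$ and Cauchy--Schwarz for the quantitative one, and the diagonal structure of $\delta(D\bar{D}^{-1})$ for the lower bound on $X_0^H\,\delta U\,X_0$. If anything you are more explicit than the paper at the crucial step: the paper simply asserts $|\sum_{i}|X_{0,i}|^2\,\delta(D\bar{D}^{-1})_{ii}|\geq 2\delta h$ without comment, whereas you correctly isolate the no-cancellation (common sign of the $\delta h_k$) hypothesis on which that assertion silently relies.
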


\begin{proof}
Because of Lemma \ref{perturbation_1}  it is sufficient to show that $\bar{X}_0^T \delta(D\bar{D}^{-1})) X_0 \neq 0$\,.
But this is an easy consequence of the fact that $|\sum_{i= 1}^{2n} (|X_{0,i}|^2 \delta(D\bar{D}^{-1}))_{ii}| \geq 2  \delta h$\,.   Furthermore there exists a constant $C>0$ such that $\norm{\delta X}\leq C\norm{\delta U}$; hence taking $\norm{\delta(D\bar{D}^{-1})}$ small enough we get $\sigma_{\min}(\delta (D\bar{D}^{-1}))-C\sigma^2_{\max}(\delta (D\bar{D}^{-1})) > 0$ and the bound follows from \eqref{bounddelta}.
\end{proof}

Now we can apply Proposition \ref{lower_bound} to (\ref{KF}) and if we neglect terms $|h_i^2|\ll 1$ and $|\delta h_i|\ll |h_i|$ we finally get the desired bound for the condition number
\begin{equation}\label{kappa_h}
\kappa (F)  \leq \frac{h_{\max}}{h_{\min}}\frac{1}{\delta h} .
\end{equation}
 Then, if for a given $N$ we obtain a boundary matrix $F$ which is bad conditioned, it suffices to change the size of the discretisation, i.e., to increase $N$\,,  to improve the condition number.  Of course, if $N$ is already quite large, then the bound \eqref{kappa_h} could be useless.   For typical values $h\approx 10^{-2}\sim 10^{-3}$\,, it can be taken as $\delta h \approx 10^{-4} \sim 10^{-5}$ to provide condition numbers $\kappa(F) \approx 10^4\sim 10^5$\,.
 
\subsection{The spectral pencil}  For any $N > 2n$ we define the finite-dimensional approximation space $S^N$ as the linear span of the bulk and boundary functions, i.e., $S^N = \mathrm{span} \{ f_k^{(\alpha)}, \beta^{(l)} \mid \alpha = 1, \ldots, n, k = 2, \ldots, r_\alpha-1, l = 1, \ldots 2n  \}$\,.   All functions $f^{(\alpha)}_k$ and $\beta^{(l)}$ are linearly independent; thus the dimension of $S^N$ will be $|r| = r_1 + \ldots + r_n$\,. The relation of the index $N$ and the value $|r|$ is given by \eqref{eq: Nr}. It is convenient to rearrange the elements of the basis above as follows:
$$ \beta^{(1)}, f^{(1)}_1, \ldots, f^{(1)}_{r_1}, \beta^{(2)}, \beta^{(3)}, f^{(2)}_1, \ldots, f^{(2)}_{r_2}, \beta^{(4)}, \ldots, \beta^{(2n-1)}, f^{(n)}_1, \ldots, f^{(n)}_{r_n}, \beta^{(2n)} .$$
Using this ordering, we will rename the elements of this basis as $f_a$\,, with $a = 1, \ldots, |r|$\,, and an arbitrary element $\Phi_N \in S^N$ will be written as $ \Phi_N (x) = \sum_{a=1}^{|r|}\Phi _a f_a(x) .$
We consider now the approximate spectral problem \eqref{approximatespectralproblem}:
\begin{align*}
	Q(\Psi_N,\Phi_N)&=\scalar{\d\Psi_N}{\d\Phi_N}-\scalarb{\psi}{\dot{\varphi}}\\
		&=\scalar{\d\Psi_N}{\d\Phi_N}-\scalarb{\gamma(\Psi_N)}{\gamma(\d\Phi_N(\nu))}=\lambda\scalar{\Psi_N}{\Phi_N}
\end{align*}
%
Introducing the expansion above we get
\begin{equation}\label{spectral_bilinear}
\sum_{a,b}^{|r|}\Psi _a \Bigl[\scalar{\d f_a}{\d f_b} - \scalarb{\gamma(f_a)}{\gamma(\d f_b(\nu))} - \lambda \scalar{\d f_a}{\d f_b} \Bigr] \Phi _b = 0 . 
\end{equation}
As \eqref{spectral_bilinear} holds for every $\Psi _N\in S^N$\,, this equation is equivalent to the eigenvalue equation of the matrix pencil $A-\lambda B$
\begin{equation}\label{pencil}
A=\lambda B,
\end{equation}
where 
\begin{align*}
A_{ab} &= \scalar{\d f_a}{\d f_b} - \scalarb{\gamma(f_a)}{\gamma(\d f_b(\nu))}\\
	&=\scalar{\d f_a}{\d f_b} - \scalarb{\gamma(f_a)}{A_U\gamma(f_b)}\;,
\end{align*}
$$B_{ab} = \scalar{\d f_a}{\d f_b}\;.$$
Notice that $A$ and $B$ are both Hermitian matrices, which improves the numerical algorithms used to compute the eigenvalues of the pencil.
In fact, when solving numerically \eqref{pencil}, it is relevant to preserve its Hermitian character. Notice that
the boundary functions $\beta^{(l)}(x)$ satisfy 
\begin{equation}\label{constraint}
\scalarb{\beta^{(l)}}{\dot{\beta}^{(m)}} - \scalarb{\beta^{(m)}}{\dot{\beta}^{(l)}} = 0\;,
\end{equation}
because their boundary values are elements of a maximally isotropic subspace of the Lagrange boundary form, i.e., they satisfy \eqref{eq: asorey1}.  Using \eqref{derivatives} and the definition of the boundary values of the boundary functions codified in the matrix $V$ we have
$$\scalarb{\beta^{(l)}}{\dot{\beta}^{(m)}}=\sum_{k=1}^{2n}\frac{1}{h_{k}}\bar{V}_{kl}(V_{km}-\delta_{km})
= \sum_{k=1}^{2n}\frac{1}{h_k}\bar{V}_{kl}V_{km}-\frac{1}{h_m}\bar{V}_{ml}\;.$$
%
This identity together with \eqref{constraint} leads to 
\begin{equation}\label{constraint2}
 \frac{1}{h_j}\bar{V}_{jk}=\frac{1}{h_k}V_{kj}.
\end{equation}
The Hermitian relation \eqref{constraint2} is satisfied by the numerical solutions of \eqref{boundary_equation} up to roundoff errors and consequently the pencil \eqref{pencil} is Hermitian only up to these roundoff errors.   We will force the numerical solution of matrix $V$ to satisfy \eqref{constraint2} so that the Hermiticity of the pencil is preserved exactly.  This is convenient  because the algorithms for solving the general eigenvalue problem are much better behaved in the Hermitian case \cite[Chapter 5]{demmel97}. 

To end this discussion we must realise that, with the basis $f_a$ for $S^N$ we have just constructed, the matrices $A$ and $B$ are almost tridiagonal and the unique elements different from zero, besides the tridiagonal ones, are those related to the matrix elements of the boundary functions.  In fact, we can consider a number of cases.  If the function $f_a$ is an interior bulk function, i.e., not corresponding to the node $x^{\alpha}_2$ or $x^{(\alpha)}_{r_\alpha -1}$\,, it is obvious that the only nontrivial inner products $\langle f_a, f_b \rangle$ and $\langle f'_a , f'_b \rangle$ will correspond to $b = a-1,a, a+1$\,.   If the function $f_a$ is an extreme bulk function, for instance, $f_2^{(\alpha)}$\,, then it has nontrivial inner products only with $\beta^{(2\alpha-1)}$ and $f_3^{(\alpha)}$\,, and if $f_a$ is now a boundary function $\beta^{(l)}$\,, then the only non-vanishing inner products will be with the other boundary functions and an extreme bulk function, namely, $f_2^{(\alpha)}$ if $l=2\alpha-1$ or $f_{r_\alpha-1}^{(\alpha)}$ if $l=2\alpha$\,.


\subsection{The family $\{S^N\}_{_N}$ is an approximating family of $Q_U$} 

We conclude this section by showing that the family of subspaces $\{S^N\}_N$ constructed in the previous subsections is indeed an approximating family of $Q_U$\,. The results of section \ref{convergenceofthenumericalscheme} show that the solutions of the spectral pencil \eqref{pencil} do approximate the solutions of the spectral problems for the self-adjoint extensions of the Laplace-Beltrami operator $-\Delta_U$\,.

\begin{theorem}\label{RG}
The closure of the union of all the finite-dimensional spaces $S^N$ in the Sobolev norm of order 1 is the domain $\D_U$ of the quadratic form $Q_U$\,, 
$$\overline{ \displaystyle{\cup_{N>1} S^N} }^{\norm{\cdot}_{1}}=\D_U\;.$$
\end{theorem}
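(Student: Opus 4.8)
The plan is to establish the two inclusions separately, the inclusion $\overline{\cup_N S^N}^{\norm{\cdot}_1}\subseteq\D_U$ being routine and the density $\D_U\subseteq\overline{\cup_N S^N}^{\norm{\cdot}_1}$ carrying all the difficulty. For the easy inclusion, observe that $\D_U$ is closed in $\H^1(\Omega)$: by Theorem~\ref{LMtracetheorem} the trace $\gamma\colon\H^1(\Omega)\to\H^{1/2}(\pO)\simeq\mathbb{C}^{2n}$ is continuous and $P^\bot$ is bounded, so $\D_U=\ker(P^\bot\gamma)$ is $\norm{\cdot}_1$-closed. Hence it suffices to check $S^N\subseteq\D_U$ for every $N$. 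The bulk functions $f^{(\alpha)}_k$ vanish at $\pO$ together with their normal derivatives, so $\gamma(f^{(\alpha)}_k)=0$; the boundary functions $\beta^{(l)}$ have boundary values $v^{(l)}$ (the columns of $V$) which solve the system \eqref{eq: 2n linear equations}, in particular $P^\bot v^{(l)}=0$. As only the boundary functions contribute to the trace of an element of $S^N$, linearity gives $P^\bot\gamma(\Phi_N)=0$ for every $\Phi_N\in S^N$, i.e.\ $S^N\subseteq\D_U$.

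For the reverse inclusion I would first reduce to smooth data. Given $\Phi\in\D_U$, take smooth $\Phi_\epsilon\to\Phi$ in $\H^1(\Omega)$; then $g_\epsilon:=P^\bot\gamma(\Phi_\epsilon)\to P^\bot\gamma(\Phi)=0$ in $\mathbb{C}^{2n}$, and subtracting a fixed family of smooth functions realising the boundary data $g_\epsilon$ with $\H^1$-norm $O(|g_\epsilon|)$ produces $\tilde\Phi_\epsilon\in\C^\infty(\Omega)\cap\D_U$ with $\tilde\Phi_\epsilon\to\Phi$. Thus it is enough to approximate a fixed smooth $\Phi\in\D_U$ ($\H^2$ regularity already suffices) by elements of $\cup_N S^N$. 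For each $N$ (choosing, if necessary, a mesh for which the boundary matrix $F$ of \eqref{boundary_equation} is invertible, cf.\ the conditioning analysis of Section~\ref{conditioningoftheboundarymatrix}) I define the interpolant
\begin{equation*}
\Phi_N=\sum_{\alpha}\sum_{k=2}^{r_\alpha-1}\Phi(x^{(\alpha)}_k)\,f^{(\alpha)}_k+\sum_{l=1}^{2n}c_l\,\beta^{(l)}\;,
\end{equation*}
where $c_{2\alpha-1}=\Phi(x^{(\alpha)}_1)$ and $c_{2\alpha}=\Phi(x^{(\alpha)}_{r_\alpha})$ are the nodal values of $\Phi$ at the interior nodes adjacent to the boundary. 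Because the bulk hat $f^{(\alpha)}_2$ fills in on $[x^{(\alpha)}_1,x^{(\alpha)}_2]$ (and symmetrically at the right end), $\Phi_N$ coincides with the ordinary piecewise-linear nodal interpolant of $\Phi$ on every subinterval, \emph{except} on the two boundary layers $[x^{(\alpha)}_0,x^{(\alpha)}_1]$ and $[x^{(\alpha)}_{r_\alpha},x^{(\alpha)}_{r_\alpha+1}]$, where the endpoint value is forced to be $v=Vc$ instead of $\gamma(\Phi)$.

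The error then splits as $\norm{\Phi-\Phi_N}_1\le\norm{\Phi-I_h\Phi}_1+\norm{I_h\Phi-\Phi_N}_1$, where $I_h\Phi$ is the standard nodal interpolant. The first term tends to $0$ by the classical interpolation estimate $\norm{\Phi-I_h\Phi}_1\le C h\norm{\Phi}_2$. The second term is supported on the boundary layers, where $I_h\Phi-\Phi_N$ is the linear function taking the value $\gamma(\Phi)-Vc$ at the endpoints and $0$ at the adjacent nodes, so
\begin{equation*}
\norm{I_h\Phi-\Phi_N}_1^2\le C\Bigl(\tfrac{1}{h}+h\Bigr)\,\bigl|Vc-\gamma(\Phi)\bigr|^2\;.
\end{equation*}
Everything therefore reduces to showing $|Vc-\gamma(\Phi)|=o(\sqrt{h})$, and this is the step I expect to be the main obstacle. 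Here I would exploit the algebraic identity $C-F=U-\1$ (a one-line computation from the explicit forms of $C$ and $F$), which yields $Vc-c=F^{-1}C\,c-c=F^{-1}(U-\1)c$. Writing $F=(\1-U)-\mathbf{i}(\1+U)\operatorname{diag}(1/\mathbf{h})$, the term $(\1+U)\operatorname{diag}(1/\mathbf{h})$ dominates as $h\to 0$, and since $U$ has gap at $-1$ the operator $\1+U$ is invertible on $W=\operatorname{Ran}P$; a direct estimate then gives $\norm{F^{-1}(U-\1)c}=O(h)\norm{c}$. Combined with $|c-\gamma(\Phi)|=O(h)$ from the smoothness of $\Phi$, this produces $|Vc-\gamma(\Phi)|=O(h)=o(\sqrt{h})$, so the boundary-layer contribution vanishes and $\norm{\Phi-\Phi_N}_1\to 0$. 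The delicate point in this last estimate is that $\operatorname{diag}(1/\mathbf{h})$ does not commute with the spectral projection $P$ of $U$, so the inversion of $F$ must be organised along the splitting $\H^0(\pO)=W\oplus W^\bot$ with $U|_{W^\bot}=-\1$, which is precisely where the gap hypothesis of Definition~\ref{DefGap} is used.
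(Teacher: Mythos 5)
Your argument is correct in its essentials, but it takes a genuinely different---and considerably more self-contained---route than the paper's. The printed proof handles the density direction by citing the general finite-element approximation estimate of Brenner--Scott for the orthogonal projection $P_N$ onto $S^N$, obtaining $\norm{\Psi-P_N\Psi}_{\H^1(\Omega)}\leq CN^{-1}\norm{\Psi}_{\H^2(\Omega)}$ and concluding from the density of $\H^2(\Omega)$ in $\H^1(\Omega)$. That is quick, but it silently treats the constrained space $S^N$ as if it approximated as well as the full piecewise-linear space: the standard nodal interpolant of $\Psi\in\D_U$ is in general \emph{not} an element of $S^N$, since its endpoint values are $\gamma(\Psi)$ rather than the values $Vc$ forced by the system $FV=C$; nor does the printed proof address why smooth (or $\H^2$) elements \emph{of $\D_U$} are dense in $\D_U$. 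Your explicit quasi-interpolant, the splitting of the error into the classical interpolation error plus a boundary-layer term of size $(h^{-1}+h)^{1/2}\,|Vc-\gamma(\Phi)|$, and the asymptotic analysis of the boundary system via the identity $C-F=U-\1$ supply precisely the missing ingredients, so your proof is in fact more complete than the one in the text. The price is that it is tied to the concrete one-dimensional construction, whereas the paper's citation is meant to suggest how the argument would generalise.

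One step needs repair. The intermediate claim $\norm{F^{-1}(U-\1)c}=O(h)\norm{c}$ is false for arbitrary $c$: on $W^\bot$, where $U=-\1$, one has $(U-\1)=-2\1$ while $F$ acts there essentially as $2\1$ (take $U=\operatorname{diag}(-1,1)$ with equal steps to see $F^{-1}(U-\1)c=(-c_1,0)$, which is $O(1)$). The estimate is saved by the hypothesis $\Phi\in\D_U$: since $P^\bot\gamma(\Phi)=0$ and $|c-\gamma(\Phi)|=O(h)$, the offending component satisfies $\norm{P^\bot c}=O(h)$, and the bound $|Vc-\gamma(\Phi)|=O(h)=o(\sqrt h)$ follows once the inversion of $F$ is carried out blockwise along $W\oplus W^\bot$, as you indicate at the end. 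This is a matter of writing the last line correctly rather than a missing idea, but as stated the displayed $O(h)\norm{c}$ bound is not the one you actually use.
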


\begin{proof} 
First notice that for every $N$ the functions in $S_N$ verify the same boundary condition than the functions in $\D_U$\,, namely $$P^\bot\gamma(\Phi_N)=0\;.$$ Hence $S^N\subset\D_U$\,.

Now let $P_N:\H^0(\Omega)\to S^N$ be the orthogonal projection onto $S^N$\,. It is enough to show that $\cup_{N>1} \operatorname{Ran} P_N$ is dense in $\H^1(\Omega)$\,.
Let $\{\Omega^N\}_N$\,, $0\leq\frac{1}{N}\leq 1$\,, be the non-degenerate family of subdivisions of the manifold $\Omega\subset\mathbb{R}$ defined in subsection \ref{finiteelementsforgeneralsabc}. It consists of a collection of $|r|+n$ closed subintervals $I_{\alpha,a}$\,, $a=1,	\dots,r_\alpha+1$\,, of the real line. Each subinterval is a representation of the reference element $K=[0,1]$ with nodal set $\mathcal{N}=\{0,1\}$\,. According to subsection \ref{finiteelementsforgeneralsabc}, the family of functions $\mathcal{P}$ consists of the space of linear polynomials in $K$\,. Let $\mathcal{P}_m$ be the space of polynomials of degree $m$\,. Then the reference element $(K,\mathcal{P},\mathcal{N})$ satisfies the following for $m=1,2$\, and $l=0$:
\begin{itemize}
	\item $K$ is star-shaped with respect to some ball.
	\item $\mathcal{P}_{m-1}\subseteq \mathcal{P}\subseteq W^{m,\infty}(K)$\,.
	\item The nodal variables $\mathcal{N}$ involve derivatives up to order $l$\,.
\end{itemize}
For all $I_{\alpha,a}\in M^N$ let $(K,\mathcal{P}_{\alpha,a},\mathcal{N}_{\alpha,a})$ be the affine-equivalent element.   Suppose that $1<p<\infty$ and $m-l-1/p>0$\,. Then, according to  \cite[Theorem 4.4.20]{brenner08}, there exists a constant $C$\,, depending on the reference element, $m$\,, and $p$ such that for $0\leq s\leq m$\,,
\begin{equation}\label{inequality_gen}
	\left(\sum_{I_{\alpha,a}\in M^N}\norm{\Psi-P_N\Psi}^p_{\H^{s}(I_{\alpha,a})}\right)^{1/p}\leq C N^{s-m}\norm{\Psi}_{\H^{m}(\Omega)}.
\end{equation}
Particularising for the case $s=1$\,, $m=2$\,, $p=2$\,, inequality (\ref{inequality_gen}) becomes
\begin{equation}\label{bound}
	\norm{\Psi -P_N\Psi}_{\mathcal{H}^1(\Omega)}\leq \frac{C}{N}\norm{\Psi}_{\mathcal{H}^2(\Omega)} .
\end{equation}
Since $\H^2(\Omega)$ is a dense subset in $\H^1(\Omega)$ the inequality above ensures that for every $\Psi\in\D_U$ the sequence $P_N\Psi\subset S^N$  satisfies that $$\lim_{N\to\infty}\norm{\Psi-P_N\Psi}_{\H^1(\Omega)}=0\;.$$
\end{proof}
Notice that the proof of this theorem also provides an estimation for the convergence rate of the numerical scheme. In fact the result in \cite[Theorem 4.4.20]{brenner08}, quoted in the proof of the theorem above, shows that the error in the discretisation measured in the $\H^1$-norm decreases as $1/N$\,. Among other issues, in the next section we are going to show how this bound is satisfied for a particular instance of the numerical algorithms described in this section.


\section{Numerical experiments}\label{Numerical_experiments_and_conclusions}

The numerical scheme described in Section \ref{FEM} results in the finite dimensional eigenvalue problem of \eqref{pencil}.  Furthermore, we have that the error of the solution measured in the $\H^1$-norm decreases as $1/N$\,.  Hence, if we were able to solve \eqref{pencil} for increasing grid size $N$\,, we would get better and better approximations to the eigenvalue problem. As remarked in the previous section, the pencil is almost tridiagonal and both matrices $A$ and $B$ are Hermitian.  Therefore the resulting problem is algebraically well behaved and it should lead to accurate results.  

We will now discuss some numerical experiments that illustrate the stability and the convergence of the algorithm.  We will also compare these results with those obtained by using two other algorithms (not based on the finite element method). In the latter cases we will use a particular choice of boundary conditions close to the singular case described in \cite{berry08}.  
We will consider the case of the Laplacian in $\Omega=[0,2\pi]$ subject to different boundary conditions:
\begin{equation}\label{spinlessfree}
-\frac{d^2}{dx^2}\Psi=\lambda \Psi\;.
\end{equation}
Notice that in this simple case there is only one interval and $|r|=N+1$\,, so we can use $|r|$ or $N$ interchangeably.  The number $h=\frac{2\pi}{r+1}$ is going to be the length of each subinterval.
After some straightforward computations we find that the matrices $A$ and $B$ defining the spectral pencil $A -\lambda B$ associated to (\ref{spinlessfree}) are given by

\begin{equation}
			\begin{array}{l} 
				a_{kk}=2,\quad k=2,\dots, r-1\,,\\
				a_{k \, k+1}=a_{k+1\, k}=-1,\quad k=1,\dots,r-1\,,\\
				a_{11}=2-V_{11}\,,\\
				a_{rr}=2-V_{22}\,,\\
				a_{1r}=-V_{12}\,,\\
				a_{r1}=\overline{a_{1r}}=-V_{21}\,,\\
				a_{kj}=0,\quad j\geq k+2,\quad j\leq k-2\,,\\
				A_{kj}=\frac{1}{h}a_{kj}\,,
			\end{array}  
\end{equation}
\begin{equation}			
			\begin{array}{l}
				b_{kk}=4,\quad k=2,\dots, r-1\,,\\
				b_{k\, k+1}=b_{k+1\, k}=1,\quad k=1,\dots,r-1\,,\\
				b_{11}=4+2[|V_{11}|^2+|V_{21}|^2]+2V_{11}\,,\\
				b_{rr}=4+2[|V_{22}|^2+|V_{12}|^2]+2V_{22}\,,\\
				b_{1r}=2[\bar{V}_{11}V_{12}+\bar{V}_{21}V_{22}]+2\,,\\
				b_{r1}=\overline{b_{1r}}\,,\\
				b_{kj}=0,\quad j\geq k+2,\quad j\leq k-2\,,\\
				B_{kj}=\frac{h}{6}b_{kj}\;.
			\end{array}
\end{equation}

In each case, in order to obtain the matrix $V$\,, one needs to solve previously the corresponding system of equations \eqref{boundary_equation} for the given self-adjoint extension. The solutions of these generalised eigenvalue problems have been obtained using the \textrm{Octave} built-in function $\mathbf{eig}$\,. The details of this routine can be found in \cite{anderson92}. This built-in function is a general-purpose diagonalisation routine that does not exploit the particularly simple algebraic structure of this problem. One could adapt the diagonalisation routine to the algebraic structure of the problem  at hand (there are only two elements outside the main diagonals) to improve the efficiency. Moreover, one could also use the $p$-version of the finite element method that, considering that the solutions are smooth, would be more adaptive.   However our main objective here is to show that the computation of general self-adjoint extensions by using non-localised finite elements at the boundary, as explained in Subsection \ref{sectionboundaryfunctions}, is reliable and accurate.  We consider that the results explained below account for this, and we leave these improvements for future work.

First we will test the stability of the method against variations of the input parameters.  The parameters of this procedure are the matrix $U$ determining the self-adjoint extension whose eigenvalue problem we want to solve.  We will perturb an initial self-adjoint extension, described by a unitary matrix $U$\,, and we will observe the behaviour of the eigenvalues. In other words, we are interested now in studying the relation 
				\begin{equation}
					|\Delta \lambda|=K(\varepsilon)\norm{\Delta U}\;,
				\end{equation}
where $\epsilon$ is the parameter measuring the size of the perturbation. If the algorithm were stable one would expect that the condition number $K(\varepsilon)$ would grow at most polynomially with the perturbation $\epsilon$\,.  However, we must be careful in doing so since the exact eigenvalue problem presents divergences under certain circumstances (explained below) which could lead to wrong conclusions.  In fact, as a consequence of Theorem \ref{maintheorem1}, we see that when a self-adjoint extension is parameterised by a unitary matrix $U$ that has eigenvalues close but not equal to $-1$\,, it happens that some eigenvalues of the considered problem take very large negative values. However, matrices with $-1$ in the spectrum can lead to self-adjoint extensions that are positive definite, for example, Dirichlet or Periodic self-adjoint extensions.  Thus, following a path in the space of self-adjoint extensions, it could happen that a very small change in the arc parameter leads to an extremely large jump in the exact eigenvalues.  Such self-adjoint extensions are precisely the ones that lead in higher dimensions to the problem identified by M.~Berry as a Dirichlet singularity\footnote{Notice again that $U=-\mathrm{1}$ is the unitary matrix describing Dirichlet boundary conditions.} \cite{berry08},\cite{berry09}, \cite{marletta09}. In fact these boundary conditions can be described by unitary operators that do not have gap at -1. The results discussed in chapter \ref{cha:QF} show in fact that such boundary conditions can lead to truly unbounded self-adjoint extensions of the Laplace-Beltrami operator, i.e., not semi-bounded, and they will be the target of our latter tests. For proving the stability it is therefore necessary to perturb the unitary matrix along a direction of its tangent space such that the gap condition is not a jeopardy.  A path in the space of self-adjoint extensions where the aforementioned jumps in the spectrum do not occur is, for instance, the one described by the so-called quasi-periodic boundary conditions \cite{asorey83}. These correspond to the unitary operator described in Example \ref{ex:quasiperiodic}. In this case, the self-adjoint domain is described by functions that satisfy the boundary conditions $\Psi(0)=e^{i2\pi\epsilon}\Psi(2\pi)$ and $\Psi'(0)=e^{i2\pi\epsilon}\Psi'(2\pi)$\,, which correspond to the family of unitary matrices 
\begin{equation}\label{unitariaquasiperiodicas}
	U(\epsilon)=\begin{pmatrix} 0& e^{i2\pi\epsilon}\\e^{-i2\pi\epsilon}&0 \end{pmatrix}\;.
\end{equation}
Notice that this particular choice of boundary conditions, which are non-local in the sense that they mix the boundary data at both endpoints of the interval, can naturally be treated by the discretisation procedure introduced in Section \ref{FEM} and that goes beyond the ones usually addressed by most approximate methods \cite{aceto09}, \cite{chanane99}, \cite{farrington57}, \cite{pruess73}, described by local equations of the form $\alpha\cdot \Psi(a)+\beta\cdot \Psi'(a)=0$\,, $\gamma\cdot \Psi(b)+\delta\cdot \Psi'(b)=0$\,.

Let us then consider perturbations of the periodic case in the quasi-periodic direction, i.e., we consider 
$$U(\varepsilon)\simeq U+i\varepsilon A=\begin{pmatrix} 0 & 1\\1 & 0 \end{pmatrix}+i\varepsilon \begin{pmatrix} 0 & 1\\ -1 & 0\end{pmatrix}\;.$$  We have calculated the numerical solutions for values of $\varepsilon$ between $10^{-4}$ and $10^{-1}$ in steps of $10^{-4}$ and the discretisation size used was $N=250$\,. In the latter case the perturbation is $\norm{\Delta U}=\norm{i\varepsilon A}=\varepsilon$\,, hence the absolute error ratio is $$K(\varepsilon)=\frac{|\Delta \lambda|}{\norm{\Delta U}}=\frac{1}{\varepsilon}|\Delta \lambda|\;.$$ The results are plotted in figure \ref{estabilidad}. 
\begin{figure}[h]\centering
\includegraphics[height=3in]{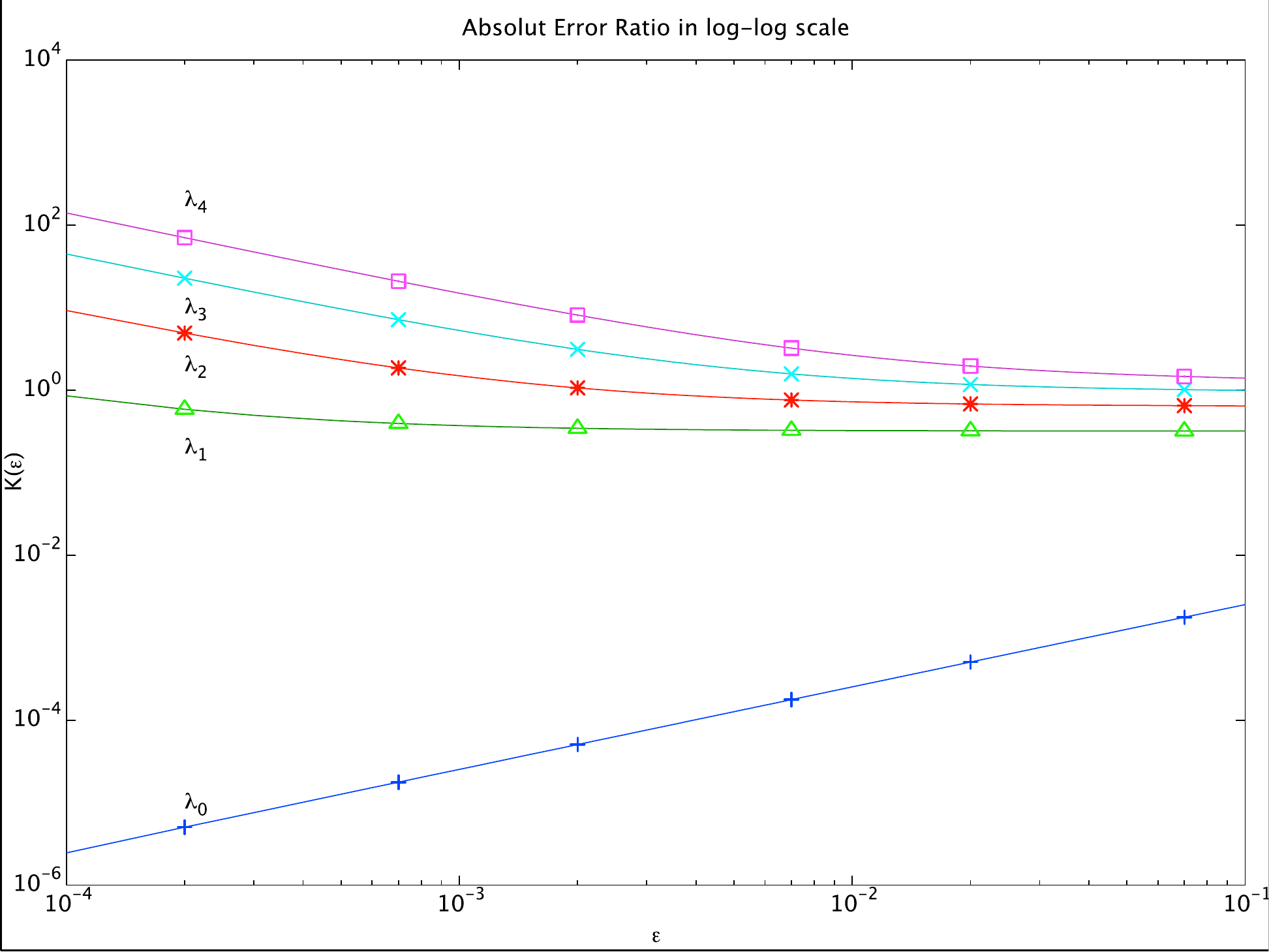}
\caption{Absolute error ratio $K(\varepsilon)$ of the five lowest levels for the periodic boundary problem plotted against $\varepsilon$ in log-log scale.}\label{estabilidad}
\end{figure}
As can be seen, $K(\varepsilon)$ is a decreasing function of $\varepsilon$ except for the fundamental level. In the latter case the absolute ratio is an increasing function of $\varepsilon$\,. However, in this case it can be seen clearly that the growth is linear. This result shows that the procedure is stable under perturbations of the input matrix $U$\,.\\

We will consider now the Laplacian \eqref{spinlessfree} subjected to quasi-periodic boundary conditions:
\begin{equation}
	\Psi(0)=e^{i2\pi\epsilon}\Psi(2\pi)\,, \quad
	\Psi'(0)=e^{i2\pi\epsilon}\Psi'(2\pi) \;.
\end{equation}
These are codified by the unitary matrix \eqref{unitariaquasiperiodicas}. This is a meaningful example since it demonstrates that this algorithm provides a new way to compute the Bloch decomposition of a periodic Schr\"odinger operator. This problem is addressed usually by considering the unitary equivalent problem $-\left( \frac{d}{d x}+i\epsilon \right)^2 \Psi = \lambda\Psi $ 
with periodic boundary conditions. However, our procedure is able to deal with it directly in terms of the original boundary condition.

The analytic solutions for this particular eigenvalue problem can be obtained explicitly \cite{asorey83}. They are $\lambda_n=(n+\epsilon)^2$\,, \smash{$\psi_n=\frac{1}{\sqrt{2\pi}}e^{-i(n+\epsilon)x}$}, $n=0,\,\pm1,\,\dots$\,, and we can compare the approximate solution obtained by the procedure described in Section \ref{FEM}. In particular we show that the bound \eqref{bound} is satisfied, i.e., that the error between the approximate solutions and the analytic ones measured in the $\mathcal{H}^1$-norm is of the form $1/N$\,. The results for the five lowest eigenvalues corresponding to $\epsilon=0.25$ are shown in figure \ref{convergencia}.\footnote{In order to avoid errors coming from numerical quadratures, the solutions of the integrals appearing in the $\mathcal{H}^1$-norms of each subinterval, comparing the analytic solutions and the linear approximations, have been computed explicitly.} As is known for the finite element approximations, the error grows with the order of the eigenvalue; however, it is clearly seen that for all the cases the decay law is of the form $1/N$\,, therefore satisfying the bound.\\

\begin{figure}[h]\centering
\includegraphics[height=3in,width=4.5in]{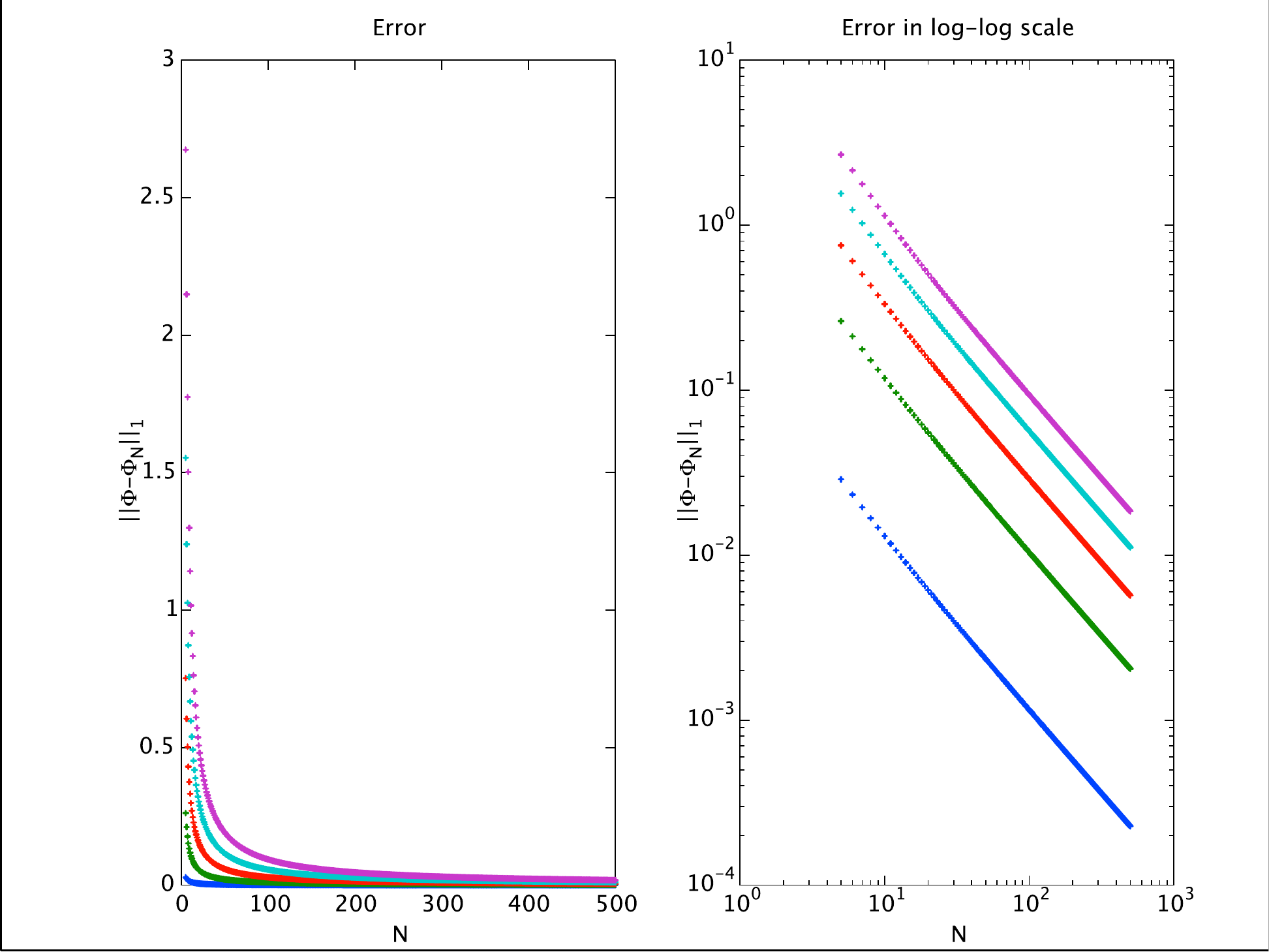}
\caption{Evolution of the error, measured in the $\mathcal{H}^1$-norm, for the five lowest levels of the quasi-periodic Laplacian problem ($\epsilon=0.25$) with increasing lattice size plotted in normal scale and log-log scale.}\label{convergencia}
\end{figure}

Finally we consider again the Laplacian, but in this case subjected to the local boundary conditions
\begin{equation}\label{condicionesDsingularity}
		\begin{array}{c}
			\Psi'(0)=0\,,\\
			\tan(-\theta/2)\cdot \Psi(2\pi)+\Psi'(2\pi)=0\;,
		\end{array}
\end{equation}
which are determined by the unitary matrix
\begin{equation}\label{unitariaDsingularity}
	U=\begin{pmatrix} 1 & 0 \\ 0 & e^{-i\theta} \end{pmatrix}\;.
\end{equation}
This is a particular case of Example \ref{generalized Robin}. Moreover, this is the problem of the interval addressed in Definition \ref{Defunidimensional}. These boundary conditions, unlike the previous ones, can be handled by most of the software packages available for the integration of Sturm-Liouville problems.   As is stated in Proposition \ref{prop: intervalrobin}, when $\theta=\pi$ the Laplace operator is positive, but for values of $\theta<\pi$ the fundamental level is negative with increasing absolute value as $\theta\to\pi$\,. It happens that for values of $\theta\simeq\pi$ the absolute value of this fundamental eigenvalue is several orders of magnitude bigger than the closest eigenvalue.

The solutions of the discrete problem, according to Theorem \ref{ConvergenciaSoluciones}, are guaranteed to converge to solutions of the exact problem for increasing lattice size. However, it is not necessary that the sequence of eigenvalues obtained in the approximate solution is in correspondence with the sequence of eigenvalues of the exact problem. It may happen that for some threshold $N$ some \emph{new} eigenvalues appear that were not detected for smaller $N$\,. The big gap between the two lowest levels in the self-adjoint extensions described above is a good example of this feature. We have computed the spectrum for a fixed value $\theta=0.997\pi$ for $N$ in increasing steps. For each value of $N$ we show the lowest five eigenvalues returned by the procedure. The results are plotted in Figure \ref{ConvergenciaAutovaores}. Notice that for $N\leq1300$ the negative fundamental level is not detected. However, for $N=1400$ an approximation of the negative eigenvalue is returned ($\sim-6000$). Now, in this situation and for $N>1400$\,, the second to fifth lowest eigenvalues coincide with the lowest four returned for the situations below the threshold. The convergence of the negative eigenvalue is also visible. The scale in the negative $y$-axis has been rescaled so that the performance could be better appreciated.
\begin{figure}[h]\centering
\includegraphics[height=3.25in]{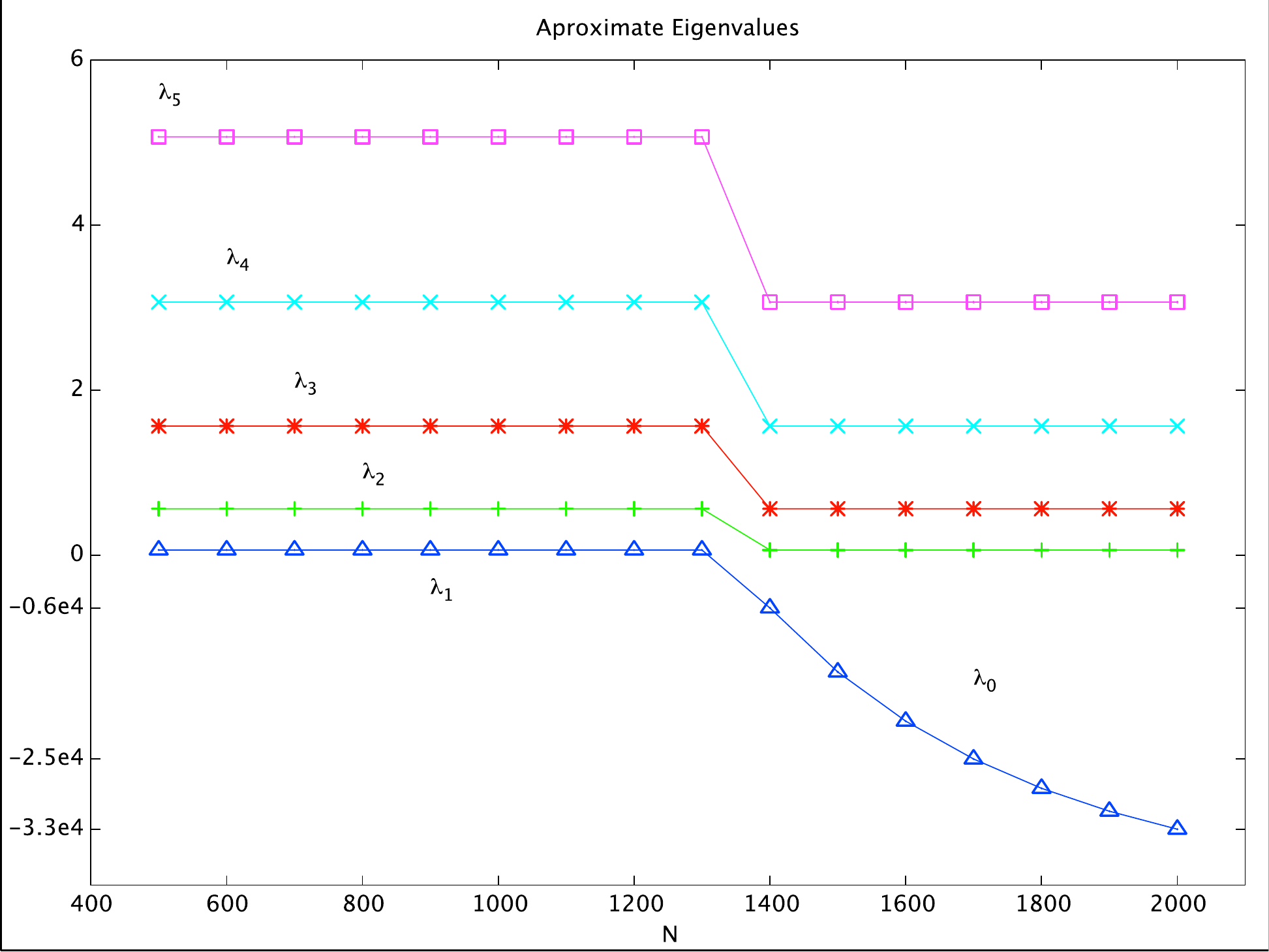}
\caption{Five lowest eigenvalues for $\theta=0.997\pi$ and increasing lattice size.}\label{ConvergenciaAutovaores}
\end{figure}
\\

As stated at the beginning of this section, we are now going to compare the results obtained with the numerical scheme described so far, from now on referred to as FEM, with two other algorithms that are not based on the finite element method, namely the line-based perturbation method (LPM) and the constant reference potential perturbation method (CPM) proposed in \cite{ledoux06} and \cite{ledoux05}, respectively. The software used for the calculations is, in the latter cases, a ready-to-run version provided by the authors.\footnote{http://www.twi.ugent.be.} The objective is now to find the solutions for the Schr\"odinger problem with the boundary conditions given by \eqref{condicionesDsingularity}.

The LPM routine reported errors for the lowest order eigenvalues for $\theta\geq 2.6$ and was not able to produce any output to compare. The results comparing the CPM and FEM routines for the Laplacian are shown in Figures \ref{laplaciano2} to \ref{laplaciano4}. The discretisation size used for all the FEM calculations was $N=5000$\,. Both methods provide almost the same results for the first excited states (indices 1 to 5); however, this is not the case for the fundamental level (Figure \ref{laplaciano2}), where it is expected that the eigenvalues take increasingly large negative values for $\theta$ approaching $\pi$\,. Note that the lowest value achieved with FEM is $\lambda_0\simeq -7\cdot 10^4$\,. Clearly the CPM routine fails to reproduce the correct eigenvalues in this situation, although it provides good approximations for $\theta\leq 0.989\pi$\,. In all these cases the CPM routine issued warnings on the low reliability of the results and an initial estimate for the fundamental eigenvalue was necessary.  Although the numerical eigenvalues are in agreement with the expected ones, except for the fundamental level, the eigenfunctions plotted in figures \ref{laplaciano3} and \ref{laplaciano4} for the case $\theta=3.1$ show clearly that the solutions provided by FEM are more accurate. For instance, the solutions obtained with CPM present discontinuities near the middle of the interval, although they are expected to be smooth functions. The eigenfunction obtained for the fundamental level (Figure \ref{laplaciano4}) is especially remarkable. This function does not present a singularity in the boundary; however, the $x$-axis has been enlarged so that the localisation at the boundary could be better appreciated. The FEM solution is a clear example of an \emph{edge state}. These are eigenfunctions that are associated with negative eigenvalues and that are strongly localised at the boundary of the system, while they vanish in the interior of the manifold (the interior of the interval in this case). These \textrm{edge states} are important in the understanding of certain physical phenomena like the quantum Hall effect (see \cite{wen92} and references therein). The bad behaviour of the eigenfunctions obtained with the CPM routine could be due to the fact that, after an initial estimation of the corresponding eigenvalue, it uses a numerical integrator to propagate the solution from one end of the interval to the other. Then it uses the differences between the obtained boundary conditions and the given ones to provide a new starting point and propagate it back. This procedure continues iteratively until convergence while continuity is imposed in the centre of the interval. This last statement could explain why the bad behaviour always appears in the middle of the interval. It needs to be said that the results using CPM do not depend on the special form that one selects to implement the boundary conditions. If one uses one of the equivalent forms
\begin{equation}
		\begin{array}{c}
			\Psi'(0)=0\\
			\Psi(2\pi)+\cot(-\theta/2)\cdot \Psi'(2\pi)=0
		\end{array} 
\end{equation}
		or
\begin{equation}		
		\begin{array}{c}
			\Psi'(0)=0\\
			\sin(-\theta/2)\cdot \Psi(2\pi)+\cos(-\theta/2)\cdot\Psi'(2\pi)=0
		\end{array} 
\end{equation}
the results remain the same.
The CPM routine in general worked much faster than the FEM routine in all the performed calculations.\footnote{All the numerical calculations of this section were performed with a laptop computer with an Intel Core i5 processor at $2.53$ GHz with 4 GB DDR3 RAM.} However, as stated earlier, the algebraic routine used to solve the generalised eigenvalue problem is not adapted to the structure of the problem and other finite element schemes could be used to improve convergence and efficiency.

\begin{figure}[h]\centering
\includegraphics[height=3.3in,width=4.3in]{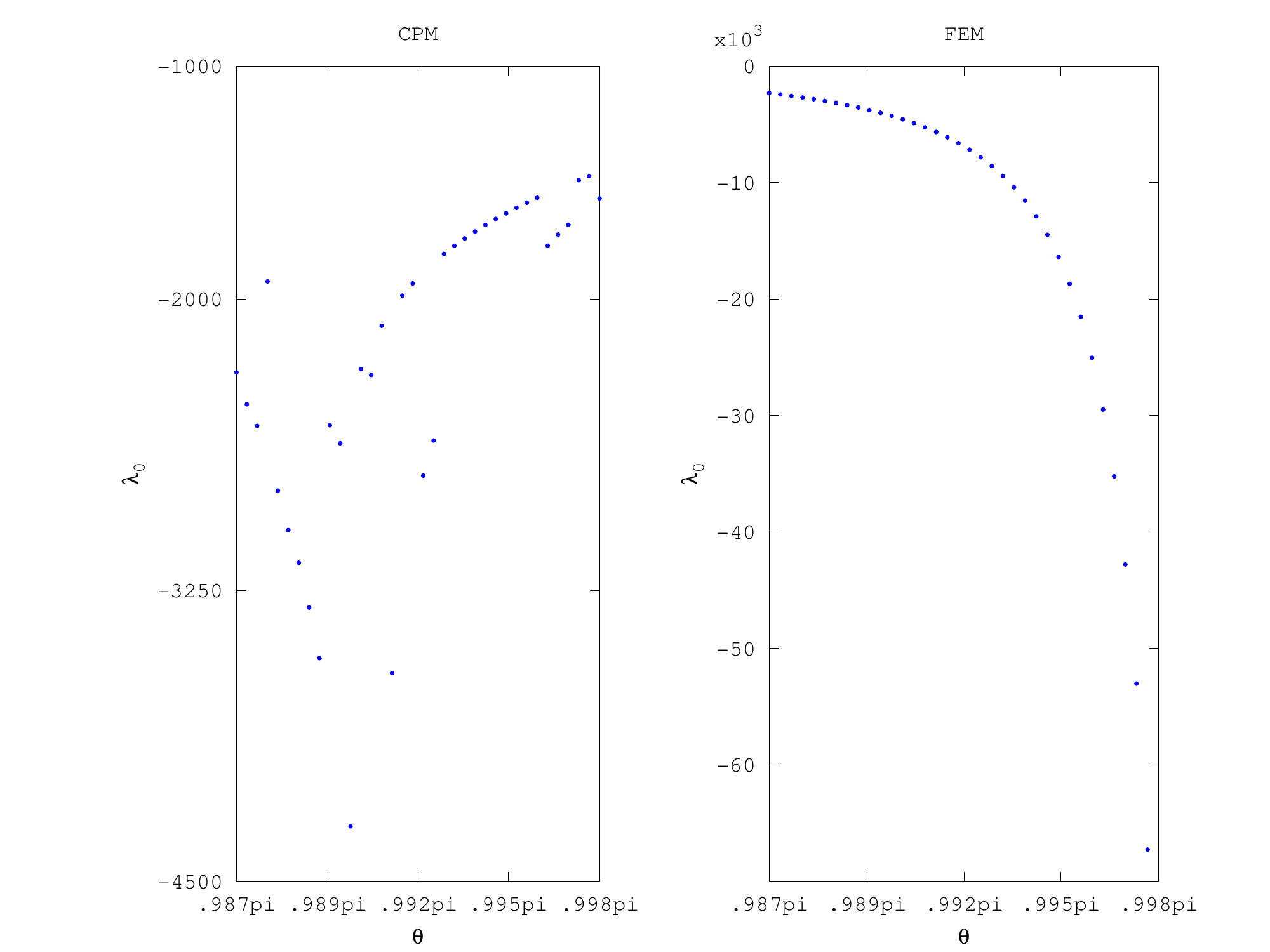}
\caption{Groundlevel eigenfunctions of the Laplacian for increasing values of $\theta$\,.}\label{laplaciano2}
\end{figure}

\begin{figure}[h]\centering
\includegraphics[height=2.6in]{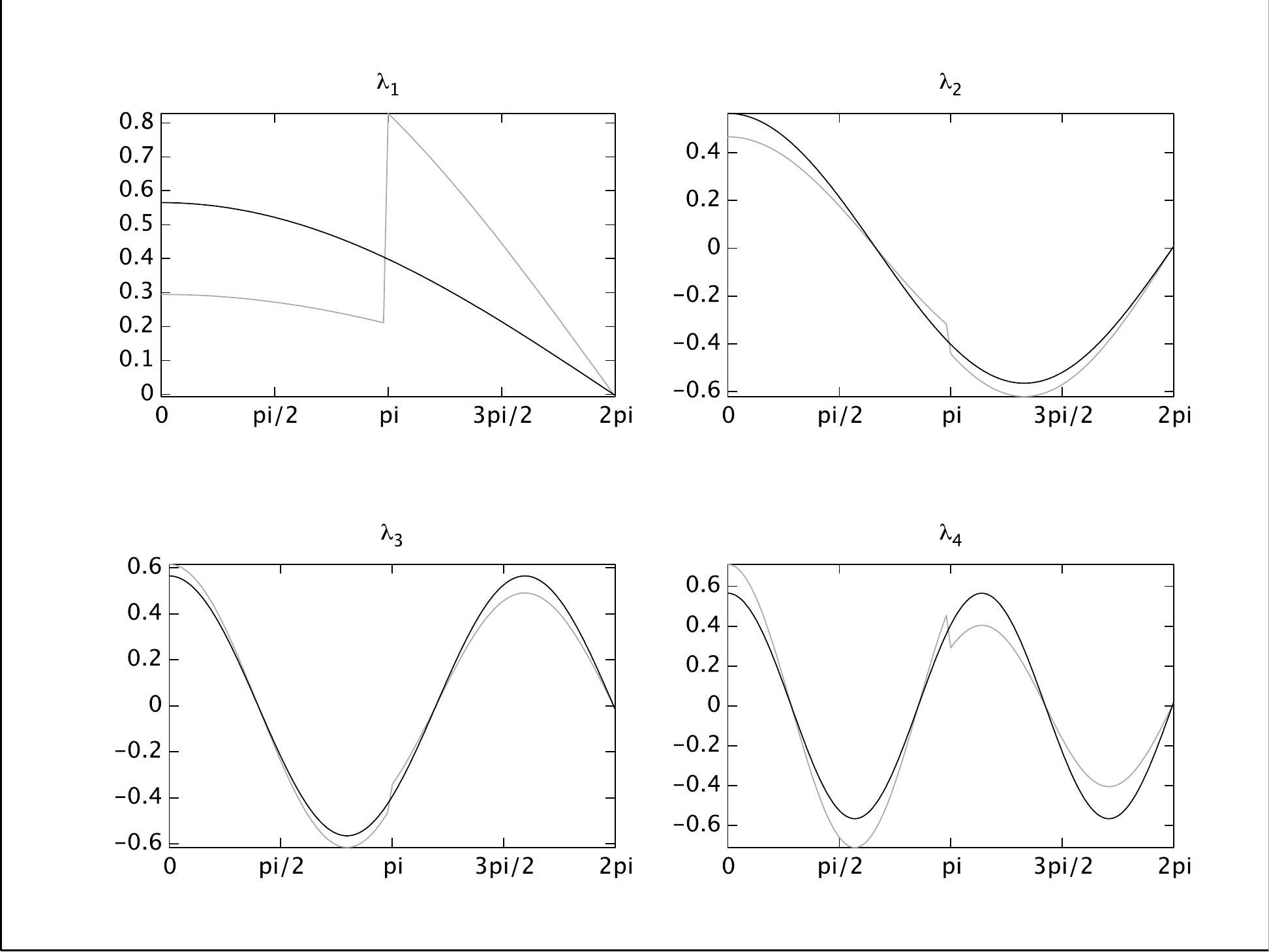}
\caption{First excited eigenfunctions for the Laplacian for $\theta=3.1$\,. FEM functions are plotted in black. CPM functions are plotted in grey.}\label{laplaciano3}
\end{figure}

\begin{figure}[h]\centering
\includegraphics[height=2.7in]{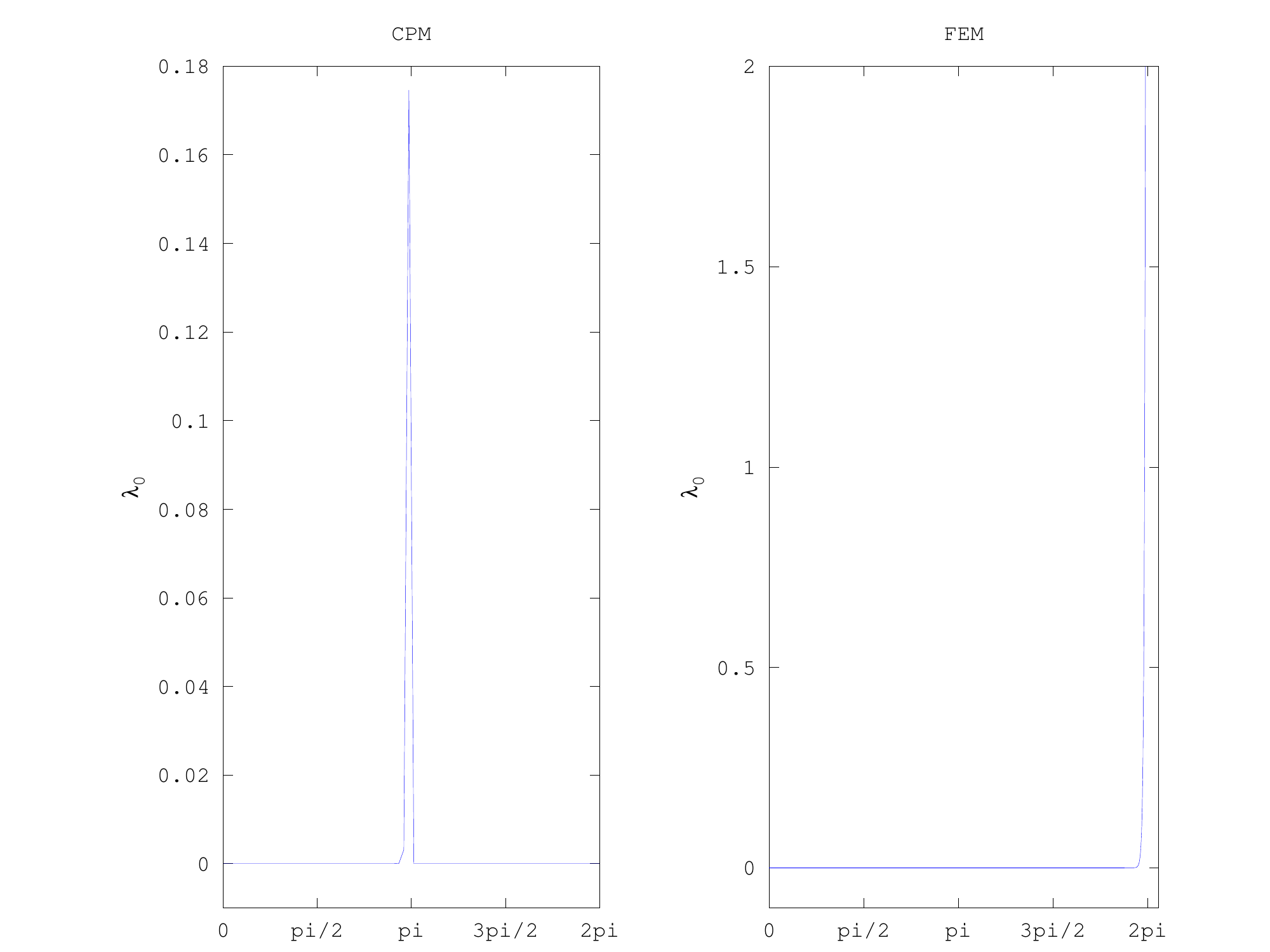}
\caption{Groundlevel eigenfunctions of the Laplacian problem for $\theta=3.1$\,.}\label{laplaciano4}
\end{figure}

\FloatBarrier

\clearemptydoublepage
\chapter{Self-Adjoint Extensions with Symmetry and Representation of Quadratic Forms}
\markboth{S.-A. Extensions with Symmetry and Representation of Q.F.}{}

\label{cha:Symmetry}


As we have discussed in the introduction, in most occasions quantum systems are constructed starting with a densely defined symmetric operator $T$ whose self-adjoint extensions will define either physical observables or unitary dynamical evolution.   It is necessary then to understand under what conditions a group acting on the Hilbert space by unitary transformations will become a symmetry group for the corresponding extensions of the operator, or which self-adjoint extensions will have $G$ as a symmetry group.   We will discuss these problems in Secs. 5.1 and 5.2 both from the point of view of the general theory of self-adjoint extensions of symmetric operator and the theory of quadratic forms.  

In the particular instance that our quantum system is of mechanical type, we are led necessarily to the study of self-adjoint extensions of the Laplace-Beltrami operator as we have discussed exhaustively.    In many interesting examples and applications, there is a group of transformations $G$ acting on the configuration manifold $\Omega$ by isometries, this is, preserving the Riemannian metric $\eta$.   Then, this group of transformations has a canonical unitary representation $V$ on all Sobolev spaces defined on the manifold by using the Laplace-Beltrami operator. Moreover, the operator $\Delta_{\mathrm{min}}$ satisfies:
$$ V(g) \Delta_{\mathrm{min}} V(g)^\dagger = \Delta_{\mathrm{min}} \;.$$
We say that $\Delta_{\mathrm{min}}$ is $G$-invariant. We shall introduce this notion more precisely in Section \ref{sec:vonNeumannSymmetry}. The question is whether or not the different self-adjoint extensions of $\Delta_{\mathrm{min}}$ will share this property. It is a common error in the literature to assume that all the self-adjoint extensions will also be $G$-invariant. We will show that this is not true in general.

In general, a densely defined self-adjoint operator $T$ does not have to be semi-bounded from below. Hence, its associated quadratic form $Q$ is not semi-bounded from below either. It is an open problem to characterise those Hermitean quadratic forms that are not semi-bounded and that do admit a representation in terms of a self-adjoint operator. We will introduce a family of not necessarily semi-bounded quadratic forms, that we will call partially orthogonally additive, and that can be represented in terms of self-adjoint operators. We also will introduce the notion of sector of a quadratic form. This notion will play the role for quadratic forms that the invariant subspaces do in the context of self-adjoint operators.

This chapter is organised as follows. In Section \ref{sec:vonNeumannSymmetry} we introduce the main definitions and give an explicit characterisation of the self-adjoint extensions that are $G$-invariant in the most general setting, i.e., using the characterisation due to von Neumann described  in Section \ref{sec:vonNeumann}. In Section \ref{sec:InvariantQF} we introduce the notion of $G$-invariant quadratic form and relate them with the notion of $G$-invariant operator. In Section \ref{sec:InvariantLB} and Section \ref{sec:SymmetryExamples} we analyse the quadratic forms associated to the Laplace-Beltrami operator, as introduced in Chapter \ref{cha:QF}, when there is a compact Lie Group acting on the manifold. Thus, we provide a characterisation of the self-adjoint extensions of the Laplace-Beltrami operator that are $G$-invariant. Finally, Section \ref{sec:Representation} is devoted to the generalisation of Kato's representation theorem, Theorem \ref{fundteo}, for quadratic forms that are not semi-bounded.


\section{General theory of self-adjoint extensions with symmetry}
\label{sec:vonNeumannSymmetry}

In this section we shall assume that we have a Hilbert space $\H$ and a group $G$\,, not necessarily compact, that possesses a unitary representation $V$ acting on $\H$\,, $$V:G\to\mathcal{U}(\H)\;.\newnot{symb:Vg}$$

\begin{definition}
	Let $T$ be a linear operator with dense domain $\D(T)$ and let $V:G\to\mathcal{U}(\H)$ be a unitary representation of the group $G$\,. The operator $T$ is said to be \textbf{$G$-invariant} if\newline $V(g)\D(T)\subset\D(T)$ for all $g\in G$ and $$\comm{T}{V(g)}\Psi=0\quad \forall g\in G,\;\forall \Psi\in\D(T)\;,$$ where $\comm{\cdot}{\cdot}$ stands for the formal commutator of operators, i.e., $$\comm{A}{B}=AB-BA\;.$$
\end{definition}

\begin{proposition}\label{adjointinvariant}
	Let $T$ be a  G-invariant, symmetric operator. Then the adjoint operator $T^{\dagger}$ is $G$-invariant.
\end{proposition}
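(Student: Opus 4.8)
The plan is to work directly from the definition of the adjoint operator (Definition \ref{adjointoperator}) together with the two defining properties of $G$-invariance, exploiting that each $V(g)$ is unitary so that $V(g)^\dagger=V(g)^{-1}=V(g^{-1})$. Since a symmetric operator is densely defined, $T^\dagger$ exists, and I must establish the two requirements: that $V(g)\D(T^\dagger)\subset\D(T^\dagger)$ and that $\comm{T^\dagger}{V(g)}$ annihilates every vector of $\D(T^\dagger)$.

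First I would fix $g\in G$ and $\Psi\in\D(T^\dagger)$, and set $\chi:=T^\dagger\Psi$. The goal is to show that $V(g)\Psi$ again lies in $\D(T^\dagger)$ with $T^\dagger V(g)\Psi=V(g)\chi$. To this end I would test against an arbitrary $\Phi\in\D(T)$ and rewrite, using the unitarity of $V(g)$,
\[
  \scalar{V(g)\Psi}{T\Phi}=\scalar{\Psi}{V(g^{-1})T\Phi}\;.
\]

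The key step would then be to move $V(g^{-1})$ past $T$. Because $T$ is $G$-invariant, $V(g^{-1})\Phi\in\D(T)$ and $V(g^{-1})T\Phi=TV(g^{-1})\Phi$; hence the right-hand side becomes $\scalar{\Psi}{TV(g^{-1})\Phi}$. Since $\Psi\in\D(T^\dagger)$ and $V(g^{-1})\Phi\in\D(T)$, the definition of the adjoint applies and gives
\[
  \scalar{\Psi}{TV(g^{-1})\Phi}=\scalar{\chi}{V(g^{-1})\Phi}=\scalar{V(g)\chi}{\Phi}\;.
\]
Chaining these identities yields $\scalar{V(g)\Psi}{T\Phi}=\scalar{V(g)\chi}{\Phi}$ for all $\Phi\in\D(T)$, which is precisely the statement that $V(g)\Psi\in\D(T^\dagger)$ and $T^\dagger V(g)\Psi=V(g)\chi=V(g)T^\dagger\Psi$. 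This would deliver both the invariance of the domain and the vanishing of the commutator in one stroke.

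I do not anticipate a serious obstacle: the only point requiring care is that the $G$-invariance hypothesis must be invoked at $g^{-1}$ rather than $g$, so that $V(g^{-1})\Phi$ remains in $\D(T)$, and that the boundedness of $V(g)$ is what legitimises sliding it across the inner product on the whole of $\H$. The symmetry of $T$ is used only to guarantee that $T$ is densely defined so that $T^\dagger$ is well defined; the argument in fact works verbatim for any densely defined $G$-invariant operator.
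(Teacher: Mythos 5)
Your argument is correct and is essentially identical to the paper's proof: the same chain of equalities $\scalar{V(g)\Psi}{T\Phi}=\scalar{\Psi}{V(g^{-1})T\Phi}=\scalar{\Psi}{TV(g^{-1})\Phi}=\scalar{\chi}{V(g^{-1})\Phi}=\scalar{V(g)\chi}{\Phi}$, with the $G$-invariance invoked at $g^{-1}$. Your closing remarks (that symmetry is only used to guarantee dense definition, and that one must keep $V(g^{-1})\Phi$ inside $\D(T)$) are accurate refinements of what the paper leaves implicit.
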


\begin{proof}
	Let $\Psi\in\D(T^\dagger)$\,. Then, according to Definition \ref{adjointoperator}, it exists $\chi\in\H$ such that $$\scalar{\Psi}{T\Phi}=\scalar{\chi}{\Phi}\quad \forall \Phi\in\D(T)\;.$$ Now we have that 
	\begin{align*}
		\scalar{V(g)\Psi}{T\Phi}&=\scalar{\Psi}{V(g^{-1})T\Phi}\\
			&=\scalar{\Psi}{TV(g^{-1})\Phi}\\
			&=\scalar{\chi}{V(g^{-1})\Phi}\\
			&=\scalar{V(g)\chi}{\Phi}\;.
	\end{align*}
	The above equalities are true $\forall\Phi\in\D(T)$ and hence $V(g)\Psi\in\D(T^\dagger)$\,. Moreover, we have that $T^\dagger V(g)\Psi=V(g)\chi=V(g)T^\dagger \Psi\;.$
\end{proof}

\begin{corollary}
 Let $T$ be a $G$-invariant and symmetric operator on $\mathcal H$. Then its closure
 $\overline T$ is also $G$-invariant.
\end{corollary}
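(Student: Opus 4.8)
The plan is to exploit the description of the closure given earlier in the excerpt: since $T$ is symmetric it is automatically closable, and $\D(\overline{T})$ is the completion $\overline{\D(T)}^{\normm{\cdot}_T}$ of $\D(T)$ with respect to the graph norm $\normm{\cdot}_T$. The whole argument then reduces to showing that each $V(g)$ respects this graph-norm structure, after which the $G$-invariance of $\overline{T}$ follows by a density-and-continuity argument.

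First I would verify that $V(g)$ acts as an isometry of $\D(T)$ for the graph norm. Indeed, for $\Phi\in\D(T)$ the $G$-invariance of $T$ gives $V(g)\Phi\in\D(T)$ and $TV(g)\Phi=V(g)T\Phi$, so that
\begin{equation*}
\normm{V(g)\Phi}_T^2=\norm{V(g)\Phi}^2+\norm{TV(g)\Phi}^2=\norm{V(g)\Phi}^2+\norm{V(g)T\Phi}^2=\norm{\Phi}^2+\norm{T\Phi}^2=\normm{\Phi}_T^2\,,
\end{equation*}
where the penultimate equality uses that $V(g)$ is unitary. Hence $V(g)$ preserves the graph norm on $\D(T)$.

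Next, given $\Psi\in\D(\overline{T})$ I would choose a sequence $\{\Phi_n\}\subset\D(T)$ with $\normm{\Phi_n-\Psi}_T\to0$, that is, $\Phi_n\to\Psi$ and $T\Phi_n\to\overline{T}\Psi$ in $\H$. By the isometry property of the previous step the images $\{V(g)\Phi_n\}\subset\D(T)$ form a Cauchy sequence in the graph norm, hence converge in $\D(\overline{T})$; since $V(g)$ is continuous on $\H$, their $\H$-limit is $V(g)\Psi$, so $V(g)\Psi\in\D(\overline{T})$ and thus $V(g)\D(\overline{T})\subset\D(\overline{T})$. Passing to the limit in the identity $TV(g)\Phi_n=V(g)T\Phi_n$ and again using continuity of $V(g)$ yields $\overline{T}\,V(g)\Psi=V(g)\,\overline{T}\Psi$, i.e.\ $\comm{\overline{T}}{V(g)}\Psi=0$ for all $\Psi\in\D(\overline{T})$ and all $g\in G$.

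There is no genuine obstacle here; the only point demanding care is that the commutation hypothesis is available only on the core $\D(T)$ and not a priori on the larger domain $\D(\overline{T})$. This is exactly why the proof must route through the graph-norm isometry, which transports the algebraic relation from $\D(T)$ to the whole closure. As an alternative one could invoke Proposition~\ref{adjointinvariant} together with the identity $\overline{T}=T^{\dagger\dagger}$ for closable operators, but the direct sequential argument above is the most transparent.
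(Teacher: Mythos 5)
Your proof is correct, but it takes a genuinely different route from the paper's. The paper disposes of the corollary in two lines: since $T$ is symmetric it is closable with $\overline{T}=T^{\dagger\dagger}$, and Proposition~\ref{adjointinvariant} is applied twice — once to $T$ to get that $T^\dagger$ is $G$-invariant, and once to $T^\dagger$ to get that $\overline{T}=(T^\dagger)^\dagger$ is $G$-invariant. (A small wrinkle there is that Proposition~\ref{adjointinvariant} is stated for \emph{symmetric} operators while $T^\dagger$ need not be symmetric; the application is nonetheless legitimate because the proof of that proposition never uses symmetry, only that the operator is densely defined and $G$-invariant.) You instead argue directly at the level of the graph norm: $V(g)$ is a graph-norm isometry of $\D(T)$, Cauchy sequences are mapped to Cauchy sequences, and the commutation relation passes to the limit by continuity of $V(g)$ on $\H$. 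Your version is longer but entirely self-contained — it sidesteps the adjoint machinery and the wrinkle just mentioned — and it makes explicit the structural fact (graph-norm isometry of the representation) that the paper leaves implicit; the paper's version buys brevity by reusing the already-proved invariance of the adjoint. You even note the paper's route as an alternative in your closing remark, so the two arguments are consistent with each other.
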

\begin{proof}
 The operator $T$ is symmetric and therefore closable and we have $\overline{T}=T^{**}$\,. 
 Since $T$ is $G$-invariant, by the preceding proposition we have that $T^*$ is also $G$-invariant,
 hence also $\overline{T}=(T^*)^*$\,.
\end{proof}

This lemma shows that we can always assume without loss of generality that the $G$-invariant
symmetric operators are closed.

\begin{corollary}\label{cor:invariantdeficiency}
	Let $T$ be a symmetric, $G$-invariant operator. Then, the deficiency spaces $\mathcal{N}_{\pm}$, cf., Definition \ref{def:deficiencyspaces}, are invariant under the action of the group, i.e., $$V(g)\mathcal{N}_{\pm}=\mathcal{N}_{\pm}\;.$$
\end{corollary}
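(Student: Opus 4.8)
The statement to prove is Corollary \ref{cor:invariantdeficiency}: that the deficiency spaces $\mathcal{N}_{\pm}$ of a symmetric, $G$-invariant operator $T$ are invariant under the group action, $V(g)\mathcal{N}_{\pm}=\mathcal{N}_{\pm}$.

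The plan is to build directly on Proposition \ref{adjointinvariant}, which has just established that the adjoint $T^\dagger$ is itself $G$-invariant. Recall from Definition \ref{def:deficiencyspaces} that $\mathcal{N}_{\pm}=\{\Phi\in\H\mid(T^\dagger\mp\mathbf{i})\Phi=0\}=\ker(T^\dagger\mp\mathbf{i})$. The key observation is that $G$-invariance of $T^\dagger$ means $V(g)$ commutes with $T^\dagger$ on $\D(T^\dagger)$, and since $V(g)$ trivially commutes with the scalar $\mp\mathbf{i}\cdot\mathbb{I}$, it commutes with the whole operator $T^\dagger\mp\mathbf{i}$. An operator that commutes with another preserves the latter's kernel, which is exactly what we need.

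First I would take $\Phi\in\mathcal{N}_{\pm}$, so $\Phi\in\D(T^\dagger)$ and $(T^\dagger\mp\mathbf{i})\Phi=0$. By Proposition \ref{adjointinvariant} we have $V(g)\D(T^\dagger)\subset\D(T^\dagger)$, so $V(g)\Phi$ lies in the domain, and
\[
(T^\dagger\mp\mathbf{i})V(g)\Phi=T^\dagger V(g)\Phi\mp\mathbf{i}V(g)\Phi=V(g)T^\dagger\Phi\mp\mathbf{i}V(g)\Phi=V(g)(T^\dagger\mp\mathbf{i})\Phi=0\;,
\]
where the middle equality uses the $G$-invariance of $T^\dagger$. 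This shows $V(g)\Phi\in\mathcal{N}_{\pm}$, i.e., $V(g)\mathcal{N}_{\pm}\subset\mathcal{N}_{\pm}$ for every $g\in G$.

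To upgrade the inclusion to the equality $V(g)\mathcal{N}_{\pm}=\mathcal{N}_{\pm}$, I would apply the inclusion just proved to the element $g^{-1}\in G$, obtaining $V(g^{-1})\mathcal{N}_{\pm}\subset\mathcal{N}_{\pm}$. Since $V$ is a unitary representation, $V(g^{-1})=V(g)^{-1}$, so acting with $V(g)$ on both sides of this inclusion gives $\mathcal{N}_{\pm}\subset V(g)\mathcal{N}_{\pm}$. Combining the two inclusions yields the claimed equality. I do not anticipate a genuine obstacle here: the entire content is carried by Proposition \ref{adjointinvariant}, and the only point requiring a moment of care is the two-sided argument using $g^{-1}$ together with unitarity of $V$ to turn set inclusion into set equality, rather than mistakenly asserting equality from the forward inclusion alone.
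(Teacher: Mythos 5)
Your proof is correct and follows essentially the same route as the paper's: both deduce $V(g)\mathcal{N}_{\pm}\subset\mathcal{N}_{\pm}$ from the $G$-invariance of $T^{\dagger}$ established in Proposition \ref{adjointinvariant}, and both upgrade the inclusion to an equality by writing $\mathcal{N}_{\pm}=V(g)\bigl(V(g^{-1})\mathcal{N}_{\pm}\bigr)\subset V(g)\mathcal{N}_{\pm}$. Your write-up is if anything slightly more explicit about the domain inclusion $V(g)\D(T^{\dagger})\subset\D(T^{\dagger})$, which the paper leaves implicit.
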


\begin{proof}
	Let $\xi\in\mathcal{N}_+\subset \D(T^\dagger)$\,. Then $(T^\dagger-i)\xi=0$ and we have that
	\begin{equation*}
		(T^\dagger-i)V(g)\xi=V(g)(T^\dagger-i)\xi=0\;.
	\end{equation*}
	This shows that $V(g)\mathcal{N}_{+}\subset \mathcal{N}_{+}$\,, for all $g \in G$\;. Now $$\mathcal{N}_+ = V(g) (V(g^{-1}) \mathcal{N}_+) \subset V(g)\mathcal{N}_+$$ which shows the equality.
	Similarly for $\mathcal{N}_-$\,.
\end{proof}

With the same arguments as in the final part of the preceding proof we have the following general result:

\begin{lemma}\label{subsetequal}
	Let $\D$ be any subspace of $\H$ that remains invariant under the action of the group $G$\,, i.e., $$V(g)\D\subset\D\,, \forall g\in G\;.$$ Then $V(g)\D=\D$\,.
\end{lemma}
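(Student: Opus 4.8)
The plan is to exploit the fact that $V$ is a representation of the whole group $G$, so that the hypothesis $V(g)\D\subset\D$ holds not merely for the particular element $g$ but for every element of $G$, in particular for the inverse $g^{-1}$. This is exactly the trick used at the end of the proof of Corollary~\ref{cor:invariantdeficiency}, and the argument here is a verbatim abstraction of it.

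First I would observe that, since $g^{-1}\in G$, the invariance assumption applied to $g^{-1}$ yields $V(g^{-1})\D\subset\D$. Applying the operator $V(g)$ to both sides of this inclusion gives $V(g)V(g^{-1})\D\subset V(g)\D$. Because $V$ is a representation, we have $V(g)V(g^{-1})=V(gg^{-1})=V(e)=\mathbb{I}_\H$, so the left-hand side collapses to $\D$ and we obtain $\D\subset V(g)\D$. Combining this with the hypothesis $V(g)\D\subset\D$ produces the desired equality $V(g)\D=\D$, and since $g\in G$ was arbitrary the conclusion holds for every group element.

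There is no genuine obstacle to overcome in this lemma: the only ingredients are the homomorphism property of $V$ (which supplies $V(g)V(g^{-1})=\mathbb{I}_\H$) and the closure of $G$ under inversion (which guarantees that the invariance hypothesis is available for $g^{-1}$). In particular, the statement holds for an arbitrary, not necessarily closed, subspace $\D$, and it does not even require unitarity of $V$ — it suffices that each $V(g)$ be invertible with inverse $V(g^{-1})$.
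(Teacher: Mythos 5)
Your argument is correct and is exactly the one the paper intends: the lemma is stated as following ``with the same arguments as in the final part of the preceding proof,'' namely $\D = V(g)\bigl(V(g^{-1})\D\bigr) \subset V(g)\D$, which is precisely your chain of inclusions. Your closing remarks on the minimal hypotheses (invertibility of each $V(g)$ rather than unitarity, no closedness of $\D$ needed) are accurate observations beyond what the paper records, but the core proof is the same.
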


\begin{theorem}\label{Ginvariantoperator}
	Let $T$ be a symmetric, $G$-invariant operator with equal deficiency indices, cf., Definition \ref{def:deficiencyspaces}. Let $T_K$ be the self-adjoint extension of $T$ defined by the unitary $K:\mathcal{N}_+\to\mathcal{N}_-$\,. Then $T_K$ is $G$-invariant iff $\comm{V(g)}{K}\xi=0$ for all $\xi\in\mathcal{N_+}$\,, $g\in G$\,.
\end{theorem}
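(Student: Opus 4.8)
The statement characterizes $G$-invariance of the von Neumann self-adjoint extension $T_K$ in terms of the commutation of the parametrizing isometry $K$ with the representation $V(g)$.

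Let me recall the setup. By Theorem \ref{thmvonNeumann}, the domain is
$$\D(T_K)=\{\Phi+(\mathbb{I}+K)\xi \mid \Phi\in\D(T),\,\xi\in\mathcal{N}_+\},$$
and $T_K(\Phi+(\mathbb{I}+K)\xi)=T\Phi+\mathbf{i}(\mathbb{I}+K)\xi$. By Corollary \ref{cor:invariantdeficiency}, the deficiency spaces satisfy $V(g)\mathcal{N}_\pm=\mathcal{N}_\pm$, so $V(g)$ restricts to unitary maps on each $\mathcal{N}_\pm$. This is what allows the condition $\comm{V(g)}{K}\xi=0$ to make sense: both $V(g)K$ and $KV(g)$ map $\mathcal{N}_+$ into $\mathcal{N}_-$.

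The plan is to prove both implications by directly computing the action of $V(g)$ on a general element of $\D(T_K)$. First I would establish the $G$-invariance of $\D(T)$ and $\mathcal{N}_+$ under $V(g)$ (Corollary \ref{cor:invariantdeficiency} gives the latter, and $G$-invariance of $T$ gives the former). Then, for $\Psi=\Phi+(\mathbb{I}+K)\xi\in\D(T_K)$, I would compute
$$V(g)\Psi=V(g)\Phi+V(g)(\mathbb{I}+K)\xi=V(g)\Phi+V(g)\xi+V(g)K\xi.$$
Here $V(g)\Phi\in\D(T)$ and $V(g)\xi\in\mathcal{N}_+$. The point is that $V(g)\Psi\in\D(T_K)$ if and only if the term $V(g)K\xi$ can be matched to $K$ applied to the $\mathcal{N}_+$-component, i.e. we would want $V(g)K\xi=KV(g)\xi$; this is exactly the commutation condition. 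Assuming $\comm{V(g)}{K}\xi=0$, we rewrite
$$V(g)\Psi=V(g)\Phi+(\mathbb{I}+K)V(g)\xi,$$
which manifestly lies in $\D(T_K)$, establishing $V(g)\D(T_K)\subset\D(T_K)$; then I would verify $T_KV(g)\Psi=V(g)T_K\Psi$ by the analogous computation using that $T$ commutes with $V(g)$ and that $\mathbf{i}(\mathbb{I}+K)$ commutes through. This proves the ``if'' direction.

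For the converse, assuming $T_K$ is $G$-invariant, I would exploit the uniqueness of the decomposition $\Phi+(\mathbb{I}+K)\xi$: a standard fact in von Neumann's theory is that $\D(T)$, $\mathcal{N}_+$ and $\mathcal{N}_-$ are linearly independent (their sum is direct), so the representation of each element of $\D(T_K)$ is unique. Given $\xi\in\mathcal{N}_+$, the element $(\mathbb{I}+K)\xi\in\D(T_K)$, so invariance gives $V(g)(\mathbb{I}+K)\xi\in\D(T_K)$; expanding as above yields $V(g)\xi+V(g)K\xi$ with $V(g)\xi\in\mathcal{N}_+$ and $V(g)K\xi\in\mathcal{N}_-$. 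Writing this as a sum $0+(\mathbb{I}+K)\zeta$ for some $\zeta\in\mathcal{N}_+$ (the $\D(T)$-component must vanish) and matching $\mathcal{N}_+$ and $\mathcal{N}_-$ components via the directness of the sum forces $\zeta=V(g)\xi$ and then $V(g)K\xi=K\zeta=KV(g)\xi$, i.e. $\comm{V(g)}{K}\xi=0$. The main obstacle, and the step requiring care, is the converse: one must justify that the decomposition into $\D(T)\oplus\mathcal{N}_+\oplus\mathcal{N}_-$ components is unique so that matching components is legitimate. I would cite or briefly argue this directness (it follows from $\D(T^\dagger)=\D(T)\oplus\mathcal{N}_+\oplus\mathcal{N}_-$ in the graph-norm sense, the standard first von Neumann formula), after which the component-matching argument closes the proof.
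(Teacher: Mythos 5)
Your proof is correct. The ``if'' direction is essentially identical to the paper's: both expand $V(g)\bigl(\Phi+(\mathbb{I}+K)\xi\bigr)$, use $V(g)\mathcal{N}_\pm=\mathcal{N}_\pm$ (Corollary \ref{cor:invariantdeficiency}) and the commutation $V(g)K\xi=KV(g)\xi$ to land back in $\D(T_K)$, and then check the intertwining of the operators via Proposition \ref{adjointinvariant}. Where you genuinely diverge is in the converse. The paper argues globally: it forms the competing isometry $K'=V(g)KV(g)^\dagger$, shows $\D(T_{K'})=V(g)\D(T_K)=\D(T_K)$, and then invokes the \emph{injectivity} of the von Neumann correspondence $K\mapsto T_K$ to conclude $K'=K$. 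You instead argue pointwise: you apply $V(g)$ to the single element $(\mathbb{I}+K)\xi$, and use the directness of the first von Neumann formula $\D(T^\dagger)=\D(T)\oplus\mathcal{N}_+\oplus\mathcal{N}_-$ to match the $\D(T)$-, $\mathcal{N}_+$- and $\mathcal{N}_-$-components, forcing $V(g)K\xi=KV(g)\xi$ directly. Both are valid; your route makes explicit the uniqueness-of-decomposition fact that silently underlies the paper's appeal to the bijectivity of the von Neumann parametrisation, and it extracts the commutator identity vector by vector rather than as an identity of isometries. You correctly flag the one point requiring care, namely that the direct-sum decomposition holds for \emph{closed} symmetric operators; this is harmless here since the paper's corollary to Proposition \ref{adjointinvariant} lets one assume $T$ closed without loss of generality, and Theorem \ref{thmvonNeumann} is stated for closed operators anyway. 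Note also that, like the paper, your converse uses only the invariance of the domain $\D(T_K)$ and not the full operator commutation --- consistent with the remark closing that section of the paper.
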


\begin{proof}\noindent
	\begin{itemize}
		\item[$(\Leftarrow)$]According to Theorem \ref{thmvonNeumann} the domain of $T_K$ is given by $$\D(T_K)=\D(T)+\bigl(\mathbb{I}+K \bigr)\mathcal{N}_+\;.$$ Let $\Psi\in\D(T)$ and $\xi\in\mathcal{N}_+$\,. Then we have that 
		\begin{align*}
			V(g)\bigl( \Psi + \bigl(\mathbb{I}+K\bigr)\xi \bigr)&=V(g)\Psi+ \bigl(V(g)+V(g)K\bigr)\xi\\
				&=V(g)\Psi + \bigl(V(g)+KV(g)\bigr)\xi\\
				&=V(g)\Psi+ \bigl(\mathbb{I}+K\bigr)V(g)\xi\;.
		\end{align*}
		By assumption $V(g)\Psi\in\D(T)$, and by Corollary \ref{cor:invariantdeficiency}, $V(g)\xi\in\mathcal{N_+}$\,. Hence $V(g)\D(T_K)\subset\D(T_K )$\,.
		Moreover, we have that 
		\begin{align*}
			T_KV(g)\Psi&=T^\dagger V(g)\bigl(\Psi+\bigl(\mathbb{I}+K\bigr)\xi\bigr)\\
				&=TV(g)\Psi+T^\dagger V(g)\bigl(\mathbb{I}+K\bigr)\xi\\
				&=V(g)T\Psi + V(g)T^\dagger\bigl(\mathbb{I}+K\bigr)\xi=V(g)T_K\Psi\;,
		\end{align*}
		where we have used Proposition \ref{adjointinvariant}.
		
		\item[$(\Rightarrow)$] Suppose that we have the self-adjoint extension defined by the unitary $$K'=V(g)KV(g)^\dagger\;.$$
		If we consider the domain $\D(T_{K'})$ defined by this unitary we have that
		\begin{align*}
			\D(T_{K'})&=\D(T) + \bigl(\mathbb{I}+V(g)KV(g)^\dagger\bigr)\mathcal{N}_+\\
				&=V(g)\D(T) + V(g)\bigl(\mathbb{I}+K\bigr)V(g)^\dagger\mathcal{N}_+\\
				&=V(g)\D(T_K)=\D(T_K)\;,
		\end{align*}
		where we have used again Proposition \ref{adjointinvariant}, Corollary \ref{cor:invariantdeficiency} and Lemma \ref{subsetequal}. Now von Neumann's theorem, Theorem \ref{thmvonNeumann}, establishes a one-to-one correspondence between isometries $K:\mathcal{N}_+\to\mathcal{N}_-$ and self-adjoint extensions of the operator $T$\,. Therefore $K=K'=V(g)KV(g)^\dagger$ and the statement follows.
	\end{itemize}
\end{proof}

We want to finish this section by pointing out that the important condition for a group to be a symmetry of a self-adjoint extension, i.e., for a self-adjoint extension $T_K$ to be $G$-invariant, is that the unitary representation of the group  leaves the domain $\D(T_K)$ invariant. The commutation property follows from this latter one.


\section{Invariant quadratic forms}
\label{sec:InvariantQF}

It is time now to introduce the analogue concept to $G$-invariant operators for quadratic forms. As before we  assume that there is a group $G$ that has a unitary representation in the Hilbert space $\H$\,.

\begin{definition}
	Let $Q$ be a quadratic form with domain $\D$ and let $V:G\to\mathcal{U}(\H)$ be a unitary representation of the group $G$\,. We will say that the quadratic form is \textbf{$G$-invariant} if $V(g)\D\subset\D$  for all $g\in G$ and $$Q(V(g)\Phi)=Q(\Phi)\quad\forall \Phi\in\D, \forall g\in G\;.$$
\end{definition}

It is clear by the polarisation identity ,cf., Eq. \eqref{polarization id}, that if $Q$ is $G$-invariant, then $Q(V(g)\Phi,V(g)\Psi)=Q(\Phi,\Psi)$\,, $g\in G$\,. We can now relate the two notions of $G$-invariance, for operators and for quadratic forms.

\begin{theorem}\label{QAinvariant}
	Let $Q$ be a closed, semi-bounded quadratic form with domain $\D$ and let $T$ be the associated semi-bounded, self-adjoint operator. The quadratic form $Q$ is $G$-invariant iff the operator $T$ is $G$-invariant.
\end{theorem}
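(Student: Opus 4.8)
The plan is to use Kato's representation theorem (Theorem~\ref{fundteo}) as the sole bridge between the form and the operator. Throughout I will exploit three of its consequences: the characterisation that $\Psi\in\D(T)$ iff $\Psi\in\D$ and $Q(\Phi,\Psi)=\scalar{\Phi}{\chi}$ for all $\Phi\in\D$ (with $T\Psi=\chi$), the identity $Q(\Phi,\Psi)=\scalar{\Phi}{T\Psi}$ for $\Phi\in\D$, $\Psi\in\D(T)$, and the fact that $\D(T)$ is a core for $Q$. I will also use that $G$-invariance of $Q$ upgrades via the polarisation identity \eqref{polarization id} to $Q(V(g)\Phi,V(g)\Psi)=Q(\Phi,\Psi)$, and that $V(g)\D=\D$ whenever $V(g)\D\subset\D$, by Lemma~\ref{subsetequal}.

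For the implication that $G$-invariance of $Q$ forces $G$-invariance of $T$, I would fix $\Psi\in\D(T)$ and $g\in G$, and test $V(g)\Psi$ against Kato's characterisation. Since $V(g)\D=\D$, for every $\Phi\in\D$ the vector $V(g^{-1})\Phi$ lies in $\D$, and the polarised invariance together with unitarity gives
\begin{align*}
Q(\Phi,V(g)\Psi)&=Q\bigl(V(g^{-1})\Phi,\Psi\bigr)=\scalar{V(g^{-1})\Phi}{T\Psi}=\scalar{\Phi}{V(g)T\Psi}\;.
\end{align*}
By the characterisation in Theorem~\ref{fundteo}~i), this identity for all $\Phi\in\D$ shows $V(g)\Psi\in\D(T)$ and $T\,V(g)\Psi=V(g)\,T\Psi$, i.e.\ $V(g)\D(T)\subset\D(T)$ and $\comm{T}{V(g)}\Psi=0$. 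Hence $T$ is $G$-invariant.

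For the converse, I would first establish the invariance on the core. For $\Psi\in\D(T)$, $G$-invariance of $T$ gives $V(g)\Psi\in\D(T)$, so $Q(V(g)\Psi)=\scalar{V(g)\Psi}{T V(g)\Psi}=\scalar{V(g)\Psi}{V(g)T\Psi}=\scalar{\Psi}{T\Psi}=Q(\Psi)$. Combined with unitarity this yields $\normm{V(g)\Psi}_Q=\normm{\Psi}_Q$ for all $\Psi\in\D(T)$, so $V(g)$ is a form-norm isometry of the core. The remaining, and main, obstacle is to propagate both the inclusion $V(g)\D\subset\D$ and the equality $Q(V(g)\Phi)=Q(\Phi)$ from $\D(T)$ to all of $\D$. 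I would do this by a closedness argument: given $\Phi\in\D$, pick $\{\Psi_n\}\subset\D(T)$ with $\normm{\Psi_n-\Phi}_Q\to0$ (possible since $\D(T)$ is a core); the isometry property makes $\{V(g)\Psi_n\}$ Cauchy in $\normm{\cdot}_Q$, and closedness of $Q$ gives a limit $\Phi'\in\D$. Because $\normm{\cdot}_Q$ dominates $\norm{\cdot}$ while $V(g)$ is $\norm{\cdot}$-continuous, the two limits must agree, forcing $\Phi'=V(g)\Phi\in\D$. Finally, writing $Q(\cdot)=\normm{\cdot}_Q^2-(1+a)\norm{\cdot}^2$ and passing to the limit yields $Q(V(g)\Phi)=\lim Q(\Psi_n)=Q(\Phi)$, completing the proof that $Q$ is $G$-invariant.

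The delicate point is entirely in this last extension step: invariance is immediate only on $\D(T)$, and one must check that the form-norm isometry together with the closedness of $Q$ transports it to the genuinely larger form domain $\D$, rather than merely to the operator domain.
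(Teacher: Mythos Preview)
Your proof is correct and follows essentially the same approach as the paper's own proof: both directions use Kato's representation theorem in the same way, with the forward direction testing $V(g)\Psi$ against the characterisation of $\D(T)$, and the converse establishing invariance on the core $\D(T)$ and then extending to $\D$ via the form-norm isometry and closedness of $Q$. Your presentation is, if anything, slightly more careful in explicitly invoking Lemma~\ref{subsetequal} and in spelling out why the $\normm{\cdot}_Q$-limit must coincide with $V(g)\Phi$.
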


\begin{proof}\noindent
	\begin{itemize}
		\item[$(\Rightarrow)$]  By Theorem \ref{fundteo}, $\Psi\in\D(T)$ iff $\Psi\in\D$ and it exists $\chi\in\H$ such that $$Q(\Phi,\Psi)=\scalar{\Phi}{\chi}\quad\forall \Phi\in\D\;.$$ Let $\Psi\in\D(T)$\,. Now we have that
			\begin{align*}
				Q(\Phi,V(g)\Psi)&=Q(V(g)^\dagger\Phi,\Psi)\\
					&=\scalar{V(g)^\dagger\Phi}{\chi}=\scalar{\Phi}{V(g)\chi}\;.
			\end{align*}
			This implies that $V(g)\Psi\in\D(T)$\,. Moreover, it implies that $$TV(g)\Psi=V(g)\chi=V(g)T\Psi\,,\quad \Psi\in\D(T)\,,g\in G\;.$$
		\item[$(\Leftarrow)$] To prove this implication we are going to use he fact that $\D(T)$ is a core for the quadratic form. Let $\Phi,\Psi\in\D(T)$\,. Then we have that $$Q(\Phi,\Psi)=\scalar{\Phi}{T\Psi}=\scalar{V(g)\Phi}{V(g)T\Psi}=Q(V(g)\Phi,V(g)\Psi))\;.$$ These equalities show that the $G$-invariance condition is true at least for the elements in the domain of the operator. Now let $\Psi\in\D$\,. Then it exists $\{\Psi_n\}\in\D(T)$ such that $\normm{\Psi_n-\Psi}_Q\to 0$\,. This, together with the equality above, implies that $\{V(g)\Psi_n\}$ is a Cauchy sequence with respect to $\normm{\cdot}_{Q}$\,. Since $Q$ is closed, the limit of this sequence is in $\D$\,. Moreover it is clear that $\H-\lim_{n\to\infty}V(g)\Psi_n=V(g)\Psi$\,, which implies that $\normm{V(g)\Psi_n-V(g)\Psi}_Q\to0$\,. 
		
		So far we have proved that $V(g)\D\subset\D$\,. Now let $\Phi,\Psi\in\D$ and let $\{\Phi_n\},\{\Psi_n\}\subset\D(T)$ respectively be sequences converging to them in the norm $\normm{\cdot}_{Q}$\,. Then
		\begin{align*}
		Q(\Phi,\Psi)&=\lim_{n\to\infty}\lim_{m\to\infty}Q(\Phi_n,\Psi_m)\\
			&=\lim_{n\to\infty}\lim_{m\to\infty}Q(V(g)\Phi_n,V(g)\Psi_m)=Q(V(g)\Phi,V(g)\Psi)\;.
		\end{align*}
	\end{itemize}
\end{proof}

This result allows us to reformulate Kato's representation theorem, Theorem \ref{fundteo}, in the following way.

\begin{theorem}\label{InvariantRepresentation}
	Let $Q$ be a $G$-invariant, semi-bounded quadratic form with lower bound $a$ and domain $\D$\,. The following statements are equivalent:
	\begin{enumerate}
		\item There is a $G$-invariant, lower semi-bounded, self-adjoint operator $T$ on $\H$ that represents the quadratic form, i.e., $$Q(\Phi,\Psi)=\scalar{\Phi}{T\Psi}\quad\forall \Phi\in\D,\forall \Psi\in \D(T)\;.$$
		\item The quadratic form is closed and its domain is a Hilbert space with respect to the inner product
		$$\scalar{\Phi}{\Psi}_Q=(1+a)\scalar{\Phi}{\Psi}+Q(\Phi,\Psi)\;.$$ 
	\end{enumerate}
\end{theorem}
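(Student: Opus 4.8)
The plan is to derive this statement as a reformulation of Kato's representation theorem (Theorem \ref{fundteo}) augmented by the invariance transfer of Theorem \ref{QAinvariant}. First I would record a reduction that makes Theorem \ref{fundteo} applicable: semi-boundedness already forces hermiticity, since $Q(\Phi)\geq -a\norm{\Phi}^2$ is real for every $\Phi\in\D$ and the polarisation identity \eqref{polarization id} then yields $Q(\Phi,\Psi)=\overline{Q(\Psi,\Phi)}$. Thus $Q$ is an Hermitean, semi-bounded form, and the only missing hypothesis for Theorem \ref{fundteo} is closedness, which is exactly what statement (ii) supplies or statement (i) must be shown to entail.

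For the implication (ii) $\Rightarrow$ (i), the hypothesis that $\D$ is a Hilbert space with respect to $\scalar{\cdot}{\cdot}_Q$, i.e.\ that $\D$ is complete in $\normm{\cdot}_Q$, is precisely the assertion that $Q$ is closed. Theorem \ref{fundteo} then produces a unique self-adjoint, semi-bounded operator $T$ with the same lower bound representing $Q$, that is $Q(\Phi,\Psi)=\scalar{\Phi}{T\Psi}$ for $\Phi\in\D$, $\Psi\in\D(T)$. Since $Q$ is $G$-invariant and closed, Theorem \ref{QAinvariant} applies verbatim and shows that this representing operator $T$ is itself $G$-invariant, which is statement (i). All of the symmetry content of the theorem is packaged into this single invocation of Theorem \ref{QAinvariant}.

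For the converse (i) $\Rightarrow$ (ii), I would start from a self-adjoint, semi-bounded operator $T$ representing $Q$ and reconstruct the closed form canonically attached to $T$ through its spectral resolution $E_\lambda$ (Theorem \ref{spectraltheorem}, Remark \ref{rem:represntability}), namely $\hat{Q}(\Phi,\Psi)=\int_{\mathbb{R}}\lambda\,\d\scalar{\Phi}{E_\lambda\Psi}$ on its natural domain, which is complete in $\normm{\cdot}_{\hat{Q}}$ by construction. The representing relation $Q(\Phi,\Psi)=\scalar{\Phi}{T\Psi}$ forces $Q$ and $\hat{Q}$ to coincide on $\D(T)\times\D(T)$; then, using that $\D(T)$ is a core for $\hat{Q}$ and that for each fixed $\Phi\in\D$ the functional $\Psi\mapsto\scalar{\Phi}{T\Psi}$ is $\normm{\cdot}_{\hat{Q}}$-bounded on $\D(T)$, I would identify $\D$ with the form domain of $\hat{Q}$ and conclude $Q=\hat{Q}$. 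Closedness of $\hat{Q}$ is then inherited by $Q$, giving statement (ii).

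The step I expect to be the main obstacle is this last identification of domains in the direction (i) $\Rightarrow$ (ii). As emphasised in the introduction, Hermitean semi-bounded forms need not even be closable, so one cannot merely complete $\D$ and hope for the best; the genuine work is to exploit the \emph{existence} of the representing self-adjoint operator $T$ to upgrade $Q$ from representable to closed and to show that its domain cannot be strictly smaller or larger than the form domain of $T$. The $G$-invariance plays no role in this converse and is simply carried along, so the technical difficulty is entirely concentrated in the domain-matching argument, while the symmetry aspect reduces to a clean appeal to Theorem \ref{QAinvariant}.
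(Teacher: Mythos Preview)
Your proposal matches the paper's approach exactly: the paper's entire proof reads ``The equivalence is direct application of Kato's representation theorem (Theorem~\ref{fundteo}) and Theorem~\ref{QAinvariant}'', which is precisely your plan of invoking Theorem~\ref{fundteo} for the analytic content and Theorem~\ref{QAinvariant} for the transfer of $G$-invariance.

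The difference is one of care rather than strategy. You flag the domain-matching in the direction (i) $\Rightarrow$ (ii) as the main obstacle, and you are right that, read literally, statement~(i) does not pin down $\D$: any dense subspace between $\D(T)$ and the form domain of $T$ would satisfy the displayed representation identity, yet such a $Q$ need not be closed. The paper does not address this at all; it treats the theorem as a pure reformulation of Theorem~\ref{fundteo} with the word ``$G$-invariant'' inserted on both sides, implicitly reading ``represents'' in the strong sense of Kato's theorem (so that $\D$ \emph{is} the form domain of $T$). Your attempt to justify the identification via the spectral form $\hat{Q}$ and the core property of $\D(T)$ is the honest way to close this gap, and it goes beyond what the paper supplies. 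So your approach is the same as the paper's, only more scrupulous about a point the paper leaves tacit.
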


\begin{proof}
	The equivalence is direct application of Kato's representation theorem (Theorem \ref{fundteo}) and Theorem \ref{QAinvariant}\,. 
\end{proof}

Recall the scales of Hilbert spaces introduced in Section 2.4.

\begin{theorem}
Let $Q$ be a closed, semi-bounded, $G$-invariant quadratic form with lower bound $a$\,. Then
	\begin{enumerate}
		\item $V$ restricts to a unitary representation on $\mathcal{H}_+:=\mathcal{D}(Q)\subset\mathcal{H}$ with scalar product given by
$$\scalar{\Phi}{\Psi}_+:=\scalar{\Phi}{\Psi}_Q=(1+a)\scalar{\Phi}{\Psi}+Q(\Phi,\Psi)\,,\quad \Phi,\Psi\in\H_+\;.$$

		\item $V$ extends to a unitary representation on $\mathcal{H}_-$ and we have, on $\mathcal{H}_-$\,,
\begin{equation}\label{eq:commute-I}
	V(g) I=I V(g)\;,\quad g\in G\;,
\end{equation}
where $I\colon \mathcal{H}_-\to \mathcal{H}_+$ is the canonical isometric bijection of Definition \ref{def:scales}.
	\end{enumerate}
\end{theorem}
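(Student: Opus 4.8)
The plan is to treat the two items separately, establishing the restriction to $\mathcal{H}_+$ first and then bootstrapping to $\mathcal{H}_-$ through the Riesz construction of $\hat I$ recalled in Section~2.4.

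For item (i), I would start from the $G$-invariance of $Q$, which gives $V(g)\mathcal{D}\subset\mathcal{D}$ and, after polarising via Eq.~\eqref{polarization id}, $Q(V(g)\Phi,V(g)\Psi)=Q(\Phi,\Psi)$ for all $\Phi,\Psi\in\mathcal{D}$ and $g\in G$. Combining this with the unitarity of $V(g)$ on $\mathcal{H}$ I would compute
\begin{align*}
\scalar{V(g)\Phi}{V(g)\Psi}_+ &= (1+a)\scalar{V(g)\Phi}{V(g)\Psi}+Q(V(g)\Phi,V(g)\Psi)\\
&=(1+a)\scalar{\Phi}{\Psi}+Q(\Phi,\Psi)=\scalar{\Phi}{\Psi}_+\;,
\end{align*}
so that each $V(g)$ is a $\scalar{\cdot}{\cdot}_+$-isometry of $\mathcal{H}_+$ into itself. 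Surjectivity is then immediate from Lemma~\ref{subsetequal} applied to $\mathcal{D}=\mathcal{H}_+$, which yields $V(g)\mathcal{H}_+=\mathcal{H}_+$; together with the completeness of $(\mathcal{H}_+,\norm{\cdot}_+)$ guaranteed by the closedness of $Q$, this shows that $g\mapsto V(g)$ restricts to a unitary representation on $\mathcal{H}_+$, the homomorphism property being inherited from the representation on $\mathcal{H}$.

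The heart of item (ii) is the intertwining relation $\hat I V(g)=V(g)\hat I$ on $\mathcal{H}$, which I would establish using the Riesz characterisation of $\hat I$. For $\Phi\in\mathcal{H}$ and any $\Psi\in\mathcal{H}_+$, writing $V(g)^\dagger\Psi=V(g^{-1})\Psi\in\mathcal{H}_+$ and using the $\mathcal{H}_+$-unitarity just proved,
\begin{align*}
\scalar{\hat I V(g)\Phi}{\Psi}_+ &= \scalar{V(g)\Phi}{\Psi}=\scalar{\Phi}{V(g^{-1})\Psi}\\
&=\scalar{\hat I\Phi}{V(g^{-1})\Psi}_+=\scalar{V(g)\hat I\Phi}{\Psi}_+\;.
\end{align*}
Since $\mathcal{H}_+$ is dense and $\scalar{\cdot}{\cdot}_+$ is non-degenerate, this forces $\hat I V(g)=V(g)\hat I$ on $\mathcal{H}$. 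From here everything follows by continuity: because $\norm{\Phi}_-=\norm{\hat I\Phi}_+$ and $V(g)$ is a $\norm{\cdot}_+$-isometry, I get $\norm{V(g)\Phi}_-=\norm{\hat I V(g)\Phi}_+=\norm{V(g)\hat I\Phi}_+=\norm{\hat I\Phi}_+=\norm{\Phi}_-$, so each $V(g)$ is a $\norm{\cdot}_-$-isometry on the dense subspace $\mathcal{H}\subset\mathcal{H}_-$ and extends by continuity; surjectivity on $\mathcal{H}_-$ then follows from $V(g)V(g^{-1})=\mathbb{I}$, giving a unitary representation on $\mathcal{H}_-$. Finally, since $I\colon\mathcal{H}_-\to\mathcal{H}_+$ is the continuous extension of $\hat I$ and $\hat I V(g)=V(g)\hat I$ already holds on the dense subspace $\mathcal{H}$, approximating $\Phi\in\mathcal{H}_-$ by elements of $\mathcal{H}$ and passing to the limit in $\norm{\cdot}_+$ yields Eq.~\eqref{eq:commute-I}.

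I expect the only delicate point to be the intertwining step $\hat I V(g)=V(g)\hat I$: one must track carefully which inner product each isometry refers to, and in particular use that the $\mathcal{H}_+$-adjoint of $V(g)$ coincides with $V(g^{-1})$, so that the Riesz identity defining $\hat I$ can be transported through $V(g)$. Everything downstream is a routine density-and-continuity argument, and the group-representation structure is inherited at each level from the representation on $\mathcal{H}$.
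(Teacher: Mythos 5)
Your proof is correct, and the essential computation coincides with the paper's: both hinge on transporting the Riesz identity defining $\hat I$ through $V(g)$ by using that the $\H_+$-adjoint of $V(g)$ is $V(g^{-1})$. The difference is purely one of direction. The paper defines the candidate extension top-down as $\widehat{V}(g):=I^{-1}V(g)I$ on $\H_-$ (so unitarity on $\H_-$ and the relation $V(g)I=IV(g)$ hold by construction) and then spends its effort showing, via the pairing $\pair{\cdot}{\cdot}$, that $\widehat{V}(g)$ restricts to $V(g)$ on the dense subspace $\H$. You work bottom-up: you first prove the intertwining $\hat I V(g)=V(g)\hat I$ on $\H$, deduce from it that $V(g)$ is a $\norm{\cdot}_-$-isometry of $\H$, and then extend by density, recovering Eq.~\eqref{eq:commute-I} by a limiting argument. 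Your route makes the commutation relation the primary object and the extension a consequence, which is arguably cleaner; the paper's route makes unitarity on $\H_-$ immediate. One cosmetic remark: in the step forcing $\hat I V(g)\Phi=V(g)\hat I\Phi$, what you actually need is only that $\scalar{\cdot}{\cdot}_+$ is non-degenerate on $\H_+$ (both vectors lie in $\H_+$ and agree against all of $\H_+$); the density of $\H_+$ in $\H$ plays no role there.
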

\begin{proof}
i) To show that the representation $V$ is unitary with respect to $\scalar{\cdot}{\cdot}_+$
    note that by definition of $G$-invariance of the quadratic form we have 
    for any $g\in G$ that $V(g)\colon\mathcal{H}_+\to \mathcal{H}_+$ and 
$$
 \scalar{V(g)\Phi}{V(g)\Psi}_+=\scalar{\Phi}{\Psi}_+\,.
$$
   Since any $V(g)$ is invertible we conclude that $V$ restricts to a unitary representation on $\mathcal{H}_+$\,.


(ii) To show that $V$ extends to a unitary representation on $\H_-$ consider first the following representation of $G$ on $\H_-$\,:

$$\widehat{V}(g)\alpha:=I^{-1}V(g)I\alpha\,,\quad\alpha\in\H_-\;.$$
This representation is unitary. Indeed, 
\begin{align*}
	\scalar{\widehat{V}(g)\alpha}{\widehat{V}(g)\beta}_-&=\scalar{I^{-1}V(g)I\alpha}{I^{-1}V(g)I\beta}_-\\
		&=\scalar{V(g)I\alpha}{V(g)I\beta}_+=\scalar{I\alpha}{I\beta}_+=\scalar{\alpha}{\beta}_-\;.
\end{align*}
The restriction of $\widehat{V}(g)$, $g\in G$\,, to $\H$ coincides with $V(g)$. Let $\Phi\in\H$ and consider the pairing of Proposition \eqref{proppairing}. Then, 
\begin{align*}
	\pair{\widehat{V}(g)\Phi}{\Psi}&=\pair{I^{-1}V(g)I\Phi}{\Psi}\\
		&=\scalar{V(g)I\Phi}{\Psi}_+\\
		&=\scalar{I\Phi}{V(g^{-1})\Psi}_+\\
		&=\pair{\Phi}{V(g^{-1})\Psi}\\
		&=\scalar{\Phi}{V(g^{-1})\Psi}\\
		&=\scalar{V(g)\Phi}{\Psi}=\pair{V(g)\Phi}{\Psi}\;,\quad\forall \Psi\in\H_+\;.
\end{align*}
Since $\widehat{V}(g)$ is a bounded operator in $\H_-$ and $\H$ is dense in $\H_-$, $\widehat{V}(g)$ is the extension of $V(g)$ to $\H_-$\,.
\end{proof}

The preceding theorem shows that the $G$-invariance of the quadratic form is equivalent to the existence of unitary representations
on the scale of Hilbert spaces
\[
 \mathcal{H}_+\subset \mathcal{H} \subset\mathcal{H}_-\;.
\]


\section{A class of invariant self-adjoint extensions of the Laplace-Beltrami operator}
\label{sec:InvariantLB}

In this section we are going to analyse the class of quadratic forms introduced in Chapter \ref{cha:QF} when there is a group of symmetries acting on the manifold. Throughout the rest of the section we are going to consider that $G$ is a compact group acting in the Riemannian manifold $(\Omega,\pO,\eta)$\,. We are going to consider that the action of the group is compatible with the restriction to the boundary. This is, the action of $G$ on $\Omega$ maps the boundary $\pO$ into itself, or in other words, the action of $G$ on $\Omega$ induces and action of $G$ on $\pO$ considered as a subset of $\Omega$. Moreover, we are going to assume that the group acts by isometric diffeomorphisms. More concretely, let $g:\Omega\to\Omega$ be the diffeomorphism associated to the element $g\in G$\,, that we will denote with the same symbol for simplicity of notation. Then we have that $g^*\eta=\eta$\,, where $g^*$ stands for the pull-back by the diffeomorphism. Automatically $G$ acts by isometries on the boundary $\partial \Omega$, i.e., $g^* \eta_{\partial\Omega} = \eta_{\partial\Omega}$ for all $g\in G$.  These isometric actions of the group $G$ induce unitary representations of the group on $\Omega$ and $\pO$\,. Let the unitary representations be denoted as follows:
\begin{equation*}
	V:G\to\mathcal{U}(\H^0(\Omega))\;,
\end{equation*}
\begin{equation*}
	V(g)\Phi=(g^{-1})^*\Phi\quad \Phi\in\H^0(\Omega)\;.
\end{equation*}
\begin{equation*}
	\mathrm{v}:G\to\mathcal{U}(\H^0(\pO))\;,\newnot{symb:v(g)}
\end{equation*}
\begin{equation*}
	\mathrm{v}(g)\varphi=(g^{-1})^*\varphi\quad \varphi\in\H^0(\pO)\;.
\end{equation*}
Then we have that
\begin{align*}
	\scalar{V(g^{-1})\Phi}{V(g^{-1})\Psi}&=\int_\Omega (\overline{\Phi\circ g})(\Psi\circ g)\,\d\mu_{\eta}\\
		&=\int_\Omega (\overline{\Phi\circ g})(\Psi\circ g)\, g^*\negthinspace\d\mu_{\eta}\\
		&=\int_{g\Omega}\overline{\Phi} \Psi \,\d\mu_{\eta}\\
		&=\scalar{\Phi}{\Psi}\;,
\end{align*}
where we have used the change of variables formula and the fact that $g^*\negthinspace\d\mu_\eta=\d\mu_{\eta}$\,. The result for the boundary is proved similarly. The induced actions are related with the trace map as follows 
\begin{equation}\label{VPhi=vphi}
	\gamma(V(g)\Phi)=\mathrm{v}(g)\gamma(\Phi)\quad\forall g\in G\;.
\end{equation}

Let us consider the family of quadratic forms associated to the Laplace-Beltrami operator introduced in Chapter \ref{cha:QF}. Consider the quadratic form 
\begin{equation}\label{QFcha5}
	Q_U(\Phi,\Psi)=\scalar{\d\Phi}{\d\Psi}-\scalarb{\gamma(\Phi)}{A_U\gamma(\Phi)}
\end{equation}
with domain
\begin{equation}\label{domaincha5}
	\D_U=\bigl\{ \Phi\in\H^1(\Omega)\bigr|P^{\bot}\gamma(\Phi)=0 \bigr\}\;,
\end{equation}
where $A_U$ is the partial Cayley transform of Definition \ref{partialCayley}. We are considering that the unitary $U$ has gap at $-1$ and the projector $P$ is the orthogonal projector onto the invertibility boundary space $W=\operatorname{Ran}E^{\bot}_{\pi}$\,, where $E_\lambda$ is the spectral resolution of the identity associated to the unitary $U$.
We are going to show the necessary and sufficient conditions for this quadratic form to be $G$-invariant. In order to do so we are going to need the next result.

\begin{proposition}\label{prop:dphiinvariant}
	Let $G$ be a compact group that acts by isometric diffeomorphisms in the Riemannian manifold $\Omega$\,. The quadratic form defined by $$\scalar{\d\Phi}{\d\Psi}$$ with domain $\H^1(\Omega)$ is $G$-invariant.
\end{proposition}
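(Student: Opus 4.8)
The plan is to show directly that the quadratic form $Q(\Phi,\Psi)=\scalar{\d\Phi}{\d\Psi}$ on the domain $\H^1(\Omega)$ is invariant under the unitary representation $V$ induced by the isometric action of $G$. There are two things to verify, corresponding precisely to the two clauses in the definition of $G$-invariance: first that $V(g)\H^1(\Omega)\subset\H^1(\Omega)$ for every $g\in G$, and second that $Q(V(g)\Phi)=Q(\Phi)$ for all $\Phi\in\H^1(\Omega)$ and all $g\in G$.

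For the domain invariance, I would first establish the claim on smooth functions, which are dense in $\H^1(\Omega)$, and then pass to the closure. Since each $g\in G$ acts as a diffeomorphism $g\colon\Omega\to\Omega$, the pull-back $(g^{-1})^\star$ maps $\C^\infty(\Omega)$ into $\C^\infty(\Omega)$, so $V(g)\Phi=(g^{-1})^\star\Phi$ is smooth whenever $\Phi$ is. To see that $V(g)$ preserves the $\H^1$-norm (and hence extends to all of $\H^1(\Omega)$ by continuity), I would use the characterisation of the $\norm{\cdot}_1$ norm from Proposition~\ref{equivalentsobolev}, namely $\norm{\Phi}_1^2\sim\norm{\d\Phi}^2+\norm{\Phi}^2$. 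The $\H^0$ part is already handled by the isometry computation carried out just before the statement, using $g^\star\d\mu_\eta=\d\mu_\eta$.

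The heart of the proof is the invariance of $\scalar{\d\Phi}{\d\Psi}$ itself, and this is where the key geometric input enters. The crucial point is that the exterior differential commutes with pull-back, $\d\bigl((g^{-1})^\star\Phi\bigr)=(g^{-1})^\star(\d\Phi)$, so that $\d(V(g)\Phi)=(g^{-1})^\star\d\Phi$. The integrand $\eta^{-1}(\d\bar\Phi,\d\Psi)$ is a function built from the metric and the one-forms $\d\Phi,\d\Psi$; since $g$ is an isometry, $g^\star\eta=\eta$ and hence $g^\star\eta^{-1}=\eta^{-1}$ on the cotangent bundle. Therefore the scalar pointwise contraction is preserved under simultaneous pull-back of the metric and the forms, and combined with the volume-form invariance $g^\star\d\mu_\eta=\d\mu_\eta$ and the change-of-variables formula one obtains
\begin{align*}
	Q(V(g)\Phi)&=\int_\Omega\eta^{-1}\bigl(\d\overline{V(g)\Phi},\d V(g)\Phi\bigr)\d\mu_\eta\\
		&=\int_\Omega\eta^{-1}\bigl((g^{-1})^\star\d\bar\Phi,(g^{-1})^\star\d\Phi\bigr)\d\mu_\eta\\
		&=\int_\Omega(g^{-1})^\star\bigl[\eta^{-1}(\d\bar\Phi,\d\Phi)\bigr]\d\mu_\eta\\
		&=\int_\Omega\eta^{-1}(\d\bar\Phi,\d\Phi)\d\mu_\eta=Q(\Phi)\;,
\end{align*}
where the passage to the third line uses the isometry property $g^\star\eta^{-1}=\eta^{-1}$ and in the last step the change of variables together with $g^\star\d\mu_\eta=\d\mu_\eta$.

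The main obstacle, such as it is, is purely technical rather than conceptual: the naturality identities $\d\circ(g^{-1})^\star=(g^{-1})^\star\circ\d$ and $g^\star\eta^{-1}=\eta^{-1}$ hold immediately for smooth $\Phi$, so the clean computation above is valid on $\C^\infty(\Omega)$. To conclude the identity on all of $\H^1(\Omega)$, I would invoke density of $\C^\infty(\Omega)$ in $\H^1(\Omega)$ together with the continuity of both $V(g)$ (already shown to be an $\H^1$-isometry) and of the sesquilinear form $Q$ with respect to $\norm{\cdot}_1$, so that the equality extends by a standard limiting argument. Thus the only real care needed is bookkeeping the density and continuity, which is routine once the isometry property of $V(g)$ on $\H^1(\Omega)$ is in hand.
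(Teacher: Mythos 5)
Your proposal is correct and follows essentially the same route as the paper: commute the pull-back with the exterior differential, use the isometry property $g^\star\eta^{-1}=\eta^{-1}$ together with $g^\star\d\mu_\eta=\d\mu_\eta$ and the change-of-variables formula, and deduce domain invariance from the equivalence of $\norm{\cdot}_1$ with $\sqrt{\norm{\d\cdot}^2+\norm{\cdot}^2}$. The only difference is presentational: you first argue on $\C^\infty(\Omega)$ and extend by density, while the paper performs the computation directly and reads off $V(g)\H^1(\Omega)=\H^1(\Omega)$ from it; both are fine.
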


\begin{proof}
	First notice that the pull-back of a diffeomorphism commutes with the action of the exterior differential, cf., \cite[Theorem 7.4.4]{marsden01}. Then we have that $$\d(V(g^{-1})\Phi)=\d(g^*\Phi)=g^*\negthinspace\d\Phi\;.$$ Hence
	\begin{subequations}\label{dphiinvariant}
	\begin{align}
		\scalar{\d(V(g^{-1})\Phi)}{\d(V(g^{-1})\Psi)}&=\int_\Omega \eta^{-1}(g^*\negthinspace\d\Phi,g^*\negthinspace\d\Psi)\d\mu_\eta\\
		&=\int_\Omega g^*\negthickspace\left( \eta^{-1}(\d\Phi,\d\Psi) \right)g^*\negthinspace\d\mu_\eta\\
		&=\int_{g\Omega}\eta^{-1}(\d\Phi,\d\Psi)\d\mu_\eta\\
		&=\scalar{\d\Phi}{\d\Psi}\;,
	\end{align}
	\end{subequations}
where in the second inequality we have used that $g:\Omega\to\Omega$ is an isometry and therefore $$\eta^{-1}(g^*\negthinspace\d\Phi,g^*\negthinspace\d\Psi)=g^*\negthinspace \eta^{-1}(g^*\negthinspace\d\Phi,g^*\negthinspace\d\Psi)=g^*\negthickspace\left( \eta^{-1}(\d\Phi,\d\Psi) \right)\;.$$ The equations \eqref{dphiinvariant} guaranty also that $V(g)\H^1(\Omega)=\H^1(\Omega)$ since $V(g)$ is a unitary operator in $\H^0$ and by Remark \ref{equivalentsobolev} the norm $\sqrt{\norm{\operatorname{d}\cdot\;}^2+\norm{\cdot}^2}$ is equivalent to the Sobolev norm of order 1.
\end{proof}

Now we are able to prove the following theorem:

\begin{theorem}\label{repcommutation}
	Let $\mathrm{v}:G\to\mathcal{U}(\H^0(\pO))$ be the induced unitary representation of the group $G$ at the boundary. Then the quadratic form $Q_U$ of Definition \ref{DefQU} is $G$-invariant iff
	$$\comm{\mathrm{v}(g)}{U}\gamma(\Phi)=0\quad \forall\Phi\in\D_U\;.$$
\end{theorem}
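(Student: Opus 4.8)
I want to show that the quadratic form $Q_U$ is $G$-invariant precisely when $\comm{\mathrm{v}(g)}{U}\gamma(\Phi)=0$ for all $\Phi\in\D_U$ and all $g\in G$. The form $Q_U$ has two pieces: the ``bulk'' term $\scalar{\d\Phi}{\d\Psi}$ and the boundary term $\scalarb{\gamma(\Phi)}{A_U\gamma(\Phi)}$. The plan is to observe that the bulk term is \emph{automatically} $G$-invariant by Proposition~\ref{prop:dphiinvariant}, so that the whole invariance question reduces entirely to the boundary term and, in parallel, to the question of whether $V(g)$ preserves the domain $\D_U$.

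**Main steps.** First I would record the two things that must be checked for $G$-invariance of $(Q_U,\D_U)$: (a) $V(g)\D_U\subset\D_U$ for all $g\in G$, and (b) $Q_U(V(g)\Phi)=Q_U(\Phi)$ on $\D_U$. For (a), since $V(g)\H^1(\Omega)=\H^1(\Omega)$ (Proposition~\ref{prop:dphiinvariant}) and using the intertwining relation $\gamma(V(g)\Phi)=\mathrm{v}(g)\gamma(\Phi)$ of Eq.~\eqref{VPhi=vphi}, the condition $P^\bot\gamma(V(g)\Phi)=0$ becomes $P^\bot \mathrm{v}(g)\gamma(\Phi)=0$; so preservation of the domain is equivalent to $\mathrm{v}(g)$ mapping $W^\bot=\operatorname{Ran}P^\bot$ into itself, i.e. to $\mathrm{v}(g)$ commuting with $P$ on the relevant subspace. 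For (b), Proposition~\ref{prop:dphiinvariant} kills the bulk contribution, so the only surviving requirement is invariance of the boundary term. Using Eq.~\eqref{VPhi=vphi} again,
\begin{align*}
	Q_U(V(g)\Phi)-\scalar{\d V(g)\Phi}{\d V(g)\Phi}
		&=-\scalarb{\mathrm{v}(g)\gamma(\Phi)}{A_U\,\mathrm{v}(g)\gamma(\Phi)}\\
		&=-\scalarb{\gamma(\Phi)}{\mathrm{v}(g)^\dagger A_U\,\mathrm{v}(g)\gamma(\Phi)}\;,
\end{align*}
so invariance of the boundary term is equivalent to $\mathrm{v}(g)^\dagger A_U\,\mathrm{v}(g)=A_U$ on $\gamma(\D_U)$, that is $\comm{\mathrm{v}(g)}{A_U}\gamma(\Phi)=0$.

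**Reducing $A_U$ to $U$.** The crux is then to translate the two conditions, namely $\comm{\mathrm{v}(g)}{P}=0$ and $\comm{\mathrm{v}(g)}{A_U}=0$ on the boundary data, into the single clean statement $\comm{\mathrm{v}(g)}{U}\gamma(\Phi)=0$. Here I would use the definition $A_U=\mathbf{i}\,P(U-\mathbb{I})(U+\mathbb{I})^{-1}$ together with the functional-calculus description of $P$ as the spectral projection of $U$ away from $-1$ (Definition~\ref{P,boundary}). If $\mathrm{v}(g)$ commutes with $U$, then it commutes with every function of $U$ in the Borel functional calculus, in particular with the spectral projection $P$ and with the bounded operator $(U+\mathbb{I})^{-1}$ restricted to $W$, hence with $A_U$; this gives the easy implication. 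For the converse I would argue that $U$ is recovered from the pair $(P,A_U)$: on $W=\operatorname{Ran}P$ one has $U=(\mathbb{I}+\mathbf{i}A_U)^{-1}(\mathbb{I}-\mathbf{i}A_U)$ (the Cayley inverse), while on $W^\bot$, $U$ acts as $-\mathbb{I}$; so commuting with $P$ and $A_U$ forces commuting with $U$ on all of $\H^0(\pO)$. This mutual determination is the heart of the equivalence.

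**Expected obstacle.** The main subtlety is bookkeeping about \emph{where} the commutation holds: the hypotheses are stated only on $\gamma(\D_U)=W$ (since $P^\bot\gamma(\Phi)=0$ forces $\gamma(\Phi)\in W$), whereas $U$ and $\mathrm{v}(g)$ act on all of $\H^0(\pO)$. I expect the delicate point to be checking that $\mathrm{v}(g)$ really leaves $W$ invariant --- which is exactly the domain-preservation condition (a) --- before one may freely invert the Cayley transform on $W$; one must handle the $W$ and $W^\bot$ pieces separately and confirm that the stated condition $\comm{\mathrm{v}(g)}{U}\gamma(\Phi)=0$ on $\D_U$ simultaneously encodes both $\mathrm{v}(g)W\subset W$ and the commutation with $A_U$ there. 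Once the decomposition $\H^0(\pO)=W\oplus W^\bot$ is shown to be $\mathrm{v}(g)$-reducing, the functional-calculus argument closes the equivalence cleanly.
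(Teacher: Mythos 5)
Your proposal is correct and follows essentially the same route as the paper: reduce to the boundary term via Proposition~\ref{prop:dphiinvariant} and the relation $\gamma(V(g)\Phi)=\mathrm{v}(g)\gamma(\Phi)$, identify domain preservation with commutation with $P^{\bot}$ and invariance of the boundary term with commutation with $A_U$, and then trade the pair $(P^{\bot},A_U)$ for $U$ itself. The only difference is that you explicitly justify that last trade (via the inverse Cayley transform on $W$ and functional calculus, with the gap condition making $P$ and $A_U$ continuous functions of $U$), a step the paper's proof merely asserts with a ``Notice that''; your flagged obstacle about the commutation holding only on $W$ is exactly the right point to be careful about, and is resolved as you indicate.
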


\begin{proof}
	By Proposition \ref{prop:dphiinvariant} it is enough to show that $\scalarb{\gamma(\Phi)}{A_U\gamma(\Phi)}$ defined on $\D_U$ is $G$-invariant iff $$\comm{\mathrm{v}(g)}{U}\gamma(\Phi)=0\quad \forall\Phi\in\D_U\;.$$
	\begin{itemize}
		\item[$(\Leftarrow)$] Let $\Phi\in\D_U$\,. Notice that $\comm{\mathrm{v}(g)}{U}\gamma(\Phi)=0$ is equivalent to  
			\begin{equation*}
				\comm{\mathrm{v}(g)}{P^\bot}\gamma(\Phi)=0\quad\text{and}\quad
				\comm{\mathrm{v}(g)}{A_U}\gamma(\Phi)=0\;,
			\end{equation*}
		where $P^\bot$ is the orthogonal projector onto $\operatorname{Ran}E_{\{\pi\}}$\,, the closed subspace associated to the eigenvalue $-1$ of the unitary operator $U$\,, cf., Definition \ref{P,boundary} and Definition \ref{partialCayley}. From the former equality we have that $$P^\bot\gamma(V(g)\Phi)=P^\bot\mathrm{v}(g)\gamma(\Phi)=\mathrm{v}(g)P^\bot\gamma(\Phi)=0\;,$$ where we have used Eq.~\eqref{VPhi=vphi}. Hence we have that $V(g)\D_U\subset\D_U$\,. From the commutation relation for the partial Cayley transform $A_U$ we have that
		\begin{align*}
			\scalarb{\mathrm{v}(g)\gamma(\Phi)}{A_U\mathrm{v}(g)\gamma(\Phi)}&=\scalarb{\mathrm{v}(g)\gamma(\Phi)}{\mathrm{v}(g)A_U\gamma(\Phi)}\\
				&=\scalarb{\gamma(\Phi)}{A_U\gamma(\Phi)}\;.
			\end{align*}
		\item[$(\Rightarrow)$] Let $\Phi\in\D_U$\,, then $P^\bot\gamma(\Phi)=0$ and clearly $$\mathrm{v}(g)P^\bot\gamma(\Phi)=0\;.$$ Since the quadratic form is $G$-invariant we have that $V(g)\D_U\subset\D_U$ and therefore $P^\bot\mathrm{v}(g)\gamma(\Phi)=P^\bot\gamma(V(g)\Phi)=0\;.$ Hence $$\comm{\mathrm{v}(g)}{P^\bot}\gamma(\Phi)=0\;.$$ On the other hand we have that 
		\begin{equation*}
			\scalarb{\psi}{A_U\varphi}=\scalarb{\psi}{\mathrm{v}(g)^\dagger A_U\mathrm{v}(g)\varphi}\,,\quad \psi,\varphi\in P\H^{1/2}(\pO)\;.
		\end{equation*}
		Since the subspace $P\H^{1/2}(\pO)$ is a dense subspace of the invertibility boundary space $W=\operatorname{Ran}E^\bot_{\{\pi\}}$\,, see Definition \ref{P,boundary}, the equality above shows that $$\norm{\left(A_U-\mathrm{v}(g)^\dagger A_U\mathrm{v}(g)\right)\varphi}_W=0\,,\quad\forall\varphi\in P\H^{1/2}(\pO)\;,$$
		and therefore $$\comm{\mathrm{v}(g)}{A_U}\gamma(\Phi)=0\;.$$
	\end{itemize}
\end{proof}

Summarising we can say that given a compact group acting on a Riemannian manifold, if we are able to find an admissible unitary operator on the boundary verifying the commutation relation above, we can describe self-adjoint extensions of the Laplace-Beltrami operator that are $G$-invariant.

\section{Examples}
\label{sec:SymmetryExamples}

In this section we introduce two particular examples of $G$-invariant quadratic forms. In the first example we are considering a situation where the symmetry group is a discrete group. In the second one we consider $G$ to be a  genuine compact Lie group.

\begin{example}
	Let $\Omega$ be the cylinder $[-1,1]\times[-1,1]/\negthickspace\sim$\,, where $\sim$ is the equivalence relation $(-1,y)\sim(1,y)$\,. The boundary $\pO$ is the disjoint union of the two circles $\Gamma_1=\bigl\{[-1,1]\times\{-1\}/\negthickspace\sim\bigr\}$ and $\Gamma_2=\bigl\{[-1,1]\times\{1\}/\negthickspace\sim\bigr\}$\,. Let $\eta$ be the euclidean metric. Now consider that $G$ is the discrete, abelian group of two elements, $\{e,f\}$ and consider the following action on $\Omega$:
	\begin{align*}
		e:(x,y)&\to(x,y)\;,\\
		f:(x,y)&\to(-x,y)\;.
	\end{align*}
The induced action on the boundary is 
	\begin{align*}
		e:(\pm1,y)\to(\pm1,y)\;,\\
		f:(\pm1,y)\to(\mp1,y)\;.
	\end{align*}
Clearly $G$ transforms $\Omega$ onto itself and preserves the boundary. Moreover it is easy to check that $f^*\eta=\eta$\,.

Since the boundary $\pO$ consists of two disjoints manifolds $\Gamma_1$ and $\Gamma_2$\,, the Hilbert space of the boundary is $\H^0(\pO)=\H^0(\Gamma_1)\oplus\H^0(\Gamma_2)$\,. Let $\Phi\in\H^0(\pO)$\,. Then we can represent it using the associated block structure, namely: $$\Phi=\begin{pmatrix}\Phi_1(y)\\ \Phi_2(y)\end{pmatrix}$$ with $\Phi_i\in\H^0(\Gamma_i)$\,. The identity element will not introduce any restriction so let us concentrate on the element $f$\,. The induced unitary action on $\H^0(\pO)$ is therefore:
\begin{align*}
	\mathrm{v}(f)\begin{pmatrix}\Phi_1(y)\\ \Phi_2(y)\end{pmatrix}&=\mathrm{v}(f)\begin{pmatrix}\Phi(-1,y)\\ \Phi(1,y)\end{pmatrix}\\
	&=\begin{pmatrix}\Phi(1,y)\\ \Phi(-1,y)\end{pmatrix}=\begin{pmatrix}\Phi_2(y)\\ \Phi_1(y)\end{pmatrix}=\begin{pmatrix}0 & \mathbb{I}\\ \mathbb{I} & 0\end{pmatrix}\begin{pmatrix}\Phi_1(y)\\ \Phi_2(y)\end{pmatrix}\;.
\end{align*}
The set of unitary operators that describe the closable quadratic forms as defined in Chapter \ref{cha:QF} is given by unitary operators $$U=\begin{pmatrix} U_{11} & U_{12} \\ U_{21} & U_{22} \end{pmatrix}\;,$$ with $U_{ij}=\H^0(\Gamma_j)\to\H^0(\Gamma_i)$\,. According to Theorem \ref{repcommutation} only those unitary operators commuting with $\mathrm{v}(f)$ will lead to $G$-invariant quadratic forms. Imposing 
$$\left[\begin{pmatrix} 0 & \mathbb{I} \\ \mathbb{I} & 0\end{pmatrix},\begin{pmatrix} U_{11} & U_{12} \\ U_{21} & U_{22} \end{pmatrix}\right]=0\;,$$
we get the conditions
	\begin{align*}
		&U_{21}-U_{12}=0\;,\\
		&U_{22}-U_{11}=0\;.
	\end{align*}
	
We can consider now the particular case of generalised Robin boundary conditions shown in Example \ref{generalized Robin}. These are given by unitaries of the form 
	\begin{equation}
		U=\begin{bmatrix} e^{\mathrm{i}\beta_1}\mathbb{I}_1 & 0\\ 0 & e^{\mathrm{i}\beta_2}\mathbb{I}_2 \end{bmatrix}\;.
	\end{equation}
The $G$-invariance condition forces $\beta_1=\beta_2$\,, as one could expect. Notice that if $\beta_1\not=\beta_2$ we can obtain a self-adjoint extension of the Laplace-Beltrami operator, $\Delta_{\mathrm{min}}$\,, associated to the closable quadratic form defined by the unitary $U$\,. This extension will not be $G$-invariant even though $\Delta_{\mathrm{min}}$ is.

We can consider also conditions of the quasi-periodic type as in Example \ref{ex:quasiperiodic}. In this case 
	\begin{equation}
		U=\begin{bmatrix} 0 & e^{\mathrm{i}\alpha} \\ e^{-\mathrm{i}\alpha} & 0 \end{bmatrix}\,,\quad \alpha\in \C^0(S^1)\;.
	\end{equation} 
The $G$-invariance condition imposes that $e^{\mathrm{i}\alpha}=e^{-\mathrm{i}\alpha}$ and therefore among all the quasi-periodic conditions only the periodic ones, $\alpha\equiv0$\,, are allowed.
\end{example}

\begin{example}\label{ex:upperhemisphere}
	Let $\Omega$ be the unit, upper hemisphere. Its boundary $\pO$ is going to be the unit circle on the horizontal plane. Let $\eta$ be the induced Riemannian metric from the euclidean metric in $\mathbb{R}^3$\,. Consider that $G$ is the compact Lie group $O(2)$ of rotations around the $z$-axis. If we use polar coordinates on the horizontal plane, then the boundary is isomorphic to the interval $[0,2\pi]$ with the two endpoints identified, i.e., $\pO\simeq[0,2\pi]/\negthickspace\sim$\,, where $\sim$ is the equivalence relation $0\sim2\pi$\,. If $\theta$ is the coordinate describing the boundary, then the Hilbert space of the boundary is given by $\H^0([0,2\pi])$\,. 

Let $\varphi\in\H^{1/2}(\pO)$\,. The induced action of the group $G$ in this space is therefore given by 
$$\mathrm{v}(g^{-1}_\alpha)\varphi(\theta)=\varphi(\theta+\alpha)\;.$$
To analyse what are the possible unitary operators that lead to $G$-invariant quadratic forms it is convenient to use the Fourier series expansions of the elements in $\H^0(\pO)$\,. Let $\varphi\in\H^0(\pO)$\,, then 
$$\varphi(\theta)=\sum_{n\in\mathbb{Z}}\hat{\varphi}_ne^{\mathrm{i}n\theta}\;,$$
where the coefficients of the expansion are given by $$\hat{\varphi}_n=\frac{1}{2\pi}\int_0^{2\pi}\varphi(\theta)e^{-\mathrm{i}n\theta}\d\theta\;.$$
We can therefore consider the induced action of the group $G$ as a unitary operator on the Hilbert space ${l}^2$\,. In fact we have that :
\begin{align*}
\widehat{(\mathrm{v}(g^{-1}_\alpha)\varphi)}_n&=\frac{1}{2\pi}\int_0^{2\pi}\varphi(\theta+\alpha)e^{-\mathrm{i}n\theta}\d\theta\\
	&=\sum_{m\in\mathbb{Z}}\hat{\varphi}_me^{\mathrm{i}m\alpha}\int_0^{2\pi}\frac{e^{\mathrm{i}(m-n)\theta}}{2\pi}\d\theta=e^{\mathrm{i}n\alpha}\hat{\varphi}_n\;.
\end{align*}
	So that the induced action of the group $G$ is a unitary operator in $\mathcal{U}(l^2)$\,. More concretely we can represent it as $\widehat{\mathrm{v}(g^{-1}_\alpha)}_{nm}=e^{\mathrm{i}n\alpha}\delta_{nm}\,$\,. From all the possible unitary operators acting on the Hilbert space of the boundary, only those whose representation in $l^2$ commutes with the above operator will lead to $G$-invariant quadratic forms. Since $\widehat{\mathrm{v}(g^{-1}_\alpha)}$ acts diagonally on $l^2$ it is clear that only operators of the form $\hat{U}_{nm}=e^{\mathrm{i}\beta_n}\delta_{nm}$\,, $\beta_n\in\mathbb{R}$\,, will lead to $G$-invariant quadratic forms.
	
	As a particular case we can consider that all the parameters are equal, i.e., $\beta_j=\beta$\,. In this case it is clear that $(\widehat{U\varphi})_n=e^{\mathrm{i}\beta}\varphi_n$ which gives that $$U(\varphi)=e^{\mathrm{i}\beta}\varphi\;,$$ showing that the only generalised Robin boundary conditions compatible with the rotation symmetry are those that have a constant value along the boundary.
\end{example}


\section{Representation of generic unbounded quadratic forms}
\label{sec:Representation}

During this dissertation we have been using Kato's Representation Theorem, Theorem \ref{fundteo}, intensively. It has allowed us to obtain a variety of self-adjoint extensions of the Laplace-Beltrami operator. In fact, this is a powerful tool that allows for the characterisation of self-adjoint extensions of symmetric operators in a much general context. However, there is an important assumption without which the representation theorem does not follow. This assumption is the semi-boundedness assumption. As it is stated in the introduction it remains as one of the main open problems in the field to obtain a generalisation of this theorem to quadratic forms that are truly unbounded, i.e., not semi-bounded.

%

The converse is clearly true. Given a generic self-adjoint  operator one can  always construct an associated quadratic form by means of its spectral resolution of the identity, cf., Theorem \ref{spectraltheorem}. Let $T$ be a self-adjoint operator with domain $\D(T)$ and let $E_\lambda:\H\to\H$ be the corresponding spectral resolution of the identity. Then one can define the following domain, cf., \cite[Section VIII.6]{reed72}:  $$\D(Q_T):=\{\Phi\in\H|\int_{\mathbb{R}}|\lambda|\d(\scalar{\Phi}{E_\lambda\Phi})<\infty\}$$ and an associated quadratic form by $$Q_T(\Phi,\Psi)=\int_{\mathbb{R}}\lambda\d(\scalar{\Phi}{E_\lambda\Psi})\quad\forall\Phi,\Psi\in\D(Q_T)\;.$$

Notice that the domain of the quadratic form contains the domain $\D(T)$ of the operator $T$, that can be identified with $$\D(Q_T):=\{\Phi\in\H|\int_{\mathbb{R}}|\lambda|^2\d(\scalar{\Phi}{E_\lambda\Phi})<\infty\}\;.$$ Moreover, it is clear that this quadratic form might be represented in terms of the operator $T$\,, namely 
$$Q_T(\Phi,\Psi)=\scalar{\Phi}{T\Psi}\,,\quad\Phi,\Psi\in\D(T)\;.$$
This quadratic form is representable and is in general not lower nor upper semi-bounded. The question is now clear. What are the sufficient conditions for an Hermitean, not semi-bounded quadratic form to be representable? The aim of this section is to step forward towards an answer.\\

In order to state the results in the most general form we will need the notion of direct integral of Hilbert spaces, cf., \cite[Chapter II.1]{dixmier81}, \cite[Section 1.2.3]{lions72}.

\begin{definition}
Let $\mathcal{A}$ be a measure space with Radon measure $\mu$ and let $\H_\alpha$\,, $\alpha \in \mathcal{A}$\,, a family of complex separable Hilbert spaces.  We will say that $\H_\alpha$ is a \textbf{$\mu$-measurable field of Hilbert spaces} if there exists a family $\mathcal{M}$ of functions $\Phi \colon \alpha \mapsto \H_\alpha$ such that:

\begin{enumerate}
	\item For all $\Phi \in \mathcal{M}$ the function $\norm{\Phi}(\cdot):\alpha \mapsto || \Phi (\alpha) ||_{\H_\alpha}$ is $\mu$-measurable.
	\item  If $\Psi$ is a function $\alpha \mapsto \H_\alpha$ and for all $\Phi\in \mathcal{M}$, the scalar function $\alpha \mapsto \scalar{\Phi (\alpha)}{\Psi (\alpha)} _{\H_\alpha}$ is $\mu$-measurable, then $\Psi \in \mathcal{M}$.
	\item  There exists a sequence $\Phi_1, \Phi_2, \ldots$ of elements of $\mathcal{M}$ such that, for all $\alpha \in \mathcal{A}$, the sequence $\Phi_1(\alpha), \Phi_2(\alpha), \ldots$ generates $\H_\alpha$\,.
\end{enumerate}
The functions in the family $\mathcal{M}$ will be called \textbf{$\mu$-measurable functions} of the field.

\end{definition}

\begin{definition}\label{def:directintegral}
Let $\mathcal{A}$ be a measure space with Radon measure $\mu$ and let $\H_\alpha$ be a $\mu$-measurable field of Hilbert spaces with family of $\mu$-measurable functions $\mathcal{M}$.   The \textbf{direct integral of the field of Hilbert spaces $\H_\alpha$}, denoted by
$$ \H = \int^\oplus_{\mathcal{A}} \H_\alpha \d\mu (\alpha) \;,$$
is defined as the family of $\mu$-measurable functions of the field $\Phi \in \mathcal{M}$ such that
$$ || \Phi ||^2 = \int_{\mathcal{A}} || \Phi (\alpha) ||_{\H_\alpha}^2 \d\mu (\alpha) < \infty \;.$$
The functions $\Phi\in \H$ can be written as
$$ \Phi = \int_{\mathcal{A}} \Phi (\alpha) d\mu (\alpha) \;,$$
where the integral has to be understood in the sense of Bochner. The scalar product on $\H$ is given by
$$ \scalar{\Phi}{\Psi} = \int_{\mathcal{A}} \scalar{ \Phi (\alpha)} {\Psi (\alpha)}_{\H_\alpha} \d\mu (\alpha) \;.$$ 
\end{definition}

\begin{remark}
	Notice that if $\{W_i\}_{i\in\mathbb{Z}}$ is a discrete family of orthogonal Hilbert subspaces of $\H$ such that $$\H=\bigoplus_{i\in\mathbb{Z}}W_i\;,$$ then automatically $\{W_i\}$ is a $\mu$-measurable field of Hilbert spaces over the discrete space $\mathbb{Z}$ with $\mu$ the counting measure, i.e., $\mu(\{k\})=1$, $k\in\mathbb{Z}$. Then $$\int^{\oplus}_{\mathbb{Z}}W_i\d\mu(i)=\bigoplus_{i\in\mathbb{Z}}W_i=\H\;.$$
	Moreover, the set of $\mu$-measurable functions $\mathcal{M}$ is just the family of vectors $\{\Phi_i\in W_i \mid i\in\mathbb{Z}\}$ and the elements of the direct integral of the field of Hilbert spaces $\{W_i\}$ is given by those functions that verify $$\sum_{i\in\mathbb{Z}}\norm{\Phi_i}^2<\infty\;.$$
\end{remark}


\begin{definition}\label{def: poa}
	Let $Q$ be a Hermitean quadratic form and let $W_\alpha$ be a $\mu$-measurable field of orthogonal subspaces of a Hilbert space $\H$ such that $\H=\int^{\oplus}W_\alpha\d\mu(\alpha)$ and let $P_\alpha:\H\to W_\alpha$ be the corresponding orthogonal projectors. $Q$ is \textbf{partially orthogonally additive} with respect to the family $\{W_\alpha\}$ if $$Q(\Phi+\Psi)=Q(\Phi)+Q(\Psi)\;,$$ whenever $\Phi$ and $\Psi$ are such that $\int_{I}\scalar{P_\alpha\Phi}{P_\alpha\Psi}\d\mu(\alpha)=0$ for all $I\subset \mathcal{A}\;.$
	The \textbf{sectors} $Q_\alpha$  are defined to be the restrictions to the subspaces $W_\alpha$\,, 
	$$Q_\alpha(\cdot):=Q(P_\alpha\cdot)\;.$$
\end{definition}

Notice that the condition 
$\scalar{P_\alpha\Phi}{P_\alpha\Psi}=0$
implies that $$\scalar{\Phi}{\Psi}=\int_{\mathcal{A}}\scalar{P_\alpha\Phi}{P_\alpha\Psi}\d\mu(\alpha)=0\;.$$


\begin{proposition}\label{prop: qdecomposition}
	Let $Q$ be a partially orthogonally additive quadratic form with respect to the family $\{W_\alpha\}$\,, $\alpha\in\mathcal{A}$\,. Then $Q$ can be written as
	$$Q(\Phi)=\int_{\mathcal{A}}Q_\alpha(\Phi)\d\mu(\alpha)\;.$$
\end{proposition}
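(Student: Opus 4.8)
The plan is to reduce the statement to the representation of a finitely additive set function as an integral, using partial orthogonal additivity as the sole structural input. Fix $\Phi\in\H$ and, for every measurable $I\subset\mathcal{A}$, write $\Phi_I:=\int_I\Phi(\alpha)\d\mu(\alpha)$ for the restriction of $\Phi$ to $I$, so that $P_\alpha\Phi_I=\chi_I(\alpha)P_\alpha\Phi$ and $\norm{\Phi_I}^2=\int_I\norm{\Phi(\alpha)}^2_{\H_\alpha}\d\mu(\alpha)$. For two disjoint measurable sets $I,J$ the integrand $\scalar{P_\alpha\Phi_I}{P_\alpha\Phi_J}$ vanishes identically since $\chi_I\chi_J=0$; hence $\Phi_I$ and $\Phi_J$ satisfy the hypothesis in Definition~\ref{def: poa}, and partial orthogonal additivity yields $Q(\Phi_{I\cup J})=Q(\Phi_I)+Q(\Phi_J)$.

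This suggests introducing the set function $\nu_\Phi(I):=Q(\Phi_I)$. The previous paragraph shows that $\nu_\Phi$ is finitely additive on the algebra of measurable sets, and applying the same argument to a finite measurable partition $\{I_1,\dots,I_n\}$ of $\mathcal{A}$ gives $Q(\Phi)=\sum_{k=1}^n Q(\Phi_{I_k})$. Absolute continuity with respect to $\mu$ is immediate: if $\mu(I)=0$ then $\norm{\Phi_I}^2=\int_I\norm{\Phi(\alpha)}^2_{\H_\alpha}\d\mu(\alpha)=0$, so $\Phi_I=0$ in $\H$ and $\nu_\Phi(I)=Q(0)=0$. Once $\nu_\Phi$ is known to be a genuine ($\sigma$-additive, $\sigma$-finite) signed measure, the Radon--Nikodym theorem provides a $\mu$-integrable density $q_\Phi$ with $\nu_\Phi(I)=\int_I q_\Phi(\alpha)\d\mu(\alpha)$; taking $I=\mathcal{A}$ gives $Q(\Phi)=\int_\mathcal{A} q_\Phi(\alpha)\d\mu(\alpha)$. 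It then remains to identify $q_\Phi(\alpha)$ with the sector $Q_\alpha(\Phi)=Q(P_\alpha\Phi)$. In the discrete situation of the Remark following Definition~\ref{def:directintegral} this is transparent: with $\mu$ the counting measure, $q_\Phi(i)=\nu_\Phi(\{i\})=Q(\Phi_{\{i\}})=Q(P_i\Phi)=Q_i(\Phi)$, and the integral collapses to the expected sum $Q(\Phi)=\sum_i Q_i(\Phi)$.

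The main obstacle is upgrading the finite additivity of $\nu_\Phi$ to countable additivity, which is what legitimises the use of Radon--Nikodym in the non-atomic case. Since $Q$ carries no a priori continuity with respect to $\norm{\cdot}$, this cannot be extracted from norm convergence of $\Phi_{I_n}$ alone; instead I would exploit the integral structure of $\H$, controlling $\nu_\Phi$ along a decreasing null sequence $I_n\downarrow\emptyset$ through the monotone convergence $\int_{I_n}\norm{\Phi(\alpha)}^2_{\H_\alpha}\d\mu(\alpha)\to 0$ together with the additive splitting of $Q(\Phi)$ furnished by partial orthogonal additivity. The second delicate point is the identification of the density with the pointwise sectors $Q_\alpha(\Phi)$ in the genuinely continuous case, which I would handle by testing $\nu_\Phi$ against shrinking neighbourhoods of $\alpha$ and invoking a Lebesgue differentiation argument for the field $\{W_\alpha\}$, so that $q_\Phi$ agrees $\mu$-almost everywhere with $\alpha\mapsto Q_\alpha(\Phi)$.
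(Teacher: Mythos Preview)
Your approach is substantially more elaborate than the paper's. The paper's proof is a single line: it writes $\Phi=\int_{\mathcal{A}}P_\alpha\Phi\,\d\mu(\alpha)$ via the direct integral decomposition and then simply asserts
\[
Q\!\left(\int_{\mathcal{A}}P_\alpha\Phi\,\d\mu(\alpha)\right)=\int_{\mathcal{A}}Q(P_\alpha\Phi)\,\d\mu(\alpha)=\int_{\mathcal{A}}Q_\alpha(\Phi)\,\d\mu(\alpha),
\]
treating partial orthogonal additivity as if it passed directly from pairs to Bochner integrals. No set functions, no Radon--Nikodym, no differentiation argument.

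Your route via the set function $\nu_\Phi(I)=Q(\Phi_I)$ is a natural attempt to make that middle equality rigorous, and your discrete case works perfectly and matches how the proposition is actually applied later (Examples~\ref{ex: position}--\ref{ex:nogap} and the corollaries all live over countable decompositions). However, for the genuinely non-atomic case you correctly flag the two real obstacles --- $\sigma$-additivity of $\nu_\Phi$ without any continuity hypothesis on $Q$, and the a.e.\ identification of the Radon--Nikodym density with $\alpha\mapsto Q_\alpha(\Phi)$ --- and you do not close them. The monotone-convergence and Lebesgue-differentiation sketches you offer still implicitly require some form of continuity of $I\mapsto Q(\Phi_I)$, which is precisely what is missing. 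In that sense your proposal is honest about a gap that the paper's one-liner simply steps over; but as written it is not a complete proof in the continuous setting, and the paper does not supply the missing ingredient either.
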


\begin{proof}
	From Definition \ref{def:directintegral} and Definition \ref{def: poa} we have that 
	$$Q(\Phi)=Q\left(\int_{\mathcal{A}}P_\alpha\Phi\d\mu(\alpha)\right)=\int_{\mathcal{A}}Q(P_\alpha\Phi)\d\mu(\alpha)=\int_{\mathcal{A}}Q_\alpha(\Phi)\d\mu(\alpha)\;.$$
\end{proof}

Now we are ready to state the main theorem of this section.

\begin{theorem}\label{representationpoa}
	Let $Q$ be a partially orthogonally additive quadratic form with respect to the family $\{W_\alpha\}$\,, $\alpha\in\mathcal{A}$\,, defined on the domain $\D$\,. Let each sector $Q_\alpha$ be semi-bounded (upper or lower) and closable. Then
	\begin{enumerate}
		\item The quadratic form can be extended continuously to the closure of the domain with respect to the norm
			$$\normm{\Phi}^2_Q=\int_{\mathcal{A}}\normm{P_\alpha\Phi}^2_{Q_\alpha}\d\mu(\alpha)\;,$$
			i.e., to the domain $$\D(\overline{Q}):=\overline{\D}^{\normm{\cdot}_Q}\;.$$
			The extension will be denoted by $\overline{Q}$.
		\item The quadratic form $\overline{Q}$ defined on $\D(\overline{Q})$ is representable, i.e., it exists a self-adjoint operator $T$ with domain $\D(T)$ such that 
		$$\overline{Q}(\Phi)=\scalar{\Phi}{T\Phi}\quad\forall\Phi\in\D(T)\;.$$
	\end{enumerate}
\end{theorem}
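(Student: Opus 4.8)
The plan is to reduce the global statement to the individual sectors $Q_\alpha$, where Kato's representation theorem (Theorem \ref{fundteo}) is available, and then to reassemble the fiberwise representing operators into a single self-adjoint operator using the direct-integral structure $\H=\int^\oplus_{\mathcal{A}} W_\alpha\,\d\mu(\alpha)$. The heterogeneity of signs (some sectors upper, some lower semi-bounded) is precisely what forces the global form to be genuinely unbounded, but it is harmless for self-adjointness, which for a direct integral requires only that each fiber operator be self-adjoint.

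For part (i) I would first record the sectorwise estimate: if $Q_\alpha$ is semi-bounded with bound $a_\alpha$ (treating the upper case by passing to $-Q_\alpha$), then a direct computation splitting into the cases $Q_\alpha(\psi)\geq 0$ and $Q_\alpha(\psi)<0$ gives $|Q_\alpha(\psi)|\leq\normm{\psi}^2_{Q_\alpha}$ for $\psi\in W_\alpha$. Combining this with the decomposition $Q(\Phi)=\int_{\mathcal{A}}Q_\alpha(\Phi)\,\d\mu(\alpha)$ of Proposition \ref{prop: qdecomposition} yields $|Q(\Phi)|\leq\int_{\mathcal{A}}|Q_\alpha(\Phi)|\,\d\mu(\alpha)\leq\int_{\mathcal{A}}\normm{P_\alpha\Phi}^2_{Q_\alpha}\,\d\mu(\alpha)=\normm{\Phi}^2_Q$. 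Thus the diagonal form is controlled by $\normm{\cdot}_Q$, and via the polarisation identity \eqref{polarization id} the full sesquilinear form is $\normm{\cdot}_Q$-continuous; the bounded-extension (BLT) theorem then produces the unique continuous extension $\overline{Q}$ to the completion $\D(\overline{Q})$. I would also note that $\normm{\cdot}_Q$ is induced by the inner product $\int_{\mathcal{A}}\scalar{P_\alpha\cdot}{P_\alpha\cdot}_{Q_\alpha}\,\d\mu(\alpha)$, so that $\D(\overline{Q})$ is a genuine Hilbert space and each projector $P_\alpha$ descends to it.

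For part (ii) the fiberwise step comes next: for $\mu$-a.e.\ $\alpha$ the sector $Q_\alpha$ is semi-bounded and closable, so applying Theorem \ref{fundteo} to its closure $\overline{Q_\alpha}$ (or to $-\overline{Q_\alpha}$, reversing the sign afterwards) gives a unique self-adjoint operator $T_\alpha$ on $W_\alpha$ with $\D(T_\alpha)\subset\D(\overline{Q_\alpha})$ and $\overline{Q_\alpha}(\psi,\chi)=\scalar{\psi}{T_\alpha\chi}$ for $\chi\in\D(T_\alpha)$. I would then set $T:=\int^\oplus_{\mathcal{A}}T_\alpha\,\d\mu(\alpha)$ on the natural domain $\D(T)=\{\Phi\in\H\mid \Phi(\alpha)\in\D(T_\alpha)\text{ a.e.},\ \int_{\mathcal{A}}\|T_\alpha\Phi(\alpha)\|^2\d\mu(\alpha)<\infty\}$, which in the discrete setting of the Remark is simply the orthogonal sum $\bigoplus_i T_i$. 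Invoking the direct-integral theory of \cite[Chapter II.1]{dixmier81}, $T$ is self-adjoint. Finally, for $\Phi\in\D(T)$ the integrability defining $\D(T)$ together with $\D(T_\alpha)\subset\D(\overline{Q_\alpha})$ controls the graph norm and shows $\Phi\in\D(\overline{Q})$, and Proposition \ref{prop: qdecomposition} applied to the closed form plus the fiberwise representation give $\overline{Q}(\Phi)=\int_{\mathcal{A}}\overline{Q_\alpha}(\Phi(\alpha))\,\d\mu(\alpha)=\int_{\mathcal{A}}\scalar{\Phi(\alpha)}{T_\alpha\Phi(\alpha)}\,\d\mu(\alpha)=\scalar{\Phi}{T\Phi}$.

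The main obstacle will be the \emph{measurability} of the field $\{T_\alpha\}_{\alpha\in\mathcal{A}}$: in order to form $\int^\oplus_{\mathcal{A}}T_\alpha\,\d\mu(\alpha)$ and to know that it is self-adjoint, I must verify that $\alpha\mapsto T_\alpha$ is a $\mu$-measurable field of operators, for instance that $\alpha\mapsto(T_\alpha-z)^{-1}$ is a measurable field of bounded operators for $z\in\mathbb{C}\setminus\mathbb{R}$. This measurability has to be inherited from the $\mu$-measurable-field structure of $\{W_\alpha\}$ and from the measurability of $\alpha\mapsto Q_\alpha$ implicit in the hypothesis of partial orthogonal additivity, and one must also ensure that the representing operators can be chosen in a jointly measurable way. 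Establishing this is the technical heart of the argument; the discrete case in the Remark sidesteps it entirely, since a countable orthogonal sum of self-adjoint operators is self-adjoint by elementary means.
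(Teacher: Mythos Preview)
Your proposal follows the same architecture as the paper's proof: bound $|Q|$ sectorwise by the graph norms $\normm{\cdot}_{Q_\alpha}$ to obtain the continuous extension in part~(i), then apply Theorem~\ref{fundteo} sectorwise to produce self-adjoint operators $T_\alpha$ and assemble them into $T=\int_{\mathcal{A}}T_\alpha\,\d\mu(\alpha)$ on the natural graph-norm domain for part~(ii). The paper carries out part~(i) via the sectorwise Cauchy--Schwarz inequality $|Q_\alpha(P_\alpha\Phi,P_\alpha\Psi)|\leq\normm{P_\alpha\Phi}_{Q_\alpha}\normm{P_\alpha\Psi}_{Q_\alpha}$ rather than polarisation, but this is immaterial.

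The one substantive difference is in the self-adjointness step, and it speaks directly to the obstacle you flag. Rather than invoking the direct-integral machinery of \cite{dixmier81} and confronting the measurability of $\alpha\mapsto T_\alpha$, the paper argues elementarily via deficiency indices: if $\Psi\in\mathcal{N}_+$, then $\scalar{\Psi}{(T+\mathrm{i})\Phi}=0$ for all $\Phi\in\D(T)$; testing with $\Phi=\Phi_\alpha\in\D(T_\alpha)$ supported in a single fiber gives $\scalar{P_\alpha\Psi}{(T_\alpha+\mathrm{i})\Phi_\alpha}=0$ for all such $\Phi_\alpha$, whence $P_\alpha\Psi=0$ by self-adjointness of $T_\alpha$, and since $\alpha$ was arbitrary, $\Psi=0$. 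Likewise $n_-=0$, so $T$ is self-adjoint by Theorem~\ref{thmvonNeumann}. This bypasses the measurability issue entirely, at the cost of not appealing to a black-box result. Your route via Dixmier is cleaner conceptually but, as you correctly note, requires checking that $\{T_\alpha\}$ is a measurable field; the paper's route is more hands-on but needs no such verification.
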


\begin{proof}
	\begin{enumerate}
		\item 	\begin{align*}
					|Q(\Phi,\Psi)|&\leq \int_{\mathcal{A}}|Q(P_\alpha\Phi,P_\alpha\Psi)|\d\mu(\alpha)\\
						&\leq \int_{\mathcal{A}}\normm{P_\alpha\Phi}_{Q_\alpha}\normm{P_\alpha\Psi}_{Q_\alpha}\d\mu(\alpha)\\
						&\leq \int_{\mathcal{A}}\normm{P_\alpha\Phi}_{Q_\alpha}^2\d\mu(\alpha)\int_{\mathcal{A}}\normm{P_\alpha\Psi}_{Q_\alpha}^2\d\mu(\alpha)\;,
				\end{align*}
			where each $\normm{\cdot}_{Q_\alpha}$ is defined using the corresponding lower or upper bound, i.e., if $Q_\alpha$ is lower semi-bounded with lower bound $a$ then 
			$$\normm{\cdot}^2_{Q_\alpha}=(1+a)\norm{P_\alpha\cdot}^2+Q(\cdot)\;,$$
		 	and if $Q_\alpha$ is upper semi-bounded with upper bound $b$ then 
			$$\normm{\cdot}^2_{Q_\alpha}=(1+b)\norm{P_\alpha\cdot}^2-Q(\cdot)\;.$$
		\item By construction the sector $\overline{Q}_\alpha$\,, $\alpha\in\mathcal{A}$\,, defines a closed semi-bounded quadratic form since $Q_\alpha$ is assumed to be closable. Then for each $\alpha\in\mathcal{A}$ it exists a self-adjoint operator $T_\alpha:\D(T_\alpha)\to W_\alpha$ such that $$\overline{Q}_\alpha(\Phi,\Psi)=\scalar{P_\alpha\Phi}{T_\alpha\Psi}\quad\Phi,\Psi\in\D(T_\alpha)\;.$$
			Notice that each $T_\alpha$ commutes with the corresponding $P_\alpha$. Let $T:\D(T)\to\H$ be the operator 
			$$T=\int_{\mathcal{A}}T_\alpha\d\mu(\alpha)$$ 
			with domain $$\D(T)=\{\Phi\in\D(\overline{Q})\mid \int_{\mathcal{A}}\normm{P_\alpha\Phi}^2_{T_\alpha}\d\mu(\alpha)<\infty\}\;,$$
			where $$\normm{\cdot}^2_{T_\alpha}=\norm{P_\alpha\cdot}^2+\norm{T_\alpha\cdot}^2\;.$$
			Notice that $\Phi\in\D(T)$ implies $P_\alpha\Phi\in\D(T_\alpha)$\,. Denote the norm in the above domain by 
			$$\normm{\cdot}^2_{T}=\int_{\mathcal{A}}\normm{P_\alpha\cdot}^2_{T_\alpha}\d\mu(\alpha)\;.$$
			From Definition \ref{def:directintegral} it is clear that $(\D(T),\normm{\cdot}_T)$ is a Hilbert space and therefore $T$ is a closed operator. It is easy to show that it is symmetric. Let us show that this operator is indeed self-adjoint.
			
			Let $\Phi_\alpha\in\D(T_\alpha)$ and let $\Psi\in\mathcal{N_{+}}$. Then
			\begin{align*}
				0=\scalar{\Psi}{(T+\mathrm{i})}&=\int_{\mathcal{A}}\scalar{P_\alpha\Psi}{(T_\alpha+\mathrm{i})\Phi_\alpha}\d\mu(\alpha)\\
					&=\scalar{P_\alpha\Psi}{(T_\alpha+\mathrm{i})\Phi_\alpha}
			\end{align*}
			 This implies that that $P_\alpha\Psi=0$. Since the choice of $P_\alpha$ is arbitrary we have that $\Psi=0$ and therefore $n_+=0$. Equivalently $n_-=0$. Hence $T$ is self-adjoint, cf., Theorem \ref{thmvonNeumann}.
	\end{enumerate}
\end{proof}

%

For a general unbounded quadratic form it is enough that there exist at least two subspaces, $W_+$ and $W_-$\,, that verify the above conditions. The former needs to be lower semi-bounded and the latter upper semi-bounded. The quadratic form will be the sum of the corresponding sectors. To show that the above characterisation is meaningful, let us consider three examples in which we obtain representations for 3 well known operators.

\begin{example}[The multiplication operator]\label{ex: position}
	Consider the Hilbert space $\H=\H^0(\mathbb{R})$\,. Let $Q$ be the Hermitean quadratic with domain $\C^{\infty}_c(\mathbb{R})$ defined by 
	$$Q(\Phi)=\int_{\mathbb{R}}\bar{\Phi}(x)x\Phi(x)\d x\;.$$
Consider the subspaces 
	$$W_{+}=\{\Phi\in\H|\operatorname{supp}\Phi\subset\mathbb{R}^+_{\mathrlap{0}}\}\;,$$
	$$W_{-}=\{\Phi\in\H|\operatorname{supp}\Phi\subset\mathbb{R}^-\}\;.$$ 
The projections onto these subspaces are given by multiplication with the corresponding characteristic functions. The restrictions of the quadratic form above are therefore 
	$$Q_+(\Phi)=\int_{\mathbb{R}^+_0}\bar{\Phi}(x)x\Phi(x)\d x\;,$$
	$$Q_-(\Phi)=\int_{\mathbb{R}^-}\bar{\Phi}(x)x\Phi(x)\d x\;.$$
The sector $Q_+$ is clearly a lower semi-bounded quadratic form. The proof that it is in fact closable is a standard result in the theory of quadratic forms, since it is the quadratic form associated to a lower semi-bounded, symmetric operator, see comments after Remark \ref{rem:represntability}. The domain of the corresponding closed quadratic form and associated self-adjoint operators are
$$\D(\overline{Q}_+)=\{\Phi\in\H|\int_{\mathbb{R}^+_0}x|\Phi|^2\d x<\infty\}\;,$$
$$\D(T_+)=\{\Phi\in\H|\int_{\mathbb{R}^+_0}x^2|\Phi|^2\d x<\infty\}\;.$$

The sector $Q_-$ is an upper semi-bounded quadratic form. Hence consider $-Q_-$\,. It is closable by the same reasons than $Q_+$\,. In this case the corresponding domains are
$$\D(\overline{Q}_-)=\{\Phi\in\H|\int_{\mathbb{R}^-}-x|\Phi|^2\d x<\infty\}\;,$$
$$\D(T_-)=\{\Phi\in\H|\int_{\mathbb{R}^-}x^2|\Phi|^2\d x<\infty\}\;.$$

The quadratic form $Q$ can therefore be extended to the domain $$\D(\overline{Q})=\{\Phi\in\H|\int_\mathbb{R}|x|\,|\Phi|^2\d x<\infty\}\;,$$ where $\overline{Q}(\Phi)=\overline{Q}_+(\Phi)+\overline{Q}_-(\Phi)$ and can be represented by the operator $$T=T_+P_++T_-P_-$$ with domain
$$\D(T)=\{\Phi\in\H|\int_\mathbb{R}x^2\,|\Phi|^2\d x<\infty\}\;.$$
\end{example}

\begin{example}[The momentum operator]\label{ex: momentum}
	Consider the Hilbert space $\H=\H^0(\mathbb{R})$\,. Let $Q$ be the Hermitean quadratic form with domain $\C_c^\infty(\mathbb{R})$ defined by 
\begin{equation}\label{qfmomentum}
Q(\Phi)=\int_{\mathbb{R}}\bar{\Phi}(x)\Bigl(-i\frac{\d}{\d x}\Phi(x)\Bigr)\d x\;.
\end{equation}
We are going to obtain also in this case two subspaces $W_{\pm}$ with respect to which the quadratic form is partially orthogonally additive. To do so we need to make use of the Fourier transform. Let $\Phi\in\C_c^\infty(\mathbb{R})$\,. The Fourier transform is defined by $$\mathcal{F}\Phi(k)=\frac{1}{\sqrt{2\pi}}\int_\mathbb{R}\Phi(x)e^{-ikx}\d x\;.$$ It is well known that this operation can be extended continuously to a unitary operation $\mathcal{F}:\H^0(\mathbb{R})\to\H^0(\mathbb{R})$\,. Moreover, the momentum operator defined above acts diagonally on the transformed space, i.e., $$\mathcal{F}\Bigl(-i\frac{\d}{\d x}\Phi(x)\Bigr)(k)=k\mathcal{F}\Phi(k)\;.$$
Under this transformation this example becomes the example of the multiplication operator above and hence we can define the subspaces $W_{\pm}$ accordingly.
Consider the subspaces 
	$$W_{+}=\{\Phi\in\H|\operatorname{supp}\mathcal{F}\Phi\subset\mathbb{R}^+_{\mathrlap{0}}\}\;,$$
	$$W_{-}=\{\Phi\in\H|\operatorname{supp}\mathcal{F}\Phi\subset\mathbb{R}^-\}\;.$$	
Since the Fourier transform is a unitary operation, the quadratic form of Eq.~\eqref{qfmomentum} is clearly partially orthogonally additive with respect to this decomposition. For the same reason the sectors are semi-bounded. Again the restriction $Q_-$ is upper semi-bounded and we need to consider $-Q_-$\,. In this case the domains of the associated quadratic forms and operators can also be described precisely and have the form of the corresponding domains on Example \ref{ex: position} with $\Phi$ replaced by $\mathcal{F}\Phi$\,. Notice that in this case the quadratic form can be extended to the domain 
$$\D(\overline{Q})=\{\Phi\in\H| \int_{\mathbb{R}}|k||\mathcal{F}\Phi|^2\d k<\infty \}$$ and the corresponding self-adjoint operator $T$ is defined on $$\D(T)=\{\Phi\in\H| \int_{\mathbb{R}}k^2|\mathcal{F}\Phi|^2\d k<\infty \}\;.$$ These last two spaces can be proved to be equivalent to $\H^{1/2}(\mathbb{R})$ and $\H^{1}(\mathbb{R})$ respectively.
\end{example}

In the last example we consider a quadratic form $Q_U$ given as in Definition \ref{DefQU}. However we will not ask the unitary operator to have gap at -1. Nevertheless the Cayley transform, cf., Definition \ref{partialCayley} is still going to be a well defined self-adjoint operator. It will not be bounded though.

\begin{example}\label{ex:nogap}
	Let $\Omega$ be the unit disk $$\Omega=\{(x,y)\in\mathbb{R}^2\mid x^2+y^2\leq1\}$$ and let $\eta$ be the euclidean metric. The boundary is therefore $\pO=S^1$. The rotation group $O(2)$ acts naturally on this manifold by rotations around the origin. In polar coordinates we have that 
	$$g_\alpha:(r,\theta)\to(r,\theta+\alpha)\;.$$
	Let us consider an $O(2)$-invariant quadratic form $Q_U$. Any unitary $U\in\mathcal{U}(\H^0(\pO))$ that verifies 
	$$\comm{\mathrm{v}(g)}{U}\varphi=0\quad\forall\varphi\in\H(\pO)$$ will lead to a $O(2)$-invariant quadratic form. From the analysis in Example \ref{ex:upperhemisphere} we know that only those unitaries whose representation in the Fourier transformed space is diagonal will satisfy the commutation relation. Then consider 
	\begin{equation}
	\hat{U}_{nm}=e^{\mathrm{i}\beta_n}\delta_{nm}\,, \quad\text{with}\quad
		\beta_n=\begin{cases}
			\pi(1-\frac{1}{n})\,, &n>0\\
			0\,, & n=0\\
			-\pi(1+\frac{1}{n})\,, & n<0
		\end{cases}\;.
	\end{equation}

Such an unitary operator does not verify the gap condition, cf., Definition \ref{DefGap}. Hence it will not lead to a a semi-bounded quadratic form. In the transformed space the Cayley transform reads 
	$$(\hat{A_U})_{nm}=-\tan\frac{\beta_n}{2}\delta_{nm}\;.$$
We can express the quadratic form $Q_U$ in terms of the subspaces associated with the irreducible representations of the group $O(2)$. This subspaces coincide with the expansion in the Fourier basis, $\{e^{\mathrm{i}n\theta}\}$. Namely, $$\Phi(r,\theta)=\sum_{n\in\mathbb{Z}}\hat{\Phi}_n(r)e^{\mathrm{i}n\theta}\;.$$ So we have that 
\begin{align*}
	Q_U(\Phi)&=\int_\Omega\eta^{-1}(\d\Phi,\d\Phi)\sqrt{|\eta|}\d r\d\theta-\scalarb{\gamma(\Phi)}{A_U\gamma(\Phi)}\\
		&=\int_0^1\int_0^{2\pi}\left( \Bigl| \frac{\partial \Phi}{\partial r}\Bigr|^2 +\frac{1}{r^2}\Bigl| \frac{\partial \Phi}{\partial \theta}\Bigr|^2 \right)r\d r\d\theta-\\
		& \phantom{aaaaaaaaaaaaaaa}-\scalarb{\gamma(\hat{\Phi}_n(r))e^{\mathrm{i}n\theta}}{A_U\left((\hat{\Phi}_n(r))e^{\mathrm{i}n\theta}\right)}\\
		&=\sum_n\int_0^1\left(\Bigl| \frac{\partial \hat{\Phi}_n}{\partial r}\Bigr|^2 +\frac{n^2}{r^2}| \hat{\Phi}_n |^2\right)r\d r + \sum_n \tan\frac{\beta_n}{2}\norm{\gamma(\hat{\Phi}_n)}^2_{\H^0(\pO)}\\
		&=\sum_n Q_n(\hat{\Phi_n})\;,
\end{align*}
where we have defined $$Q_n(\hat{\Phi}_n):=\int_0^1\left(\Bigl| \frac{\partial \hat{\Phi}_n}{\partial r}\Bigr|^2 +\frac{n^2}{r^2}| \hat{\Phi}_n |^2\right)r\d r + \tan\frac{\beta_n}{2}\norm{\gamma(\hat{\Phi}_n)}^2_{\H^0(\pO)}\;.$$
This shows that $Q_U$ is partially orthogonally additive with respect to the family of subspaces $\{e^{\mathrm{i}n\theta}\}$\,, $n\in\mathbb{Z}$\,. Notice that the boundary term of $Q_U$ not lower nor upper semi-bounded. However, for each sector $Q_n$ we can use similar arguments to those in the proof of Theorem \ref{maintheorem1}, more concretely in Eqs.~\eqref{equationssemibounded}, to show that it is semi-bounded below. That the sectors $Q_n$ are closable can be proved using similar arguments to those in Theorem \ref{maintheorem2}. Hence we are in the conditions of Theorem \ref{representationpoa} and it exist a self-adjoint operator that represents the quadratic form $Q_U$. Theorem \ref{DeltaUextDeltamin} ensures that it is a self-adjoint extension of the minimal closed extension $-\Delta_{\mathrm{min}}$\,.
\end{example}

\begin{corollary}
	Let $\{W_i\}_{i\in\mathbb{Z}}$ be a discrete family of finite dimensional orthogonal subspaces of the Hilbert space $\H$ and let $Q$ be a partially orthogonally additive Hermitean quadratic form with respect to the family $\{W_i\}_{i\in\mathbb{Z}}$. Then, $Q$ is representable.
\end{corollary}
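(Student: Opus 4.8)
The plan is to reduce the statement entirely to Theorem \ref{representationpoa} by checking that its two structural hypotheses --- semi-boundedness and closability of each sector --- become automatic once the subspaces $W_i$ are finite dimensional. First I would record that, by the Remark following Definition \ref{def: poa}, the discrete family $\{W_i\}_{i\in\mathbb{Z}}$ with $\H=\bigoplus_{i\in\mathbb{Z}}W_i$ is a $\mu$-measurable field of Hilbert spaces over $\mathbb{Z}$ equipped with the counting measure, so that the direct integral of Definition \ref{def:directintegral} collapses to the orthogonal direct sum and Proposition \ref{prop: qdecomposition} reads $Q(\Phi)=\sum_{i\in\mathbb{Z}}Q_i(\Phi)$, with sectors $Q_i(\cdot)=Q(P_i\,\cdot)$ and $P_i\H=W_i$.

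Next I would verify semi-boundedness of each sector. Since $W_i$ is finite dimensional, fixing an orthonormal basis of $W_i$ identifies the Hermitean sesquilinear form underlying the sector $Q_i$ on $W_i$ with a Hermitean matrix $M_i$; its spectrum is real and finite, so setting $a_i:=\max_j|\lambda_j(M_i)|$ yields $Q_i(\Phi)\geq -a_i\norm{P_i\Phi}^2$ for every $\Phi$. Thus each $Q_i$ is lower (in fact two-sided) semi-bounded, which is the first hypothesis of Theorem \ref{representationpoa}.

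Then I would check closability. As $W_i$ is finite dimensional it is a closed subspace of $\H$, and on a finite-dimensional space all norms are equivalent; in particular the graph norm $\normm{\cdot}_{Q_i}$ is equivalent to $\norm{\cdot}$ on $W_i$. Hence the domain of $Q_i$ is already complete with respect to $\normm{\cdot}_{Q_i}$, so $Q_i$ is closed and a fortiori closable, supplying the second hypothesis. With both hypotheses in place, Theorem \ref{representationpoa} applies verbatim and produces the self-adjoint operator $T=\bigoplus_{i\in\mathbb{Z}}T_i$ representing $\overline{Q}$, where each $T_i$ is the self-adjoint matrix representing $Q_i$ on $W_i$; this shows that $Q$ is representable.

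I do not expect a genuine obstacle here: the whole content of the corollary is that finite dimensionality trivialises simultaneously the spectral condition (semi-boundedness, since a Hermitean form on a finite-dimensional space has bounded real spectrum) and the topological condition (closability, since the form norm and the Hilbert norm are equivalent). The only point deserving care is the book-keeping of domains --- confirming that the natural domain of each sector $Q_i$ is all of $W_i$, so that the word \emph{closed} is literally automatic --- but this is immediate from $P_i\D\subset W_i$ together with $\dim W_i<\infty$.
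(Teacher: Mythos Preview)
Your proof is correct and follows exactly the paper's approach: reduce to Theorem~\ref{representationpoa} by observing that finite dimensionality of each $W_i$ makes every sector $Q_i$ bounded, hence semi-bounded and (automatically) closable. The paper's proof is a one-liner stating precisely this, whereas you have spelled out the boundedness and closability verifications in full; the only minor slip is that the relevant remark on discrete direct sums follows Definition~\ref{def:directintegral}, not Definition~\ref{def: poa}.
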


\begin{proof}
	This is a direct application of Theorem \ref{representationpoa} since each sector $Q_i$ is bounded and therefore semi-bounded.
\end{proof}

\begin{corollary}
	Let $G$ be a compact Lie group with unitary representation $V:G\to\mathcal{U}(\H)$, such that the multiplicity of each irreducible representation of $G$ in $V$ is finite. Let $Q$ be a $G$-invariant Hermitean quadratic form on $\H$ with domain $\D(Q)$. Then $Q$ is representable.
\end{corollary}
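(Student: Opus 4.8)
The plan is to reduce the statement to the discrete, finite-dimensional corollary that immediately precedes it by exhibiting $Q$ as partially orthogonally additive (in the sense of Definition \ref{def: poa}) with respect to the isotypic decomposition induced by $V$. First I would invoke the Peter--Weyl theorem: since $G$ is compact, the unitary representation $V$ decomposes $\H$ into an orthogonal direct sum of isotypic components $\H=\bigoplus_{i}W_i$, where $W_i$ is the subspace on which $V$ acts as a multiple of the $i$-th irreducible representation $\pi_i$. Each $\pi_i$ is finite-dimensional by compactness and occurs with finite multiplicity by hypothesis, so every $W_i$ is a finite-dimensional, mutually orthogonal, $V$-invariant subspace, and the index set is countable. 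This yields exactly the discrete family $\{W_i\}_{i\in\mathbb{Z}}$ required by the preceding corollary, with $P_i$ the orthogonal projection onto $W_i$.

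The heart of the argument is to show that $G$-invariance forces the off-diagonal blocks of $Q$ to vanish, i.e. $Q(\Phi,\Psi)=0$ whenever $\Phi\in W_i$, $\Psi\in W_j$ with $i\neq j$; this is precisely what makes $Q$ partially orthogonally additive with respect to $\{W_i\}$. Fixing $i\neq j$, I would represent the restriction of $Q$ to $W_i\times W_j$ by a bounded operator $A\colon W_j\to W_i$ through $Q(\Phi,\Psi)=\scalar{\Phi}{A\Psi}$, which is legitimate because $W_i$ is finite-dimensional. The invariance identity $Q(\Phi,\Psi)=Q(V(g)\Phi,V(g)\Psi)$ combined with the unitarity of $V(g)$ then gives $V(g^{-1})AV(g)=A$ on $W_j$, so that $A$ intertwines $\pi_j$ and $\pi_i$. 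As $\pi_i$ and $\pi_j$ are inequivalent irreducibles, Schur's lemma forces $A=0$, hence $Q(\Phi,\Psi)=0$. With the cross terms annihilated, partial orthogonal additivity follows at once; and since each sector $Q_i=Q(P_i\,\cdot\,)$ is a Hermitean form on the finite-dimensional space $W_i$ it is automatically bounded, hence both semi-bounded and closable. Theorem \ref{representationpoa} (equivalently, the discrete finite-dimensional corollary that precedes the statement) then applies and produces the self-adjoint operator $T$ representing $Q$.

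The step I expect to be the main obstacle is not the Schur computation but the well-definedness of the sectors, namely that the isotypic projections preserve the domain, $P_i\,\D(Q)\subset\D(Q)$; without this, neither $Q_i$ nor the very notion of partial orthogonal additivity makes sense. The natural tool is the averaging formula $P_i=d_{\pi_i}\int_G\overline{\chi_{\pi_i}(g)}\,V(g)\,\d g$, where $\chi_{\pi_i}$ is the character and $d_{\pi_i}=\dim\pi_i$; since $V(g)\D(Q)\subset\D(Q)$ by $G$-invariance, each integrand lies in $\D(Q)$, and one wants the Bochner integral to inherit this. For finite $G$ the conclusion is immediate, because $P_i$ is then a genuine finite linear combination of the $V(g)$. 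For a proper compact Lie group the integral is only a limit of such combinations and may in principle leave the merely dense subspace $\D(Q)$, so this is exactly where real care is required: I would attempt to establish $W_i\subset\D(Q)$ directly (whence, by finiteness of $\dim W_i$, the sectors are defined on all of $W_i$), and if the bare density of $\D(Q)$ proves insufficient I would instead pass to the form closure and verify that the isotypic decomposition descends to it, possibly under the mild additional assumption that $\D(Q)$ be a form core. Once $P_i\,\D(Q)\subset\D(Q)$ is secured the remaining steps are routine, and the purely finite-dimensional discrete corollary is recovered as the special case in which each $W_i$ is already given.
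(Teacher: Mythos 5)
Your proof follows essentially the same route as the paper's: decompose $\H$ into the finite-dimensional isotypic components, kill the off-diagonal blocks of $Q$ by realising each as an intertwiner between inequivalent representations and applying Schur's lemma, and then invoke the preceding discrete finite-dimensional corollary. The domain-preservation issue you flag, namely $P_i\,\D(Q)\subset\D(Q)$, is a genuine subtlety that the paper's own proof passes over in silence, so your extra care there is warranted rather than a deviation.
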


\begin{proof}
	The Hilbert space $\H$ can be decomposed in terms of the irreducible representations of the group $G$ as $$\H=\bigoplus_{i\in\hat{G}}\H_{i}\;.$$ By assumption, each $\H_i$ is the direct sum of a finite number of copies of the finite dimensional spaces carrying the irreducible representations of the group $G$. By the previous corollary it is enough to show that the quadratic form will be partially orthogonally additive with respect to this family. The following identity holds for $\Phi,\Psi\in\D(Q)$.
	$$Q(\Phi+\Psi)=Q(\Phi)+Q(\Psi)+2\mathfrak{Re} Q(\Phi,\Psi)\;.$$ Let us show that $Q(\Phi,\Psi)=0$ if $\Phi\in\H_i$\,,$\Psi\in\H_j$\,,$i\neq j$\,. Since each $\H_i$ is finite dimensional we can consider the following mapping 
	$$
		\begin{array}{r l}
		\tilde{Q}^*_{ij}:\H_i &\to\H^*_j \\
		\Phi_i  &\to  \tilde{Q}^*_{ij}(\Phi_i)(\cdot):=Q(\Phi_i,\cdot)\;,
		\end{array}$$
where $\H_j^*$ stands for the dual space. We can identify $\H_j^*$ with itself and we shall consider then the corresponding map $$\tilde{Q}_{ij}:\H_i \to\H_j\;.$$ Now we have that
\begin{align*}
	\tilde{Q}^*_{ij}(V(g)\Phi_i)(\Psi_j)&=Q(V(g)\Phi_i,\Psi_j)\\
		&=Q(\Phi_i,V(g)^\dagger\Psi_j)\\
		&=\tilde{Q}^*_{ij}(\Phi_i)(V(g)^\dagger\Psi_j)\;,
\end{align*}
and therefore $$\tilde{Q}_{ij}(V(g)\Phi_i)=V(g)\tilde{Q}_{ij}(\Phi_i)$$ for all $g\in G$. By Schur's Lemma we have that $\tilde{Q}_{ij}=0$.
\end{proof}

%
%

\clearemptydoublepage
\chapter*{Conclusions and Further Work}
\addcontentsline{toc}{chapter}{Conclusions and Further Work}
\markboth{}{Conclusions and Further Work}
\stepcounter{chapter}

In this dissertation we have provided a characterisation of a wide class of self-adjoint extensions of the Laplace-Beltrami operator. The key object to characterise the boundary conditions is the boundary equation 
\begin{equation}\label{asoreycha6}
	\varphi-\mathrm{i}\dot{\varphi}=U(\varphi+\mathrm{i}\dot{\varphi})\;,
\end{equation}
see Proposition \ref{prop: asorey}. The boundary conditions described by such equation are in one-to-one correspondence with self-adjoint extensions of the Laplace-Beltrami operator only if the manifold $\Omega$ is one-dimensional. However, the fact that it is strongly related with the structure of the boundary manifold makes it very suitable to study the different self-adjoint extensions of the aforementioned operator. The results shown in Chapter 4 and Chapter 5 account for this. 

As stated in the introduction, our approach complements the existing characterisations of the extensions of elliptic differential operators. On one hand there is the theory developed by G.~Grubb in the 1960's , cf., \cite{grubb68}, where the structure of the boundary is kept but the the role of the unitary operator $U\in\mathcal{U}(\H^0(\pO))$ is substituted by a family of pseudo-differential operators acting on the Sobolev spaces of the boundary. On the other hand there is the theory of boundary triples, cf., \cite{bruning08}. In this case the boundary form is substituted by a form in an abstract space. This form is such that their maximally isotropic subspaces, i.e., the subspaces where the form vanish identically, are in one-to-one correspondence with the set of unitary operators acting on this space. Thus, in this approach the set of self-adjoint extensions is in one-to-one correspondence with the unitary group, but it fails to capture the structure of the boundary.

In Chapter \ref{cha:QF} we have shown under what assumptions do the boundary conditions described by Eq.~\eqref{asoreycha6} lead to self-adjoint extensions of the Laplace-Beltrami operator. More concretely, a condition on the spectrum of the unitary operator characterising the self-adjoint extension, the gap condition (see Definition \ref{DefGap}), is enough to ensure semi-boundedness. Moreover, a variety of meaningful examples is provided. In particular, the set of generalised periodic boundary conditions described in Example \ref{periodic} is well suited for the analysis of topology change in quantum theory, see \cite{asorey05,balachandran95,shapere12}\,. As showed in Example \ref{generalized Robin} and Example \ref{generalized Robin2} the aforementioned self-adjoint extensions include boundary conditions of Robin type. So, as a particular case, we have shown that Robin boundary conditions of the form $$\dot{\varphi}=g\cdot\varphi\quad g\in\C^0(\pO)\;,$$ lead to semi-bounded self-adjoint extensions of the Laplace-Beltrami operator. 

This kind of Robin boundary conditions appear in the study of certain quantum systems with boundary such as the ones related to topological insulators and to the quantum Hall effect, cf., \cite{hasan10,morandi88} . This boundary conditions appear naturally at the interphase between two materials when one of them is in the superconducting phase. The fact that the corresponding operators describing the dynamics can be proved to be lower semi-bounded is of main importance for the consistency of the physical theory describing such systems. Moreover, the states with the lowest energies are known to be strongly localised at the boundary. The lowest eigenvalue showed at the right of Figure \ref{laplaciano4} is an example of this localisation at the boundary. In recent research together with M.~Asorey and A.P.~Balachandran, \cite{asorey13}, it is shown that the appearance of such low lying edge states is a quite general feature that happens not only for the Laplace operator but also for the Dirac operator or for the generalisation of the Laplace operator to tensor fields, the Laplace-de Rham operator, that includes the electromagnetic field. In fact, the results of Chapter \ref{cha:QF} are easily extendable to the more general setting of Hermitean fibre bundles, where the Laplace-Beltrami operator is generalised by the so called Bochner Laplacian (see, e.g., \cite{booss93,lawson89,poor81}). The biggest obstruction to obtain such generalisation is to find a proper substitute for the radial operator $R$, cf., Definition \ref{Defunidimensional} and Proposition \ref{prop: intervalrobin}. Nevertheless, this can be circumvented too. We leave this generalisation for further work.
In Chapter \ref{cha:FEM} and Chapter \ref{cha:Symmetry} the particular structure of the boundary conditions in Eq.~\eqref{asoreycha6} is used to proof a number of other results.

In Chapter \ref{cha:FEM} a numerical scheme to approximate the spectral problem is introduced. In particular it is shown that the convergence of the scheme is granted provided that the gap condition for the unitary operator holds. As a proof of the factibility of this numerical scheme a one-dimensional version is constructed explicitly. The numerical experiments show, among other things, that the convergence rates provided are satisfied. It is worth to mention that the convergence of the numerical scheme is proved for any dimension. Thus, this family of numerical algorithms can be used to approximate the solutions of problems in dimension two and higher, which in particular can be applied to the analysis of topology change and to the study of the edge states mentioned before. The task of programming a two-dimensional version of the numerical scheme is currently being carried out together with A.~L\'{o}pez Yela.

The Chapter \ref{cha:Symmetry} is devoted to the analysis of the role of symmetries in the construction of the different self-adjoint extensions. During this dissertation we have used the notion of $G$-invariance to avoid confusion with the notion of symmetric operator, cf., Definition \ref{def:symmetric}. We showed using the most abstract setting, i.e., using von Neumann's Theorem, that the set of $G$-invariant self-adjoint extensions of a symmetric operator is in one-to-one correspondence with the isometries $K:\mathcal{N}_+\to\mathcal{N}_-$ that lie in the commutant of the unitary representation of the group. This shall point out that even if a symmetric operator is $G$-invariant it may happen that their self-adjoint extensions are not $G$-invariant. Consider an operator that is constructed as the tensor product of two operators. It is a common error to assume that all the self-adjoint extensions of the former can be obtained by the tensor product of the isometries defining the self-adjoint extensions of the factors in the tensor product. The reason behind this is more easily understood using the characterisation by boundary conditions developed in this dissertation. Consider a symmetric operator defined on a manifold 
$$\Omega=\Omega_1\times\Omega_2\;.$$ Then, the self-adjoint extensions will be parameterised using the Hilbert space of square integrable functions defined on the boundary of the manifold 
$$\pO=\pO_1\times\Omega_2 \sqcup \Omega_1\times\pO_2\;.$$
The common error consists in just considering boundary conditions defined over $$\pO_1\times\pO_2\;.$$ The $G$-invariance results proved in Chapter 5, mainly Theorem \ref{Ginvariantoperator}, Theorem \ref{QAinvariant}, and Theorem \ref{InvariantRepresentation} provide tools that help to deal with this considerations. In particular we have focused again in the particular family of quadratic forms associated to the Laplace-Beltrami operator introduced in Chapter \ref{cha:QF}. We show that the set of self-adjoint extensions compatible with the symmetries of the underlying Riemannian manifold is also related with the commutant of the unitary representation of the group.

The analysis of self-adjoint extensions compatible with symmetries can be carried a step further. We have already discussed that in quantum mechanics situations where one operator is invariant under the action of a symmetry are common and very important. However, sometimes it is the case that the dynamics of a given physical situation is given over the quotient of a manifold by the action of a group. A typical example of this is the treatment of indistinguishable particles, where the configuration space is the quotient of the space where the particles move by the permutation group (see \cite{leinaas77} and references therein). The description of such situations can become very involved. In particular, if the action of the group has fixed points the quotient is no longer a differentiable manifold. Moreover, the fixed points become elements of the boundary of the quotient space. The description of possible self-adjoint extensions over those quotients is therefore at jeopardy. The study of the relations of the self-adjoint extensions compatible with the symmetry may help to analyse the structure of the dynamics at the quotient space.

The final section of Chapter \ref{cha:Symmetry} is devoted to the generalisation of Kato's Representation Theorem to quadratic forms that are not semi-bounded. We have introduced the notion of sector of the quadratic form, cf., Definition \ref{def: poa}. This notion shall play the analogue role to the concept of invariant subspaces that appears when analysing the structure of self-adjoint operators. We have shown in Theorem \ref{representationpoa} that quadratic forms whose sectors are both, semi-bounded and closable, are representable in terms of a self-adjoint operator. Thus we have paved the way towards a generalisation of Kato's Representation Theorem for truly unbounded, i.e., not semi-bounded, quadratic forms. This generalisation remains to be one of the main open problems in the field. The last step in order to obtain the proper generalisation would be to identify under what circumstances do Hermitean quadratic forms possess semi-bounded and closable sectors or to show that they do always possess them. In particular we have shown by means of Example \ref{ex:nogap} that the gap condition, cf., Definition \ref{DefGap}, is not mandatory and that the class of self-adjoint extensions of the Laplace-Beltrami operator that can be obtained by means of the quadratic forms introduced in Chapter \ref{cha:QF} is even wider.

\clearemptydoublepage
\chapter*{List of Publications}
\addcontentsline{toc}{chapter}{List of Publications}

We finish this dissertation with a list of the publications that account for the results appearing in it, together with others that are not directly related.\\

\begin{itemize}
	\item {\sc J.M.~P\'erez-Pardo.} Quadratic Forms, Unbounded Self-Adjoint Operators and Quantum Observables. {\em  Il Nuovo Cimento C}, {\bf 36}(3) (2013), 205--214.
	
	Chapter \ref{cha:Preliminaries} and Chapter \ref{cha:QF}

	\item {\sc A.~Ibort, F.~Lled\'o and J.M.~P\'erez-Pardo.} Self-Adjoint Extensions of the Laplace-Beltrami Operator and unitaries at the boundary. {\em ArXiv:1308.0527} (2013).
	
	Chapter \ref{cha:QF}

	\item {\sc A.~Ibort and J.M.~P\'erez-Pardo.}  Numerical Solutions of the Spectral Problem for Arbitrary Self-Adjoint Extensions of the One-Dimensional Schr\"odinger Equation. {\em SIAM J. Numer. Anal.}, \textbf{51}(2) (2013), 1254--1279.
	
	Chapter \ref{cha:FEM}

	\item {\sc A.\ Ibort and J.M.\ P\'erez-Pardo.} Quantum Control and Representation Theory. {\em J. Phys. A: Math. Theor}. \textbf{42} (2009), 205301.

	\item {\sc M.~Asorey, A.P.~Balachandran and J.M.~P\'erez-Pardo.} Edge States: Topological Insulators, Superconductors and QCD Chiral Bags. {\em ArXiv:1308.5635} (2013).

\end{itemize}

\clearemptydoublepage

\chapter*{List of Symbols} 
\markboth{List of Symbols}{}
\addcontentsline{toc}{chapter}{List of Symbols}
 \clearemptydoublepage

\bibliographystyle{JMPStyle2}
\addcontentsline{toc}{chapter}{References}
{\small \bibliography{Bibliografia-1-1-2.bib}}

\newcommand{\etalchar}[1]{$^{#1}$}
\begin{thebibliography}{LvDvB05}

\bibitem[A{\etalchar{+}}92]{anderson92}
{\sc E.~Anderson et~al.}
\newblock {\em LAPACK user's guide}.
\newblock SIAM, 1992.

\bibitem[A{\etalchar{+}}05]{albeverio05}
{\sc S.~Albeverio et~al.}
\newblock {\em Solvable Models in Quantum Mechanics}, volume 350.
\newblock AMS Chelsea Publishing, second edition, 2005.

\bibitem[ABPP13]{asorey13}
{\sc M.~Asorey, A.P. Balachandran, and J.M. P\'{e}rez-Pardo}.
\newblock Edge states: Topological insulators, superconductors and {QCD} chiral
  bags.
\newblock {\em ArXiv:1308.5635} (2013).

\bibitem[AEP83]{asorey83}
{\sc M.~Asorey, J.G. Esteve, and A.F. Pacheco}.
\newblock Planar rotor: The $\theta$-vacuum structure, and some approximate
  methods in {Q}uantum {M}echanics.
\newblock {\em Phys. Rev. D}, {\bf 27} (1983), 1852--1868.

\bibitem[AF03]{adams03}
{\sc R.A. Adams and J.J.F. Fournier}.
\newblock {\em {S}obolev Spaces}.
\newblock Pure and Applied Mathematics. Academic Press, Oxford, second edition,
  2003.

\bibitem[AFHK79]{albeverio79}
{\sc S.~Albeverio, J.E. Fenstad, and R.~Hoegh-Krohn}.
\newblock Singular perturbations and nonstandard analysis.
\newblock {\em Trans. of the ams}, {\bf 252} (1979), 275--295.

\bibitem[AG61]{akhiezer61b}
{\sc N.I. Akhiezer and I.M. Glazman}.
\newblock {\em Theory of Linear Operators in {H}ilbert Spaces}, volume~II.
\newblock Dover Publications, New York, 1961.

\bibitem[AGM09]{aceto09}
{\sc L.~Aceto, P.~Ghelardoni, and C.~Magherini}.
\newblock Boundary value methods as an extension of {N}umerov's method for
  {S}turm-{L}iouville eigenvalue estimates.
\newblock {\em Appl. Num. Math.}, {\bf 59} (2009), 1644--1656.

\bibitem[AIM05]{asorey05}
{\sc M.~Asorey, A.~Ibort, and G.~Marmo}.
\newblock Global theory of quantum boundary conditions and topology change.
\newblock {\em Int. J. Mod. Phys. A}, {\bf 20} (2005), 1001--1025.

\bibitem[AK97]{albeverio97}
{\sc S.~Albeverio and P.~Kurasov}.
\newblock Rank one perturbations, approximations, and selfadjoint extensions.
\newblock {\em J. Funct. Anal.}, {\bf 148} (1997), 152--169.

\bibitem[AM91]{albeverio91}
{\sc S.~Albeverio and Z.~Ma}.
\newblock Perturbation of dirichlet forms-- lower semiboundedness,
  clossability, and form cores.
\newblock {\em J. Funct. Anal.}, {\bf 99} (1991), 332--356.

\bibitem[AMR88]{marsden01}
{\sc R.~Abraham, J.~Marsden, and T.~Ratiu}.
\newblock {\em Manifolds, Tensor Analysis and Applications}.
\newblock Springer, 1988.

\bibitem[BBMS95]{balachandran95}
{\sc A.P. Balachandran, G.~Bimonte, G.~Marmo, and A.~Simoni}.
\newblock Topology change and quantum physics.
\newblock {\em Nuclear Physics B}, {\bf 446} (1995), 299--314.

\bibitem[BBW93]{booss93}
{\sc B.~Booss-Bavnbek and K.P. Wojciechowski}.
\newblock {\em Elliptic Boundary Problems for {D}irac Operators}.
\newblock Theory \& Applications. Birkh\"{a}user, 1993.

\bibitem[BD08]{berry08}
{\sc M.V. Berry and M.R. Dennis}.
\newblock Boundary-condition-varying circle billiards and gratins: {T}he
  {D}irichlet singularity.
\newblock {\em J. Phys. A: Math. Theor.}, {\bf 41} (2008), 135203.

\bibitem[Ber68]{berezanskii68}
{\sc J.M. Berezanskii}.
\newblock {\em Expansions in Eigenfunctions of Self-adjoint Operators}.
\newblock Translations of Monographs. American Mathematical Society, 1968.

\bibitem[Ber09]{berry09}
{\sc M.V. Berry}.
\newblock Hermitean boundary conditions at a {D}irichlet singularity: {T}he
  {M}arletta--{R}ozenblum model.
\newblock {\em J. Phys. A: Math. Theor.}, {\bf 42} (2009), 165208.

\bibitem[BGP08]{bruning08}
{\sc J.~Br\"{u}ning, V.~Geyler, and K.~Pankrashkin}.
\newblock Spectra of self-adjoint extensions and applications to solvable
  {S}chr\"{o}dinger operators.
\newblock {\em Rev. Math. Phys.}, {\bf 20}(1) (2008), 1--70.

\bibitem[BL07]{behrndt07}
{\sc J.~Behrndt and M.~Langer}.
\newblock Boundary value problems for elliptic partial differential operators
  on bounded domains.
\newblock {\em J. Funct. Anal.}, {\bf 243} (2007), 536--565.

\bibitem[BL12]{behrndt12}
{\sc J.~Behrndt and M.~Langer}.
\newblock Elliptic operators, dirichlet-to-neumann maps and quasi boundary
  triples.
\newblock {\em London Math. Soc. Lecture Note Series}, {\bf 404} (2012),
  121--160.

\bibitem[BLL06]{benyamini06}
{\sc Y.~Benyamini, S.~Lasalle, and J.G. Llavona}.
\newblock Homogeneous orthogonally-additive polynomials on banach homogeneous
  orthogonally-additive polynomials on banach lattices.
\newblock {\em Bull. London Math. Soc.}, {\bf 38} (2006), 459--469.

\bibitem[Boc33]{bochner33}
{\sc S.~Bochner}.
\newblock Integration von funktionen, deren wert die elemente eines
  vektorraumes sind.
\newblock {\em Fund. Math.}, {\bf 29} (1933), 262--276.

\bibitem[BS08]{brenner08}
{\sc S.C. Brenner and L.R. Scott}.
\newblock {\em The Mathematical Theory of Finite Element Methods}.
\newblock Springer, second edition, 2008.

\bibitem[Cha99]{chanane99}
{\sc B.~Chanane}.
\newblock Computing eigenvalues of regular {S}turm-{L}iouville problems.
\newblock {\em Appl. Math. Lett.}, {\bf 12} (1999), 119--125.

\bibitem[CTDL77]{cohen77}
{\sc C.~Cohen-Tannoudji, B.~Diu, and F.~Lalo\"{e}}.
\newblock {\em Quantum Mechanics}, volume~I.
\newblock John Wiley {and} Sohns, 1977.

\bibitem[Dav95]{davies95}
{\sc E.B. Davies}.
\newblock {\em Spectral Theory and Differential Operators}.
\newblock Cambridge University Press, 1995.

\bibitem[Dem97]{demmel97}
{\sc J.W. Demmel}.
\newblock {\em Applied Numerical Linear Algebra}.
\newblock SIAM, 1997.

\bibitem[Dix81]{dixmier81}
{\sc J.~Dixmier}.
\newblock {\em von Neumann Algebras}.
\newblock North-Holland Publishing Company, 1981.

\bibitem[DM91]{derkach91}
{\sc V.A. Derkach and M.M. Malamud}.
\newblock Generalized resolvents and the boundary vale problems for hermitean
  operators with gaps.
\newblock {\em J. Funct. Anal.}, {\bf 95} (1991), 1--95.

\bibitem[FGT57]{farrington57}
{\sc C.C. Farrington, R.T. Gregory, and A.H. Taub}.
\newblock On the numerical solution of {S}turm-{L}iouville differential
  equations.
\newblock {\em Mathematical Tables and other Aids to Computation}, {\bf 11}
  (1957), 131--150.

\bibitem[Fre05]{frey05}
{\sc C.~Frey}.
\newblock {\em On Non-local Boundary Value Problems for Elliptic Operators}.
\newblock PhD thesis, University of K\"{o}ln, 2005.

\bibitem[Fri34]{friedrichs34}
{\sc K.~Friedrichs}.
\newblock Spektraltheorie halbbeschr\"{a}nkter operatoren und anwendung auf die
  spektralzerlegung von differentialoperatoren.
\newblock {\em Math. Ann.}, {\bf 110} (1934), 685--713.

\bibitem[GG91]{gorbachuk91}
{\sc V.I. Gorbachuk and M.L. Gorbachuk}.
\newblock {\em Boundary Value Problems for Operator Differential Equations.},
  volume~48 of {\em Mathematics and its Applications (Soviet Series)}.
\newblock Kluwer Academic Publishers, 1991.

\bibitem[GP90]{galindo90}
{\sc A.~Galindo and P.~Pascual}.
\newblock {\em Quantum Mechanics I}.
\newblock Springer, 1990.

\bibitem[Gru68]{grubb68}
{\sc G.~Grubb}.
\newblock A characterization of the non-local boundary value problems
  associated with an elliptic operator.
\newblock {\em Ann. Sc. Norm. Sup.}, {\bf 22}(3) (1968), 425--513.

\bibitem[Gru70]{grubb70}
{\sc G.~Grubb}.
\newblock Les probl\`{e}mes aux limites g\'{e}n\'{e}raux d'un op\'{e}ratour
  elliptique, provenant de la th\'{e}orie variationnelle.
\newblock {\em Bull. Sc. math.}, {\bf 94} (1970), 113--157.

\bibitem[Gru73]{grubb73}
{\sc G.~Grubb}.
\newblock Weakly semibounded boundary problems and sesquilinear forms.
\newblock {\em Ann. Inst. Fourier}, {\bf 23}(4) (1973), 145--194.

\bibitem[Gru11]{grubb11}
{\sc G.~Grubb}.
\newblock Spectral asymptotics for {R}obin problems with a discontinuous
  coefficient.
\newblock {\em Journal of Spectral Theory}, {\bf 1}(2) (2011), 155--177.

\bibitem[Hir76]{hirsch76}
{\sc M.W. Hirsch}.
\newblock {\em Differential Topology}, volume~33 of {\em Graduate Texts in
  Mathematics}.
\newblock Springer, third edition, 1976.

\bibitem[HK10]{hasan10}
{\sc M.Z. Hasan and C.L. Kane}.
\newblock Topological insulators.
\newblock {\em Rev. Mod. Phys.}, {\bf 82}(4) (2010), 3045--3067.

\bibitem[Kat95]{kato95}
{\sc T.~Kato}.
\newblock {\em Perturbation Theory for Linear Operators}.
\newblock Classics in Mathematics. Springer, 1995.

\bibitem[KL12]{kovarik12}
{\sc H.~Kova\v{r}\'ik and A.~Laptev}.
\newblock Hardy inequalities for robin laplacians.
\newblock {\em J. Funct. Anal.}, {\bf 262} (2012), 4972--4985.

\bibitem[Koc75]{kochubei75}
{\sc A.N. Kochubei}.
\newblock Extensions of symmetric operators and symmetric binary relations.
\newblock {\em Math. Notes}, {\bf 17}(1) (1975), 25--28.

\bibitem[Kos99]{koshmanenko99}
{\sc V.~Koshmanenko}.
\newblock {\em Singular Quadratic Forms in Perturbation Theory}.
\newblock Kluwer Academic Publishers, 1999.

\bibitem[Kre47a]{krein47a}
{\sc M.G. Krein}.
\newblock Theory of self-adjoint extensions of semi-bounded hermitean operators
  and its applications. i.
\newblock {\em Mat. Sb.}, {\bf 20}(3) (1947), 431--495.

\bibitem[Kre47b]{krein47b}
{\sc M.G. Krein}.
\newblock Theory of self-adjoint extensions of semi-bounded hermitean operators
  and its applications. ii.
\newblock {\em Mat. Sb.}, {\bf 21}(3) (1947), 365--404.

\bibitem[Lin09]{linares09}
{\sc P.~Linares}.
\newblock {\em Orthogonally additive polynomials and applications}.
\newblock PhD thesis, Universidad Complutense de Madrid, 2009.

\bibitem[LM72]{lions72}
{\sc J.L. Lions and E.~Magenes}.
\newblock {\em Non-homogeneous Boundary Value Problems and Applications},
  volume~I of {\em Grundlehren der mathematischen Wissenschaften}.
\newblock Springer, 1972.

\bibitem[LM77]{leinaas77}
{\sc J.M. Leinaas and J.~Myrheim}.
\newblock On the theory of identical particles.
\newblock {\em Il Nuovo Cimento B}, {\bf 37}(1) (1977), 1--23.

\bibitem[LM89]{lawson89}
{\sc H.B. Lawson and M.~Michelsohn}.
\newblock {\em Spin Geometry}.
\newblock Princeton University Press, 1989.

\bibitem[LP07]{lledo07}
{\sc F.~Lled\'{o} and O.~Post}.
\newblock Generating spectral gaps by geometry.
\newblock In {\em Young Researchers Symposium of the 14th International
  Congress on Mathematical Physics, Lisbon}, volume 437 of {\em Prospects in
  Mathematical Physics}, pp. 159--169. Contemporary Mathematics, 2007.

\bibitem[LP08a]{lledo08b}
{\sc F.~Lled\'{o} and O.~Post}.
\newblock Eigenvalue bracketing for discrete and metric graphs.
\newblock {\em J. Math. Anal. Appl.}, {\bf 348} (2008), 806--833.

\bibitem[LP08b]{lledo08a}
{\sc F.~Lled\'{o} and O.~Post}.
\newblock Existence of spectral gaps, covering manifolds and residually finite
  groups.
\newblock {\em Rev. Math. Phys.}, {\bf 20} (2008), 199--231.

\bibitem[LRI{\etalchar{+}}06]{ledoux06}
{\sc V.~Ledoux, M.~Rizea, L.~Ixaru, M.~{v}an Daele, and G.~{v}anden Berghe}.
\newblock Solution of the {S}chr\"{o}dinger equation by a high order
  perturbation method based on a linear reference potential.
\newblock {\em Comp. Phys. Com.}, {\bf 175} (2006), 424--439.

\bibitem[LvDvB05]{ledoux05}
{\sc V.~Ledoux, M.~{v}an Daele, and G.~{v}anden Berghe}.
\newblock Matslise, a {M}atlab package for the numerical solution of
  {S}turm-{L}iouville and {S}chr\"{o}dinger equations.
\newblock {\em ACM Trans. on Math. Soft.}, {\bf 31} (2005), 532--554.

\bibitem[Mor88]{morandi88}
{\sc G.~Morandi}.
\newblock {\em Quantum Hall effect: Topological problems in condensed-matter
  physics}.
\newblock Monographs and textbooks in physical science. Lecture Notes 10.
  Bibliopolis, Naples, 1988.

\bibitem[MR09]{marletta09}
{\sc M.~Marletta and G.~Rozenblum}.
\newblock A {L}aplace operator with boundary conditions singular at one point.
\newblock {\em J. Phys. A: Math. Theor.}, {\bf 42} (2009), 125204.

\bibitem[PMG86]{plunien86}
{\sc G.~Plunien, B.~M\"{u}ller, and W.~Greiner}.
\newblock The {C}asimir effect.
\newblock {\em Phys. Rep.}, {\bf 134} (1986), 87--193.

\bibitem[Poo81]{poor81}
{\sc W.A. Poor}.
\newblock {\em Differential Geometric Structures}.
\newblock McGraw--Hill, 1981.

\bibitem[Pos12a]{post12b}
{\sc O.~Post}.
\newblock Boundary pairs associated with quadratic forms.
\newblock {\em arXiv:math.FA/1210.4707} (2012).

\bibitem[Pos12b]{Post12a}
{\sc O.~Post}.
\newblock {\em Spectral Analysis on Graph-like Spaces}, volume 2039 of {\em
  Lecture Notes in Mathematics}.
\newblock Springer, Berlin, 2012.

\bibitem[Pru73]{pruess73}
{\sc S.~Pruess}.
\newblock Estimating the eigenvalues of {S}turm-{L}iouville problems by
  approximating the differential equation.
\newblock {\em SIAM J. Numer. Anal.}, {\bf 10} (1973), 55--68.

\bibitem[RS72]{reed72}
{\sc M.~Reed and B.~Simon}.
\newblock {\em Methods of Modern Mathematical Physics}, volume~I.
\newblock Academic Press, New York {and} London, 1972.

\bibitem[RS75]{reed75}
{\sc M.~Reed and B.~Simon}.
\newblock {\em Methods of Modern Mathematical Physics}, volume~II.
\newblock Academic Press, New York {and} London, 1975.

\bibitem[RS78]{reed78}
{\sc M.~Reed and B.~Simon}.
\newblock {\em Methods of Modern Mathematical Physics}, volume~IV.
\newblock Academic Press, New York {and} London, 1978.

\bibitem[Sto32]{stone32}
{\sc M.H. Stone}.
\newblock On one-parameter unitary groups in {H}ilbert space.
\newblock {\em Ann. of Math.}, {\bf 33}(3) (1932), 643--648.

\bibitem[SWX12]{shapere12}
{\sc A.D. Shapere, F.~Wilczek, and Z.~Xiong}.
\newblock Models of topology change.
\newblock {\em ArXiv:1210.3545v1} (2012).

\bibitem[Tay96]{taylor96}
{\sc M.E. Taylor}.
\newblock {\em Partial Differential Equations I}, volume 115 of {\em Applied
  Mathematical Sciences}.
\newblock Springer, 1996.

\bibitem[Tha92]{thaller92}
{\sc B.~Thaller}.
\newblock {\em The {D}irac Equation}.
\newblock Springer-Verlag, Berlin, 1992.

\bibitem[Vis63]{visik52}
{\sc M.I. Visik}.
\newblock On general boundary problems for elliptic differential equations.
\newblock {\em Amer. Math. Soc. Transl.}, {\bf 24} (1963), 107--172.

\bibitem[vN30]{neumann30}
{\sc J.~{v}on Neumann}.
\newblock Allgemeine {E}igenwerttheorie hermitescher {F}unktionaloperatoren.
\newblock {\em Math. Ann.}, {\bf 102} (1930), 49--131.

\bibitem[Wei80]{weidman80}
{\sc J.~Weidman}.
\newblock {\em Linear Operators in {H}ilbert Spaces}.
\newblock Springer, 1980.

\bibitem[Wen92]{wen92}
{\sc X.G. Wen}.
\newblock Theory of the edge states in fractional quantum {H}all effects.
\newblock {\em Int. J. Mod. Phys. B}, {\bf 6} (1992), 1711.

\bibitem[Wey50]{weyl50}
{\sc H.~Weyl}.
\newblock {\em The Theory of Groups and Quantum Mechanics}.
\newblock Dover Publications, 1950.

\bibitem[Whi40]{whitehead40}
{\sc J.H.C. Whitehead}.
\newblock On ${C}^1$-complexes.
\newblock {\em Ann. of Math.}, {\bf 41}(4) (1940), 809--824.

\bibitem[Whi57]{whitney57}
{\sc H.~Whitney}.
\newblock {\em Geometric Integration Theory}.
\newblock Princeton Mathematical Series. Princeton University Press, Princeton,
  New Jersey, 1957.

\bibitem[Wig39]{wigner37}
{\sc E.~Wigner}.
\newblock On unitary representations of the inhomogeneous {L}orentz group.
\newblock {\em Ann. of Math.}, {\bf 40}(1) (1939), 149--204.

\end{thebibliography}

\end{document}